\newcommand{\lrw}{\textsf{LRW}\xspace}
\newcommand{\xex}{\textsf{XEX}\xspace}
\newcommand{\xext}{\textsf{XEX2}\xspace}
\newcommand{\idhashxext}{\ensuremath{\mathsf{XEX2}^{\mathrm{ideal-hash}}}\xspace}
\newcommand{\xts}{\textsf{XTS}\xspace}
\newcommand{\emr}{\textsf{Even-Mansour}\xspace}
\newcommand{\fx}{\textsf{FX}\xspace}
\newcommand{\cbc}{\textsf{CBC}\xspace}
\newcommand{\ecbc}{\textsf{ECBC}\xspace}
\newcommand{\cmac}{\textsf{CMAC}\xspace}
\newcommand{\gcm}{\textsf{GCM}\xspace}
\newcommand{\prp}{\textsf{PRP}\xspace}
\newcommand{\sprp}{\textsf{SPRP}\xspace}
\newcommand{\qprp}{\textsf{PRP-PQ}\xspace}
\newcommand{\qsprp}{\textsf{SPRP-PQ}\xspace}
\newcommand{\prf}{\textsf{PRF}\xspace}
\newcommand{\mac}{\textsf{MAC}\xspace}
\newcommand{\forgery}{\textsf{Forgery}\xspace}
\newcommand{\Oinv}{O^{\mathrm{inv}}}
\newcommand{\A}{\mathcal{A}}
\newcommand{\D}{\mathcal{D}}
\newcommand{\B}{\mathcal{B}}
\newcommand{\Or}{\mathcal{O}}
\def\bool{\{0,1\}}
\renewcommand{\epsilon}[0]{\varepsilon}
\newcommand{\Hyb}{\operatorname{\mathbf{H}}}
\newcommand{\tweak}{\mathcal{T}}
\newcommand{\swap}[2]{\mathsf{swap}_{#1,\,#2}}
\newcommand{\expref}[2]{\texorpdfstring{\hyperref[#2]{#1~\ref{#2}}}{#1~\ref{#2}}}
\newcommand{\formerqC}{q_C}
\newcommand{\from}{\leftarrow}
\newcommand\algo{\mathcal}
\newcommand{\permset}[1]{{\cal P}(n)}
\newcommand{\proj}[1]{\ensuremath{|#1\rangle \langle #1|}}
\def\xor{\oplus}
\DeclareMathOperator*{\Exp}{\mathbb{E}}
\definecolor{mgreen}{rgb}{.1,.7,0}
\def\S{{\sf Swap}}
\newcommand{\bit}{\{0,1\}}
\newif\ifdraft
\newcommand{\ks}[1]{{\color{cyan}[{\bf ks:} #1]}}
\newcommand{\ga}[1]{{\color{red}[{\bf GA:} #1]}}
\newcommand{\chen}[1]{{\color{orange}[{\bf Chen:} #1]}}
\newcommand{\cm}[1]{{\noindent \textcolor{mgreen}{\emph{(CM:  #1)}}}{}}
\newcommand{\ga}[1]{}
\newcommand{\cm}[1]{}
\newcommand{\chen}[1]{}
\newcommand{\ks}[1]{}
\title{Post-Quantum Security of Block Cipher Constructions}
\author{\vspace{-.8cm}}
\institute{\vspace{-.8cm}}
\author{Gorjan Alagic\inst{1} \and Chen Bai\inst{2} \and Christian Majenz\inst{3} \and Kaiyan Shi\inst{4}}
\institute{\small QuICS, University of Maryland/NIST \\ \email{galagic.umd.edu}
\and Dept.\ of Computer Science, Virginia Tech \\ \email{cbai1@vt.edu} \and Dept.\ of Applied Mathematics and Computer Science, Technical University of Denmark \\ \email{chmaj@dtu.dk} \and Dept.\ of Computer Science, University of Maryland \\ \email{kshi12@umd.edu}}
\date{}
\begin{document}
{\def\addcontentsline#1#2#3{}\maketitle}
\begin{abstract}
 Block ciphers are versatile cryptographic ingredients that are used in a wide range of applications ranging from secure Internet communications to disk encryption. While post-quantum security of public-key cryptography has received significant attention, the case of symmetric-key cryptography (and block ciphers in particular) remains a largely unexplored topic. In this work, we set the foundations for a theory of post-quantum security for block ciphers and associated constructions. Leveraging our new techniques, we provide the first post-quantum security proofs for the key-length extension scheme FX, the tweakable block ciphers LRW and XEX, and most block cipher encryption and authentication modes. Our techniques can be used for security proofs in both the plain model and the quantum ideal cipher model. Our work takes significant initial steps in establishing a rigorous understanding of the post-quantum security of practical symmetric-key cryptography.   
\end{abstract}

\setcounter{tocdepth}{3}
\tableofcontents
\newpage




\section{Introduction}

\subsection{Background}

\paragraph{Block ciphers.}
A block cipher is a keyed family of efficiently-implementable permutations of $\bit^n$. A block cipher $\pi$ is secure if, for a uniformly random key $k$, the permutation $\pi_k$ is indistinguishable from random to adversaries that can make forward and inverse queries to $\pi_k$. The most well-known example of a block cipher is the Advanced Encryption Standard (AES) \cite{dworkin2001advanced}, which is a fundamental ingredient in secure Internet communications and much more.

Block ciphers are ubiquitous in real-world cryptography. They are versatile ingredients that can be adapted to a variety of cryptographic applications. Their security can be increased via simple key-length extension schemes like FX \cite{C:KilRog96,JC:KilRog01}. They can also be expanded into an exponentially-large family of ciphers via tweakable constructions like \lrw and \xext~\cite{liskov2002tweakable,rogaway2013evaluation}. Finally, by means of \emph{block cipher modes}, block ciphers can be used to construct a multitude of symmetric-key encryption and/or authentication schemes, with a wide range of options for both security and performance (see, e.g.,~\cite{dworkin2001recommendation,dworkin2007sp,dworkin2010recommendation,gueron2017aes}). In applications, block cipher modes are ubiquitous in secure communications, while tweakable ciphers are commonly used in disk encryption schemes.

\paragraph{Post-quantum security.}
The impact of large-scale quantum computation on the security of block ciphers and associated cryptographic constructions is not fully understood.
While Shor's algorithm~\cite{shor1994algorithms} spurred intense scrutiny of the impact of quantum computers on public-key cryptography, post-quantum security of symmetric-key cryptography has received scant attention. While there are no known dramatic quantum attacks against currently-deployed symmetric-key schemes, there are important reasons to understand this setting in general (and the block cipher case in particular), as we now briefly explain. 

First, it is \emph{in principle} possible that some symmetric-key schemes in use today are not post-quantum secure\footnote{One can easily construct relatively natural examples of schemes that are secure against classical computers but not against quantum computers. However, we doubt that this holds for any currently-deployed schemes.}. Second, many symmetric-key constructions rely crucially on security proofs, and only a handful of such proofs for post-quantum security are known (mainly for Even-Mansour-type constructions~\cite{EC:ABKM22,EC:ABKMS24,bai2024quantum} and the Ascon scheme~\cite{hosoyamada2025post}). Finally, fine-grained bounds in proofs often inform parameter selection when deploying symmetric-key schemes, and such bounds are essentially unavailable in the post-quantum case. Instead, practitioners typically assume that one can simply double the key length to avoid Grover search. However, the few results we do have indicate that this is sometimes overkill~\cite{EC:ABKM22} and could in principle even be insufficient~\cite{EC:BonSchSib22}.

\subsection{This work}

In this work, we set the foundations for a theory of post-quantum security for block ciphers and associated constructions. We then apply our techniques to give the first post-quantum security proofs for a variety of schemes, including key-length extension schemes, tweakable block ciphers, and block cipher modes. Our work takes significant initial steps in establishing a rigorous understanding of the post-quantum security of practical symmetric-key cryptography.

Our techniques can be used for security proofs both in the \emph{plain model} (i.e., without any special assumptions or idealizations) as well as the \emph{ideal cipher model} (ICM), a commonly used model in relevant classical proofs. In the ICM, one assumes that a \emph{perfect} block cipher $\pi$ was sampled (i.e., $\pi_k$ is a uniformly random permutation for all $k$), and all parties have oracle access to it. In the traditional classical setting, this oracle access allows anyone to make queries of the form
\begin{equation}
    (x, k) \mapsto \pi_k(x)
    \qquad \text{and} \qquad
    (x, k) \mapsto \pi_k^{-1}(x)\,.
\end{equation}
In a post-quantum setting, however, adversaries in possession of a quantum computer will be able to execute the circuits of \textsf{AES} in superposition. It is thus natural to also grant quantum adversaries such access in the ideal cipher setting, via black-box unitaries
\begin{align}
    \ket{k}\ket{x}\ket{y} \mapsto \ket{k}\ket{x}\ket{y \oplus \pi_k(x)} 
    \qquad \text{and} \qquad
    \ket{k}\ket{x}\ket{y} \mapsto \ket{k}\ket{x}\ket{y \oplus \pi^{-1}_k(x)}\,.
\end{align}
The resulting model is called the quantum ideal cipher model (or QICM)\footnote{The ICM (resp., QICM) is the natural block cipher analogue of the well-known Random Oracle Model (ROM, resp., QROM), which is often used to model hash functions in security proofs.} \cite{cojocaru2025quantum,sato2019so}. Proving results in the QICM has turned out to be quite difficult. Classical techniques are based on transcripts and do not translate to this setting. While Zhandry's compressed oracle~\cite{zhandry2019record} does offer a somewhat analogous quantum lower bound method, it only applies to \emph{random function oracles}, and extending it to permutations and ideal ciphers has proved challenging. Moreover, the post-quantum setting involves mixed query types: the ideal cipher is queried both \emph{classically} (through the construction) and \emph{quantumly} (directly by the adversary). This is the right post-quantum model: the construction is always a classical cipher implemented by the honest party on a classical computer using knowledge of the actual secret keys, while the block cipher is a public algorithm. Moreover, correct modeling of the construction oracle makes a crucial difference in security, as many commonly-used ciphers are \emph{insecure} if the construction can be queried quantumly~\cite{C:KLLN16}, but \emph{secure} if the construction can only be queried classically~\cite{EC:ABKM22}. 

In this paper, we only consider the model in which the ideal cipher is quantumly-accessible and any constructions involving the secret key are only classically-accessible. We refer to this model as the post-quantum model. In previous literature, this model was sometimes referred to as the \textsf{Q1} model~\cite{bonnetain2019quantum,hosoyamada2018cryptanalysis,jaeger2021quantum,C:KLLN16}, in contrast with the (unrealistic) \textsf{Q2} model where all oracles can be queried quantumly.

Our results are summarized as follows. We give a more technical summary further below, including the precise security bounds we establish. 
\begin{enumerate}
    \item \textbf{QICM resampling.} We give a general \emph{resampling lemma} that can be used to ascertain the ability of a quantum adversary to detect modifications to the ideal cipher (e.g., in a simulation as part of a proof). Previous resampling lemmas only held for random functions and permutations~\cite{EC:ABKM22,EC:ABKMS24,grilo2021tight,hosoyamada2025post}. This new tool can be combined with an adaptation of a proof approach of~\cite{EC:ABKM22} to yield a powerful technique for QICM security proofs. We apply this technique to establish further results below.
    \item \textbf{FX construction.} We prove post-quantum security of the FX key-length extension scheme in the QICM. This problem has evaded analysis for a number of years (a 2021 result only held for non-adaptive adversaries~\cite{jaeger2021quantum}). The bound we give for the number of queries needed to achieve a constant distinguishing probability is tight. A straightforward application of our result establishs post-quantum security for the lightweight cipher PRINCE~\cite{borghoff2012prince}. 
    \item \textbf{Tweakable ciphers.} We prove security of two widely-used tweakable block ciphers (LRW and XEX2) in both the plain model and in the QICM (with different bounds). We note that XEX2 is the basis of the XTS-AES disk encryption scheme used by most operating systems. 
    \item \textbf{Block cipher modes.} Finally, we observe that the security proofs of most block cipher modes (including all modes used in secure Internet traffic) translate easily to the post-quantum setting. Moreover, one can translate classical security bounds to post-quantum ones simply by replacing the appropriate strong pseudorandomness advantage term with its post-quantum analogue. 
\end{enumerate}

\subsection{Technical summary of results}\label{sec:tech-sum}

A technical summary of our main results is as follows.

\paragraph{\textbf{\emph{A proof technique for the QICM.}}} The first proof of post-quantum security of a block cipher construction was for the plain Even-Mansour construction~\cite{EC:ABKM22}. Since then, the technique of~\cite{EC:ABKM22} has been extended to show security of tweakable Even-Mansour~\cite{EC:ABKMS24}, the Ascon lightweight cipher~\cite{hosoyamada2025post}, key-alternating ciphers~\cite{bai2024quantum,basak2025post}. The technique has even been adapted to give certain proofs in the quantum Haar-random oracle model~\cite{hhan2024pseudorandom}. 

At a high level, the technique of~\cite{EC:ABKM22} can be described as follows. The goal is to show indistinguishability between (i.) a pair of ``real'' oracles $(E_{(k)},\ket{E})$ and (ii.) an ``ideal'' uncorrelated pair $(R,\ket{E})$. In both worlds, the first oracle is classical and the second oracle is quantum. Starting with a real-world $q$-query sequence
\begin{equation}
    \ket{E}, E_{(k)}, \ket{E}, E_{(k)}, \dots, \ket{E}, E_{(k)}
\end{equation}
we switch the classical queries to the ideal case, one-by-one. A naive $j$-th hybrid would then be
\begin{equation}
    \underbrace{\ket{E}, R, \dots, \ket{E}, R}_j, \underbrace{\ket{E}, E_{(k)}, \dots, \ket{E}, E_{(k)}}_{q-j}\,.
\end{equation} 
Unfortunately, this does not work: showing indistinguishability of hybrids $j-1$ and $j$ would require first showing that the adversary cannot extract the key $k$ during the last $q-j$ queries; indeed, with knowledge of $k$ one easily notices the switch in past queries. This leads to a circular argument.

Instead, when switching classical queries to the ideal case, we also modify $E$ (for all remaining queries) in a small number of locations, and then reset these modifications at a later stage:
\begin{equation}
    \underbrace{\ket{E}, R, \dots, \ket{E}, R}_j, \underbrace{\ket{E^{[j]}}, E^{[j]}_{(k)}, \dots, \ket{E^{[j]}}, E^{[j]}_{(k)}}_{q-j}\,.
\end{equation}
where $E^{[j]}$ denotes the quantum oracle with roughly $j$ modified locations. Making such modifications necessitates the use of a particular technical ingredient: a \emph{resampling lemma}\footnote{One also needs a \emph{reprogramming lemma} (to reset these modifications at a later stage) but this is a relatively straightforward quantum search lower bound~\cite{EC:ABKM22}.}. Such a lemma states that an adversary cannot detect such modifications unless it has made a very large number of quantum queries to $E$ prior to the modification being made. 

\paragraph{Ideal cipher resampling.}
In this work, we develop the above technique in the case where $E$ is an ideal cipher. This enables us to give the first full post-quantum security proofs for a variety of block cipher constructions. The key technical advance is a resampling lemma for the ideal cipher model. Consider the advantage of a computationally unbounded distinguisher $\algo D$ in the following three-phase experiment.

\begin{enumerate}
    \item An ideal cipher $E: \bit^m \times \bit^n \rightarrow \bit^n$ is sampled, and $\algo D$ gets (two-way) quantum-query access to $E$. Then $\algo D$ (adaptively) selects and outputs a description of a distribution $M$.
    \item A sample $(k, s_0, s_1)$ is drawn from $M$, and  $E_k(s_0)$ and $E_k(s_1)$ are swapped ($b=1$) or not ($b=0$).
    \item Now $\algo D$ continues with access to (the possibly modified) $E$, and must eventually output a guess $b'$ for the value of $b$.
\end{enumerate}
We show that the guessing advantage of $\algo D$ is at most $4\sqrt{q\cdot 2^{n-h}}$ where $h$ is the min-entropy of $M$ and $q$ is the number of queries made by $\algo D$ \emph{in phase 1}. Note that the bound does not depend on the number of queries made in phase 3.

The full technical statement of this resampling lemma is given in \Cref{sec:resampling_ic}, with proofs in \Cref{app:lemma-proofs}. Using this lemma and the hybrid technique described above, we are able to show the security of several constructions, discussed below.

\paragraph{\textbf{\emph{FX construction.}}} The \fx construction~\cite{C:KilRog96,JC:KilRog01} transforms a block cipher $E: \{0,1\}^m \times \{0,1\}^n \rightarrow \{0,1\}^n$ with key length $m$ into another block cipher with the same block size but a longer key, as follows.
\begin{equation} \label{eqn:fx}
    \fx^{E}_{k_0, k_1, k_2}(x) = E_{k_0}(x \oplus k_1) \oplus k_2
\end{equation}
Here $k_0 \leftarrow \{0,1\}^m$ and the marginal distributions of $k_1, k_2 \in \{0,1\}^n$ are uniform. For post-quantum security in the QICM, the relevant quantity is the distinguishing advantage
\begin{equation}
\textsf{Adv}_\A^{\fx} := \left| \Pr_{\substack{E, k_0,k_1,k_2}} \left[ \A^{\fx[E],\ket{E}} = 1\right]  - \Pr_{\substack{R,E}} \left[ \A^{R,\ket{E}} = 1\right] \right|
\end{equation}
of an (unbounded) adversary $\A$ in distinguishing $\fx$ from a random permutation $R$ using $q_C$ classical queries, while making $q_Q$ quantum queries to $E$. Both forward and inverse queries are allowed. In \expref{Section}{sec:fx}, we show that
\begin{align}
    &\textsf{Adv}_\A^{\fx} \leq 4(q_C\sqrt{q_Q} + q_Q\sqrt{q_C}) \cdot 2^{-(m+n)/2}
    &&\text{for } q_C \ll 2^n\\
    &\textsf{Adv}_\A^{\fx} \leq q_Q^2 \cdot 2^{-m}
    &&\text{for } q_C \approx 2^n
\end{align}
For calculating the number of queries required for constant success probability, our bounds are tight. For small $q_C$ as well as $q_C \approx 2^n$ (i.e., the full codebook case), Grover search for remaining keys matches the bound. For $q_C \approx 2^{n/3}$, BHT collision-finding \cite{kuwakado2012security} also matches our bound. 

In contrast, in the classical ideal cipher model, the \fx construction is tightly secure (i.e., indistinguishable from an independent uniformly random permutation) with distinguishing advantage~\cite{JC:KilRog01}
\begin{equation}\label{eq:fx-classical}
       \textsf{Adv}^{\fx}_{\A} \leq \frac{q_{\fx}\cdot q_E}{2^{m+n}}.
\end{equation}
against an adversary making $q_{\fx}$ queries to $\fx_K$ and $q_E$ queries to the ideal cipher $E$.

\paragraph{Applications to lightweight ciphers.}
A relatively straightforward application of our lower bound for FX establishes post-quantum security of the \textsf{PRINCE}~\cite{borghoff2012prince} cipher in the QICM. The security of \textsf{PRINCE} is derived from the security of a primitive $\widetilde{\fx}$ which is closely related to standard \fx. Specifically, let $H$ be an $(m-1)$-dimensional subspace of $\mathbb{F}_2$, choose nonzero $\alpha \in \mathbb{F}_2^m$ not in $H$, and define
\begin{align*}
    \widetilde{\fx}_{k_0,k_1,k_2}(x) \;=\;
    \begin{cases}
        \fx_{k_0,k_1,k_2}(x), & \text{if } k_0 \in H, \\
        \fx^{-1}_{k_0 \oplus \alpha,\,k_1,\,k_2}(x), & \text{if } k_0 \in \alpha \oplus H.
    \end{cases}
\end{align*}
The post-quantum security of $\widetilde{\fx}$ then follows directly from the same security for \fx. 

Another application of our \fx\ analysis is
\textsf{PRIDE}. Since \textsf{PRIDE} instantiates \fx\ with the constraint
$k_1 = k_2$, the relevant construction in the QICM is
$\fx_{k_0,\,k_1,\,k_1}(x)$, which is already covered by our general proof.
Because our analysis relies only on the marginal uniformity of the keys,
this special case fits directly within our framework. As in \expref{Section}{subsec:application}, the post-quantum security of \textsf{PRIDE} follows immediately from our \fx\ result.

\paragraph{\textbf{\emph{Tweakable ciphers.}}} As a second major application of our technique for the QICM, we establish post-quantum security of two tweakable block ciphers: \lrw and \xext in \expref{Section}{sec:TBC}. Tweakable block ciphers like \lrw have a variety of applications, e.g., to modes for authentication and authenticated encryption~\cite{AC:Rogaway04}. The \xext cipher forms the core of the widely-used and standardized XTS-AES disk encryption scheme~\cite{P1619D4,NIST.SP.800-38E}.

The \lrw construction is defined by
\begin{equation}
    \lrw^{E,h}_{k,k'}(\tau,x) = E_k(x \oplus h_{k'}(\tau)) \oplus h_{k'}(\tau)\,.
\end{equation}
Here, $\tau$ is a tweak parameter, $h$ is an XOR-universal hash function, and $E$ is a block cipher. \xext is defined similarly, except that (roughly speaking) $E_k'$ (for independent $k'$) is used in place of $h$. Our post-quantum bound for distinguishing \lrw from an ideal tweakable cipher is similar to that for \fx, but with an additional additive term of $6 q_C^2 \cdot 2^{-n}$. This corresponds to a (purely classical) collision attack that is possible against \lrw but not against \fx. For \xext, our bound is $q_Q^2\cdot 2^{-m} + 3q_C^2\cdot 2^{-n}$, and is derived from \Cref{thm:intro-general} below.

In the classical setting, \lrw achieves birthday-bound security. In the ideal cipher model, its distinguishing advantage is bounded by approximately $2^{-n/2} + 2^{-m/2}$. Prior work~\cite{liskov2002tweakable,jha2020tight} establishes that any classical adversary requires $q = O(2^{n/2})$ queries to distinguish \lrw from an ideal tweakable cipher, and this bound is tight due to the existence of matching collision-based attacks.
Similarly, the standardized $\xext$ construction used in $\textsf{XTS-AES}$~\cite{P1619F} has been extensively analyzed~\cite{clunie2008public,liskov2008comments,rogaway2013evaluation}, exhibiting comparable birthday-bound security in the classical model. 

In the post-quantum setting, our additional term $q_C^2 \cdot 2^{-n}$ in the bound corresponds directly to the classical collision attack that also limits \lrw’s classical security. The remaining terms in our bound arise from the inherent quantum speedups in key search and collision finding.

\paragraph{\textbf{\emph{Block cipher modes.}}} In \expref{Section}{sec:BCM}, we give the following general lifting theorem for establishing post-quantum security of block cipher modes.

\begin{theorem}[informal]\label{thm:intro-general}
Let $\textsf{Exp}$ be a security experiment and $\textsf{Con}$ a construction instantiated with a block cipher $E$. Then the post-quantum security of $\textsf{Con}$ is bounded as follows:
\begin{align}
    \textsf{Adv}^{\textsf{Exp-PQ}}_{\textsf{Con}[E]}(q,t) \leq \textsf{Adv}_E^{\qsprp}(q',t) + \delta(q),
\end{align}
where $q'$ is the number of $E$-queries made by $\textsf{Con}$ and the challenger,  $(q,t)$ denotes the (query, time)-complexity of the quantum adversary, and $\delta(q)$ is the classical information-theoretic security of $\textsf{Con}$.
\end{theorem}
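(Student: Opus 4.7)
The plan is a standard two-step hybrid argument, essentially identical to the classical lifting theorems for block cipher modes, with attention paid to why the post-quantum model permits exactly the same reduction. Denote the real experiment by $\Hyb_0$, in which the challenger samples a secret key $k$, instantiates $E_k$, and gives the quantum adversary $\A$ classical access to the construction $\textsf{Con}[E_k]$ together with any additional oracles specified by $\textsf{Exp}$. Define $\Hyb_1$ to be the same experiment but with every invocation of $E_k$ (and $E_k^{-1}$), both by $\textsf{Con}$ and by the challenger, replaced by calls to a freshly sampled uniformly random permutation $\pi$. By the triangle inequality, the post-quantum advantage is bounded by
$$\bigl|\Pr[\Hyb_0 = 1] - \Pr[\Hyb_1 = 1]\bigr| + \bigl|\Pr[\Hyb_1 = 1] - c\bigr|,$$
where $c$ is the trivial winning probability prescribed by $\textsf{Exp}$ (e.g.\ $1/2$ for a distinguishing game, $0$ for a forgery game).

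For the first term, I would construct a quantum distinguisher $\B$ against the $\qsprp$ security of $E$. $\B$ receives an oracle $\Or$ that is either $E_k$ for a random $k$ or a random permutation $\pi$, and simulates $\textsf{Exp}$ internally: whenever $\textsf{Con}$ or the challenger would call $E_k^{\pm 1}$, $\B$ instead queries $\Or^{\pm 1}$; all of $\A$'s classical interactions with $\textsf{Con}$ are answered via these forwarded queries. The total number of oracle queries $\B$ makes is $q'$ by definition, and its running time is $t$ up to simulation overhead, giving the bound $\textsf{Adv}_E^{\qsprp}(q',t)$.

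For the second term, in $\Hyb_1$ the adversary's only access to secret material is through $\textsf{Con}[\pi]$ (plus the challenger) for a truly random permutation $\pi$, and by the definition of the post-quantum model all such queries are classical. Any quantum adversary in this setting that makes $q$ classical queries can be viewed as a (computationally unbounded) classical interactive adversary producing a classical transcript, followed by arbitrary quantum post-processing; since the winning condition depends only on the classical transcript, this post-processing cannot improve the advantage. Hence the information-theoretic classical bound $\delta(q)$ on $\textsf{Con}$ applies verbatim.

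The main subtlety, and the only point that requires genuine care, is the last step: we need $\delta(q)$ to be a bound that holds against \emph{unbounded} classical adversaries with $q$ classical queries, not merely against efficient ones. For the concrete constructions targeted in \expref{Section}{sec:BCM} (CBC, CTR, GCM, ECBC, CMAC, etc.) this is precisely what the standard classical proofs deliver, so the theorem applies cleanly; the statement should nonetheless record this requirement on $\textsf{Exp}$ and $\textsf{Con}$ explicitly, so that instantiations inheriting only a computational bound are not silently covered.
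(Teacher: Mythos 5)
Your proposal is correct and follows essentially the same route as the paper's proof of \Cref{thm:general-qprp-theorem}: a reduction adversary $\B$ that simulates the whole experiment while forwarding all $E_k^{\pm1}$ calls to its own classical SPRP-PQ oracle, followed by the observation that once $E_k$ is replaced by a random permutation the adversary's only access is classical, so the information-theoretic bound $\delta(q)$ applies regardless of local quantum power. Your explicit caveat that $\delta(q)$ must hold against computationally unbounded adversaries is exactly the hypothesis the paper builds into its definition of $\delta(q)$, so nothing is missing.
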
 
While this lifting theorem is straightforward to prove, it offers a useful and rather general observation for establishing post-quantum security for a wide variety of block cipher modes, including \cbc, \ecbc, \cmac, \gcm, and \textsf{GCM-SST}. It can also be applied in the QICM, with the SPRP advantage term becoming $(q')^2 / 2^m$, corresponding to a lower bound for quantum key search. This also yields lower bounds for \lrw and \xext, although, in these cases, better bounds are available using the hybrid technique discussed above. Some lower bounds computed via \Cref{thm:intro-general} are given in \Cref{app:modes-table}. 

We remark that, while \Cref{thm:intro-general} can be applied to \fx, \lrw, and \xext in the ideal cipher model, the specialized bounds we establish using the hybrid technique are far better.

\subsection{Related Work}\label{sec:related_work}

We focus on related work concerning post-quantum (i.e, Q1-model) security of keyed symmetric ciphers. We first discuss lower bounds, and then attacks. Since our results are query lower bounds, we will account separately for offline computation time and queries to the ``offline primitive'' (e.g., a public random permutation). Some cryptanalytic works do not make this distinction.

Alagic et al.~\cite{EC:ABKM22} established a tight lower bound for \emr using the hybrid technique discussed above. These methods were then extended to prove security of tweakable Even-Mansour~\cite{EC:ABKMS24}. Recently, these techniques were also used to prove results about the security of key-alternating Feistel~\cite{basak2025post}, multi-round Even-Mansour~\cite{bai2024quantum}, and Ascon~\cite{hosoyamada2025post}.

All ciphers under consideration can be attacked by first making (a small number of) classical queries, and then performing Grover search on the offline primitive for the full key. Kuwakado and Morii observed that BHT collision-finding~\cite{brassard1997quantum} can be used to perform key recovery for Even-Mansour using $O(2^{n/3})$ classical and quantum queries (and the same amount of time and space)~\cite{kuwakado2012security}. Bonnetain et al. improved on this with the offline Simon algorithm, achieving the same query complexity but only using polynomial quantum space~\cite{AC:BHNSS19}.

The offline-Simon algorithm also applies to key recovery on the \fx, demanding $O(2^{(m+n)/3})$ classical queries while maintaining $\textsf{poly}(n)$ classical memory. Additionally, the earlier meet-in-the-middle attack~\cite{hosoyamada2018cryptanalysis} on the \fx requires $O(2^{3(m+n)/7})$ classical queries and $O(2^{(m+n)/7})$ classical memory. Bonnetain et al.~\cite{EC:BonSchSib22} propose the first general quantum key-recovery attack in the post-quantum setting on a symmetric block cipher, offering a super quadratic speedup compared to the best classical attacks. Their work extends the offline-Simon algorithm to attack the $\textsf{2XOR}$ Cascade construction~\cite{EC:GazTes12}, achieving a $2.5\times$ quantum speedup in the exponent over the best-known classical attack.

\section{Preliminaries}

\paragraph{Notations and definitions.}
Sampling an element $s$ uniformly at random from a set $S$ is denoted by $s \leftarrow S$. We let $\permset{n}$ denote the set of all permutations on~$\bool^n$. A block cipher $E: \bool^m \times \bool^n \rightarrow \bool^n$ is a keyed permutation, i.e.,  $E_k(\cdot) = E(k, \cdot)$ is a permutation of $\bool^n$ for all $k \in \bool^m$. We also define the \emph{swap operator}  $\swap{a}{b}: \mathcal{X} \to \mathcal{X}$ as 
\begin{align}
    \swap{a}{b}(x) =
    \begin{cases}
    a, & \text{if } x = b,\\
    b, & \text{if } x = a,\\
    x, & \text{otherwise.}
    \end{cases}
\end{align}
That is, $\swap{a}{b}$ exchanges $a$ and $b$ and leaves all other elements unchanged.

For an oracle $\Or$, we write $\pm \Or$ to denote two-directional access, i.e., the adversary gets access to both $\Or$ and $\Or^{-1}$. 
Given a function $F:\bool^t \rightarrow \bool^\ell$, we let $\ket{F}$ denote the appropriate quantum oracle $\ket{x}\ket{y} \rightarrow \ket{x}\ket{y \oplus F(x)}$ for making quantum queries to $F$. 
So, for example, the notation $\A^{F, \ket{\pm P}}$ denotes an algorithm $\A$ that can make classical queries to a function $F$ (forward only) and quantum queries to a permutation $P$ (in both the forward and inverse direction).

\paragraph{A reprogramming lemma.}
For a function $F:\bool^{\ell} \rightarrow \bool^n$ and a set $B \subset \bool^{\ell} \times \bool^n$ such that each $x \in \bool^{\ell}$ is the first element of at most one tuple in~$B$,   define
\begin{align}
F^{(B)}(x) := 
\begin{cases}
y &\text{if } (x, y) \in B\\
F(x) &\text{otherwise.}
\end{cases}
\end{align}
The following is taken verbatim from~\cite[Lemma~3]{EC:ABKM22}:

\begin{lemma}[Reprogramming Lemma]
	\label{lma:reprogramming}
	Let $\D$ be a quantum distinguisher in the following experiment:
	\begin{description}
		\item[Phase 1:]  $\D$ outputs descriptions of a function $F_0=F: \bool^{\ell} \rightarrow \bool^n$ and a randomized algorithm $\mathcal B$ whose output is a set $B \subset \bool^{\ell} \times \bool^n$ where each $x \in \bool^{\ell}$ is the first element of at most one tuple in~$B$. Let $B_1 = \{x \mid \exists y: (x, y) \in B\}$
		and $	\epsilon = \max_{x \in \bool^{\ell}}\left\{\Pr_{B \leftarrow \B}[x \in B_1]\right\}.$
		
		\item[Phase 2:]  $\mathcal{B}$ is run to obtain~$B$. Let $F_1=F^{(B)}$.
		A uniform bit~$b$ is chosen, and
		$\D$ is given quantum access to~$F_b$. 
		\item[Phase 3:]  $\D$ loses access to $F_b$, and  receives the randomness~$r$  used to invoke~$\algo B$ in phase~2. Then $\D$ outputs a guess~$b'$. 
	\end{description}
	For any $\D$ making $q$ queries in expectation when its oracle is $F_0$, 
	it holds that 
	\begin{align}
	\left|\Pr[\mbox{$\D$ outputs 1} \mid b=1] - \Pr[\mbox{$\D$ outputs 1} \mid b=0]\right| \leq 2q \cdot \sqrt{\epsilon}
	\,.
	\end{align}
\end{lemma}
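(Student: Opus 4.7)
The plan is a BBBV-style hybrid argument combined with an expectation bound over the sampling of $B$, following the pattern used for random-function reprogramming lemmas and adapted here to a sampled reprogramming set.

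First, I would observe that Phase~3 makes no oracle queries; conditioned on the randomness $r$, the same post-processing is applied in both worlds. Hence the distinguishing advantage is upper bounded by the expected Euclidean distance $\Exp_B \big\| \ket{\psi^{F_1}_q} - \ket{\psi^{F_0}_q} \big\|$, where $\ket{\psi^{F_b}_q}$ denotes $\D$'s quantum state at the end of Phase~2 when the oracle is~$F_b$ (using the standard bound $\mathrm{TV} \leq \|\ket{\psi_0}-\ket{\psi_1}\|$ for distinguishing pure states by measurement).

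Second, for each fixed $B$ I would telescope over queries. Letting $U_b$ denote the oracle unitary for $F_b$ and $\ket{\psi^F_i}$ the state of $\D$ immediately before its $i$-th query when its oracle is $F=F_0$, unitarity of $\D$'s local operations yields
\begin{align}
\big\| \ket{\psi^{F_0}_q} - \ket{\psi^{F_1}_q} \big\| \leq \sum_{i=1}^q \big\| (U_0 - U_1) \ket{\psi^F_i} \big\| \leq 2 \sum_{i=1}^q \big\| \Pi_{B_1} \ket{\psi^F_i} \big\|,
\end{align}
where $\Pi_{B_1} = \sum_{x \in B_1} \proj{x}\otimes I$ acts on the query register. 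The second inequality uses that $U_0$ and $U_1$ agree on inputs whose first register lies outside $B_1$, so the operator $U_0 - U_1$ factors through $\Pi_{B_1}$.

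Third, I would take expectations over $B$. Since $\ket{\psi^F_i}$ is independent of $B$, writing $\ket{\psi^F_i} = \sum_x \ket{x} \otimes \ket{\phi_{i,x}}$ and using $\Pr_B[x \in B_1] \leq \epsilon$ for each $x$ yields
\begin{align}
\Exp_B \big\| \Pi_{B_1} \ket{\psi^F_i} \big\|^2 = \sum_x \|\phi_{i,x}\|^2 \Pr_B[x \in B_1] \leq \epsilon.
\end{align}
Jensen's inequality (concavity of $\sqrt{\cdot}$) then gives $\Exp_B \big\| \Pi_{B_1} \ket{\psi^F_i} \big\| \leq \sqrt{\epsilon}$, and summing over $i$ delivers the claimed $2q\sqrt{\epsilon}$ bound.

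The main subtlety I expect is that $q$ is the \emph{expected} query count under $F_0$ rather than a worst-case bound, and the telescoping step is most naturally phrased for a fixed query budget. I would handle this by modelling $\D$ as making a large number $Q$ of queries with a flag indicating whether each query $i$ is ``active''; the $i$-th summand in the hybrid then contributes only when query $i$ is active, and linearity of expectation replaces $Q$ by the expected number of active queries, which equals $q$. This bookkeeping is standard but is the only nontrivial hurdle in the argument.
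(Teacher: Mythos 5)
Your proposal is correct, and it follows essentially the same route as the original proof: the paper itself states this lemma verbatim from~\cite{EC:ABKM22} without reproving it, and the argument there is exactly the BBBV-style hybrid you describe (telescoping against the unperturbed $F_0$-states, bounding $U_0-U_1$ through the projector onto $B_1$, and then using independence of the $F_0$-states from $B$ together with Jensen to get $\sqrt{\epsilon}$ per query, with the expected-query-count bookkeeping handled as you indicate). No gaps.
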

 \section{Ideal Cipher Resampling}
\label{sec:resampling_ic}

The hybrid technique described in \Cref{sec:tech-sum} is rather general, and could in principle be applied to a wide variety of settings. One particular setting of interest is the Quantum Ideal Cipher Model, where the oracle $\ket{E}$ gives quantum oracle access to the ideal cipher, and the oracle $E_{(k)}$ gives classical oracle access to some construction based on $E$ (e.g., a tweakable cipher.) The main technical ingredient that is then required is an appropriate resampling lemma. We now briefly introduce the Quantum Ideal Cipher Model (QICM) and state our new resampling lemma for that model.

\paragraph{Quantum Ideal Cipher Model.} The QICM was first discussed in~\cite{AC:HosYas18}. Define $\mathcal{E}(m,n)$ to be the space of all keyed permutations where the key $k \in \{0,1\}^m$ and the input $x \in \{0,1\}^n$. In the quantum ideal cipher model, we treat the underlying block cipher as an ideal cipher $E \leftarrow \mathcal{E}(m,n)$, meaning that for each key $k$, the function $E_k$ is a random permutation. Additionally, adversaries have quantum-computational capabilities in this model, and are thus allowed to make both forward and backward quantum queries to the ideal cipher.

In the security notions mentioned above, we consider algorithms having only classical access to secretly keyed primitives. When we consider constructions of keyed primitives (e.g., a tweakable block cipher) from public primitives (e.g., a random permutation), however, we provide the distinguisher with \emph{quantum} oracle access to the public primitive. Thus, for example, a quantum distinguisher in the QICM can apply the unitary operators
\begin{align}
\ket{k}\ket{x}\ket{y} &\mapsto \ket{k}\ket{x}\ket{E_k(x) \oplus y}\\
\ket{k}\ket{x}\ket{y} &\mapsto \ket{k}\ket{x}\ket{E_k^{-1}(x) \oplus y}
\end{align}
to quantum registers of the adversary's choice. We will denote an algorithm $\A$ with such access to a block cipher $E$ by $\A^{\ket{\pm E}}$.

\paragraph{A resampling lemma for the QICM.} 
We first give a resampling lemma for the ideal cipher model, generalizing a lemma of Hosoyamada~\cite[Lemma~3]{hosoyamada2025post}. Our generalization differs from~\cite{hosoyamada2025post} in that it works in the QICM rather than the random permutation model, and in that it allows an arbitrary (distinguisher-chosen) distribution of the key and resampling points, rather than restricting to uniform sampling.

\begin{lemma}\label{lem:resampling-ic}
Let $\algo D=(\D_0,\D_1)$ be a quantum distinguisher interacting with the 
following experiment:
\begin{description} 
    \item[Phase 1:] Choose a uniform ideal cipher $E \in \algo E(m, n)$ and give $\D$ 
    quantum access to $E$ and $E^{-1}$. Then $\algo D_0$ outputs a distribution $D$ 
    over $\bool^{m+2n}$.
    \item[Phase 2:] Sample $(k_0,s_0,s_1) \in \bool^m \times \bool^n \times \bool^n$ 
    according to $D$. Define $E^{(0)} = E$ and $E^{(1)}$ as
    \begin{equation}
        E_{k^*}^{(1)}(x) =
        \begin{cases}
            E_{k^*}(x), & \text{if } k^* \neq k_0, \\[1ex]
            E_{k^*} \circ \swap{s_0}{s_1}(x), & \text{if } k^* = k_0,
        \end{cases}
    \end{equation}
    where $\swap{s_0}{s_1}$ is the transposition swapping $s_0$ and $s_1$. 
    A uniform bit $b \in \bool$ is chosen, and $\D$ is given $k_0, s_0, s_1$ 
    along with quantum access to $E^{(b)}$. Finally, $\D$ outputs a guess $b'$. 
\end{description}

For a distribution $D$ on $\{0,1\}^{m+2n}$, define
    \begin{equation}
        \epsilon \;=\; \max_{(k_0^*, s_0^*, s_1^*) \in \{0,1\}^{m+2n}} D(k_0^*, s_0^*, s_1^*) \,.
    \end{equation}
Then for any distinguisher $\D$ making at most $q$ queries to $E$ in Phase~1, it holds that
    \begin{equation}
        \left|\Pr[\D \text{ outputs } 1 \mid b=1] - \Pr[\D \text{ outputs } 1 \mid b=0]\right|
        \;\leq\; 4 \sqrt{2^n \cdot q \cdot \epsilon}\,.
    \end{equation}
\end{lemma}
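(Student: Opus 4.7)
The plan is to reduce the claim to an application of a one-way-to-hiding (O2H)-style bound in phase~1, via a symmetry substitution, and then average the resulting conditional bound using Jensen's inequality. First, fix a triple $\tau=(k_0^*,s_0^*,s_1^*)$. For fixed $\tau$, the map $E \mapsto E^{(\mathrm{swap})}$ (swap of $E_{k_0^*}$ on inputs $s_0^*,s_1^*$) is an involutive bijection on $\mathcal{E}(m,n)$ that preserves the uniform measure. Substituting $E \mapsto E^{(\mathrm{swap})}$ in the $b=1$ case reframes the experiment so that the two worlds differ only in \emph{phase~1}: the phase-1 oracle is $E$ in one world and $E^{(\mathrm{swap})}$ in the other, while phase~2 always uses~$E$. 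The intuition is that, since phase~2 is unbounded and the triple is revealed, the distinguisher can freely recover the current images $E^{(b)}_{k_0^*}(s_0^*)$ and $E^{(b)}_{k_0^*}(s_1^*)$; the only source of distinguishing signal is whether its phase-1 memory is correlated with the original un-swapped values at these two points.

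Next, apply an O2H-style bound to phase~1. Given $\tau$, the oracles $E$ and $E^{(\mathrm{swap})}$ disagree on at most four input-direction pairs: the forward points $(k_0^*,s_0^*)$ and $(k_0^*,s_1^*)$, and the inverse points $(k_0^*,E_{k_0^*}(s_0^*))$ and $(k_0^*,E_{k_0^*}(s_1^*))$. The O2H lemma bounds the conditional distinguishing advantage by $2\sqrt{W_B(\tau)}$, where
\begin{equation*}
W_B(\tau) \;=\; \sum_{i=1}^{q} \bigl\|\Pi_{S(\tau)}\ket{\psi_i}\bigr\|^2
\end{equation*}
is the total phase-1 quantum query weight on the four bad points $S(\tau)$. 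Averaging over $\tau \sim D$ and applying Jensen's inequality yields an overall advantage bound of $2\sqrt{\mathbb{E}_{\tau\sim D}[W_B(\tau)]}$. Each of the four bad points has $D$-marginal at most $2^n \epsilon$ (by summing $D$ over the unused coordinate), and for each phase-1 query the squared amplitudes across all input points sum to~$1$; hence each bad-point contribution is at most $2^n q \epsilon$, and $\mathbb{E}_{\tau\sim D}[W_B(\tau)] \leq 4 \cdot 2^n q \epsilon$. This yields the claimed bound $4\sqrt{2^n q\epsilon}$.

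The main technical obstacle is implementing the symmetry substitution rigorously: the distribution $D$ itself depends on the phase-1 transcript and hence on $E$, creating a circular dependence between the choice of swap and the distribution used to sample $\tau$. This is resolved by conditioning on $\tau$ first, so that for each fixed $\tau$ the substitution $E \mapsto E^{(\mathrm{swap})}$ is a genuine bijection on $\mathcal{E}(m,n)$ and the downstream O2H argument is unambiguous. A secondary subtlety is that the two inverse bad points depend on the random values $E_{k_0^*}(s_j^*)$; by uniformity of $E$ on key $k_0^*$, these points are distributed uniformly over a large set, and a compressed-oracle-style accounting shows that the per-point marginal bound of $2^n\epsilon$ still applies, so the four-point bookkeeping goes through.
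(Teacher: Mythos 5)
Your architecture is right: the symmetry substitution that pushes the entire difference into Phase~1, the four bad points (two forward, two inverse), the $2^n\epsilon$ marginals obtained by summing $D$ over the unused coordinate, and the final Jensen step all match the paper's accounting and do combine to give $4\sqrt{2^n q\epsilon}$. The gap is the central step: the bound ``advantage $\leq 2\sqrt{W_B(\tau)}$'' with $W_B(\tau)=\sum_{i=1}^q\bigl\|\Pi_{S(\tau)}\ket{\psi_i}\bigr\|^2$ is \emph{not} what the standard O2H/BBBV lemma gives. For two fixed oracles differing on a set $S$, the hybrid argument yields $\|\ket{\psi^G}-\ket{\psi^H}\|\leq 2\sum_i\|\Pi_S\ket{\psi_i}\|\leq 2\sqrt{q\,W_B}$, and the extra $\sqrt{q}$ is unavoidable in general: an adversary that places weight $w$ on $S$ in each of $q$ queries and accumulates the per-query perturbations coherently achieves advantage $\Theta(q\sqrt{w})$ while $W_B=qw$, beating $2\sqrt{W_B}$ by a factor of $\sqrt{q}$. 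Plugging the correct generic bound into your argument gives $4q\sqrt{2^n\epsilon}$ rather than the claimed $4\sqrt{2^n q\epsilon}$. The $\sqrt{q}$-free form does hold here, but only because of the resampling structure --- the swap exchanges two values of a uniformly random permutation and hence preserves the oracle's distribution --- and establishing it is the actual content of the lemma. The paper does this with a superposition (permutation-database) oracle: it splits the Phase-1 state into a ``good'' component, shown to be \emph{invariant} under the operator $\S_{F_{k_0,s_0},F_{k_0,s_1}}$ that swaps the two database cells, and a ``bad'' component whose squared norm is bounded by $2\sum_i\epsilon_i$ via a gentle-measurement argument (crucially, the increments $\epsilon_i$ are query weights of the pruned branch, which is what kills the cross terms that would otherwise reintroduce the factor $q$). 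Your proposal asserts the conclusion of this argument as though it were an off-the-shelf lemma.

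Two smaller points. First, the two inverse bad points are $(k_0^*,E_{k_0^*}(s_j^*))$, which depend on the oracle itself; a projector onto a fixed classical set $S(\tau)$ does not capture them, and the paper instead uses a projector $P^{\mathrm{inv}}_{k_0,s_0,s_1}$ acting jointly on the query registers and the database cells $F_{k_0,s_0},F_{k_0,s_1}$ --- your ``compressed-oracle-style accounting'' remark gestures at this, but it is precisely where the formalism is needed rather than optional. Second, conditioning on $\tau$ first does not fully resolve the adaptivity of $D$: since $\tau\sim D$ and $D$ is computed from the Phase-1 transcript, the posterior on $E$ given $\tau$ is no longer uniform, so the claim that $E\mapsto E^{(\mathrm{swap})}$ preserves the relevant measure requires justification; the paper sidesteps this by keeping $D$ coherent in a register $Z$, deferring its measurement to the end, and using that the projector $P_D$ commutes with the database swap.
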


The proof follows the approach of~\cite{EC:ABKM22,hosoyamada2025post}, which proceeds by bounding the trace distance between the quantum states produced after Phase~1 in the original case ($b=0$) and the reprogrammed case ($b=1$). Crucially, the distinguishing advantage depends only on the queries made in Phase~1: the swap operation itself does not alter the distribution of $E$ globally, and so unless the adversary has queried the resampled points during Phase~1, the two cases are identically distributed. The proof of \Cref{lem:resampling-ic} is given in \expref{Appendix}{app:arl-ic}.

\section{Post-Quantum Security of \fx}
\label{sec:fx}

\subsection{The construction}

The \fx key-length extension construction was analyzed in~\cite{C:KilRog96,JC:KilRog01}. It is defined as follows.

\begin{definition} [\fx construction] \label{def:fx}
    Let $m$ and $n$ be positive integers. Let $E: \{0,1\}^m \times \{0,1\}^n \rightarrow \{0,1\}^n$ be a block cipher. The \fx construction is defined as 
    \begin{equation}
        \fx^{E}_{K}(x) = E_{k_0}(x \oplus k_1) \oplus k_2\,,
    \end{equation} 
     where $K = (k_0, k_1, k_2)$ with $k_0 \leftarrow \{0,1\}^m$ sampled independently and uniformly, and the marginal distributions of $k_1, k_2 \in \{0,1\}^n$ are uniform. For the remainder of this paper, we use $\mathcal{K}$ to denote this distribution over $K$.
\end{definition}

\subsection{Post-Quantum Security}

We now state and prove a post-quantum security bound for \fx (\expref{Definition}{def:fx}). As in the classical case discussed above, we are concerned with the maximum distinguishing advantage between \fx and an independent uniformly random permutation in the ideal-cipher model. 

\begin{theorem} \label{thm:Q1-secure-FX}
    Let $\A$ be an adversary making $q_C$ classical queries to its first oracle and $q_Q$ quantum queries to its second oracle. Then
    \begin{align*}
        \left| \Pr_{
        	{(k_0,k_1,k_2) \leftarrow \{0,1\}^{m+2n};\  
        		  E \leftarrow \mathcal{E}(m,n)}} \left[ \A^{\pm\fx_{K}[E],\ket{\pm E}} = 1\right] \right. \\
        \left .- \Pr_{
        		  {R \leftarrow  \mathcal{P}(n); 
        		  	 E \leftarrow  \mathcal{E}(m,n)}} \left[ \A^{\pm R,\ket{\pm E}} = 1\right] \right|  \\ 
       \leq \frac{4}{\sqrt{2^m(2^n-q_C+1)}}\cdot  q_C \sqrt{q_Q} + \frac{2}{\sqrt{2^{m+n}}}q_{Q}\sqrt{q_C}.
    \end{align*}
\end{theorem}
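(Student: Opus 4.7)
The plan is to adapt the hybrid proof technique of~\cite{EC:ABKM22}, with Lemma~\ref{lem:resampling-ic} playing the role of the quantum-indistinguishability step. I would interpolate between the real world $(\fx_K[E],\ket{\pm E})$ and the ideal world $(R,\ket{\pm E})$ by replacing one classical query at a time with a response from an independent random permutation $R$, compensating via a local swap inside $E_{k_0}$ so that both oracles remain consistent with each other.

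Concretely, let $H_0$ be the real world and $H_{q_C+1}$ the ideal world; for $1\le j\le q_C$, in hybrid $H_j$ the quantum oracle is $\ket{\pm E^{[j]}}$, where $E^{[j]}_{k_0}$ is obtained by composing $E_{k_0}$ with $j$ transpositions, the $i$-th one swapping $x_i\oplus k_1$ with $z_i:=E^{[i-1]}_{k_0}{}^{-1}(y_i\oplus k_2)$, where $(x_i,y_i)$ is the $i$-th classical query/response with $y_i=R(x_i)$ (treating inverse queries symmetrically). By construction $\fx_K[E^{[j]}](x_i)=y_i$ for all $i\le j$, so the classical oracle in $H_j$ may uniformly be described as $\fx_K[E^{[j]}]$ and agrees with $R$ on the first $j$ queries.

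To bound $|H_{j-1}-H_j|$, I would invoke Lemma~\ref{lem:resampling-ic} with $D$ concentrated on $(k_0,\,x_j\oplus k_1,\,z_j)$. Since the first $j-1$ classical responses are $R$-values (independent of $K$) and the prior swaps hide $K$ at the quantum oracle, $k_0,k_1,k_2$ remain uniform conditioned on the adversary's view: $x_j\oplus k_1$ is uniform via $k_1$, and $z_j$ is uniform over a set of size at least $2^n-q_C+1$ once collisions with previously used swap points are excluded. Hence $\epsilon\le 1/(2^m\cdot 2^n\cdot(2^n-q_C+1))$, and with Phase~1 queries bounded by $q_Q$ the per-hybrid cost is $4\sqrt{q_Q}/\sqrt{2^m(2^n-q_C+1)}$, summing over the $q_C$ transitions to the first term of the theorem.

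The final transition $H_{q_C}\to H_{q_C+1}$ undoes all $q_C$ swaps in the quantum oracle (the classical oracles are identically distributed, both giving $R$-values on the queries issued). I would apply Lemma~\ref{lma:reprogramming}: each location $(k^*,z^*)\in\bool^m\times\bool^n$ lies in the reprogrammed set $B$ with probability at most $O(q_C/2^{m+n})$, because $k_0,k_1$ and the unknown permutation $E_{k_0}$ are uniform, yielding the bound $O(q_Q\sqrt{q_C/2^{m+n}})$ that matches the second term. The main obstacle will be carefully managing conditional distributions across the chain of hybrids: one must verify that chained applications of Lemma~\ref{lem:resampling-ic} to the already-modified cipher $E^{[j-1]}$ are valid (using that a uniform permutation composed with a random hidden swap is again a uniform permutation) and that adaptive quantum queries to $\ket{\pm E^{[j-1]}}$ together with adaptive choices of $x_j$ do not meaningfully reduce the min-entropy of $(k_0,s_0,s_1)$ below the claimed value.
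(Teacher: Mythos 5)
Your proposal assembles the right ingredients---swap-based consistency, the resampling lemma for inserting a hidden transposition per classical query, and the reprogramming lemma for removing the modifications---and your bookkeeping of the two terms in the bound is essentially correct. However, the hybrid architecture has a genuine gap that reintroduces exactly the circularity this technique is designed to avoid. In your chain, the key-dependent swaps accumulate in the quantum oracle across \emph{all} subsequent stages and are only undone in a single final step. This means that in the transition $H_{j-1}\to H_j$, Phase~1 of the resampling experiment (everything before the $j$-th classical query) must present the adversary with the already-modified oracles $\ket{\pm E^{[1]}},\dots,\ket{\pm E^{[j-1]}}$, whose swap locations $x_i\oplus k_1$ and $z_i$ depend on $(k_0,k_1,k_2)$. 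But the resampling reduction only learns the key material in Phase~2, so the Phase-1 prefix cannot be simulated; equivalently, your needed claim that ``the prior swaps hide $K$'' and that the min-entropy of $(k_0,s_0,s_1)$ conditioned on the adversary's view stays at $m+2n-\log(2^n-j)$ is precisely the security statement being proven. The paper's proof avoids this by interleaving a reprogramming step after \emph{every} classical query ($\Hyb_j^{3}\to\Hyb_{j+1}$ in \expref{Lemma}{lma:fx-Hj3-Hj+1}): the quantum oracle for the just-completed stage is switched back from $E^{j+1}$ to the unmodified $E$, so that every hybrid's prefix consists only of $R$ and the untouched ideal cipher and is simulatable with no key knowledge.

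Your single terminal application of \expref{Lemma}{lma:reprogramming} also does not fit that lemma's phase structure: the lemma requires the randomized algorithm $\B$ (hence the reprogrammed set $B$) to be fixed at the end of Phase~1, before any Phase-2 queries to $F_b$, whereas your set of $q_C$ swap locations is built adaptively from classical queries that are interleaved with the very quantum queries you need $F_b$ to answer. The paper's per-stage application works because the set for stage $j+1$ depends only on $T_{j+1}$, which is fully determined before that stage's quantum queries begin, and the per-stage costs $2q_{Q,j+1}\sqrt{2(j+1)/2^{m+n}}$ sum to the second term of the theorem. Your identification of the input-side swap $\swap{x_i\oplus k_1}{(E^{[i-1]}_{k_0})^{-1}(y_i\oplus k_2)}$ with the paper's output-side recursive definition is correct (this is \expref{Lemma}{lma:fx-Hj1-Hj2}), as is the min-entropy calculation $\epsilon\le 1/(2^{m+n}(2^n-j))$; what is missing is the interleaved reset that makes each reduction well-founded.
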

\begin{proof}

For a general adversary, whether a particular classical query is made in the forward or inverse direction can be chosen in an arbitrary, adaptive manner. Without loss of generality, we consider adversaries who fix this order in advance; this incurs a factor 2 cost in the total number of queries. Our proof will proceed using a hybrid-by-classical-queries approach. We will now give the proof in the case where every classical query is in the forward direction. The case where the relevant classical query is in the inverse direction is handled analogously, and we omit the detailed proof.

We begin by setting down a way of modifying a given cipher based on a choice of key and a list of classical queries to an \fx oracle. Let $E \in \mathcal{E}(m,n)$, $K=(k_0,k_1,k_2) \leftarrow \mathcal{K}$, and fix a list $\{(x_1,y_1),(x_2,y_2),\ldots, (x_{q_C},y_{q_C})\}$. Each pair $(x_i,y_i)$ represents an input--output pair of a classical query. Repeated queries are not allowed, i.e., $x_i\neq x_{i'}$ for all $i\neq i'$. For any $j \leq q_C$, let $T_j$ be the list containing the first $j$ queries. The ``modified cipher'' after $j$-many queries is denoted $E^{T_j,K}$, and is given by
    \begin{align} \label{eqn:adaptswap}
    E_{k^*}^{T_j,K}(x)= \begin{cases}
        E_{k^*}(x) &\text{ if } k^* \neq k_0\\
        E^{T_j,K}_{k_0}(x) &\text{ if } k^* = k_0.
    \end{cases}
    \end{align}
where $E^{T_j,K}_{k_0}$ is defined as follows. First, define $E^{T_0,K}=E$, and for all $j \in [1, q_C]$,
\begin{equation}
     E^{T_j,K}_{k_0}(x) = \swap{E^{T_{j-1},K}_{k_0}( x_{j}\oplus k_1)}{y_{j}\oplus k_2} \circ E^{T_{j-1},K}_{k_0}(x)\,.
\end{equation}

Note that the above definition differs from prior work~\cite{EC:ABKM22,EC:ABKMS24},\ \cite{basak2025post}. Specifically, $E^{T_i,K}$ is now constructed recursively based on $E^{T_{i-1},K}$, whereas in previous work, it was not. Roughly speaking, $E^{T_j, K}$ denotes a slight modification of $E$ that remains consistent with the transcript $T_j$, as stated in \expref{Proposition}{prop:E-T_jk-y}. The proof is straightforward and we omit the routine details.
    
\begin{proposition}\label{prop:E-T_jk-y}
    For any $E \in \mathcal{E}(m,n)$, $K = (k_0, k_1, k_2) \leftarrow \mathcal{K}$, $j \in \{1, \ldots, q_C\}$, transcript $T_j = \{(x_1, y_1), \ldots, (x_j, y_j)\}$ without repetition, and any $i \in \{1, \ldots, j\}$, it holds that 
    \begin{align*}
        E^{T_j,K}_{k_0}(x_i \oplus k_1) \oplus k_2 = y_i.
    \end{align*} 
\end{proposition} 
    For compactness, we occasionally write $E^j$ in place of $E^{T_j,K}$ when $T_j$ and $K$ are understood from the context. We also set $E^0=E$.

    We now define a sequence 
    \begin{equation}
        \Hyb_0, \Hyb_0^{1}, \Hyb_0^{2}, \Hyb_0^{3}, \Hyb_1, \Hyb_1^1,  \dots, \Hyb_{q_C}^{3}
    \end{equation}
    of hybrid experiments. Each experiment begins with sampling uniform $R \in \mathcal{P}(n)$ and $E \in \mathcal{E}(m,n)$, and a uniform key $K=(k_0,k_1,k_2) \leftarrow \mathcal{K}$. The remaining steps of each hybrid are as follows.
    
  \noindent \textbf{Experiment $\Hyb_j$.}
    \begin{enumerate}
        \item Run $\A$, answering its classical queries using $R$ and quantum queries using $E$, stopping immediately \textit{before} its $(j+1)^{\text{st}}$ classical query. Let $T_j$ be the list of classical queries so far. 
        \item For the remainder of the execution of $\A$, answer its classical queries by $\fx_{K}[E^{T_j,K}]$ and its quantum queries by $E^{T_j, K}$.
    \end{enumerate}

    \noindent \textbf{Experiment $\Hyb_j^{1}$.}
    \begin{enumerate}
        \item Run $\A$, answering its classical queries using $R$ and its quantum queries using $E$, until $\A$ makes its $(j+1)^{\text{st}}$ classical query, which we assume to be in the forward direction.
        Let $T_{j}$ be the list of classical queries so far. 
        
        \item Define set $S= \bool^n \backslash \{x_1\oplus k_1,\cdots,x_j\oplus k_1\}$. Choose uniform $s \in S$, and define $E^{(1)}$ as
        \begin{align*}
            E_{k^*}^{(1)}(x)=
    	\begin{cases}
    		E_{k^*}(x) &\text{if } k^*\neq k_0\\
    		\left(E_{k_0} \circ \swap{k_1\oplus x_{j+1}}{s} \right)(x) &\text{if } k^*=k_0 \text{.}
    	\end{cases}
        \end{align*}
        Continue running $\A$, answering its remaining classical queries (including the $(j+1)^{\text{st}}$) using $\fx_K[(E^{(1)})^{T_j,K}]$, and its quantum queries using $(E^{(1)})^{T_j,K}$.
    \end{enumerate}

   \noindent \textbf{Experiment $\Hyb_j^{2}$.}
    \begin{enumerate}
        \item Run $\A$, answering its classical queries using $R$ and its quantum queries using $E$, until $\A$ makes its $(j+1)^{\text{st}}$ classical query, which we assume to be in the forward direction. Let $T_{j+1}$ be the list of classical queries so far, and $T_j$ the subset consisting of the first $j$ queries.
        \item Answer query $j+1$ as in $\Hyb_j^{1}$, i.e., with $(j+1)^{\text{st}}$ query using $y_{j+1} := \fx_K[(E^{(1)})^{T_j,K}](x_{j+1})$.
        Then construct $E^{j+1} \equiv E^{T_{j+1},K}$ as defined adaptively as in \expref{Equation}{eqn:adaptswap}.
        Continue running $\A$, answering its remaining classical queries using $\fx_K[E^{T_{j+1},K}]$, and its quantum queries using $E^{T_{j+1},K}$.
    \end{enumerate}

   \noindent \textbf{Experiment $\Hyb^3_j$.}
    \begin{enumerate}
        \item Run $\A$, answering its classical queries using $R$ and its quantum queries using $E$, stopping immediately \textit{after} its $(j+1)^{\text{st}}$ classical query. Let $T_{j+1}$ be the set of classical queries so far.
        \item For the remainder of the execution of $\A$, answer its classical queries using $\fx_K\left[E^{T_{j+1},K}\right]$ and its quantum queries using $E^{T_{j+1},K}$, i.e. $\ket{E^{j+1}}$. 
    \end{enumerate}

    We can compactly represent the hybrids $\left\{\Hyb_j, \Hyb_j^{1}, \Hyb_j^{2}, \Hyb_j^{3}, \Hyb_{j+1}\right\}$ as the experiments in which $\A$'s queries are answered using the following oracle sequences. Let $(E^{(1)})^j$ denote $(E^{(1)}_{k_0})^{T_j,K}$.
    \begin{alignat*}{5}
        \Hyb_j: &\ket{E}, R, \ket{E}, \cdots, R, \ket{E}, 
        ~&& \fx_K[~~~E^j~~], \ket{E^j}~~~~~~\;, 
        \fx_K \left[E^j\right], \ket{E^j}, \cdots\\
        \Hyb_j^{1}: &\ket{E}, R, \ket{E}, \cdots, R, \ket{E}, &&\fx_K[(E^{(1)})^j], \ket{(E^{(1)})^j} , 
        \fx_K[(E^{(1)})^j], \ket{(E^{(1)})^j}, \cdot\\
        \Hyb_j^{2}:  &\ket{E}, R, \ket{E}, \cdots, R, \ket{E},
        ~&&\fx_K[(E^{(1)})^j], \ket{E^{j+1}}~~~, 
        \fx_K[E^{j+1}], \ket{E^{j+1}}, \cdots\\
        \Hyb_j^{3}: &\ket{E}, R, \ket{E}, \cdots, R, \ket{E}, ~&&~~~~~~~R~~~~~~~,\ket{E^{j+1}}~~~,
        \fx_K[E^{j+1}], \ket{E^{j+1}}, \cdots\\
        \Hyb_{j+1}:  &\underbrace{\ket{E}, R, \ket{E}, \cdots, R, \ket{E}}_{j \text { classical queries }}, 
        ~&&\underbrace{~~~~~~~R~~~~~~~,\ket{E}~~~~~~}_{(j+1)^{\text{st}} \text{ classical query}}, 
        \underbrace{\fx_K[E^{j+1}], \ket{E^{j+1}}, \cdots}_{q_C-j-1 \text { classical queries }}.
    \end{alignat*}
    
    Then we establish the following bounds on the distinguishability of $\Hyb_j$ and $\Hyb_{j+1}$, step by step, for $0 \leq j < q_C$:
    \begin{align*}
       \text{\expref{Lemma}{lma:fx-Hj-Hj1}:} & \quad | \Pr[\A(\Hyb_j)=1] - \Pr[\A(\Hyb_j^{1})=1]| \leq 4\sqrt{\frac{q_Q}{2^m(2^n-j)}} \\
        \text{\expref{Lemma}{lma:fx-Hj1-Hj2}:}  & \quad \Hyb_j^{1} = \Hyb_j^{2} \qquad \\
        \text{\expref{Lemma}{lma:fx-Hj2-Hj3}:}& \quad \Hyb_j^{2} = \Hyb_j^{3}\\
        \text{\expref{Lemma}{lma:fx-Hj3-Hj+1}:} & \quad | \Pr[\A(\Hyb_j^{3})=1] - \Pr[\A(\Hyb_{j+1})=1]| \leq 2 \cdot q_{Q,j+1} \sqrt{\frac{2(j+1)}{2^{m+n}}},
    \end{align*}
     where $q_{Q,j+1}$ is the expected number of queries $\A$ makes to $P$ in the $(j+1)^{\text{st}}$ stage, i.e., the stage between the $(j+1)^{\text{st}}$ and $(j+2)^{\text{nd}}$ classical queries.
    
    Using the above, we have 
    \begin{align} \label{eqn:bound}
        &| \Pr[\A(\Hyb_0)=1] - \Pr[\A(\Hyb_{q_C})=1]| \nonumber \\
        &\leq \sum_{j=0}^{q_C-1} \left(4\sqrt{\frac{q_Q}{2^m(2^n-j)}}+ 2 \cdot q_{Q,j+1} \sqrt{\frac{2(j+1)}{2^{m+n}}} \right) \nonumber\\
        &\leq \sum_{j=0}^{q_C-1} \left(4\sqrt{\frac{q_Q}{2^m(2^n-j)}}+ 2 \cdot q_{Q,j+1} \sqrt{\frac{2q_C}{2^{m+n}}}\right) \nonumber \\
        &\leq \frac{4}{\sqrt{2^m(2^n-q_C+1)}}\cdot  q_C \sqrt{q_Q} + \frac{2}{\sqrt{2^{m+n}}}q_{Q}\sqrt{q_C}.
    \end{align}
    \qed
\end{proof}
\noindent\textbf{Remark.} 
When $q_C <\frac 3 4 2^n$, our bound can be simplified to 
\[
\frac{8}{\sqrt{2^{m+n}}}\bigl(q_C\sqrt{q_Q} + q_Q\sqrt{q_C}\bigr).
\] 

As $q_C$ approaches $2^n$ (in particular when $q_C = 2^n$), the bound above is no longer tight. In this case, as explained in \expref{Section}{sec:q1-sec-ic}, the distinguishing problem between $(R,E)$ and $(\fx_{K},E)$ is equivalent to the \textsf{UNIQUE-SEARCH} problem on $k_0$. By \expref{Corollary}{thm:pq-security-ideal-cipher}, this yields an advantage of $\tfrac{q_Q^2}{2^m}$.

We now prove \expref{Lemma}{lma:fx-Hj-Hj1}, \expref{Lemma}{lma:fx-Hj1-Hj2}, \expref{Lemma}{lma:fx-Hj2-Hj3} and \expref{Lemma}{lma:fx-Hj3-Hj+1}.

\begin{lemma}\label{lma:fx-Hj-Hj1}
For $j=0, \ldots, \formerqC$, 
\begin{align*}
    \left|\Pr[\A(\Hyb_j) = 1] - \Pr[\A(\Hyb^{1}_j)=1]\right| \leq 4\sqrt{\frac{q_Q}{2^m(2^n-j)}}\,.
\end{align*}
\end{lemma}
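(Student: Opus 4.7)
The plan is to reduce the distinguishing problem between $\Hyb_j$ and $\Hyb_j^{1}$ to the ideal-cipher resampling experiment of \Cref{lem:resampling-ic}. The only difference between the two hybrids is that in $\Hyb_j^{1}$ the $k_0$-row of the ideal cipher is pre-composed with the transposition $\swap{k_1 \oplus x_{j+1}}{s}$ \emph{before} the modified cipher $(\cdot)^{T_j,K}$ is constructed. This is exactly a single swap on one row of the ideal cipher, which is what \Cref{lem:resampling-ic} is tailored to handle.

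First I would build a distinguisher $\D$ for the resampling experiment that simulates one of the two hybrids to $\A$. In Phase~1, $\D$ samples $R \leftarrow \Perms(n)$ and $k_1, k_2 \leftarrow \bool^n$ on its own, forwards every quantum ideal-cipher query of $\A$ to its own Phase-1 oracle, and answers $\A$'s classical queries using $R$. It halts $\A$ immediately before its $(j+1)$-st classical query, reads off $x_{j+1}$, and outputs the distribution $D$ on $(k_0, s_0, s_1) \in \bool^{m+2n}$ defined by: sample $k_0 \leftarrow \bool^m$; sample $k_1' \leftarrow \bool^n$ and set $s_0 = k_1' \oplus x_{j+1}$; sample $s_1$ uniformly from $\bool^n \setminus \{x_i \oplus k_1' : i \leq j\}$. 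In Phase~2, $\D$ recovers $k_1' = s_0 \oplus x_{j+1}$, reuses its sampled $k_2$, and resumes $\A$, answering classical queries with $\fx_K[(E^{(b)})^{T_j,K}]$ and quantum queries with $(E^{(b)})^{T_j,K}$; both are computed on the fly from $\D$'s Phase-2 oracle $E^{(b)}$ together with the now-known transcript and key, via the recursive swap definition. When $b = 0$ the induced view of $\A$ is exactly $\Hyb_j$, and when $b = 1$ the induced $E^{(1)}$ matches the definition in $\Hyb_j^{1}$, so the view is exactly $\Hyb_j^{1}$.

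Next I would compute the max-probability parameter. For any fixed $(k_0^*, s_0^*, s_1^*)$, the three sampling steps are (conditionally) uniform on sets of size $2^m$, $2^n$, and $2^n - j$, respectively, so
\begin{align*}
    \epsilon \;=\; \max_{(k_0^*, s_0^*, s_1^*)} D(k_0^*, s_0^*, s_1^*) \;\leq\; \frac{1}{2^m \cdot 2^n \cdot (2^n - j)}\,.
\end{align*}
Since $\D$ makes at most $q_Q$ queries to its Phase-1 oracle, \Cref{lem:resampling-ic} yields distinguishing advantage at most $4\sqrt{2^n \cdot q_Q \cdot \epsilon} \leq 4\sqrt{q_Q / (2^m(2^n - j))}$, as claimed. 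The main obstacle is the Phase-2 bookkeeping: one must verify that the recursive modified cipher $(E^{(b)})^{T_j,K}$ required by the hybrids is faithfully realized by $\D$ on top of the single-swap oracle $E^{(b)}$, and that the (possibly large) number of Phase-2 queries used by $\D$ to implement the recursive swaps does not count against the resampling bound, which only penalizes Phase-1 queries. Once these points are confirmed, the computation of $\epsilon$ and a direct application of \Cref{lem:resampling-ic} close out the proof.
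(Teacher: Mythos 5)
Your proposal is correct and follows essentially the same route as the paper: a reduction to the resampling experiment of \Cref{lem:resampling-ic} in which the distinguisher simulates the common prefix in Phase~1, outputs the distribution that samples $k_0$ uniformly, $s_0$ uniformly (as $k_1\oplus x_{j+1}$), and $s_1$ uniformly outside $\{x_i\oplus k_1\}_{i\le j}$, and then derives $k_1$ from $s_0$ in Phase~2 so that $b=0$ and $b=1$ reproduce $\Hyb_j$ and $\Hyb_j^{1}$ exactly; your bound $\epsilon\le 1/(2^{m+n}(2^n-j))$ and the resulting advantage $4\sqrt{q_Q/(2^m(2^n-j))}$ match the paper's computation. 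The two bookkeeping points you flag (faithful on-the-fly realization of $(E^{(b)})^{T_j,K}$ in Phase~2, and the fact that only Phase-1 queries are charged) are indeed exactly what the paper relies on, and both hold.
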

\begin{proof}
In this lemma, we bound the distinguishability of $\Hyb_j$ and $\Hyb_j^{1}$.
\begin{alignat*}{5}
\Hyb_j: \;\; &\ket{E}, R, \ket{E}, \cdots, R, \ket{E}, 
~&& \fx_K[~~~E^j~~], \ket{E^j}~~~~~~, 
\fx_K \left[E^j\right], \ket{E^j}, \cdots\\
\Hyb_j^{1}: \;\; &\ket{E}, R, \ket{E}, \cdots, R, \ket{E},  ~&&\fx_K[(E^{(1)})^j], \ket{(E^{(1)})^j}, 
\fx_K[(E^{(1)})^j], (E^{(1)})^j, \cdots
\end{alignat*}
Let $\A$ be a distinguisher between $\Hyb_j$ and $\Hyb_j^{1}$. 
We construct from $\A$ a distinguisher $\D$ for the resampling experiment of \expref{Lemma}{lem:resampling-ic}.
$\D$ does:	
	\begin{description}
		\item[Phase 1:]
		$\D$ is given quantum access to an ideal cipher~$E$.
		It samples a uniform $R\from \algo P_n$ and then runs $\A$, answering its quantum queries with $E$ and its classical queries with $R$ (in the appropriate directions), until $\A$ submits its \mbox{$(j+1)^{\text{st}}$} classical query~$x_{j+1}$.
		At that point, $\D$ has a list $T_j=\{(x_1, y_1), \cdots, (x_j, y_j)\}$ of the queries/answers $\A$ has made to its classical oracle thus far. Next, $\D$ constructs a distribution $D$ on $\bool^{m+2n}$ and its sampling algorithm $\Pi$. To sample a tuple $(a_0,z_0,z_1)\leftarrow D$, $\Pi$ does the following:
  \begin{description}
      \item[1]:  Sample uniform $a_0\in \bool^m$ and $z_0 \in \bool^n$.
      \item[2]: Construct $S=\bool^n \backslash \{z_0\oplus x_1\oplus x_{j+1},\cdots, z_0 \oplus x_j \oplus x_{j+1}\}$.
      \item[3]: Sample uniform $z_1\in S$, and output $(a_0,z_0,z_1)$.
  \end{description}
    \item[Phase 2:]  
    $\D$ is given $(k_0,s_0,s_1)\leftarrow D$ 
    and quantum oracle access to a cipher~$E^{(b)}$.
    Then $\D$  
    sets $k_1=x_{j+1}\oplus s_0$, $k_2 \leftarrow \mathcal{K}_{|k_0,k_1}$\footnote{
This denotes the conditional distribution of $k_2$ given $(k_0,k_1)$.
Note that while this conditional distribution may depend on $(k_0,k_1)$, 
the induced marginal distribution of $k_2$ is uniform by definition of $\mathcal{K}$.
} and $K=(k_0,k_1,k_2)$. 
    It then continues running $\A$, answering its remaining classical queries (including the $(j+1)^{\text{st}}$) using $\fx_K[(E^{(b)})^{T_j,K}]$, and its remaining quantum queries using $(E^{(b)})^{T_j,K}$.	
    $\D$ outputs whatever $\A$ does. 
    \end{description}
    
 Defining $ S^{\bot}=\{x_1\oplus k_1, \cdots x_j \oplus k_1\}$, we have $s_1 \in \bool^n \setminus S^{\bot}$ from algorithm $\Pi$. In phase~1, distinguisher $\D$ perfectly simulates experiments $\Hyb_j$ and $\Hyb_j^{1}$ for $\A$ until the point where $\A$ makes its $(j+1)^{\text{st}}$ classical query. In phase~2, we first note that $k_1$ is uniform since $s_0$ is uniform and independent of $x_{j+1}$. If $b=0$, $\D$ gets access to $E^{(0)} = E$ in phase~2.
Since $\D$ answers all quantum queries using $(E^{(0)})^{T_j,K}$ and all classical queries using $\fx_K[(E^{(0)})^{T_j,K}]$, we see that $\D$ perfectly simulates $\Hyb_j$ for~$\A$ in that case.
If, on the other hand, $b=1$ in phase~2, then $\D$ gets access to $(E^{(1)})^{T_j,K}$, where
\begin{alignat*}{1}
    E_{k^*}^{(1)}(x)&=
    \begin{cases}
        E_{k^*}(x) &\text{if } k^*\neq k_0\\
        E_{k_0} \circ \swap{s_0}{s_1}(x)  &\text{if } k^*=k_0
        \,.
    \end{cases}
\end{alignat*}
Since $k_1 \coloneqq s_0 \oplus x_{j+1}$, it holds that 
\begin{alignat*}{1} 
    E_{k^*}^{(1)}(x)&=
    \begin{cases}
        E_{k^*}(x) &\text{if } k^*\neq k_0\\
        E_{k_0} \circ \swap{k_1 \oplus x_{j+1}}{s_1}(x)  &\text{if } k^*=k_0
        \,.
    \end{cases}
\end{alignat*}
Moreover, the fact that $s_0$ (and hence $s_0 \oplus x_{j+1}$), $k_0$ and $k_2$ are uniform implies that $\D$ perfectly simulates $\Hyb_j^{1}$ for~$\A$. Applying \expref{Lemma}{lem:resampling-ic} thus gives
\begin{align*}
    \left|\Pr[\A(\Hyb_j) = 1] - \Pr[\A(\Hyb_j^{1})=1]\right|
    \leq 4\sqrt{q_Q\cdot\varepsilon\cdot 2^n},
\end{align*}
where,
\begin{align*}
    \epsilon&=\max_{(k_0^*, s_0^*, s_1^*)\in\{0,1\}^{m+2n}}D(k_0,s_0,s_1)
    =\frac{1}{2^{m+n}(2^n-j)}.
\end{align*}

Therefore, we have
\begin{align*}
    \left|\Pr[\A(\Hyb_j) = 1] - \Pr[\A(\Hyb_j^{1})=1]\right|
    \leq 4\sqrt{\frac{q_Q}{2^m(2^n-j)}}.
\end{align*}
\qed
\end{proof}

\begin{lemma}\label{lma:fx-Hj1-Hj2}
For $j=0, \ldots, \formerqC$, $\Hyb_j^{1}=\Hyb_j^{2}$.
\end{lemma}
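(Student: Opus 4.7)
Both hybrids answer the first $j$ classical queries using $R$, answer the $(j+1)^{\text{st}}$ classical query by the same value $y_{j+1} := \fx_K[(E^{(1)})^{T_j,K}](x_{j+1})$, and then from the $(j+2)^{\text{nd}}$ query onward answer classical queries via $\fx_K[\cdot]$ and quantum queries directly. The only difference is which modified cipher sits ``inside the box'' after the split: $(E^{(1)})^{T_j,K}$ in $\Hyb_j^{1}$ versus $E^{T_{j+1},K}$ in $\Hyb_j^{2}$. So it suffices to establish the pointwise identity
\begin{equation*}
(E^{(1)})^{T_j,K}_{k^*}(x) \;=\; E^{T_{j+1},K}_{k^*}(x) \qquad \text{for all } k^* \in \bool^m,\; x \in \bool^n.
\end{equation*}
For $k^* \neq k_0$ both sides equal $E_{k^*}(x)$ by construction, so all work is at $k^* = k_0$.

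Write $\sigma := \swap{k_1 \oplus x_{j+1}}{s}$, so that $E^{(1)}_{k_0} = E_{k_0} \circ \sigma$. The plan is to first show, by induction on $i = 0, 1, \dots, j$, the factorization
\begin{equation*}
(E^{(1)})^{T_i,K}_{k_0} \;=\; E^{T_i,K}_{k_0} \circ \sigma.
\end{equation*}
The base case $i=0$ is immediate from $E^{(1)}_{k_0} = E_{k_0} \circ \sigma$. For the inductive step, the crucial observation is that $\sigma$ fixes every point on which the recursive swap is anchored: by construction $s \in S = \bool^n \setminus \{x_1 \oplus k_1, \dots, x_j \oplus k_1\}$, and since queries in $T_j$ are non-repeating we also have $k_1 \oplus x_{j+1} \notin \{x_1 \oplus k_1, \dots, x_j \oplus k_1\}$. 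Hence $\sigma(x_i \oplus k_1) = x_i \oplus k_1$ for every $i \leq j$, and the inductive hypothesis yields $(E^{(1)})^{T_{i-1},K}_{k_0}(x_i \oplus k_1) = E^{T_{i-1},K}_{k_0}(x_i \oplus k_1)$. Applying the recursive definition of $(E^{(1)})^{T_i,K}_{k_0}$ then gives $E^{T_i,K}_{k_0} \circ \sigma$, as desired.

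It remains to show $E^{T_j,K}_{k_0} \circ \sigma = E^{T_{j+1},K}_{k_0}$. By \Cref{eqn:adaptswap}'s recursion, $E^{T_{j+1},K}_{k_0} = \swap{E^{T_j,K}_{k_0}(x_{j+1} \oplus k_1)}{y_{j+1} \oplus k_2} \circ E^{T_j,K}_{k_0}$. Compute $y_{j+1} \oplus k_2 = (E^{(1)})^{T_j,K}_{k_0}(x_{j+1} \oplus k_1) = E^{T_j,K}_{k_0}(\sigma(x_{j+1} \oplus k_1)) = E^{T_j,K}_{k_0}(s)$ using the factorization just established. Since $E^{T_j,K}_{k_0}$ is a permutation, conjugating the transposition by it gives
\begin{equation*}
\swap{E^{T_j,K}_{k_0}(x_{j+1} \oplus k_1)}{E^{T_j,K}_{k_0}(s)} \circ E^{T_j,K}_{k_0} \;=\; E^{T_j,K}_{k_0} \circ \swap{x_{j+1} \oplus k_1}{s} \;=\; E^{T_j,K}_{k_0} \circ \sigma,
\end{equation*}
which closes the argument.

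The only delicate point is tracking that $\sigma$ truly commutes past every swap inserted by the recursion, which is where the careful construction of $S$ (and the non-repetition of classical queries) is used. Once that bookkeeping is done, the identity of oracles is straightforward, so the two experiments are perfectly identical distributions, not merely indistinguishable. \qed
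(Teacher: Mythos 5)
Your proof is correct and follows essentially the same route as the paper's: the paper's \expref{Proposition}{prop:E(1)=E-fx} is exactly the pointwise consequence (at the points $x_i\oplus k_1$) of your stronger inductive invariant $(E^{(1)})^{T_i,K}_{k_0}=E^{T_i,K}_{k_0}\circ\sigma$, and both arguments conclude by computing $y_{j+1}\oplus k_2=E^{T_j,K}_{k_0}(s)$ and conjugating the transposition through the permutation. Your global factorization makes the last step a one-line conjugation rather than the paper's explicit unrolling of the product of swaps, but the underlying ideas are identical.
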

\begin{proof}
\begin{alignat*}{5}
    \Hyb_j^{1}: \;\; &\ket{E}, R, \ket{E}, \cdots, R, \ket{E},  ~&&\fx_K[(E^{(1)})^j], \ket{(E^{(1)})^j}, 
    \fx_K[(E^{(1)})^j], (E^{(1)})^j, \cdots \\
    \Hyb_j^{2}: \;\; &\ket{E}, R, \ket{E}, \cdots, R, \ket{E},
    ~&&\fx_K[(E^{(1)})^j], \ket{E^{j+1}}~~~, 
    \fx_K[E^{j+1}], \ket{E^{j+1}}, \cdots.
\end{alignat*}
We first prove the following proposition. 
\begin{proposition}\label{prop:E(1)=E-fx}
    For any $K=(k_0,k_1,k_2) \leftarrow \mathcal{K}$, $j \in \{1,\ldots,q_C\}$, $i \in \{1, \ldots, j\}$, and all $r \in \{0,\dots, j\}$
    \begin{align*}
        (E_{k_0}^{(1)})^{T_r,K}(x_i \oplus k_1) \;=\; E_{k_0}^{T_r,K}(x_i \oplus k_1),
    \end{align*}
    when $s \notin \{x_1\oplus k_1, \cdots x_j \oplus k_1\}$.
\end{proposition}
\begin{proof}
    We will do a proof by induction on $r$. We start by the base case $r=0$. We note that since the classical queries are not repeated, $x_{j+1}\oplus k_1 \notin \{x_1\oplus k_1, \cdots x_j \oplus k_1\}$. Additionally since $s \notin \{x_1\oplus k_1, \cdots x_j \oplus k_1\}$, we have that for all $i \in \{1,\cdots, j\}$
     \begin{align*}
        (E_{k_0}^{(1)})^{T_0,K}(x_i \oplus k_1) 
    &:= E_{k_0}^{(1)}(x_i \oplus k_1)
    = E_{k_0} \circ \swap{x_{j+1} \oplus k_1}{s}(x_i \oplus k_1) \\
        &=E_{k_0}(x_i \oplus k_1).
    \end{align*}
    Assume for some $r-1 \geq 0$ that
    \begin{align*}
        (E_{k_0}^{(1)})^{T_{r-1},K}(x_i \oplus k_1) = E_{k_0}^{T_{r-1},K}(x_i \oplus k_1).
    \end{align*}
    Then by the definition,
    \begin{align*}
        (E^{(1)}_{k_0})^{T_{r},K}(x_i\oplus k_1)&= \swap{(E^{(1)}_{k_0})^{T_{r-1},K}(x_{r}\oplus k_1)}{y_{r}\oplus k_2} \circ (E^{(1)}_{k_0})^{T_{r-1},K}(x_{i}\oplus k_1)\\
          &= \swap{E_{k_0}^{T_{r-1},K}(x_{r} \oplus k_1)}{y_{r}\oplus k_2} \circ E_{k_0}^{T_{r-1},K}(x_{i} \oplus k_1)\\
        &= E_{k_0}^{T_{r},K}(x_i\oplus k_1).
    \end{align*}
    By induction, the proposition holds for all $r \in \{0,\dots, j\}$.
\qed
\end{proof}
In particular, for all $i \in \{1, \ldots, j\}$,
\begin{align*}
    (E_{k_0}^{(1)})^{T_{i-1},K}(x_i \oplus k_1) = E_{k_0}^{T_{i-1},K}(x_i \oplus k_1).
\end{align*}

To prove $\Hyb_j^{1}$ and $\Hyb_j^{2}$ are identical, it suffices to show that the quantum oracles $(E^{(1)})^j$ and $E^{j+1}$ are identical. Note that $(E_{k^*}^{(1)})^{j}=E_{k^*}^{j+1}$ for all $k^* \neq k_0$, we only need to consider the case where $k^*=k_0$.

In both $\Hyb_j^{1}$ and $\Hyb_j^{2}$, the response to the $(j+1)^{\text{st}}$ classical query is:
\begin{align*}
    y_{j+1} &\stackrel{\rm def}{=} \fx_K[(E^{(1)})^{T_j,K}](x_{j+1})= (E^{(1)}_{k_0})^{T_j,K}(x_{j+1} \oplus k_1) \oplus k_2\\
    &= \left(\prod_{i=j}^1\swap{(E_{k_0}^{(1)})^{T_{i-1},K}( x_{i}\oplus k_1)}{y_{i}\oplus k_2} \right) \circ E_{k_0}^{(1)}(x_{j+1} \oplus k_1) \oplus k_2\\
    &= \left(\prod_{i=j}^1\swap{(E_{k_0})^{T_{i-1},K}( x_{i}\oplus k_1)}{y_{i}\oplus k_2} \right) \circ E_{k_0}(s) \oplus k_2
    = E_{k_0}^{T_{j},K}(s) \oplus k_2,
\end{align*}
where the fourth equality comes from \expref{Proposition}{prop:E(1)=E-fx}. We use ``$\prod$'' to denote sequential composition of operations, i.e., $\prod_{i=1}^{n}f_i=f_1\circ\cdots\circ f_n$. By rearranging $s = \left( E_{k_0}^{T_{j},K}\right)^{-1}(y_{j+1} \oplus k_2),$ it follows that for any $x \in \{0,1\}^{n}$,
\begin{align*}
    &(E_{k_0}^{(1)})^j(x)\\
    &= \swap{(E^{(1)}_{k_0})^{{j-1}}( x_{j}\oplus k_1)}{y_{j}\oplus k_2} \circ \cdot \circ \swap{E_{k_0}^{(1)}(x_1\oplus k_1)}{y_1\oplus k_2}\circ E_{k_0} \circ \swap{x_{j+1}\oplus k_1}{s} (x) \\
    &= \swap{E^{j-1}_{k_0}( x_{j}\oplus k_1)}{y_{j}\oplus k_2} \circ \cdots \circ \swap{E_{k_0}(x_1\oplus k_1)}{y_1\oplus k_2}\circ E_{k_0} \circ \swap{x_{j+1}\oplus k_1}{s} (x) \\
    &= E_{k_0}^{j} \circ \swap{x_{j+1}\oplus k_1}{(E_{k_0}^j)^{-1}(y_{j+1}\oplus k_2)}(x) \\
    &= \swap{E_{k_0}^j(x_{j+1}\oplus k_1)}{y_{j+1}\oplus k_2} \circ E_{k_0}^j(x)
   =E_{k_0}^{j+1}(x),
\end{align*}
where the second equality comes from \expref{Proposition}{prop:E(1)=E-fx}. This concludes the proof that $(E^{(1)})^j \equiv E^{j+1}$. It follows that $\Hyb_j^{1} = \Hyb_j^{2}$.
\qed
\end{proof}

\begin{lemma}\label{lma:fx-Hj2-Hj3}
For $j=0, \ldots, \formerqC$, 
$\Hyb_j^{2} = \Hyb_j^{3}.$
\end{lemma}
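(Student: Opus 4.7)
The plan is to show that $\Hyb_j^{2}$ and $\Hyb_j^{3}$ induce identical distributions on the adversary's full view. Comparing the two oracle sequences, the only syntactic difference is the answer returned to the $(j+1)^{\text{st}}$ classical query: in $\Hyb_j^{2}$ it is $y_{j+1} := \fx_K[(E^{(1)})^{T_j,K}](x_{j+1})$, whereas in $\Hyb_j^{3}$ it is $R(x_{j+1})$. Once $y_{j+1}$ is fixed, both hybrids continue with the same oracles $\fx_K[E^{T_{j+1},K}]$ and $\ket{E^{T_{j+1},K}}$, which are determined by $(E, K, T_{j+1})$ via the recursive formula in \expref{Equation}{eqn:adaptswap}. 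So it suffices to verify that the joint distribution of $(K, y_{j+1})$ conditioned on the pre-$(j+1)^{\text{st}}$-query view matches in both hybrids.

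In both experiments the pre-view depends only on $R$ and $E$ (classical queries answered by $R$, quantum by $\ket{E}$), so $K$ is uniform and independent of this view. Conditioning on a fixed $K = (k_0, k_1, k_2)$, I only need to check that $y_{j+1}$ has the same distribution in both cases. In $\Hyb_j^{3}$, $R(x_{j+1})$ is uniform over $\bool^n \setminus \{y_1, \ldots, y_j\}$ by the standard conditional distribution of a random permutation. For $\Hyb_j^{2}$, I would reuse the computation appearing in the proof of \Cref{lma:fx-Hj1-Hj2} to obtain
\begin{align*}
\fx_K[(E^{(1)})^{T_j,K}](x_{j+1}) \;=\; E_{k_0}^{T_j,K}(s) \oplus k_2\,,
\end{align*}
where $s$ is uniform over $S = \bool^n \setminus \{x_1 \oplus k_1, \ldots, x_j \oplus k_1\}$. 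Since $E_{k_0}^{T_j,K}$ is a permutation that by \Cref{prop:E-T_jk-y} sends $x_i \oplus k_1 \mapsto y_i \oplus k_2$ for $i \leq j$, its restriction to $S$ is a bijection onto $\bool^n \setminus \{y_1 \oplus k_2, \ldots, y_j \oplus k_2\}$. Consequently $y_{j+1}$ is uniform over $\bool^n \setminus \{y_1, \ldots, y_j\}$, matching the distribution of $R(x_{j+1})$.

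The step requiring the most care is the image calculation for $E_{k_0}^{T_j,K}$: the swap-based construction of $E^{T_j,K}$ (repeatedly composing $\swap{\cdot}{\cdot}$ operators) makes it a priori nonobvious that the overall map is a permutation with the required image, and the identity $E_{k_0}^{T_j,K}(x_i\oplus k_1)=y_i\oplus k_2$ for $i\leq j$ is the crucial ingredient that forces the image of $S$ to avoid exactly the previously revealed answers. Once this is in place, the deterministic dependence of the post-query oracles on $(E, K, T_{j+1})$ lifts the distributional equality of $(K, y_{j+1})$ to equality of the full adversary view, yielding $\Hyb_j^{2} = \Hyb_j^{3}$.
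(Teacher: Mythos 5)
Your proposal is correct and follows essentially the same route as the paper's proof: both reduce the comparison to the conditional distribution of $y_{j+1}$, reuse the identity $\fx_K[(E^{(1)})^{T_j,K}](x_{j+1}) = E_{k_0}^{T_j,K}(s)\oplus k_2$ from \Cref{lma:fx-Hj1-Hj2}, and invoke \Cref{prop:E-T_jk-y} to show the image of $S$ under $s\mapsto E_{k_0}^{T_j,K}(s)\oplus k_2$ is exactly $\bool^n\setminus\{y_1,\dots,y_j\}$, matching the conditional law of $R(x_{j+1})$. Your observation that the post-query oracles are a deterministic function of $(E,K,T_{j+1})$, so distributional equality of $y_{j+1}$ lifts to equality of the full view, is the same closing step the paper uses.
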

\begin{proof}

\begin{alignat*}{5}
    \Hyb_j^{2}: \;\; &\ket{E}, R, \ket{E}, \cdots, R, \ket{E},
    ~&&\fx_K[(E^{(1)})^j], \ket{E^{j+1}}~~~, 
    \fx_K[E^{j+1}], \ket{E^{j+1}}, \cdots\\
    \Hyb_j^{3}: \;\; &\ket{E}, R, \ket{E}, \cdots, R, \ket{E}, ~&&~~~~~~~R~~~~~~~,\ket{E^{j+1}}~~~,
    \fx_K[E^{j+1}], \ket{E^{j+1}}, \cdots.
\end{alignat*}
We observe that $\Hyb_j^{2}$ and $\Hyb_j^{3}$ differ only in the response to the $(j+1)^{\text{st}}$ classical query. In $\Hyb_j^{2}$, this query is answered using
\begin{align*}
    y_{j+1} \;=\; \fx_K\!\big[(E^{(1)})^{T_j,K}\big](x_{j+1}) 
\;=\; E_{k_0}^{T_j,K}(s) \oplus k_2,
\end{align*}
as shown in \expref{Lemma}{lma:fx-Hj1-Hj2}. In contrast, in $\Hyb_j^{3}$ the same query is answered with $y_{j+1} \;=\; R(x_{j+1}).$

Next, we prove that $y_{j+1}$ is distributed the same in both $\Hyb_j^{2}$ and $\Hyb_j^{3}$. Recall \expref{Proposition}{prop:E-T_jk-y}, we have  
\begin{align*}
    y_i \;=\; E_{k_0}^{T_j,K}(x_i \oplus k_1) \oplus k_2, \quad \forall i \in [1,j],
\end{align*}
and since $s \in \{0,1\}^n \setminus \{x_1 \oplus k_1, \ldots, x_j \oplus k_1\}$, it follows that in $\Hyb_j^{2}$,
\begin{align*}
    y_{j+1} = E_{k_0}^{T_j,K}(s) \oplus k_2 \in \{0,1\}^n \setminus \{y_1, \ldots, y_j\}.
\end{align*}
Moreover, because $E_{k_0}^{T_j,K}$ is a permutation, the mapping
\begin{align*}
    x_i \oplus k_1 \;\mapsto\; y_i = E_{k_0}^{T_j,K}(x_i \oplus k_1) \oplus k_2
\end{align*}
is injective. Since $s$ is chosen uniformly from
$\{0,1\}^n \setminus \{x_1 \oplus k_1, \ldots, x_j \oplus k_1\}$
and $k_2$ is uniform, the output $y_{j+1}$ is uniformly distributed over $\{0,1\}^n \setminus \{y_1, \ldots, y_j\}$ in $\Hyb_j^2$. In $\Hyb_j^{3}$, since classical queries are not repeated, $y_{j+1} = R(x_{j+1})$ is also uniformly distributed 
over $\{0,1\}^n \setminus \{y_1, \ldots, y_j\}$. Thus, the distribution of $y_{j+1}$ conditioned on all previous queries is identical in the two hybrids.

Moreover, in both $\Hyb_j^{2}$ and $\Hyb_j^{3}$, the construction of $E^{j+1}$ follows exactly the same procedure, i.e., from the first $j+1$ classical input–output pairs and E as specified in \expref{Equation}{eqn:adaptswap}. Consequently, the two hybrids yield identical distributions, and hence $\Hyb_j^{2} = \Hyb_j^{3}$.
\qed
\end{proof}

\begin{lemma}\label{lma:fx-Hj3-Hj+1}
    For $j=0, \ldots, \formerqC-1$, 
    \begin{align*}
	\Pr[\A(\Hyb_j^{3}) = 1] - \Pr[\A(\Hyb_{j+1})=1]| \leq 2 \cdot      q_{Q,j+1} \sqrt{\frac{2(j+1)}{2^{m+n}}}\,,
    \end{align*}
    where $q_{Q,j+1}$ is the expected number of queries $\A$ makes to $\ket{E^{j+1}}$ in the $(j+1)^{\text{st}}$ stage in the ideal world (i.e., in $\Hyb_{q_C}$). 
\end{lemma}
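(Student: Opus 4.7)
The plan is to reduce to the reprogramming lemma (\Cref{lma:reprogramming}). Observe that $\Hyb_j^{3}$ and $\Hyb_{j+1}$ behave identically before the $(j+1)^{\text{st}}$ classical query and also from the $(j+2)^{\text{nd}}$ classical query onward; they differ only in the quantum oracle available during the $(j+1)^{\text{st}}$ stage, which equals $E^{T_{j+1},K}$ in $\Hyb_j^{3}$ and $E$ in $\Hyb_{j+1}$. Because $E^{T_{j+1},K}$ is built from $E$ by $j+1$ swaps applied only at key $k_0$, the two oracles agree except on a set of at most $2(j+1)$ input points, all of the form $(k_0, z)$. This is exactly the reprogramming scenario handled by \Cref{lma:reprogramming}.

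I would build a reprogramming distinguisher $\D$ as follows. In Phase~1, $\D$ internally samples $R \leftarrow \Perms(n)$ and the full table of $E \leftarrow \mathcal{E}(m,n)$, simulates $\A$ up through its $(j+1)^{\text{st}}$ classical query (using $R$ classically and $E$ quantumly), and records the transcript $T_{j+1}$. It then outputs the function $F_0 = E$ (viewed as a map $\bool^{m+n} \to \bool^n$) together with the algorithm $\B$ whose randomness is a uniform $K = (k_0,k_1,k_2) \in \bool^{m+2n}$ and whose output is the set $B$ of input-output pairs where $E^{T_{j+1},K}$ disagrees with $E$ (computed from the retained $E$ and $T_{j+1}$). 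In Phase~2, $\D$ uses its oracle $F_b$ to answer the at-most $q_{Q,j+1}$ quantum queries $\A$ makes during stage $j+1$. In Phase~3, $\D$ receives $K$ as the randomness of $\B$; combined with the stored $E$ and $T_{j+1}$, this lets $\D$ compute $E^{T_{j+1},K}$ and $\fx_K[E^{T_{j+1},K}]$ and complete the simulation. When $b=0$ the induced view of $\A$ is $\Hyb_{j+1}$, and when $b=1$ it is $\Hyb_j^{3}$. Moreover, when the oracle is $F_0 = E$, the number of queries $\D$ makes to $F_b$ coincides in expectation with $q_{Q,j+1}$, matching the setting of the lemma.

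The reprogramming lemma then gives the bound $2 q_{Q,j+1}\sqrt{\epsilon}$, where $\epsilon = \max_{(k^*,x^*)}\Pr_{K}[(k^*,x^*) \in B_1]$. The main obstacle is showing $\epsilon \leq 2(j+1)/2^{m+n}$. The set $B_1$ contains at most $2(j+1)$ points, splitting naturally into ``Type~1'' points of the form $(k_0, x_i \oplus k_1)$ and ``Type~2'' points of the form $(k_0, (E^{T_{i-1},K}_{k_0})^{-1}(y_i \oplus k_2))$, for $i \in \{1,\ldots,j+1\}$. For Type~1, the uniformity of $(k_0,k_1)$ immediately gives probability at most $2^{-m-n}$ per point. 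The Type~2 case is more delicate because the second coordinate depends on $K$ through the recursive swap construction; here one exploits the fact that $k_2$ is uniform and independent of $(k_0,k_1)$, together with distinctness of the $y_i$ (since they are outputs of the permutation $R$ on distinct inputs), to bound each Type~2 point's contribution by $2^{-m-n}$ as well. A union bound over the $2(j+1)$ candidate points then yields $\epsilon \leq 2(j+1)/2^{m+n}$, and substituting into the reprogramming lemma gives the claimed inequality.
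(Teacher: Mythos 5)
Your reduction follows the same route as the paper (simulate up to the $(j+1)^{\text{st}}$ classical query, reprogram $E$ to $E^{T_{j+1},K}$ via \Cref{lma:reprogramming}, bound $\epsilon$ by $2(j+1)/2^{m+n}$), but there is one genuine gap: you define $F_0=E$ as a forward-only map $\bool^{m+n}\to\bool^n$. The adversary in \Cref{thm:Q1-secure-FX} has two-way quantum access $\ket{\pm E}$, and in $\Hyb_j^{3}$ its \emph{inverse} quantum queries during stage $j+1$ must be answered with $(E^{T_{j+1},K})^{-1}$, which differs from $E^{-1}$ on its own set of points (of the form $(k_0, E_{k_0}(x_i\oplus k_1))$ and $(k_0, y_i\oplus k_2)$). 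Since $\D$ does not know $K$ in Phase~2, it cannot patch the inverse itself, so with a forward-only $F_b$ the $b=1$ branch does not reproduce $\Hyb_j^{3}$. The paper fixes this by defining $F(a,k_0,x):=E^a_{k_0}(x)$ for $a\in\{1,-1\}$, i.e., reprogramming both directions on a doubled domain; the bound $\epsilon\le 2(j+1)/2^{m+n}$ survives (rather than doubling to $4(j+1)/2^{m+n}$) because any fixed query point $(a,k^*_0,x)$ can only collide with the $2(j+1)$ candidates carrying the same direction bit $a$.

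A secondary soft spot: your ``Type~2'' points are written as $(k_0,(E^{T_{i-1},K}_{k_0})^{-1}(y_i\oplus k_2))$, and you assert the $2^{-(m+n)}$ bound from uniformity of $k_2$. But $(E^{T_{i-1},K}_{k_0})^{-1}$ itself depends on $k_1,k_2$ through the swaps, so the event $(E^{T_{i-1},K}_{k_0})^{-1}(y_i\oplus k_2)=x^*$ is not a clean statement about $k_2$ alone and the independence argument is not immediate; you would need an extra case analysis to push it through. The paper avoids this by observing that the support of the composed swap permutation is contained in $\bigcup_i\{E_{k_0}(x_i\oplus k_1),\,y_i\oplus k_2\}$, so the modified forward inputs lie in $\bigcup_i\{x_i\oplus k_1,\,E_{k_0}^{-1}(y_i\oplus k_2)\}$ with the \emph{unmodified} $E_{k_0}^{-1}$; then the event $E_{k_0}^{-1}(y_i\oplus k_2)=x^*$, $k_0=k^*$ is exactly $k_2=y_i\oplus E_{k^*}(x^*)$, $k_0=k^*$, which has probability $2^{-(m+n)}$ outright. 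With these two repairs your argument matches the paper's proof.
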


\begin{proof}
	
	Let $\A$ be a distinguisher between $\Hyb_j^{3}$ and $\Hyb_{j+1}$. We construct a distinguisher $\D$ for the experiment from \expref{Lemma}{lma:reprogramming}:
	\begin{description}
		\item[Phase 1:] 
		$\D$ samples uniform $E \in \mathcal{E}(m,n)$ and $R \in \permset{n}$. It then runs $\A$, answering its quantum queries using $R$ and its classical queries using~$E$, until after it responds to $\A$'s $(j+1)^{\text{st}}$ classical query. Let $T_{j+1} = \{(x_i, y_i)\}_{i=1}^{j+1}$ be the list of classical queries by $\A$ thus far. $\D$ defines $F(a,k_0,x) \coloneqq E^{a}_{k_0}(x)$ for $a \in \{1, -1\}$.
		
        It also defines the following randomized algorithm~$\B$: sample $K\leftarrow \mathcal{K}$ and write $K=(k^*_0,k^*_1,k^*_2)$. Then it computes the set $B$ of input/output pairs to be reprogrammed so that $F^{(B)}(a, k^*_0, x)={\left(E_{k^*_0}^{T_{j+1},K}\right)}^{a}(x)$ for all $a, k^*_0, x$.  Finally, $\D$ outputs $(F, \B)$.
		\item[Phase 2:] $\B$ is run to generate~$B$, and
		$\D$ is given quantum access to an oracle~$F_b$. $\D$ resumes running~$\A$, answering its quantum queries using $F_b$. Phase~2 ends before $\A$ makes its next (i.e., $(j+2)^{\text{nd}}$) classical query.
		
		\item[Phase 3:] $\D$ is given the randomness used by $\B$ to generate $k$. It resumes running $\A$, answering its classical queries using $\fx_K[E^{T_{j+1},K}]$ and its quantum queries using~$E^{T_{j+1},K}$. Finally, it outputs whatever $\A$ outputs.
	\end{description}
	It is immediate that if $b=0$ (i.e., $\D$'s oracle in phase~2 is~$F_0=F$), then $\A$'s output is identically distributed to its output in~$\Hyb_{j+1}$, whereas if $b=1$ (i.e., $\D$'s oracle in phase~2 is~$F_1=F^{(B)}$), then $\A$'s output is identically distributed to its output in~$\Hyb_j^{3}$. It follows that $|\Pr[\A(\Hyb_j^{3}) = 1] - \Pr[\A(\Hyb_{j+1})=1]|$ is equal to the distinguishing advantage of $\D$ in the reprogramming experiment of \expref{Lemma}{lma:reprogramming}. To bound this quantity, we bound the parameter~$\epsilon$ and the expected number of queries made by $\D$ in phase~2 (when $F = F_0$).
	
The value of $\epsilon$ can be bounded using the definition of $E_{k^*_0}^{T_{j+1},K}$ and the fact that $F^{(B)}(a,k^*_0, x) = {\left(E_{k^*_0}^{T_{j+1},K}\right)}^{a}(x)$. Fixing $E$ and $T_{j+1}$, the probability that any particular input $(a, k^*_0, x)$ is reprogrammed is at most the probability (over $k$) that it lies in the set 
\begin{align*}
    \left\{\begin{array}{c}
    \left(1, k_0, x_i \oplus k_1\right),\left(1, k_0, E_{k_0}^{-1}\left(y_i \oplus k_2\right)\right), \\
    \left(-1, k_0, E_{k_0}\left(x_i \oplus k_1\right)\right),\left(-1, k_0, y_i \oplus k_2\right)
    \end{array}\right\}_{i=1}^{j+1}.
 \end{align*}
We compute the probability that $(a,k^*_0, x)=(1, k_0, x_i \oplus k_1)$ for some fixed~$i$. As $k_0$ and $k_1$ are uniform, 
\begin{equation*}
	{\textstyle \Pr_k[(a, k^*_0, x)=(1, k_0, x_i \oplus k_1)]} = \begin{cases}
		2^{-(m+n)}&a=1\\
		0&a=-1
	\end{cases}\,.
\end{equation*}

A similar bound holds for the other possibilities. By distinguishing the cases $a=1$ and $a=-1$ and applying a union bound, we get $\varepsilon \leq 2(j+1)/2^{m+n}$.

The expected number of queries made by $\D$ in phase~2 when $F=F_0$ is equal to the expected number of queries made by $\A$ in its $(j+1)^{\text{st}}$ stage in~$\Hyb_{j+1}$. 
Since $\Hyb_{j+1}$ and $\Hyb_{q_C}$ are identical until after the $(j+1)^{\text{st}}$ stage is complete, this is precisely~$q_{Q,j+1}$.
\qed
\end{proof}

\noindent\textbf{Remark.}
Our proof covers the case $k_1 = k_2$.
Since $\mathcal{K}$ allows arbitrary dependence between $k_1$ and $k_2$ as long as their marginals are uniform, choosing $k_2 = k_1$ is simply a valid (fully correlated) instantiation of $\mathcal{K}$.  
Our analysis never relies on $k_1$ and $k_2$ being distinct, so all arguments apply unchanged.

\subsection{Tightness}

Let $q_C$ represent the number of online classical queries and $q_Q$ denote the number of offline quantum computations. Different attacks with corresponding trade-off are presented in \expref{Table}{tab:fx-attack}.

Typical attacks involve full-key recovery, which classically requires $2^{m+n}$ offline queries, while quantumly, it requires only $2^{(m+n)/2}$ offline queries using Grover's algorithm. The Grover+BHT attack employs Grover's algorithm to recover the cipher key $k_0$ and uses the BHT/Kuwakado--Morii~\cite{kuwakado2012security}  to determine the whitening keys $k_1$ and $k_2$. It requires $q_Q = 2^{m/2 + n/3}$ quantum queries and $q_C = 2^{n/3}$ classical queries. Details of this attack is similar to that on \lrw, which is provided in \expref{Appendix}{app:attacks}. The resources used align with $\frac{2}{\sqrt{2^{m+n}}} q_Q \sqrt{q_C}$ in the security bound established in \expref{Theorem}{thm:Q1-secure-FX}.

In Meet-in-the-Middle attack~\cite{hosoyamada2018cryptanalysis}, $q_C \cdot q_Q^6 = 2^{3(m+n)}$ classical online queries are required under the condition that only polynomially many qubits are available, by using the multi-target preimage search~\cite{chailloux2017efficient}.  This attack was later improved by the offline-Simon attack~\cite{bonnetain2019quantum}, which combines quantum search with quantum period finding, with a trade-off $q_C \cdot q_Q^2 = 2^{m+n}$, which also matches $\frac{2}{\sqrt{2^{m+n}}} q_Q \sqrt{q_C}$ term in the security bound.

\begin{table}
    \centering
    \bgroup
\def\arraystretch{1.8}
\begin{tabular}{ |p{4.2cm}||p{8cm}|  }
 \hline
 Reference & Tradeoff between $q_C$ and $q_Q$\\
 \hline
 Classical~\cite{dinur2015cryptanalytic} & $q = 2^{m+n}$\\
 Grover & $q_Q = 2^{\frac{m+n}{2}}, q_c = \text{constant}$\\
 Grover + BHT~\cite{kuwakado2012security} & $q_Q = 2^{\frac{m}{2}+\frac{n}{3}}, q_C = 2^{\frac{n}{3}} $\\
 Meet-in-the-Middle~\cite{hosoyamada2018cryptanalysis}  & $q_C \cdot q_Q^6 = 2^{3(m+n)}, q_C \leq \{2^n, 2^{3(m+n)/7}\}$ \\
 Offline-Simon~\cite{bonnetain2019quantum} & $q_C \cdot q_Q^2 = 2^{m+n}, q_C \leq 2^n$ \\
 \hline
\end{tabular}
\egroup
\vspace{3mm}
\caption{Tradeoffs for quantum attacks on the FX construction.} \label{tab:fx-attack}
\end{table}

\subsection{Application} \label{subsec:application}

In this section, we present post-quantum security analyses of lightweight block ciphers
that instantiate the \fx construction, namely \textsf{PRINCE} and \textsf{PRIDE},
as applications of our framework.

\subsubsection{The PRINCE cipher.}

\textsf{PRINCE}~\cite{borghoff2012prince} is a lightweight block cipher designed 
for ultra-low latency applications, particularly in pervasive and embedded computing. 
Its distinguishing feature is the $\alpha$-reflection property, which makes 
encryption and decryption nearly identical, thereby enabling very efficient 
hardware implementations. Concretely, \textsf{PRINCE} is based on $\mathsf{PRINCE}_{\text{core}}$, which can be viewed as a $\widetilde{\fx}$ construction, namely the $\alpha$-reflection variant of the classical \fx design.

The $\alpha$-reflection property ensures that this permutation can be inverted 
with minimal overhead. The classical security of this design is analyzed in 
Section~4.1 of~\cite{borghoff2012prince}.

The classical security of \textsf{PRINCE} is analyzed in the ideal cipher model via the primitive $\widetilde{\fx}$, the $\alpha$-reflection variant of the standard \fx construction. Let $\mathbb{F}_2^m$ denote the $m$-dimensional vector space over $\mathbb{F}_2$. 
Let $H$ be an $(m-1)$-dimensional linear subspace of $\mathbb{F}_2^m$, and let 
$\alpha \in \mathbb{F}_2^m$ be a fixed nonzero element with $\alpha \notin H$. 
Thus, $\mathbb{F}_2^m$ is partitioned as $H \cup (\alpha \oplus H)$. Define
\begin{align*}
    \widetilde{\fx}_{k_0,k_1,k_2}(x) \;=\;
    \begin{cases}
        \fx_{k_0,k_1,k_2}(x), & \text{if } k_0 \in H, \\
        \fx^{-1}_{k_0 \oplus \alpha,\,k_1,\,k_2}(x), & \text{if } k_0 \in \alpha \oplus H.
    \end{cases}
\end{align*}

\cite[Corollary~1]{borghoff2012prince} shows that $\widetilde{\fx}$ achieves the 
same level of security as the original $\fx$ construction, in the pure classical setting.

\begin{corollary}[\cite{borghoff2012prince}] \label{coro:PRINCE}
    $Adv^{\sprp\textsf{-IC}}_{\fx}(q_{\fx},q_E) 
    = Adv^{\sprp\textsf{-IC}}_{\widetilde{\fx}}(q_{\widetilde{\fx}},q_E).$
\end{corollary}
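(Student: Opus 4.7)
The plan is to prove the corollary by a distribution-equivalence argument built on two symmetries of the setup. First, I would re-parameterize the $\widetilde{\fx}$ key space. Since $\mathbb{F}_2^m = H \sqcup (\alpha \oplus H)$, any $k_0 \in \mathbb{F}_2^m$ is uniquely of the form $k_0 = h \oplus b\alpha$ with $h \in H$ and $b \in \{0,1\}$. Under this bijection, $\widetilde{\fx}_{k_0, k_1, k_2}$ equals $\fx_{h, k_1, k_2}$ when $b = 0$ and equals $\fx^{-1}_{h, k_1, k_2}$ when $b = 1$. An \sprp adversary with two-way access to $\pm\widetilde{\fx}$ sees exactly the unordered pair $\{\fx_{h, k_1, k_2}, \fx^{-1}_{h, k_1, k_2}\}$ in either case, so the bit $b$ is informationally redundant and can be dropped from the analysis.

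It then remains to argue that in the ideal cipher model, the classical oracles $\fx_{h, k_1, k_2}[E]$ and $\fx^{-1}_{h, k_1, k_2}[E]$ induce the same joint distribution with $\pm E$. I would do this using the measure-preserving involution $E \mapsto E^{-1}$ on $\mathcal{E}(m,n)$ together with the symmetry $k_1 \leftrightarrow k_2$ between the two whitening keys. Concretely, $\fx^{-1}_{h, k_1, k_2}[E](y) = E_h^{-1}(y \oplus k_2) \oplus k_1 = \fx_{h, k_2, k_1}[E^{-1}](y)$; substituting the ideal cipher $E$ by $E^{-1}$ (which has the same distribution) and swapping $k_1$ and $k_2$ (both uniform and independent), the $\fx^{-1}$ experiment is carried onto the $\fx$ experiment. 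Two-way access to $\pm E$ is essential here: the substitution swaps forward and inverse queries on the adversary's $E$-oracle, which is inconsequential precisely because both directions are available.

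Chaining these two symmetries, I would obtain a bijective correspondence between distinguishers against $\widetilde{\fx}$ and distinguishers against $\fx$ that preserves both the number of construction queries and the number of ideal-cipher queries, yielding the stated equality of advantages. The step I expect to be most delicate is the globality of the $E$-substitution in the second part: one has to check that relabeling forward and inverse $E$-queries is compatible with the single underlying ideal cipher on which the construction is instantiated, so that the simulation presents a consistent joint view of $(\text{construction oracle}, \pm E)$ for every query pattern of the adversary. Once this coherence is verified, the corollary follows immediately.
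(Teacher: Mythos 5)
The paper does not actually prove this corollary; it imports it from Borghoff et al., so I am comparing your argument against the original reduction rather than against anything written here. Your first step is fine as far as it goes: writing $k_0 = h \oplus b\alpha$ with $h \in H$, $b \in \{0,1\}$, and using two-way access to the construction, the $\widetilde{\fx}$ experiment collapses onto an \fx-type experiment; and your identity $\fx^{-1}_{h,k_1,k_2}[E] = \fx_{h,k_2,k_1}[E^{-1}]$ combined with the measure-preserving map $E \mapsto E^{-1}$ is correct (though it largely re-proves what the unordered-pair observation already gave you). The genuine gap is where these two steps land: they show that $\widetilde{\fx}$ is equivalent to an \fx game in which the cipher key is uniform over $H$, a set of size $2^{m-1}$, whereas $\textsf{Adv}^{\sprp\textsf{-IC}}_{\fx}$ on the left-hand side is defined with $k_0$ uniform over all of $\{0,1\}^m$. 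Declaring $b$ ``informationally redundant and droppable'' is precisely the false move: $b$ is invisible in the construction-oracle transcript, but it is not free in the ideal-cipher query accounting. In the ICM the dominant term of the \fx advantage is a key-search term of the form $q_{\fx}\, q_E / 2^{m+n}$, which scales inversely with the size of the effective cipher-key space; against your reduced game an adversary only has to search $2^{m-1}$ cipher keys in a known direction, so the two advantages differ by a factor of roughly $2$ and the claimed exact equality does not follow from your reduction.

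The missing idea is that the direction bit must be kept as part of the secret rather than averaged away: an attacker on $\widetilde{\fx}$ must test, for each $h \in H$, both the forward hypothesis $\fx_{h,\cdot,\cdot}$ and the inverse hypothesis $\fx^{-1}_{h,\cdot,\cdot}$, and this factor of $2$ is exactly what compensates for $|H| = 2^{m-1}$. The way Borghoff et al.\ make this exact is not by symmetrizing over $b$ but by exhibiting a measure-preserving correspondence between the ideal cipher of the \fx game and an $\alpha$-reflective cipher (one satisfying $F_{k\oplus\alpha} = F_k^{-1}$) underlying the $\widetilde{\fx}$ game, e.g.\ $F_k = E_k$ for $k \in H$ and $F_k = E^{-1}_{k\oplus\alpha}$ otherwise; under this correspondence the full $(m+2n)$-bit key is preserved and every single primitive query on one side maps to exactly one on the other, which is what yields equality of advantages at the same $(q_{\fx}, q_E)$. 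Your write-up needs either this correspondence or an explicit accounting argument showing that retaining $b$ restores the lost bit of key entropy; as written, it proves equality with the wrong game.
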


The classical proof of \expref{Corollary}{coro:PRINCE} carries over directly to the post-quantum setting. Together with \expref{Theorem}{thm:Q1-secure-FX}, this implies the post-quantum security of $\widetilde{\fx}$, and hence the post-quantum security of \textsf{PRINCE} in the QICM, i.e., where $\sf{PRINCE_\text{core}}$ is replaced by an ideal cipher to which everyone has quantum access. 

\subsubsection{The PRIDE cipher.}

The classical security of \textsf{PRIDE} is analyzed via the standard \fx 
construction, instantiated with the ARX-based core $\mathsf{PRIDE}_{\text{core}}$. 
In the specification of \textsf{PRIDE}, the master key is split into two halves, 
where one half is used for input/output whitening and the other half drives 
the round function of $\mathsf{PRIDE}_{\text{core}}$. Concretely, in the ideal cipher model where $\mathsf{PRIDE}_{\text{core}}$ is replaced by an ideal cipher $E$, \textsf{PRIDE} can be expressed as the special case $\fx_{k_0,k_1,k_0}$ of the general \fx construction, i.e.,
\[
\textsf{PRIDE}(k_0,k_1; x)= k_1 \oplus E_{k_0}(x \oplus k_1).
\]
The classical security analysis of this design is provided in 
Section~5 of~\cite{albrecht2014block}. The post-quantum security of \textsf{PRIDE} follows directly from \expref{Theorem}{thm:Q1-secure-FX}.

\section{Tweakable Block Ciphers}
\label{sec:TBC}
\subsection{Definitions}

\begin{definition} [Tweakable Permutation] \label{def:tweakable-perm}
    Let $\mathcal{T}$ be a tweak space. $\widetilde{\Pi}: \mathcal{T} \times \{0,1\}^n \rightarrow \{0,1\}^n$ is a \textsf{tweakable permutation}, where for each tweak $\tau \in \mathcal{T}$, $\widetilde{\Pi}(\tau,\cdot) \xleftarrow[]{\$} \mathcal{P}_n$ is an independently and randomly chosen permutation on $\{0,1\}^n$. We also define $\mathcal{E}(\mathcal{T},n)$ to be the set of all such tweakable permutations. Additionally, we denote the ``inverse" oracle as $\widetilde{\Pi}^{-1}(\tau,\cdot) := \widetilde{\Pi}_{\tau}^{-1}(\cdot)$ for some $\tau \in \mathcal{T}$.
\end{definition}

\begin{definition} [Distinguishing Advantage] \label{def:dis-adv-tbc}
    Let $G: \{0,1\}^m \times \mathcal{T} \times \{0,1\}^n \longrightarrow \{0,1\}^n$ be a family of efficient, keyed permutations and $G^{-1}: \{0,1\}^m \times \mathcal{T} \times \{0,1\}^n \longrightarrow \{0,1\}^n$ be its corresponding decryption oracle. Consider an adversary $\A$ (which can be either quantum or classical) with time complexity $T_{\A}$. The advantage of $\A$ in attacking $G$ is measured by
    \begin{align*}
        \textsf{Adv}_G(\A,q):= \left| \Pr_{k \leftarrow \{0,1\}^m} \left[ \A^{G_k(\cdot,\cdot),G^{-1}_k(\cdot,\cdot)} = 1\right]  - \Pr_{\widetilde{\Pi} \leftarrow \mathcal{E}(\mathcal{T},n)} \left[ \A^{\widetilde{\Pi}(\cdot,\cdot),\widetilde{\Pi}^{-1}(\cdot,\cdot)} = 1\right] \right| ,
    \end{align*}
    where the probabilities are also taken over the randomness of $\A$, and $\A$ is allowed to make $q$ \textbf{classical} queries to the oracles within a time bound of $t$. 
\end{definition}

\begin{definition} [\textsf{SPRP(-PQ)} Security of Tweakable Block Cipher] \label{def:(Q)PRP-security-tbc}
    Let $G: \{0,1\}^m \times \mathcal{T} \times \{0,1\}^n \longrightarrow \{0,1\}^n$ be a tweakable block cipher. The \textsf{(Post-Quantum) strong pseudorandom permutation} security of $G$ is measured by the maximum advantage over all (quantum) adversaries $\A$:
    \begin{align*}
        \textsf{Adv}_G^{\textsf{SPRP(-PQ)}}(q,t) &= \max_{\A:T_{\A}\leq t} \textsf{Adv}_G(\A,q) \\
        &= \max_{\A:T_{\A}\leq t} \left| \Pr_{k \leftarrow \{0,1\}^m} \left[ \A^{G_k(\cdot,\cdot),G^{-1}_k(\cdot,\cdot)} = 1\right]  - \Pr_{\widetilde{\Pi} \leftarrow \mathcal{E}(\mathcal{T},n)} \left[ \A^{\widetilde{\Pi}(\cdot,\cdot),\widetilde{\Pi}^{-1}(\cdot,\cdot)} = 1\right] \right|.
    \end{align*}
    Here, the probabilities are additionally over the randomness of $\A$, and the adversary $\A$ is allowed up to $q$ classical queries to the oracles within a time bound $t$. $\widetilde{\Pi}$ is a tweakable permutation, where $\widetilde{\Pi}(\tau, \cdot)$ is a random permutation, and $\tau \xleftarrow{\$} \mathcal{T}$.  
\end{definition}

\medskip \noindent \textbf{Remark.} \expref{Definition}{def:(Q)PRP-security-tbc} provides a general definition of distinguishing advantage. However, when considering the \qsprp-security of a construction in the QICM, where the underlying block cipher $E$ is an ideal cipher, which can only be accessed through oracle queries rather than a public description; we provide adversaries with these additional oracles, $\ket{E(\cdot,\cdot)}$ and $\ket{E^{-1}(\cdot,\cdot)}$. Formally, if we consider the \qsprp-security of a construction $G[E_{\_}]$ where the cipher $E$ is accessible exclusively through oracle queries, then the security is measured as below:
\begin{align*}
    \textsf{Adv}_{G[E_{\_}]}^{\qsprp}(q,t)=\max_{\A} \left|\Pr_{k \leftarrow\{0,1\}^m}[\A^{G[E_k],G^{-1}\left[E_k^{-1}\right],\ket{E},\ket{E^{-1}}}=1] - \Pr_{\widetilde{\Pi} \leftarrow \mathcal{E}(\mathcal{T},n)} \left[ \A^{\widetilde{\Pi},\widetilde{\Pi}^{-1},\ket{E},\ket{E^{-1}}} = 1\right] \right|.
\end{align*}

\begin{definition}[XOR-Universality] A family of functions $h = \{h_k: \{0,1\}^n \rightarrow \{0,1\}^n \}_{k\in\mathcal K}$ is called $\varepsilon$-XOR universal if for a randomly drawn key $k \in \mathcal{K}$, $\forall x,y,z \in \{0,1\}^n$ and $x \neq y$, $$\Pr_{k\xleftarrow{\$}\mathcal{K}}[h_{k}(x) \oplus h_k(y) = z] \le\varepsilon. $$ 
If $\varepsilon=\frac{1}{2^n}$, $h$ is simply called XOR universal.
\end{definition}

\begin{definition} [Uniformity] A family of functions $h$ is uniform if $\forall x,y \in \{0,1\}^n$,
\begin{align*}
\Pr_{k\xleftarrow{\$}\mathcal{K}}[h_k(x) = y] = \frac{1}{2^n}.
\end{align*}
\end{definition} 

\textsf{Remark.} We note that, for these properties to hold, it is necessary that $|k| \geq n$. This further implies that for all $x, y$, there exists at least one $k$ such that $h_k(x) = y$. 

\begin{definition} [\lrw construction~\cite{liskov2002tweakable}] \label{def:lrw}
    Let $m$ and $n$ be positive integers. Let $E: \{0,1\}^m \times \{0,1\}^n \rightarrow \{0,1\}^n$ be a block cipher. Let $h$ be a hash function. The \lrw construction is defined as 
    \[
    \lrw^{E,h}_{k,k'}(\tau,x) = E_k(x \oplus h_{k'}(\tau)) \oplus h_{k'}(\tau),
    \]
    where $k \xleftarrow{\$} \{0,1\}^m$ is the block cipher key and $k'\xleftarrow{\$} \{0,1\}^{\kappa}$ is the tweak key, $x \in \{0,1\}^n$ is the input, $\tau \in \{0,1\}^*$ is the tweak, and $h$ is a hash function. For simpler notation, we write $\lrw^{E,h}_{k,k'}(\tau,x)$ as $\lrw_{k,k'}(\tau,x)$, when $E$ and $h$ are clear from the context.
\end{definition}

\begin{theorem} [Classical security of \lrw~\cite{liskov2002tweakable,minematsu2006improved}]\label{thm:lrw-security}
   Let $h$ be an $\varepsilon$-XOR-universal hash function. Then for $\lrw^{E,h}$ construction, $$\textsf{Adv}^{\sprp}_{\lrw}(q,t) \leq \textsf{Adv}^{\sprp}_{E}(q,t) + q^2\varepsilon,$$ where $q$ is the number of queries to $\lrw_{k,k'}(\cdot,\cdot)$ within a time bound $t$.
\end{theorem}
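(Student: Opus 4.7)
The plan is a two-step hybrid argument that first replaces the block cipher with an ideal primitive and then performs an information-theoretic analysis of the resulting object. In the first step I would introduce a hybrid in which $E_k$ is replaced by a uniformly random permutation $\pi \leftarrow \mathcal{P}_n$, while keeping the hash key $k'$ and the outer masking structure fixed. Any distinguisher between $\lrw^{E,h}_{k,k'}$ and $\lrw^{\pi,h}_{k'}$ can be turned into an \sprp distinguisher for $E$ of the same query/time complexity, because the reduction simply samples $k'$ itself and emulates the XOR-universal masking around its $\pm E_k$ oracle. This hybrid step costs exactly $\textsf{Adv}^{\sprp}_E(q,t)$.

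The second step is to bound, information-theoretically, the distinguishing advantage between $\lrw^{\pi,h}_{k'}$ and an ideal tweakable permutation $\widetilde{\Pi} \leftarrow \mathcal{E}(\mathcal{T},n)$. I would record the adversary's interaction as a transcript of triples $(\tau_i, x_i, y_i)$ with associated internal values $u_i = x_i \oplus h_{k'}(\tau_i)$ and $v_i = y_i \oplus h_{k'}(\tau_i)$, representing the input and output of $\pi$ on query $i$. Define a transcript to be \emph{bad} if there exist distinct indices $i \neq j$ with $u_i = u_j$ or $v_i = v_j$ (after excluding trivially-repeated queries, which the adversary gains nothing from making). On good transcripts, the multiset of input/output pairs fed to $\pi$ is collision-free, so the responses form a uniformly random injective map, exactly matching the distribution produced by the ideal tweakable permutation on the same query pattern. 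Standard H-coefficient reasoning (or a direct identical-until-bad argument) then reduces the distinguishing advantage to the probability of the bad event.

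To bound the bad event, fix any pair of indices $i \neq j$. If $\tau_i = \tau_j$ then, since the query is not a trivial repeat, forward or inverse consistency forces either $x_i \neq x_j$ or $y_i \neq y_j$, and the masks cancel so that $u_i \neq u_j$ and $v_i \neq v_j$ deterministically. If $\tau_i \neq \tau_j$, then $u_i = u_j$ is equivalent to $h_{k'}(\tau_i) \oplus h_{k'}(\tau_j) = x_i \oplus x_j$, which by $\varepsilon$-XOR-universality occurs with probability at most $\varepsilon$ over the random choice of $k'$; the analogous bound applies to $v_i = v_j$. Union-bounding over the at most $\binom{q}{2} < q^2/2$ pairs and the two collision types yields a bad-event probability at most $q^2 \varepsilon$. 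Combining the two steps gives the claimed bound.

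The main obstacle is handling the \sprp setting cleanly: the adversary mixes forward and inverse queries adaptively, and one must be careful that the values $u_i, v_i$ are well-defined and that the bad-event analysis does not implicitly condition on the answers (which themselves depend on $\pi$). The cleanest way around this is to think of $\pi$ as being lazily sampled and to argue that, conditioned on the transcript being good so far, the next response is uniform in the appropriate set regardless of whether the query is forward or inverse, so the bad-event bound above is valid uniformly over the adversary's strategy.
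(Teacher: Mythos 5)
The paper does not actually prove this theorem---it is imported by citation from the LRW and Minematsu papers---so there is no in-paper proof to compare against. Your argument is the standard proof of this classical result and is essentially correct: the first hybrid (replacing $E_k$ by a random permutation $\pi$) costs exactly $\textsf{Adv}^{\sprp}_E(q,t)$ since the reduction preserves query and time complexity, and the second, information-theoretic step correctly reduces to a collision (``bad'') event on the masked values $u_i = x_i \oplus h_{k'}(\tau_i)$ and $v_i = y_i \oplus h_{k'}(\tau_i)$, bounded via $\varepsilon$-XOR-universality and a union bound by $2\binom{q}{2}\varepsilon \leq q^2\varepsilon$. One imprecision worth flagging: on good transcripts the real and ideal response distributions are \emph{not} ``exactly matching''---in the real world all $\pi$-inputs are globally distinct across tweaks, whereas the ideal tweakable permutation only forbids collisions within a single tweak, so the per-query conditional distributions differ. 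What actually closes the argument is that the real-world probability of any good transcript dominates the ideal-world probability (the real world excludes at least as many values at each step), so the H-coefficient bound collapses to the ideal-world bad-event probability; and it is precisely in the ideal world that the transcript is independent of $k'$, which is what licenses applying XOR-universality to each fixed pair. Your closing remark about lazy sampling shows you are aware of this conditioning issue, so I would simply replace the ``exactly matching'' claim with the domination argument to make the proof airtight.
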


We note that the security of \xext is not analyzed via the hybrid argument in this section. Instead, its bound is derived later using the more general theorem presented in~\expref{Theorem}{thm:general-qprp-theorem}.

\begin{definition} [\xext construction~\cite{rogaway2013evaluation}]\label{def:xex2}
    Let $m$ and $n$ be positive integers. Let $E: \{0,1\}^m \times \{0,1\}^n \rightarrow \{0,1\}^n$ be a block cipher, and let $\alpha \in \mathbb{F}^{*}_{2^n}$. The tweakable block cipher construction \xext is 
    \begin{align*} 
        \xext^{E, \alpha}_{k, k'}(x, {i, j}) = E_k(x \oplus \Delta_{i, j, k'}) \oplus \Delta_{i, j, k'}, 
    \end{align*}
    where $\Delta_{i, j, k'} = \alpha^j \times_* E_{k'}(i)$. Here, $\times_*$ denotes multiplication in the finite field $ \mathbb{F}_{2^n} \setminus \{0^n\}$. The key of \xext consists of the block cipher keys $k,k'\xleftarrow{\$} \bool^m$. Furthermore, $x,i \in \{0,1\}^n$, and $j \in [0, 2^{20}-1]$.
\end{definition}

We include a classical security analysis for completeness here. To facilitate the (classical and post-quantum) security analysis of $\xext$, we define an inefficient idealized variant of $\xext$, where $E_{k'}$ is replaced with a random permutation $\pi$, which is then considered part of the key.
\begin{definition} [\xext with idealized hash]\label{def:id-hash-xex2}
The tweakable block cipher construction \idhashxext is 
        \begin{align*} 
		\xext^{E, \alpha}_{k, \pi}(x, {i, j}) = E_k(x \oplus \Delta_{i, j}) \oplus \Delta_{i, j}, 
	\end{align*}
	where $\Delta_{i, j} = \alpha^j \times_* \pi(i)$. The key of \idhashxext consists of the block cipher key $k\xleftarrow{\$} \{0,1\}^m$ and the permutation $\pi\in S_{2^n}$, while the remaining parameters are as in \cref{def:xex2}.
\end{definition}

Observe that $\idhashxext$ is the \lrw construction with the hash function family $$h^{\idhashxext}_\pi(i,j) = \Delta_{i,j} = \alpha^j \times_* \pi(i).$$
Clearly, \xext and \idhashxext are indistinguishable for an $\sprp$ adversary, up to the $\sprp$ security of $E$. 
\begin{proposition}\label{prop:xext2idhashxext}
	The security of \xext and \idhashxext is related as follows.
	\[
	\textsf{Adv}^{\sprp}_{\xext}(q,t) \leq \textsf{Adv}^{\sprp}_{E}(q,t) +\textsf{Adv}^{\sprp}_{\idhashxext}(q,t).
	\]
This inequality holds for various adversarial models, including classical, post-quantum and quantum-access.
\end{proposition}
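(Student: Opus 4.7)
The plan is a straightforward two-step hybrid argument whose only moving part is replacing $E_{k'}$ by an independent uniform random permutation $\pi$ in the mask computation. By \expref{Definition}{def:id-hash-xex2}, the intermediate world obtained this way is exactly $\idhashxext$, so I would insert it between the real $\xext$ and the ideal tweakable permutation $\widetilde{\Pi}$ and bound the two gaps separately, then combine via the triangle inequality.

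For the first gap, from any distinguisher $\A$ between $\xext$ and $\idhashxext$ I would build a reduction $\B$ against the $\sprp$ security of $E$. The reduction samples its own block cipher key $k$ and simulates the construction for $\A$: on a forward query $(x,(i,j))$, $\B$ queries its $\sprp$-oracle on $i$ to get $w$, sets $\Delta = \alpha^j \times_* w$, and replies with $E_k(x \oplus \Delta) \oplus \Delta$; inverse queries are handled symmetrically using $E_k^{-1}$. Evaluations of $E_k$ and $E_k^{-1}$ are performed by $\B$ directly in the plain model, and through its ideal-cipher oracle in the QICM variants. If $\B$'s oracle is $E_{k'}$ for uniform $k'$ the simulation is perfectly $\xext$; if it is a uniform permutation $\pi$ it is perfectly $\idhashxext$. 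Since each $\A$-query triggers exactly one $\B$-query, this gap is bounded by $\textsf{Adv}^{\sprp}_E(q,t)$. The second gap is by definition bounded by $\textsf{Adv}^{\sprp}_{\idhashxext}(q,t)$, and summing the two yields the stated inequality.

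The only step that requires genuine care---and the main (mild) obstacle---is ensuring that the reduction transports cleanly across the classical, post-quantum (\textsf{Q1}), and quantum-access models. Concretely, $\A$'s construction queries (classical or quantum, forward or inverse) must be forwarded by $\B$ to its own construction oracle of the matching type, and in the QICM variants $\A$'s quantum ideal-cipher queries on $E$ must be answered by $\B$ using its own quantum access to $E$ (which is public in the model and thus available to the reduction). In each case the query count and running time of $\B$ match those of $\A$ up to constant simulation overhead, so the same bound holds uniformly in all three adversarial models.
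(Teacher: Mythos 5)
Your proposal is correct and matches what the paper intends: the paper simply asserts this proposition as immediate (``Clearly, \xext and \idhashxext are indistinguishable for an \sprp adversary, up to the \sprp security of $E$''), and your hybrid-plus-reduction argument --- replacing $E_{k'}$ by a uniform $\pi$ in the mask via an \sprp reduction that samples $k$ itself and forwards the tweak-index $i$ to its oracle, then invoking the \sprp security of \idhashxext for the remaining gap --- is exactly the standard argument being elided. Your attention to how the reduction carries over to the post-quantum and quantum-access settings (forwarding ideal-cipher queries using the reduction's own public access to $E$) is the right justification for the final sentence of the statement.
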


Next, we show that the hash function in $\idhashxext$ is $\varepsilon$-XOR universal for negligible $\varepsilon$.
\begin{lemma}\label{lem:idhash-is-eps-XOR}
	The hash function family $h^{\idhashxext}$ is $\frac{1}{2^n-1}$-XOR universal.
\end{lemma}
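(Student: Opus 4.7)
The plan is to fix arbitrary distinct pairs $(i_1, j_1) \neq (i_2, j_2)$ and an arbitrary target $z \in \{0,1\}^n$, identify $\{0,1\}^n$ with $\mathbb{F}_{2^n}$ throughout (so that $\oplus$ is field addition), and directly bound
\[
\Pr_\pi\!\left[\alpha^{j_1}\,\pi(i_1)\;\oplus\;\alpha^{j_2}\,\pi(i_2) \;=\; z\right]
\]
by splitting into cases according to whether $i_1 = i_2$ and whether $j_1 = j_2$. A prerequisite I would verify first is that $\alpha^{j_1} \neq \alpha^{j_2}$ whenever $j_1 \neq j_2$ in the allowed range $[0, 2^{20}-1]$; this follows from the standard choice of $\alpha$ as a generator (or at least a sufficiently high-order element) of $\mathbb{F}^*_{2^n}$, whose multiplicative order then exceeds $2^{20}$.

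If $i_1 = i_2$, then necessarily $j_1 \neq j_2$, and the equation collapses to $(\alpha^{j_1} \oplus \alpha^{j_2})\,\pi(i_1) = z$. Since the coefficient is nonzero and $\pi(i_1)$ is uniform over $\{0,1\}^n$, the probability equals $1/2^n \leq 1/(2^n-1)$. If $i_1 \neq i_2$, then $(\pi(i_1), \pi(i_2))$ is uniformly distributed over the $2^n(2^n-1)$ ordered pairs of distinct elements of $\mathbb{F}_{2^n}$, so it suffices to count the pairs $(u,v)$ with $u \neq v$ satisfying $\alpha^{j_1} u \oplus \alpha^{j_2} v = z$. The sub-case $j_1 = j_2$ reduces to $u \oplus v = z/\alpha^{j_1}$, giving $2^n$ valid solutions when $z \neq 0$ and none when $z = 0$, for a worst-case probability of $1/(2^n-1)$. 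The sub-case $j_1 \neq j_2$ gives $u = \alpha^{j_2-j_1} v \,\oplus\, z/\alpha^{j_1}$, which produces $2^n - 1$ valid solutions after excluding the unique $v$ that forces $u = v$, for probability $1/2^n$.

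The one slightly delicate spot will be the sub-sub-case $i_1 \neq i_2$, $j_1 \neq j_2$, $z = 0$: the equation becomes $u = \alpha^{j_2-j_1} v$, and to ensure $u \neq v$ for every nonzero $v$ (so that the count is indeed $2^n - 1$ rather than smaller), one must again invoke $\alpha^{j_2-j_1} \neq 1$, i.e., the high-order property of $\alpha$ fixed at the start. Aggregating over all cases, the worst probability is attained in the sub-case $i_1 \neq i_2$, $j_1 = j_2$, $z \neq 0$, equals exactly $1/(2^n-1)$, and establishes the claimed XOR-universality.
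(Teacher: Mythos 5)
Your proof is correct and follows essentially the same route as the paper's: a case split on whether $i_1=i_2$ and whether $j_1=j_2$, using the joint distribution of $(\pi(i_1),\pi(i_2))$ and the fact that $\alpha^{j_1}\neq\alpha^{j_2}$ for distinct tweaks in range. If anything, your explicit counting of solution pairs over the $2^n(2^n-1)$ ordered distinct pairs is more careful than the paper's appeal to uniformity of the XOR expression over $\mathrm{GF}(2^n)\setminus\{\alpha^{j_1}\oplus\alpha^{j_2}\}$ (which as stated is only exact when $j_1=j_2$), and you make explicit the high-order assumption on $\alpha$ that the paper leaves implicit.
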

\begin{proof}
	We bound
	\begin{align*}
		\max_{\substack{z \in \{0,1\}^n, (i_1,j_1) \neq (i_2,j_2)}} \Pr_{\pi}[\pi(i_1)\times_{*} \alpha^{j_1} \oplus \pi(i_2)\times_{*} \alpha^{j_2} = z]
	\end{align*}
	If $i_1 \neq i_2$, since $\pi$ is a random permutation, the values $\pi(i_1)$ and $\pi(i_2)$ are distinct random elements from $\text{GF}(2^n)$, implying that the XOR expression is uniformly distributed over the set $\textsf{GF}(2^n)\setminus\{\alpha^{j_1}\oplus \alpha^{j_2}\}$. In this case, the probability expression above us thus bounded by $ 1/(2^n-1)$. When $i_1 = i_2$ and $j_1 \neq j_2$, the probability expression becomes:
	\begin{align*}
		&\Pr_{\pi} [\pi(i_1) \times_{*} \alpha^{j_1} \oplus \pi(i_1) \times_{*} \alpha^{j_2} = z]\\ 
		&= \Pr_{\pi} [\pi(i_1) \times_{*} (\alpha^{j_1} \oplus \alpha^{j_2}) = z]\\
		&=\Pr_{\pi}[\pi(i_1) = z \times_* (\alpha^{j_1} \oplus \alpha^{j_2})^{-1}].
	\end{align*}
	Since $\pi$ is a random permutation, the value $\pi(i_1)$ is uniformly distributed, and thus $\varepsilon \leq \frac{1}{2^n -1}$. 
    \qed
\end{proof}
We now obtain the classical security of \xext.
\begin{theorem} [Classical security of \xext~\cite{rogaway2013evaluation}]\label{thm:xext-security}
Fix $n \geq 1$ and let $\alpha \in \mathbb{F}_{2^n}^*$ be base elements. Fix a block cipher $E: \{0,1\}^m \times\{0,1\}^n \rightarrow\{0,1\}^n$. Then
\begin{align*}
    \textsf{Adv}^{\sprp}_{\xext}(q,t) \leq 2\textsf{Adv}^{\sprp}_{E}(q,t) + \frac{q^2}{2^n-1},
\end{align*}
where $q$ is the number of queries to \xext, within a time bound $t$.
\end{theorem}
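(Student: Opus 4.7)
The plan is to assemble the bound from three ingredients already established in the excerpt: the reduction from $\xext$ to $\idhashxext$ (\expref{Proposition}{prop:xext2idhashxext}), the XOR-universality of the hash family underlying $\idhashxext$ (\expref{Lemma}{lem:idhash-is-eps-XOR}), and the classical $\sprp$-security of the $\lrw$ construction (\expref{Theorem}{thm:lrw-security}). The proof is then purely a matter of chaining these three results.

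First, I would apply \expref{Proposition}{prop:xext2idhashxext}, which reduces the task to bounding $\textsf{Adv}^{\sprp}_{\idhashxext}(q,t)$ at a cost of one additive $\textsf{Adv}^{\sprp}_{E}(q,t)$ term (this term arises from replacing the real block cipher $E_{k'}$ used to mask the tweak with a truly random permutation $\pi$ via a standard $\sprp$-distinguisher argument). Next, I would invoke the observation made just before \expref{Definition}{def:id-hash-xex2}, namely that $\idhashxext$ is syntactically identical to the $\lrw$ construction instantiated with the keyed hash family $h^{\idhashxext}_\pi(i,j) = \alpha^j \times_* \pi(i)$, where the ``key'' $\pi$ is a uniformly random permutation.

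Third, by \expref{Lemma}{lem:idhash-is-eps-XOR}, this hash family is $\varepsilon$-XOR universal with $\varepsilon = 1/(2^n-1)$. Plugging this into \expref{Theorem}{thm:lrw-security} yields
\begin{equation*}
\textsf{Adv}^{\sprp}_{\idhashxext}(q,t) \;\leq\; \textsf{Adv}^{\sprp}_{E}(q,t) + \frac{q^2}{2^n-1}.
\end{equation*}
Combining this with the bound from \expref{Proposition}{prop:xext2idhashxext} gives exactly the claimed $2\textsf{Adv}^{\sprp}_{E}(q,t) + q^2/(2^n-1)$.

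I do not anticipate any technical obstacles: every component has already been proved or observed in the excerpt, and the only remaining content is to state the chain explicitly and check that the two $\textsf{Adv}^{\sprp}_{E}(q,t)$ terms arise for independent reasons (one from replacing $E_{k'}$ by a random permutation in the mask, and one inherited from the $\lrw$ reduction, which replaces $E_k$ by a random permutation). A minor sanity check I would carry out is that the time budgets match up: the $\lrw$ reduction requires that the $E$-distinguisher simulate the hash evaluation in time comparable to $t$, and since $h^{\idhashxext}$ is efficiently computable given oracle access to $\pi$ (which itself is simulated from the $\lrw$ reduction's perspective), this is immediate. The final statement follows.
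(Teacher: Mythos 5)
Your proposal is correct and matches the paper's proof exactly: the paper derives \cref{thm:xext-security} as an immediate consequence of combining \cref{thm:lrw-security,prop:xext2idhashxext,lem:idhash-is-eps-XOR}, which is precisely the chain you describe. Your additional accounting for where each of the two $\textsf{Adv}^{\sprp}_{E}(q,t)$ terms comes from is a correct elaboration of what the paper leaves implicit.
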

\begin{proof}
Immediate by combining \cref{thm:lrw-security,prop:xext2idhashxext,lem:idhash-is-eps-XOR}.
\qed
\end{proof}

The tweakable block cipher $\xext$ is employed in the NIST-standardized $\textsf{XTS-AES}$~\cite{P1619F}, with $\textsf{AES}$ as the underlying block cipher. However, the application-level security goals of $\textsf{XTS}$ remain poorly understood~\cite{clunie2008public,rogaway2013evaluation}. Another variant, which we denote by $\xex$, uses a single key ($k = k'$). Discussions on whether to employ a single key or two separate keys are explored in~\cite{clunie2008public,liskov2008comments}, with no definitive conclusion reached.

\medskip\noindent \textbf{Applications of \lrw and \xext.} \lrw is implemented as \textsf{LRW-AES} Encryption in earlier drafts of IEEE P1619~\cite{P1619D4}, utilizing \textsf{AES} as the block cipher $E$ and the hash function used is $h_{k'}(\tau) = k' \times_* \tau$, where $\times_*$ represents the group multiplication in $\mathbb{F}^*_{2^n}$. For a single $128$-bit block, the block cipher key $k$ is $128$, $192$, or $256$ bits, while the hash key $k'$ is $128$ bits. The tweak $\tau$ is $128$ bits, and $E$ operates on $128$-bit (tweakable) blocks.

This \textsf{LRW-AES} was later replaced by \textsf{XTS-AES} in the final IEEE P1619 standard~\cite{P1619F}, with a constraint that the data unit length for any \textsf{XTS-AES} implementation must not exceed $2^{20}$ \textsf{AES} blocks~\cite{dworkin2010recommendation}. Initially, \lrw was considered slower than \xts due to the full multiplication over $\textsf{GF}(2^n)$ required for each tweak update, but this is no longer the case~\cite{isobe2019plaintext}. The input or output data may consist of multiple $128$-bit blocks followed by a partial block of less than $128$ bits.

In the \textsf{XTS-AES} Encryption procedure~\cite{P1619F} for a single $128$-bit block, the keys $k$ and $k'$ are either $128$ bits or $256$ bits. The tweak is represented by $i$, a $128$-bit value, and $j$, a sequential number of the $128$-bit block within the data unit.

\subsection{Post-Quantum Security of \lrw using the Hybrid Technique}

\begin{theorem}\label{thm:Q1-secure-LRW-1}
    Let \lrw be as in \cref{def:lrw} and let $\A$ be an adversary making $q_C$ classical queries to its first oracle and $q_Q \geq 1$ quantum queries to its second oracle. Assuming $h$ is XOR-universal and uniform, it holds that in the ideal cipher model,
    \begin{align*}
        \left| \Pr_{\substack{k \leftarrow \{0,1\}^m; k' \leftarrow\{0,1\}^{\kappa} \\  E \leftarrow  \mathcal{E}(m,n)}} \left[ \A^{\pm\lrw_{k,k'}[E_{\_}],\ket{\pm E}} = 1\right]  - \Pr_{\substack{\widetilde{\Pi} \leftarrow  \mathcal{E}(\tweak,n); \\ E \leftarrow  \mathcal{E}(m,n)}} \left[ \A^{\pm\widetilde{\Pi},\ket{\pm E}} = 1\right] \right|  \\ 
       \leq \frac{6 q_C^2}{2^n} + \frac{4}{2^{(m+n)/2}}\left( q_C \sqrt{q_Q}+q_Q \sqrt{q_C}\right).
    \end{align*}
\end{theorem}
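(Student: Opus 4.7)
The plan is to follow the hybrid-by-classical-queries argument of \expref{Theorem}{thm:Q1-secure-FX} for \fx, adapted to the tweakable setting. The whitening keys $k_1, k_2$ of \fx are replaced here by $h_{k'}(\tau)$, which depends on the tweak, and the additional additive term $6 q_C^2/2^n$ in the bound accounts for hash-collision bad events that are intrinsic to \lrw but absent from \fx.

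\textbf{Bad event analysis.} First I would fix uniform $k \in \{0,1\}^m$ and $k'$, and isolate the $6 q_C^2/2^n$ term by defining a bad-transcript event. A transcript $\{(\tau_i, x_i, y_i)\}_{i=1}^{q_C}$ is bad if there exist two distinct classical queries with $\tau_i \neq \tau_{i'}$ and either $h_{k'}(\tau_i) \oplus h_{k'}(\tau_{i'}) = x_i \oplus x_{i'}$ (input collision at $E_k$) or $h_{k'}(\tau_i) \oplus h_{k'}(\tau_{i'}) = y_i \oplus y_{i'}$ (output collision). By XOR-universality of $h$, each such event has probability at most $2^{-n}$ over $k'$; union-bounding over the relevant pairs and query-direction cases yields a bad-event probability at most $6 q_C^2/2^n$. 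When no bad event occurs, the points $\{x_i \oplus h_{k'}(\tau_i)\}$ (and analogously on the output side) are pairwise distinct, so the recursive modification $E^{T_j,K}$ of \expref{Theorem}{thm:Q1-secure-FX} remains well-defined.

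\textbf{Hybrid argument.} Conditioned on the good event, I would run the \fx-style hybrid. Define $\Hyb_j$ to answer the first $j$ classical queries with the ideal $\widetilde{\Pi}$ and the remainder with $\lrw_{k,k'}[E^{T_j,K}]$, where $K=(k,k')$ and $E^{T_j,K}$ is built recursively via swap operators at points $x_i \oplus h_{k'}(\tau_i) \leftrightarrow y_i \oplus h_{k'}(\tau_i)$. Intermediate hybrids $\Hyb_j^1, \Hyb_j^2, \Hyb_j^3$ are defined exactly as in the \fx proof. For $\Hyb_j \to \Hyb_j^1$ I would apply \expref{Lemma}{lem:resampling-ic} with a distribution $D$ that samples $(k,k')$ uniformly, sets $s_0 = x_{j+1}\oplus h_{k'}(\tau_{j+1})$ (which is uniform by the uniformity of $h$), and draws $s_1$ uniformly from $\{0,1\}^n$ minus the points already used at $E_k$. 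This gives $\epsilon \le 1/(2^m(2^n-j))$ and a per-step bound $4\sqrt{q_Q/(2^m(2^n-j))}$, identical to the \fx case. The equalities $\Hyb_j^1 = \Hyb_j^2$ and $\Hyb_j^2 = \Hyb_j^3$ carry over via the same calculations as in \expref{Lemma}{lma:fx-Hj1-Hj2} and \expref{Lemma}{lma:fx-Hj2-Hj3}, since the bad-event conditioning guarantees the swap operations act on distinct points. Finally, $\Hyb_j^3 \to \Hyb_{j+1}$ invokes the reprogramming lemma (\expref{Lemma}{lma:reprogramming}) just as in \expref{Lemma}{lma:fx-Hj3-Hj+1}, contributing $2 q_{Q,j+1}\sqrt{2(j+1)/2^{m+n}}$. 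Summing over $j \in \{0,\ldots,q_C-1\}$ and adding the bad-event contribution reproduces the stated bound.

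\textbf{Main obstacle.} The principal subtlety beyond the \fx analysis is the interplay between the two properties of $h$. Uniformity of $h$ is required so that $s_0$ in the resampling distribution $D$ is uniform, preserving the per-step bound of \expref{Lemma}{lem:resampling-ic}; XOR-universality is required to control the bad events that allow distinct tweaks to produce colliding inputs or outputs at $E_k$. Verifying that conditioning on the good event does not perturb the distribution $D$ in a way that would inflate $\epsilon$ — keeping the two roles of $h$ cleanly separated — is the technical step that requires the most care.
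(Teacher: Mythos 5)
Your high-level plan (hybrid over classical queries, collision bad events, resampling plus reprogramming) matches the paper's, and you correctly flag the central subtlety; but your instantiation of the resampling step has a genuine gap, and your accounting of the $6q_C^2/2^n$ term does not reflect where the losses actually occur. In \expref{Lemma}{lem:resampling-ic}, the Phase-2 distinguisher receives only the triple $(k,s_0,s_1)$ — not the internal randomness used to sample from $D$. Your $D$ internally samples $k'$, sets $s_0=x_{j+1}\oplus h_{k'}(\tau_{j+1})$, and excludes the set $\{x_i\oplus h_{k'}(\tau_i)\}_{i\le j}$ when drawing $s_1$; but that exclusion set depends on $k'$ beyond what $s_0$ determines. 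In Phase 2 the reduction must reconstruct some $k'$ consistent with $h_{k'}(\tau_{j+1})=x_{j+1}\oplus s_0$ in order to continue simulating $\lrw_{k,k'}$, and this reconstructed $k'$ will generally not be the one used to build the exclusion set, so $\Hyb_j^1$ is not perfectly simulated and the claimed equalities $\Hyb_j^1=\Hyb_j^2=\Hyb_j^3$ do not follow. (Also, $\neg\textsf{Bad}_j$ says nothing about whether $x_{j+1}\oplus h_{k'}(\tau_{j+1})$ collides with the first $j$ masked inputs, so "the swaps act on distinct points" is not guaranteed by your conditioning.)

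The paper resolves this differently: it takes $D$ to be \emph{uniform} on $\{0,1\}^{m+2n}$ (so $s_1$ is not restricted at all and $\epsilon=2^{-(m+2n)}$, giving exactly $4\sqrt{q_Q/2^{m+n}}$ per step), reconstructs $k'$ in Phase 2 conditioned on $h_{k'}(\tau_{j+1})=x_{j+1}\oplus s_0$ and proves via uniformity of $h$ that this $k'$ is uniform, and then charges $2j/2^n$ in \emph{each} of the transitions $\Hyb_j^1\to\Hyb_j^2$ and $\Hyb_j^2\to\Hyb_j^3$ for the events that $s$ or $x_{j+1}\oplus h_{k'}(\tau_{j+1})$ lands in $\{x_i\oplus h_{k'}(\tau_i)\}_{i\le j}$. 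Summing these contributes $4q_C^2/2^n$, which together with $\Pr[\textsf{Bad}_{q_C}]\le 2q_C^2/2^n$ yields the $6q_C^2/2^n$ term; your proposal instead assigns all of it to the bad event and treats the middle transitions as exact, which is not a valid decomposition. Finally, the step $\Hyb_j^2\to\Hyb_j^3$ for \lrw requires a case analysis on whether $\tau_{j+1}$ repeats an earlier tweak (the ideal object is a tweakable permutation, so the conditional distribution of $y_{j+1}$ differs between fresh and repeated tweaks); this is a genuinely new ingredient relative to \fx that your proposal omits.
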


\begin{proof}
    The proof follows a procedure similar to that of the post-quantum security proof for \fx (\expref{Theorem}{thm:Q1-secure-FX}). However, in \lrw, we introduce bad events related to collisions, defined as $\textsf{Bad}_j = \textsf{Bad}_{j,1} \land \textsf{Bad}_{j,2}$, where $j$ indicates the $j^{\textsf{th}}$ stage of the hybrid:
    \begin{itemize}
        \item $\textsf{Bad}_{j,1}$: A collision occurs in the set $\{x_1 \oplus h_{k'}(\tau_1), \ldots, x_j \oplus h_{k'}(\tau_j)\}$,
        \item $\textsf{Bad}_{j,2}$: A collision occurs in the set $\{y_1 \oplus h_{k'}(\tau_1), \ldots, y_j \oplus h_{k'}(\tau_j)\}$.
    \end{itemize}
    Using the hybrid technique, we analyze under the condition that these bad events do not occur:
    \begin{align*}
        &\left| \Pr[\A(\Hyb_0)=1] - \Pr[\A(\Hyb_{q_C})=1] \right| \\
        &= \left| \Pr[\A(\Hyb_0)=1] - \Pr[\A(\Hyb_{q_C})=1 \land \neg \textsf{Bad}_{q_C}] - \Pr[\A(\Hyb_{q_C})=1 \land \textsf{Bad}_{q_C}] \right| \\
        &\leq \left| \Pr[\A(\Hyb_0)=1] - \Pr[\A(\Hyb_{q_C})=1 \land \neg \textsf{Bad}_{q_C}] \right| + \Pr[\textsf{Bad}_{q_C}] \\
        &\leq \sum_{j=0}^{q_C-1} \left| \Pr[\A(\Hyb_j)=1 \land \neg \textsf{Bad}_j] - \Pr[\A(\Hyb_{j+1})=1 \land \neg \textsf{Bad}_{j+1}] \right| + \Pr[\textsf{Bad}_{q_C}].
    \end{align*}
    This way to introducing bad events was first employed in \cite{hosoyamada2025post}. Detailed proof for \lrw is in \expref{Appendix}{app:lrw-proof}.
    \qed
\end{proof}

\subsection{Tightness} \label{sec:attacks}


We summarize the best-known attacks on standardized \lrw and \xext, with detailed presentations provided in \expref{Appendix}{app:attacks}:
\begin{itemize}
\item \textsf{Classical Distinguishing Attack:} The best-known distinguishing attacks on \lrw and \xext, which aim to differentiate them from a block cipher, exploit either birthday collisions or \emr-related structures. These attacks generally require only $O(2^{n/2})$ online queries. This matches the $q_C^2 \cdot 2^{-n}$ term in the security bound.
\item \textsf{Classical Complete Key Recovery Attack:} Distinguishing attacks are first employed to identify the tweak key, followed by an exhaustive key search for the cipher key. This process requires a total of $O(2^{n/2} + 2^m)$ online queries and $O(2^m)$ offline queries.
\item \textsf{Quantum Complete Key Recovery Attack (shorter cipher key):} 
The offline-Simon attack~\cite{bonnetain2019quantum} satisfies the trade-off
$q_C q_Q^2 = 2^{m+n}$. 
For block ciphers with shorter key lengths ($m \le 2n$), 
this attack, recognized as the best known attack, 
recovers the secret key using approximately 
$\widetilde{O}(2^{(m+n)/3})$ online (classical) 
and $\widetilde{O}(2^{(m+n)/3})$ offline (quantum) queries. 
Up to poly-logarithmic factors, this aligns with the term 
$\sqrt{q_C}\, q_Q \cdot 2^{(m+n)/2}$ in our security bound.

\item \textsf{Quantum Complete Key Recovery Attack (longer cipher key):} For longer key lengths ($m > 2n$), we can combine Grover's algorithm with the Kuwakado-Morii attack~\cite{kuwakado2012security}. The query complexity is $q_Q = O(2^{m/2 + n/3})$ and  $q_C = O(2^{n/3})$. Notably, this matches the offline-Simon attack's trade-off, satisfying $q_Q^2 q_C = O(2^{(m+n)/2})$. Alternatively, we can apply Grover's algorithm alone to recover both the cipher key and tweak key, resulting in $q_Q = O(2^{(m+n)/2})$. The choice of attack depends on whether we prioritize minimizing quantum or classical queries. Similarly, it aligns with the term 
$\sqrt{q_C}\, q_Q \cdot 2^{(m+n)/2}$ in our security bound.
\end{itemize}

\section{Block Cipher Modes}
\label{sec:BCM}
\subsection{Definitions}
\begin{definition} [Distinguishing Advantage] \label{def:dis-adv}
    Let $E: \{0,1\}^m \times \{0,1\}^n \rightarrow \{0,1\}^n$ be an efficient, keyed permutation. Consider an adversary $\A$ (which can be either quantum or classical). The advantage of $\A$ in attacking $E$ is measured by
    \begin{align*}
        \textsf{\textsf{Adv}}_E(\A):=\left|\Pr_{k \leftarrow\{0,1\}^m}[\A^{E_k(\cdot),E^{-1}_k(\cdot)}=1] - \Pr_{P \leftarrow \mathcal{P}_n}[\A^{P(\cdot),P^{-1}(\cdot)}=1]\right|,
    \end{align*}
    where the probabilities are also taken over the randomness of $\A$.
\end{definition}

For the rest of the paper, we define $T_\A$ as the time an adversary $\A$ spends and $Q_\A$ as the number of \textbf{classical} queries that $\A$ makes to the oracles.  

\begin{definition} [\textsf{SPRP(-PQ)} Security] \label{def:SPRP(-Q1)-security}
    Let $E:\{0,1\}^m\times \{0,1\}^n \rightarrow \{0,1\}^n$ be an efficient keyed permutation. The \textsf{(PQ) strong pseudorandom permutation} security of $E$ is measured by the maximum advantage over all (quantum) adversaries $\A$:
    \begin{align*}
        \textsf{Adv}_E^{\textsf{SPRP(-PQ)}}(q,t)&:=\max_{\substack{\A: \, T_{\A}\leq t;\\ \ \ Q_{\A} \leq q }}\textsf{Adv}_E(\A) \\
        &=\max_{\substack{\A: \, T_{\A}\leq t;\\ \ \ Q_{\A} \leq q }}\left|\Pr_{k \leftarrow\{0,1\}^m}[\A^{E_k(\cdot),E^{-1}_k(\cdot)}=1] - \Pr_{P \leftarrow \mathcal{P}_n}[\A^{P(\cdot),P^{-1}(\cdot)}=1]\right|,
    \end{align*}
    where the probabilities are also taken over the randomness of $\A$.
\end{definition}

\begin{definition} [\textsf{PRF(-PQ)} Security] \label{def:PRF(-Q1)-security}
    Let $E:\{0,1\}^m\times \{0,1\}^* \rightarrow \{0,1\}^*$ be an efficient keyed function. The \textsf{(PQ) pseudorandom function} security of $E$ is measured by the maximum advantage over all (quantum) adversaries $\A$: 
    \begin{align*}
        \textsf{Adv}_E^{\textsf{PRF(-PQ)}}(q,t)&:=\max_{\substack{\A: \, T_{\A}\leq t;\\ \ \ Q_{\A} \leq q }} \left|\Pr_{k \leftarrow\{0,1\}^m}[\A^{E_k(\cdot)}=1] - \Pr_{f}[\A^{f(\cdot)}=1]\right|,
    \end{align*}
    where $f$ is chosen uniformly from the set of functions mapping $\{0,1\}^*$ to $\{0,1\}^*$; the probabilities are also taken over the randomness of $\A$. 
\end{definition}

\subsection{\qprp Security of an Ideal Cipher} \label{sec:q1-sec-ic}

We consider quantum information-theoretic adversaries that are not constrained by computational resources, such as time or the number of available qubits. The only restriction is on the number of queries that the adversary can make to its oracles.
\begin{definition} \label{def:q1-sec-ic}
    Let $E$ be the ideal cipher in the quantum ideal cipher model, and let $\A$ be a quantum adversary. The advantage of $\A$ in attacking $E$ is measured by:
    \begin{align*}
        \textsf{Adv}_{E}(\A) := \left| \Pr_{k \leftarrow \{0,1\}^m} \left[ \A^{E_k(\cdot), E_k^{-1}(\cdot), \ket{E(\cdot, \cdot)},\ket{E^{-1}(\cdot, \cdot)}} = 1 \right] - \Pr_{P \leftarrow \mathcal{P}_n} \left[ \A^{P, P^{-1}, \ket{E},\ket{E^{-1}}} = 1 \right] \right|.
    \end{align*}
\end{definition}

For simplicity of presentation, we sometimes later will omit the inverse oracles when considering $\sprp\textsf{(-PQ)}$-security. Similarly, \qsprp security of an ideal cipher $E$ is the maximum advantage taken over all possible quantum adversaries $\A$:
\begin{align*}
    \textsf{Adv}_{E}^{\qsprp}(q_1, q_2) := \max_{\substack{\A: \, Q_{1,\A} \leq q_1;\\ \ \ Q_{2,\A} \leq q_2}} \textsf{Adv}_{E}(\A),
\end{align*}
where $Q_{1,\A}$ and $Q_{2,\A}$ represent the queries that $\A$ makes to $E^{\pm}_k(\cdot)$ and $\ket{E^{\pm}(\cdot,\cdot)}$, respectively.

We emphasize that no block cipher can offer greater security than an ideal cipher. Now we start examining the \qsprp security of an ideal cipher. To establish a bound on the distinguishing probability, we reduce to the \textsf{UNIQUE-SEARCH} problem from the distinguishing problem.

\begin{definition} ($\textsf{UNIQUE-SEARCH}_n$)
  Given a function $f: \{0,1\}^n \rightarrow  \{0,1\}$, such that $f$ maps at most one element to $1$, output \textsf{YES} if $f^{-1}(1)$ is non-empty and \textsf{NO} otherwise. 
\end{definition}

\begin{theorem} \label{thm:security-ideal-cipher}
    Let $E$ be an ideal cipher (in the quantum ideal cipher model) and assume an adversary $\A$ making $q_C$ classical queries to $E_k$ and $q_Q$ quantum queries to $E$. Then it holds that 
    \begin{align*}
        \textsf{Adv}^{\qsprp\textsf{-IC}}_E(q_C,q_Q) &=\max_{\A} \left|\Pr_{k \leftarrow\{0,1\}^m}[\A^{E_k,\ket{E}}=1] - \Pr_{P \leftarrow \mathcal{P}_n}[\A^{P,\ket{E}}=1]\right|\\
        &\leq \max_{\A'} \textsf{Adv}^{\textsf{UNIQUE-SEARCH}}_{f}(\A').
    \end{align*}
\end{theorem}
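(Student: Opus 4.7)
The plan is to reduce the $\qsprp\textsf{-IC}$ distinguishing task directly to \textsf{UNIQUE-SEARCH} on the key space $\{0,1\}^m$. The key observation is that the real and ideal worlds differ only in whether the classical oracle is correlated with the ideal cipher: in the real world, the classical oracle equals $E_{k_0}$ for a uniformly random $k_0$, while in the ideal world it is an independent random permutation~$P$. Thus the indicator $f(k) := \mathbf{1}[E_k = P]$ is naturally a \textsf{UNIQUE-SEARCH} instance on $\{0,1\}^m$: in the real case it has the unique witness $k_0$, and in the ideal case it is almost surely identically zero.

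Concretely, given any distinguisher $\A$, I would construct an adversary $\A'$ for \textsf{UNIQUE-SEARCH} that receives oracle access to some $f:\{0,1\}^m \to \{0,1\}$ with at most one preimage of~$1$ and proceeds as follows. Being information-theoretic, $\A'$ first samples a uniformly random permutation $P \in \mathcal{P}_n$ together with independent uniformly random permutations $\{Q_k\}_{k \in \{0,1\}^m}$, and implicitly defines an ideal cipher by $E_k := P$ if $f(k) = 1$ and $E_k := Q_k$ otherwise. It then runs $\A$, answering each classical query of $\A$ using $P$ and implementing each quantum query $\ket{k}\ket{x}\ket{y} \mapsto \ket{k}\ket{x}\ket{y \oplus E_k(x)}$ as the controlled map $\ket{k}\ket{x}\ket{y} \mapsto \ket{k}\ket{x}\ket{y \oplus f(k)\cdot P(x) \oplus (1{-}f(k))\cdot Q_k(x)}$, which can be realized with a single query to $\ket{f}$ together with the classically-described unitaries for $P$ and $Q_k$ that $\A'$ sampled itself; inverse queries are handled analogously. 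Finally $\A'$ outputs $\A$'s bit.

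Correctness follows by case analysis: if $f \equiv 0$ then the simulated $E$ is a uniform ideal cipher independent of $P$, reproducing the ideal world; if $f^{-1}(1) = \{k^\ast\}$ then $E$ is a uniform ideal cipher with $P = E_{k^\ast}$, reproducing the real world with $k_0 = k^\ast$. Averaging the YES case over uniform $k^\ast$ exactly recovers the real-world distribution on~$k_0$, so the \textsf{UNIQUE-SEARCH} distinguishing advantage of $\A'$ equals the $\qsprp\textsf{-IC}$ advantage of $\A$. Taking the maximum over $\A'$ yields the claim, with $\A'$ making the same number $q_Q$ of queries to $f$ as $\A$ makes to $\ket{E}$.

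The main subtlety is verifying that the controlled map above faithfully implements the sampled $\ket{E}$ even when $f$ is queried in superposition, and that the induced joint distribution of $(P, E)$ exactly matches the real or ideal world. Since $P$ and each $Q_k$ are entirely under $\A'$'s internal control, the verification reduces to checking that the pointwise definition $E_k(x) = f(k)\cdot P(x) \oplus (1-f(k))\cdot Q_k(x)$ reproduces the ideal cipher distribution in both branches, which is immediate.
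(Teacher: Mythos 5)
Your proposal is correct and matches the paper's proof essentially step for step: both reduce to \textsf{UNIQUE-SEARCH} by programming the marked key's permutation to equal the classical oracle's permutation $P$ (the paper's $E'$ is exactly your $f(k)\cdot P \oplus (1-f(k))\cdot Q_k$ with $Q_k$ being a pre-sampled ideal cipher), implement the quantum oracle via a single controlled query to $\ket{f}$, and argue by the same YES/NO case analysis. Your remark about averaging the YES case over a uniform witness $k^\ast$ is a slightly more explicit treatment of the distributional matching than the paper gives, but the argument is the same.
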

\begin{proof}
    Given an adversary $\A$ that attacks $E$, we now construct an adversary $\A'^{\ket{f}}$ attacking \textsf{UNIQUE-SEARCH} problem:
    \begin{enumerate}
        \item $\A'$ generates a permutation $P \in \mathcal{P}_n$ and an ideal block cipher $E \in \mathcal{E}(m,n)$. 
        \item $\A'$ then constructs a function $E'$ by querying its own oracle $f$: 
        $$
        E'(k,\tau)= \begin{cases}P(\tau) & \text { if } f(k)=1 \\ E(k,\tau) & \text { if } f(k) \neq 1\end{cases}.
        $$
        $\A'$  subsequently provides $\A$ with classical oracle access to $P,P^{-1}$, as well as quantum oracle access to $E',E'^{-1}$. The construction of these quantum oracles is detailed in \expref{Section}{sec:oracle-for-E-prime}. 
        \item $\A'$ outputs what $\A$ outputs.
    \end{enumerate}
    
    We now argue that if $\A$ successfully distinguishes between having oracle access to $(P, \ket{E})$ or $(E_k, \ket{E})$, then $\A'$ can successfully attack the \textsf{UNIQUE-SEARCH} problem, specifically determining whether $f^{-1}$ is empty or not. We will present our argument by considering the following cases:

    \medskip \noindent\noindent \textbf{\textsf{YES} case.} There exists a $k^*$ such that $f(k^*) = 1$, which implies that $E'(k^*, \tau) = P(\tau)$ for any $t$. From the perspective of $\A$, for this specific $k^*$, the behavior of the second oracle $E'$ when queried with $(k^*, \cdot)$ matches the behavior of the first oracle $P$ when queried with $(\cdot)$. This situation is equivalent to what $\A$ expects to observe in the interaction with the oracles $(E_k,\ket{E})$, i.e. $\A^{E_k, \ket{E}}$.

    \medskip \noindent\noindent \textbf{\textsf{NO} case.} There does not exist any $k$ such that $f(k) = 1$, which implies that for all $k$ and any $\tau$, we have $E'(k, \tau) = E(k, \tau)$. Consequently, from the perspective of $\A$, the interaction with the oracles $(P,\ket{E'})$ is just $(P,\ket{E})$, which gives $\A^{P, \ket{E}}$.

    Therefore, if $\A$ can distinguish the real world $\left(\A^{E_k, \ket{E}}\right)$ and the ideal world $\left(\A^{P, \ket{E}}\right)$, then $\A'$ can distinguish \textsf{YES} and \textsf{NO} case in \textsf{UNIQUE-SEARCH}:
    \begin{align*}
        \textsf{Adv}^{\textsf{UNIQUE-SEARCH}}_{f}(\A') \geq \textsf{Adv}_{E}(\A).
    \end{align*}
    We note that the number of queries $\A'$ makes to $f$ is determined only by $q_Q$, as specified by the construction of $E'$.
    \qed
\end{proof}

The \textsf{UNIQUE-SEARCH} bound against quantum adversaries~\cite{bennett1997strengths} yields:
\begin{corollary} \label{thm:pq-security-ideal-cipher}
    Let $E$ be an ideal cipher (in QICM) and assume an adversary $\A$ making $q_C$ queries to $E_k$ and $q_Q$ \textbf{quantum} queries to $E$. Then it holds that 
    \begin{align*}
        \textsf{Adv}^{\qsprp\textsf{-IC}}_E(q_C,q_Q) &\leq \frac{q_Q^2}{2^m}.
    \end{align*}
\end{corollary}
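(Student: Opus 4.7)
The plan is to combine \expref{Theorem}{thm:security-ideal-cipher} with a standard quantum query lower bound for the \textsf{UNIQUE-SEARCH} problem. The heavy lifting has already been done in that theorem: starting from any distinguisher $\A$ against $E$ (with classical access to $E_k^{\pm}$ and quantum access to $\ket{\pm E}$), the reduction constructs an adversary $\A'^{\ket{f}}$ for \textsf{UNIQUE-SEARCH} on the key space $\{0,1\}^m$ of size $2^m$, satisfying $\textsf{Adv}^{\textsf{UNIQUE-SEARCH}}_f(\A') \geq \textsf{Adv}_E(\A)$. So it suffices to bound the distinguishing advantage of any quantum algorithm that decides whether $f \equiv 0$ or $f$ has a single, uniformly random marked input in a domain of size $2^m$.

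For this bound I would invoke the quantum query lower bound of Bennett, Bernstein, Brassard and Vazirani~\cite{bennett1997strengths} (equivalent, up to small constants, to the optimality of Grover's algorithm): any $q$-query quantum algorithm distinguishes these two cases with advantage at most $q^2/2^m$. Plugging in $q = q_Q$ yields the claimed bound $\textsf{Adv}^{\qsprp\textsf{-IC}}_E(q_C, q_Q) \leq q_Q^2/2^m$.

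The only point that needs verification is the query accounting, namely why the relevant query count for $\A'$ is $q_Q$ rather than $q_C + q_Q$. Inspecting the reduction, $\A'$ samples the permutation $P$ and the ideal cipher $E$ internally, so its answers to $\A$'s classical queries to $P, P^{-1}$ are served directly from $P$ and do not involve $f$ at all. In contrast, each quantum query of $\A$ to $\ket{E'}$ or $\ket{(E')^{-1}}$ is implemented by a single call to $\ket{f}$ together with $\ket{\pm E}$ and $P$ (as described in \expref{Section}{sec:oracle-for-E-prime}). Hence $\A'$ uses exactly $q_Q$ queries to $f$, and the $q_C$ classical queries contribute nothing to the \textsf{UNIQUE-SEARCH} query budget. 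This accounting is the only nontrivial step; with it the corollary is an immediate combination of the reduction and the BBBV lower bound, with no real obstacle beyond checking that the standard formulation of the lower bound yields the clean constant stated.
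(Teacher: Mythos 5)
Your proposal is correct and is essentially the paper's own argument: the corollary is stated as an immediate consequence of \expref{Theorem}{thm:security-ideal-cipher} combined with the \textsf{UNIQUE-SEARCH} lower bound of~\cite{bennett1997strengths}, and the query-accounting point you verify (that $\A'$ queries $f$ only $q_Q$ times, since classical queries are answered from the internally sampled $P$) is exactly the remark closing the proof of that theorem.
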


We note that the bound is optimal, and Grover's search is the matching algorithm, which involves querying $E$ $2^{m/2}$ times to find the key. It is evident from the bound that only $q_Q$, i.e., the number of queries to $E$, is involved. This implies that the security lower bound stays the same in the \textsf{Q2} model, in which the adversary can query both of its oracles quantumly.

\subsection{Security of Block Cipher Modes}

\subsubsection{General Theorem}

\textbf{Construction $\textsf{Con}$.} We begin by defining a polynomial-time algorithm, denoted as $\textsf{Con}$, which uses a (black-box) permutation as input to produce a specific construction. This permutation can take on different forms: when the construction is instantiated with a specific block cipher $E$, we denote it as $\textsf{Con}[E_{\_}]$. Then with a key $k \leftarrow \{0,1\}^m$ is uniformly sampled first, $E_k$ then serves as the (black-box) permutation, denoted as $\textsf{Con}[E_k]$. For $\textsf{Con}$ that is invertible, we denote its inverse as $\textsf{DeCon}$ and with different underlying permutations, it is denoted as $\textsf{DeCon}\left[E_k^{-1}\right]$ and $\textsf{DeCon}\left[P^{-1}\right]$.

To illustrate, consider $\textsf{Con}$ as the $\cbc\textsf{-MAC}$ construction. When the permutation is instantiated with a block cipher $E$, the construction first samples a uniformly random key $k \leftarrow \{0,1\}^m$ and uses it as the block cipher key for $E$, which gives $\textsf{Con}[E_k] = \cbc\textsf{-MAC}[E_k]$. The output of this construction is given by:
\begin{align*}
    \cbc\textsf{-MAC}[E_k](x_1,\ldots,x_{\ell}) = E_k(E_k(\ldots E_k(E_k(x_1)\oplus x_2) \oplus \ldots)\oplus x_{\ell}).
\end{align*}
Alternatively, when the permutation-like oracle is instantiated as a true random permutation $P$, the construction is denoted by $\textsf{Con}[P] = \cbc\textsf{-MAC}[P]$. In this scenario, the block cipher $E_k$ is replaced entirely by the permutation $P$ within the construction, resulting in the following definition:
\begin{align*}
    \cbc\textsf{-MAC}[P](x_1,\ldots,x_{\ell}) = P(P(\ldots P(P(x_1)\oplus x_2) \oplus \ldots)\oplus x_{\ell}).
\end{align*}

\noindent \textbf{Experiment $\textsf{Exp}$.} The security of a cryptographic scheme $\textsf{Con}$ is typically analyzed through a well-defined probabilistic experiment, denoted as $\textsf{Exp}$. This experiment captures the interaction between an adversary $\A$ and a challenger, modeled as a classical polynomial-time algorithm. The challenger has access to certain resources, such as a black-box permutation oracle, which it can query as needed during the experiment. The goal of the adversary $\A$ is to exploit these interactions and the oracle queries to either compromise a specific security property of $\textsf{Con}$ or extract secret information. The experiment $\textsf{Exp}(\A)$ proceeds as follows: 
\begin{enumerate}
    \item \textbf{Initialization and Oracle Access}: The challenger $\textsf{Chall}$ is possible to be given access to one or more oracles, such as black-box permutation oracles, which it may use to process information or respond to the adversary's queries.  $\textsf{Chall}$ may also grant the adversary direct access to some oracles under predefined rules. $\textsf{Con}$ can also obtain oracle access to these oracles. We assume that the number of queries to these oracles only depends on the number of queries the adversary makes to the construction \textsf{Con}.
    \item \textbf{Interaction with the Adversary}: The challenger $\textsf{Chall}$ and the adversary $\A$ engage in a protocol where information is exchanged. The adversary may send queries or responses to the challenger, aiming to gather information or manipulate the interaction to achieve its goals. The interaction is always classical.
    \item \textbf{Output of the Experiment}: After a sequence of interactions and oracle queries, the challenger outputs a single bit $b \in \{0, 1\}$. This bit represents the outcome of the experiment, where its distribution reflects the adversary’s success in breaking the scheme $\textsf{Con}$ or achieving a specific objective.
\end{enumerate}
The success of the adversary $\A$ is quantified by the probability that it achieves a favorable outcome in $\textsf{Exp}$. This is measured in terms of the adversary's \textbf{advantage}, denoted as $\textsf{Adv}^{\textsf{Exp}}_{\textsf{Con}}(\A)$. The advantage reflects the extent to which the adversary outperforms random guessing in achieving its goals. Finally, the security of the scheme $\textsf{Con}$ is determined by evaluating the maximum advantage achievable by any adversary $\A$. This is expressed as:
\begin{align*}
\textsf{Adv}^{\textsf{Exp}}_{\textsf{Con}}(\cdot) = \max_{\mathcal{A}} \textsf{Adv}^{\textsf{Exp}}_{\textsf{Con}}(\mathcal{A}).
\end{align*}

Here are some example experiments. The \prf experiment is a standard framework to evaluate the pseudorandom function \prf security of a construction $\textsf{Con}$. This experiment proceeds as follows:
\begin{enumerate}
    \item \textbf{Initialization and Oracle Access}: The challenger initializes the experiment by setting up the construction $\textsf{Con}_k(\cdot)$, where $k \in \{0, 1\}^\kappa$ is a secret key sampled uniformly at random. A bit $d \xleftarrow[]{\$} \{0,1\}$ is sampled uniformly at random. This bit determines the behavior of the oracle provided to the adversary $\A$:
    \begin{itemize}
        \item If $d = 0$, the challenger grants $\A$ access to the oracle implementing the function $\textsf{Con}_k(\cdot)$.
        \item If $d = 1$, for each query $m$ from $\A$, the challenger responds with a value $y \xleftarrow[]{\$} \mathcal{Y}$ sampled uniformly at random from the output range $\mathcal{Y}$, independent of $m$, mimicking a random oracle.
    \end{itemize}
    \item \textbf{Interaction with the Adversary}: The adversary $\A$ is allowed to interact with the oracle provided by the challenger. It may issue a sequence of queries $m$ and receive corresponding responses, either from $\textsf{Con}_k(\cdot)$ (if $d=0$) or the random oracle (if $d=1$).
    \item \textbf{Output of the Experiment}: After a sequence of interactions, $\A$ outputs a bit $d'$, representing its guess about whether the oracle is $\textsf{Con}_k(\cdot)$ ($d=0$) or a random oracle ($d=1$). The challenger then computes the output of the experiment as a bit $b = \overline{d \oplus d'}$, where $b = 1$ indicates that $\A$ correctly distinguished the oracle type and $b = 0$ indicates that $\A$ guessed incorrectly.
\end{enumerate}
The adversary's goal is to maximize the probability of outputting the correct guess, $d' = d$, thus making $b = 1$. The \prf security of the construction $\textsf{Con}$ is defined as the maximum distinguishing advantage over all adversaries $\A$ running within the given resource constraints ($q$), denoted as $\textsf{Adv}^{\prf}_{\textsf{Con}}(q)$.

The distinguishing advantage of the adversary $\A$ in the \prf experiment is defined as:
\begin{align*}
    \textsf{Adv}^{\prf}_{\textsf{Con}}(\A) = \left| \Pr[b = 1] - \frac{1}{2} \right|.
\end{align*}
The \prf security of the construction $\textsf{Con}$ is then defined as the maximum distinguishing advantage over all adversaries $\A$ running within the given resource constraints ($q,t$):
\begin{align*}
    \textsf{Adv}^{\prf}_{\textsf{Con}}(q,t) = \max_{\A} \textsf{Adv}^{\prf}_{\textsf{Con}}(\A).
\end{align*}

Consider another example, the \forgery experiment.
The \forgery experiment evaluates the unforgeability of a construction $\textsf{Con}$, demonstrating that $\textsf{Con}$ is a \mac (Message Authentication Code). It is structured as follows:
\begin{enumerate}
    \item \textbf{Initialization and Oracle Access}: The challenger initializes the experiment by setting up the construction $\textsf{Con}_k(\cdot)$, where $k \in \{0, 1\}^\kappa$ is a secret key sampled uniformly at random. The oracle $\textsf{Con}_k(\cdot)$ is made available to the adversary $\A$.
    \item \textbf{Interaction with the Adversary}: The adversary $\A$ interacts with the oracle $\textsf{Con}_k(\cdot)$. On a query $m$, the oracle computes the tag $\textsf{tag}= \textsf{Con}_k(m)$ and returns $\textsf{tag}$ to $\A$.
    \item \textbf{Output of the Experiment}: After a sequence of interactions, $\A$ outputs a pair $(m,\textsf{tag})$. Let $\mathcal{Q}$ denote the set of all queries that $\A$ asked its oracle. The challenger outputs $b=1$ if and only if $\textsf{Con}_k(m)=\textsf{tag}$ and $m \not \in \mathcal{Q}$, the adversary produces a valid tag for a message $m$ that it has not queried during the interaction. Otherwise, the challenger outputs $0$.
\end{enumerate}
The adversary’s goal is to maximize the probability of producing a valid, unqueried message-tag pair $(m, \textsf{tag})$ such that the experiment outputs $1$. The unforgeability of the construction $\textsf{Con}$ is quantified by the adversary's advantage:
\begin{align*}
\textsf{Adv}^{\forgery}_{\textsf{Con}}(\mathcal{A}) = \Pr[b=1].
\end{align*}
The unforgeability of $\textsf{Con}$ is then defined as the maximum advantage over all adversaries $\A$ constrained by a certain number of queries $q$ and time $t$:
\begin{align*}
\textsf{Adv}^{\forgery}_{\textsf{Con}}(q,t) = \max_{\A} \textsf{Adv}^{\forgery}_{\textsf{Con}}(\mathcal{A}).
\end{align*}

For \forgery experiment with decryption oracle access, denoted as $\forgery \pm$, it is structured as follows for a construction $\textsf{Con}$:
\begin{enumerate}
    \item \textbf{Initialization and Oracle Access}: The challenger $\textsf{Chall}$ initializes the experiment by setting up the construction $\textsf{Con}_k(\cdot)$ and $\textsf{DeCon}_k(\cdot)$, where $k \in \{0, 1\}^\kappa$ is a secret key sampled uniformly at random. Both oracles are made available to the adversary $\A$.
    \item \textbf{Interaction with the Adversary}: The adversary $\A$ interacts with the oracle $\textsf{Con}_k(\cdot)$ and $\textsf{DeCon}_k(\cdot)$. When $\textsf{Con}_k(\cdot)$ is queried on a query $m$, the oracle computes the tag $\textsf{tag} = \textsf{Con}_k(m)$ and returns $t$ to $\A$. When $\textsf{DeCon}_k(\cdot)$ is queried on a query $t$, the oracle computes the tag $m= \textsf{DeCon}_k(t)$ and returns $m$ to $\A$. 
    \item \textbf{Output of the Experiment}: Same as in \forgery experiment.
\end{enumerate}

\textbf{General Theorem.} We now focus on a collection of constructions whose security is analyzed in a typical reduction-based approach. The classical security of the construction $\textsf{Con}[E_{\_}]$, for a particular block cipher $E$ in a specific experiment $\textsf{Exp}$, is established through the following steps:
\begin{enumerate}
    \item \textbf{Reduction to the Block Cipher \sprp Security:} Replace the block cipher $E_k$ with a truly random permutation $P \xleftarrow[]{\$} \mathcal{P}_n$, incurring a loss equivalent to the \sprp security of $E$.
    \item \textbf{Information-Theoretic Security of $\sf{Con}[P]$:} When $E$ is replaced by $P$, it can be shown that the construction $\textsf{Con}[P]$ achieves a certain level of security in an information-theoretic sense.  This means that the security is determined solely by the number of queries $q$ made to the construction oracle $\textsf{Con}[P]$ and its inverse, independent of the adversary's computational power.
\end{enumerate}
Therefore, the final security of $\textsf{Con}[E_{\_}]$ in a given experiment $\textsf{Exp}$ against a classical probabilistic adversary making $q$ queries in time $t$ is given by:
\begin{align*}
    \textsf{Adv}^{\textsf{Exp}}_{\textsf{Con}[E_{\_}]}(q,t) \leq \textsf{Adv}_E^{\sprp}(q',t) + \delta(q),
\end{align*}
Here, $q' = f(q)$, a function of $q$, represents the number of queries made by $\textsf{Con}$ and $\textsf{Chall}$ to $E_{\_}(\cdot)$. Additionally, $t$ denotes the computational resources required for the reduction to the security of the underlying block cipher. 

This approach combines the underlying computational security of the block cipher $E$ and the inherent information-theoretic security of the construction when using a random permutation $P$. When $\textsf{Con}$ is instantiated with a specific block cipher $E$, we can extend the classical security analysis of $\textsf{Con}$ in a given experiment $\textsf{Exp}$ to establish its post-quantum security in the same experiment $\textsf{Exp}$ as follows:
\begin{theorem} \label{thm:general-qprp-theorem}
Let $\textsf{Exp}$ be a security experiment, and let $\textsf{Con}$ be a construction as defined above, instantiated with a block cipher $E$. Then the post-quantum security of $\textsf{Con}$ is bounded as follows:
\begin{align*}
    \textsf{Adv}^{\textsf{Exp-PQ}}_{\textsf{Con}[E_{\_}]}(q,t) \leq \textsf{Adv}_E^{\qsprp}(q',t) + \delta(q),
\end{align*}
where $q$ is the number of queries made to the keyed construction and its inverses, $\textsf{Con}[E_k](\cdot)$ and $\textsf{DeCon}\left[E_k^{-1}\right](\cdot)$; $q'=f(q)$ represents the number of queries made by $\textsf{Con}$ and the challenger $\textsf{Chall}$ to $E_{\_}(\cdot)$; $t$ denote the time resources available to quantum adversaries in the experiments $\textsf{Exp}$. $\delta(q)$ is  the classical-model information-theoretic security of $\textsf{Con}$ instantiated with an ideal block cipher. 
\end{theorem}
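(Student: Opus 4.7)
The plan is a straightforward hybrid/reduction argument of exactly the shape foreshadowed in the informal discussion preceding the theorem. Given any quantum adversary $\A$ against $\textsf{Con}[E_{\_}]$ in experiment $\textsf{Exp-PQ}$ with query budget $q$ and time $t$, I will define two experiments: the real one $\textsf{Exp-PQ}[\textsf{Con}[E_k]]$ (with $k \leftarrow \{0,1\}^m$), and an ideal one $\textsf{Exp-PQ}[\textsf{Con}[P]]$ obtained by replacing every invocation of $E_k$ (whether inside the construction or inside the challenger $\textsf{Chall}$) with a uniformly random permutation $P \leftarrow \mathcal{P}_n$. The triangle inequality then gives
\begin{align*}
    \textsf{Adv}^{\textsf{Exp-PQ}}_{\textsf{Con}[E_{\_}]}(\A)
    \;\leq\;
    \bigl|\Pr[\textsf{Exp-PQ}[\textsf{Con}[E_k]](\A)=1] - \Pr[\textsf{Exp-PQ}[\textsf{Con}[P]](\A)=1]\bigr|
    \;+\;
    \Pr[\textsf{Exp-PQ}[\textsf{Con}[P]](\A)=1],
\end{align*}
and it suffices to bound the two terms by $\textsf{Adv}_E^{\qsprp}(q',t)$ and $\delta(q)$ respectively.

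For the first term, I build a \qsprp distinguisher $\B$ against $E$ as follows. $\B$ receives a classical oracle that is either $E_k$ (for random $k$) or a random permutation $P$, together with its inverse, and is allowed $q'$ classical queries and time roughly $t$. It internally simulates the experiment $\textsf{Exp-PQ}$ for $\A$: whenever $\textsf{Con}$ or $\textsf{Chall}$ would invoke $E$ on a key and input (or the inverse), $\B$ forwards the input to its own classical oracle; all other interaction with $\A$ is simulated honestly, including any quantum computation $\A$ performs locally. By hypothesis, the number of such $E$-queries made by $\textsf{Con}$ together with $\textsf{Chall}$ is a function $f(q)=q'$ of $\A$'s query count $q$. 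In the real world, $\B$ perfectly simulates $\textsf{Exp-PQ}[\textsf{Con}[E_k]]$; in the ideal world, it perfectly simulates $\textsf{Exp-PQ}[\textsf{Con}[P]]$. Outputting the challenger's decision bit then gives the first term $\leq \textsf{Adv}_E^{\qsprp}(q',t)$.

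For the second term, I observe that in $\textsf{Exp-PQ}[\textsf{Con}[P]]$ the adversary $\A$ interacts with the challenger only through classical queries/responses (this is the defining feature of the post-quantum model, and the assumption on $\textsf{Con}$ and $\textsf{Chall}$). The information-theoretic security bound $\delta(q)$ from the classical analysis is a bound on $\Pr[\textsf{Exp}[\textsf{Con}[P]]=1]$ that depends only on the classical transcript of these interactions, not on how that transcript was computed; in particular, it holds against computationally unbounded classical adversaries and therefore also against any quantum adversary whose only interaction with the challenger is classical. Hence the second term is at most $\delta(q)$, and combining the two bounds yields the theorem.

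The main (minor) obstacle is being careful about what exactly is being assumed of $\textsf{Con}$ and $\textsf{Exp}$. Two points need to be handled cleanly: (i) the claim that $\B$'s query count to its oracle is exactly the number of $E$-queries made by $\textsf{Con}$ plus $\textsf{Chall}$ in the original experiment, which is already part of the model assumption that this count depends only on $q$; and (ii) the fact that $\delta(q)$ applies to a quantum $\A$ whose oracle accesses are classical, which follows because any such interaction with $\textsf{Con}[P]$ can be realized by a (possibly inefficient) classical adversary implementing the same classical query strategy, and $\delta(q)$ is an information-theoretic bound against all such strategies. Neither requires new technical machinery; both are standard observations made explicit.
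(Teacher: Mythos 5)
Your proposal is correct and follows essentially the same route as the paper: a reduction $\B$ that simulates the entire experiment for $\A$ while forwarding all $E$-invocations of $\textsf{Con}$ and $\textsf{Chall}$ to its own classical \qsprp oracle, followed by the observation that the classical information-theoretic bound $\delta(q)$ still applies to a locally-quantum adversary whose oracle interaction is purely classical. The only cosmetic difference is that you write the second term of the triangle inequality as the raw success probability in the $\textsf{Con}[P]$ world rather than as a difference against the experiment's ideal reference object (as the paper does, which covers distinguishing-type experiments uniformly), but your prose already accounts for this and the argument is unaffected.
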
 


\begin{proof}
    Suppose $\A$ is a quantum adversary attacking $\textsf{Con}[E_{\_}]$ in a specified experiment $\textsf{Exp}$. The corresponding security is:
    \begin{align*}
        \textsf{Adv}^{\textsf{Exp-PQ}}_{\textsf{Con}[E_{\_}]}(q,t) &= \max_{\A}\textsf{Adv}^{\textsf{Exp-PQ}}_{\textsf{Con}[E_{\_}]}(\A) \\
        &= \max_{\A}\left|\Pr_{k \leftarrow\{0,1\}^m}\left[\textsf{Exp}\left(\A^{\textsf{Con}[E_k],\textsf{DeCon}\left[E_k^{-1}\right]}\right)=1\right] - \Pr_{P \leftarrow \mathcal{P}_n}\left[\textsf{Exp}\left(\A^{P,P^{-1}}\right)=1\right]\right|.
    \end{align*}

    We then consider a special class of quantum adversaries, denoted as $\B$. As defined in \expref{Definition}{def:dis-adv}, $\B$ has classical oracle access to $\mathcal{O}$ and $\mathcal{O}^{-1}$, where $\mathcal{O}$ represents either $E_k(\cdot)$ or $P(\cdot)$. The goal of $\B$ is to simulate a complete interaction between a specific adversary $\A$ and the challenger in the experiment $\textsf{Exp}$ as follows:
    \begin{enumerate}
        \item \textbf{Construct Oracles:} $\B$ constructs the oracles $\textsf{Con}[\mathcal{O}]$ and $\textsf{DeCon}\left[\mathcal{O}^{-1}\right]$ based on its own access to $\mathcal{O}$ and $\mathcal{O}^{-1}$. 
        \item \textbf{Simulate Interaction:} $\B$ invokes the specific adversary $\A$ and runs it in the experiment $\textsf{Exp}$ while acting as the challenger; $\B$ also provides $\A$ with the constructed oracles $\textsf{Con}[\mathcal{O}]$ and $\textsf{DeCon}\left[\mathcal{O}^{-1}\right]$ when necessary.
        \item \textbf{Output Result:} $\mathcal{B}$ outputs a bit: $1$ if $\A$ succeeds in the experiment $\textsf{Exp}$, and $0$ otherwise.
    \end{enumerate}

     We note that for certain specific experiments $\textsf{Exp}$, where the adversary $\A$ is tasked with distinguishing whether the construction is based on $E$ or $P$, in the final step, $\B$ simply outputs whatever $\A$ outputs. From the reduction, it is clear that 
    \begin{align} \label{eq:general-eq-1}
        \textsf{Adv}_E(\B)&=\left|\Pr_{k \leftarrow\{0,1\}^m}\left[\B^{E_k(\cdot),E^{-1}_k(\cdot)}=1\right] - \Pr_{P \leftarrow \mathcal{P}_n}\left[\B^{P(\cdot),P^{-1}(\cdot)}=1\right]\right| \nonumber\\
        &\geq  \left|\Pr_{k \leftarrow\{0,1\}^m}\left[\textsf{Exp}\left(\A^{\textsf{Con}[E_k],\textsf{DeCon}\left[E_k^{-1}\right]}\right)=1\right] - \Pr_{k \leftarrow\{0,1\}^m}\left[\textsf{Exp}\left(\A^{\textsf{Con}[P],\textsf{DeCon}\left[P^{-1}\right]}\right)=1\right] \right|.
    \end{align}

    We also note that after replacing $E_k$ with $P$, the distinguishing advantage of $\A$ attacking $\textsf{Con}[E_k]$ with classical oracle access is given by
    \begin{align} \label{eq:general-eq-2}
         \left| \Pr_{k \leftarrow \{0,1\}^m}\left[\textsf{Exp}\left(\A^{\textsf{Con}[P],\textsf{DeCon}\left[P^{-1}\right]}\right)=1\right] - \Pr_{P \leftarrow \mathcal{P}_n}\left[\textsf{Exp}\left(\A^{P,P^{-1}}\right)=1\right] \right| \leq \delta(q),
    \end{align}
    which is information-theoretic. This indicates that even if $\A$ possesses local quantum capabilities, as long as it has only classical oracle access, the distinguishing advantage remains bounded by $\delta(q)$.

    By combining the above three equations, we obtain:
    \begin{align} \label{eq:general-eq-3}
        & \textsf{Adv}^{\textsf{Exp-PQ}}_{\textsf{Con}[E_{\_}]}(\A) \nonumber\\
        &\leq  \left|\Pr_{k \leftarrow\{0,1\}^m}\left[\textsf{Exp}\left(\A^{\textsf{Con}[E_k],\textsf{DeCon}\left[E_k^{-1}\right]}\right)=1\right] - \Pr_{k \leftarrow\{0,1\}^m}\left[\textsf{Exp}\left(\A^{\textsf{Con}[P],\textsf{DeCon}\left[P^{-1}\right]}\right)=1\right] \right| \nonumber \\
        & \quad +  \left| \Pr_{k \leftarrow \{0,1\}^m}\left[\textsf{Exp}\left(\A^{\textsf{Con}[P],\textsf{DeCon}\left[P^{-1}\right]}\right)=1\right] - \Pr_{P \leftarrow \mathcal{P}_n}\left[\textsf{Exp}\left(\A^{P,P^{-1}}\right)=1\right] \right| \nonumber \\
        &\leq \textsf{Adv}_E(\B) + \delta(q).
    \end{align}
    
    By taking the maximum over adversaries $\A$, which is equivalent to taking the maximum over adversaries $\B$, it follows that:
    \begin{align*}
        \textsf{Adv}^{\textsf{Exp-PQ}}_{\textsf{Con}[E_{\_}]}(q,t) &\leq \max_{\B}\textsf{Adv}_E(\B) + \delta(q)\\
        &= \textsf{Adv}^{\qsprp}_{E}(q',t) + \delta(q).
    \end{align*}
    \qed
\end{proof}

\noindent\textbf{Remark.} The current statement applies directly only to constructions $\textsf{Con}$ based on a block cipher $E$ with a single key. For constructions where $E$ uses multiple (say $c$) independently sampled keys, the argument can be extended with minor modifications:
\begin{align*}
    \textsf{Adv}^{\textsf{Exp}}_{\textsf{Con}[E_{\_}]}(q,t) \leq c \cdot \textsf{Adv}_E^{\sprp}(q',t) + \delta(q) \Longrightarrow
    \textsf{Adv}^{\textsf{Exp-PQ}}_{\textsf{Con}[E_{\_}]}(q,t) \leq c \cdot \textsf{Adv}_E^{\qsprp}(q',t) + \delta(q).
\end{align*}

We also examine the security of $\textsf{Con}[E_{\_}]$ in the QICM. As discussed in \expref{Section}{sec:resampling_ic}, in this model, $E$ is treated as an ideal cipher, and $\textsf{Con}[E_{\_}]$ is constructed using such $E$. Unlike the previous version, here, the challenger $\textsf{Chall}$ and $\textsf{Con}$ are only granted \textbf{classical} oracle access to the \textit{ideal cipher} $E(\cdot, \cdot)$, rather than access to the full scripts as in the previous version. The number of queries they make is recorded as $q'_C$. The adversary $\A$, however, is granted \textbf{quantum} oracle access to $E(\cdot,\cdot)$, recorded as $q_Q$, in addition to \textbf{classical} oracle access to $\textsf{Con}[E_k]$ and $\textsf{DeCon}\left[E_k^{-1}\right]$, recorded as $q_C$. It is important to note that, in this model, we consider adversaries with unbounded computational power. Thus, we are effectively analyzing the security of $\textsf{Con}[E_{\_}]$ in the post-quantum model.

\begin{theorem}\label{thm:general-qprp-theorem-ideal-cipher}
    Let $\textsf{Exp}$ be a security experiment, and let $\textsf{Con}$ be a construction as defined before. Then the post-quantum security of $\textsf{Con}[E_{\_}]$ in \textit{the ideal cipher model} is given by:
    \begin{align*}
    \textsf{Adv}^{\textsf{Exp-PQ-IC}}_{\textsf{Con}[E_{\_}]}(q_C,q_Q) &\leq  \textsf{Adv}_E^{\qsprp\textsf{-IC}}(q_C',q_Q) + \delta(q_C)\\
    &=\frac{q_Q^2}{2^m} + \delta(q_C)
\end{align*}
where $q_C,q'_C$ and $q_Q$ is defined as above; $t$ denote the time resources available to quantum adversaries in the experiments $\textsf{Exp}$. $\delta(q_C)$ is  the classical-model info-theoretic security of $\textsf{Con}$ instantiated with an ideal block cipher.
\end{theorem}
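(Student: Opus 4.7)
The plan is to mirror the proof of \expref{Theorem}{thm:general-qprp-theorem}, adapting it to the QICM setting where the underlying block cipher is an ideal cipher to which the adversary also has quantum access via $\ket{E^{\pm}}$. The skeleton, namely, first replace $E_k$ by a truly random independent permutation at the cost of a strong-PRP bound, then apply the classical information-theoretic security of the resulting construction, carries over essentially verbatim; the only new ingredient is the presence of the additional quantum oracle.

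Concretely, given any quantum adversary $\A$ attacking $\textsf{Con}[E_{\_}]$ in the experiment $\textsf{Exp}$, I would construct a reduction $\B$ for the $\qsprp\textsf{-IC}$ game of \expref{Definition}{def:q1-sec-ic}. The reduction receives classical access to an oracle $\mathcal{O}^{\pm}\in\{E_k^{\pm},P^{\pm}\}$ and quantum access to $\ket{E^{\pm}}$; it plays the role of the challenger in $\textsf{Exp}$, answering every classical $E_k$-query made by $\textsf{Con}$ and $\textsf{DeCon}$ using $\mathcal{O}^{\pm}$ (at most $q_C'$ times), and forwarding $\A$'s quantum queries directly to $\ket{E^{\pm}}$. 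When $\mathcal{O}=E_k$ the simulation is perfectly identical to the real execution, while when $\mathcal{O}=P$ we obtain a hybrid experiment in which $\textsf{Con}$ is instantiated with a permutation $P$ sampled independently of $E$. The distinguishing loss incurred in this step is therefore at most $\textsf{Adv}_E^{\qsprp\textsf{-IC}}(q_C',q_Q)$, which by \expref{Corollary}{thm:pq-security-ideal-cipher} is bounded by $q_Q^2/2^m$.

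It then remains to compare the hybrid experiment with the ideal experiment in which $\textsf{Con}[E_k]$ is replaced by the appropriate idealized primitive. Here I would argue that because $P$ is sampled independently of $E$, the quantum oracle $\ket{E^{\pm}}$ is statistically independent of both $\textsf{Con}[P]$ and the ideal primitive. Consequently, any adversary's advantage in this step can be upper bounded by that of a purely classical adversary interacting only with the construction oracle, since the quantum computation on $\ket{E^{\pm}}$ can be deferred and simulated internally using a freshly sampled copy of $E$, and is therefore at most $\delta(q_C)$. Combining the two contributions via the triangle inequality, following exactly the pattern of \expref{Equation}{eq:general-eq-1} through \expref{Equation}{eq:general-eq-3}, yields the claimed bound.

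The main subtlety, and the only genuine departure from the plain-model argument, is the last step: one has to justify rigorously that the additional quantum oracle $\ket{E^{\pm}}$ does not erode the classical information-theoretic guarantee $\delta(q_C)$. This follows from the independence of $P$ (and of the ideal primitive) from $E$, which lets us decouple the $\ket{E^{\pm}}$-queries from the classical transcript of $\textsf{Exp}$; but it needs to be stated carefully because $\A$ is quantum, not merely classical, and one must verify that the simulation preserving this decoupling does not require any knowledge of the secret key inside the challenger's classical interaction.
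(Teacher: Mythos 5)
Your proposal is correct and follows essentially the same route as the paper: a reduction to the $\qsprp\textsf{-IC}$ security of the ideal cipher (bounded via \expref{Corollary}{thm:pq-security-ideal-cipher}) for the switch from $E_k$ to an independent $P$, followed by the observation that $\ket{\pm E}$ is statistically independent of $\textsf{Con}[P]$ and of the ideal primitive, so the information-theoretic term $\delta(q_C)$ is unaffected by the extra quantum oracle. The subtlety you flag at the end is precisely the point the paper handles by noting that the $P$-based and ideal experiments factor as products with the $E$-oracle distribution, so the two arguments match.
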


\begin{proof}
    Suppose $\A$ is a quantum adversary attacking $\textsf{Con}[E_{\_}]$ in a specified experiment $\textsf{Exp}$ in the quantum ideal cipher model. The corresponding security is:
    \begin{align*}
        &\textsf{Adv}_{\textsf{Con}[E_{\_}]}(q_C,q_Q) \\
        &= \max_{\A}\left|\Pr_{\substack{k \leftarrow\{0,1\}^m\\E \leftarrow \mathcal{E}(m,n)}}\left[\textsf{Exp}\left(\A^{\textsf{Con}[E_k],\textsf{DeCon}\left[E_k^{-1}\right],E}\right)=1\right] - \Pr_{\substack{P \leftarrow \mathcal{P}_n\\E \leftarrow \mathcal{E}(m,n)}}\left[\textsf{Exp}\left(\A^{P,P^{-1},E}\right)=1\right]\right|.
    \end{align*}

    We note that the presence of an additional oracle $E(\cdot,\cdot)$ does not affect the validity of \expref{Equation}{eq:general-eq-1}. Additionally, \expref{Equation}{eq:general-eq-2} also remains valid in the presence of the extra oracle $E$, as the oracles $\textsf{Con}[P]$, $\textsf{DeCon}\left[P^{-1}\right]$, $P$, $P^{-1}$ operate independently from $E$:
    \begin{align*}
         \Pr_{\substack{k \leftarrow\{0,1\}^m\\E \leftarrow \mathcal{E}(m,n)}}\left[\textsf{Exp}\left(\A^{\textsf{Con}[P],\textsf{DeCon}\left[P^{-1}\right],E}\right)=1\right] &= \Pr_{k \leftarrow\{0,1\}^m}\left[\textsf{Exp}\left(\A^{\textsf{Con}[P],\textsf{DeCon}\left[P^{-1}\right]}\right)=1\right]\\
         \Pr_{\substack{P \leftarrow \mathcal{P}_n\\E \leftarrow \mathcal{E}(m,n)}}\left[\textsf{Exp}\left(\A^{P,P^{-1},E}\right)=1\right] &= \Pr_{P \leftarrow \mathcal{P}_n}\left[\textsf{Exp}\left(\A^{P,P^{-1}}\right)=1\right].
    \end{align*}

    Therefore \expref{Equation}{eq:general-eq-3} holds in the quantum ideal cipher model. Then combining with the security of the ideal cipher in the quantum ideal cipher model as stated in \expref{Theorem}{thm:pq-security-ideal-cipher}:
    \begin{align*}
        \textsf{Adv}^{\textsf{Exp-PQ-IC}}_{\textsf{Con}[E_{\_}]}(q_C,q_Q) &= \textsf{Adv}^{\qsprp\textsf{-IC}}_{E}(q'_C,q_Q) + \delta(q_C)\\
        &\leq \frac{q_Q^2}{2^m} + \delta(q_C). 
    \end{align*}
    \qed
\end{proof}

The theorem also applies to constructions $\textsf{Con}$ whose classical security can be reduced to the block cipher's \prp security. This yields the following bound:
\begin{align*}
\textsf{Adv}^{\textsf{Exp-PQ}}_{\textsf{Con}[E_{\_}]}(q,t) \leq \textsf{Adv}_E^{\qprp}(q',t') + \delta(q).
\end{align*}
Similarly, in QICM, we can derive:
\begin{align*}
\textsf{Adv}^{\textsf{Exp-PQ-IC}}_{\textsf{Con}[E_{\_}]}(q_Q,q_C) \leq \frac{q_Q^2}{2^m} + \delta(q_C).
\end{align*}

\subsubsection{Applications of \expref{Theorem}{thm:general-qprp-theorem} and \expref{Theorem}{thm:general-qprp-theorem-ideal-cipher}.} \label{sec:general-thm-app} 

Let $m$ and $n$ be positive integers. Let $E: \{0,1\}^m \times \{0,1\}^{n} \rightarrow \{0,1\}^n$ be a block cipher. A summary is provided in \expref{Appendix}{app:modes-table}. 

\medskip \noindent\noindent\textbf{\cbc~\cite{ehrsam1978message}.}
\begin{itemize}
    \item Construction:
    $\textsf{Con}[E_k] = \cbc^{E}_{k}(x_1,\ldots,x_{\ell}) = E_k(E_k(\ldots E_k(E_k(x_1)\oplus x_2) \oplus \ldots)\oplus x_{\ell})$
    \item Classical security~\cite{katz2007introduction}: For all probabilistic adversaries $\A$, it holds that:
   \begin{align*}
       \textsf{Adv}^{\prf}_{\cbc}(q,t) &:= \max_{\A} \left|\Pr_{k \leftarrow\{0,1\}^m}[\A^{\cbc[E_k]}=1] - \Pr_f[\A^{f}=1]\right|\\
       &\leq \textsf{Adv}^{\prp}_{E}(q\ell, t) + \frac{q^2\ell^2}{2^n},
   \end{align*}
   where $f$ is chosen uniformly from the set of functions mapping $(\{0,1\}^n)^{\ell}$ to $\{0, 1\}^n$, $q$ is the number of queries to \cbc, $t$ is the time complexity and $\ell$ is the number of blocks of the longest query input.
    \item Post-quantum security: The post-quantum security of a construction based on another concrete block cipher, such as \textsf{AES}, in the $\textsf{PQ}$ model is closely related to its classical security since no second oracle is available to the adversary. The equivalence holds with $\textsf{Adv}^{\prf}_E$ replaced by $\textsf{Adv}^{\qprp}_E$ as stated in \expref{Theorem}{thm:general-qprp-theorem}:
    \begin{align*}
        \textsf{Adv}_{\cbc}^{\prf\textsf{-PQ}}(q,t) \leq \textsf{Adv}_E^{\qprp}(q\ell,t)+\frac{q^2\ell^2}{2^n}.
    \end{align*}
    \item Post-quantum security in QICM: By applying \expref{Theorem}{thm:general-qprp-theorem-ideal-cipher}, we get
    \begin{align*}
        \textsf{Adv}_{\cbc}^{\prf\textsf{-PQ-IC}}(q_C,q_Q) &= \max_{\A}\left| \Pr_{k \leftarrow \{0,1\}^m} \left[ \A^{\cbc[E_k],E} = 1\right]  - \Pr_{f} \left[ \A^{P,E} = 1\right] \right| \\
        & \leq \frac{q_Q^2\ell^2}{2^m}+\frac{q_C^2\ell^2}{2^n}.
    \end{align*}
\end{itemize}

Similarly as explained in~\cite{katz2007introduction}, when the tag is set to be $\cbc_k(m)$, this is exactly \cbc\textsf{-MAC}, and post-quantum security of \cbc implies that basic \cbc\textsf{-MAC} is post-quantum secure for messages of any fixed length. However, it is not inherently secure for variable-length messages. To address this limitation, \textit{encrypt-last-block} offers a method to extend \cbc\textsf{-MAC} for use with variable-length inputs. It is formally called \textsf{Encrypt-last-block CBC-MAC}, i.e., \ecbc\textsf{-MAC}.

\medskip \noindent\noindent\textbf{\ecbc~\cite{petrank2000cbc}.}
\begin{itemize}
    \item Construction:
    $\textsf{Con}[E_k](x) = \ecbc^{E}_{k_1,k_2}(x) = E_{k_2}(\cbc^E_{k_1}(x))$
    \item Classical security~\cite{petrank2000cbc,vaudenay2000decorrelation}: For all probabilistic  adversaries $\A$, it holds that:
   \begin{align*}
       \textsf{Adv}^{\prf}_{\ecbc}(q,t) &:= \max_{\A} \left|\Pr_{k \leftarrow\{0,1\}^m}[\A^{\ecbc[E_k]}=1] - \Pr_f[\A^{f}=1]\right|\\
       &\leq 2\textsf{Adv}^{\prp}_{E}(q\ell,t) + \frac{4q^2\ell^2}{2^n}.
   \end{align*}
   where $f$ is chosen uniformly from the set of functions mapping $(\{0,1\}^n)^{\ell}$ to $\{0, 1\}^n$, $q$ is the number of queries to \cbc, $t$ is the time complexity and $\ell$ is the number of blocks of the longest query input.
    \item Post-quantum security: By applying \expref{Theorem}{thm:general-qprp-theorem}, we get
    \begin{align*}
        \textsf{Adv}_{\cbc}^{\prf\textsf{-PQ}}(q,t)\leq 2 \textsf{Adv}_E^{\qprp}(q\ell,t)+\frac{4q^2\ell^2}{2^n}.
    \end{align*}
    \item Post-quantum security in QICM: By applying \expref{Theorem}{thm:general-qprp-theorem-ideal-cipher}, we get
    \begin{align*}
        \textsf{Adv}_{\cbc}^{\prf\textsf{-PQ-IC}}(q_C,q_Q) &= \max_{\A}\left| \Pr_{k \leftarrow \{0,1\}^m} \left[ \A^{\cbc[E_k],E} = 1\right]  - \Pr_{f} \left[ \A^{P,E} = 1\right] \right| \\
        & \leq \frac{2q_Q^2\ell^2}{2^m}+\frac{4q_C^2\ell^2}{2^n}.
    \end{align*}
\end{itemize}

\medskip \noindent\textbf{\cmac~\cite{iwata2003omac}.}
\begin{itemize}
    \item Construction:
    $\textsf{Con}[E_k](x) = \cmac^{E}_{k}(x) = E_{k_1}(\cbc^E_{k_1}(x_1\|\ldots \| x_{\ell-1})\oplus x_{\ell} \oplus k_2)$
    \item Classical security~\cite{iwata2003omac}: For all probabilistic adversaries $\A$, it holds that:
   \begin{align*}
       \textsf{Adv}^{\forgery}_{\cmac}(q,t) \leq \textsf{Adv}^{\prp}_{E}(q\ell+1,t+O(q\ell)) + \frac{5(\ell^2+1)q^2}{2^n}.
   \end{align*}
   where $q$ is the number of queries in the \forgery experiment, $t$ is the time complexity and $\ell$ is the number of blocks of the longest query input.
    \item Post-quantum security: By applying \expref{Theorem}{thm:general-qprp-theorem}, we get
    \begin{align*}
        \textsf{Adv}_{\cmac}^{\forgery\textsf{-PQ}}(q,t)\leq \textsf{Adv}_E^{\qprp}(q\ell+1,t+O(q\ell))+\frac{5(\ell^2+1)q^2}{2^n}.
    \end{align*}
    \item Post-quantum security in QICM: By applying \expref{Theorem}{thm:general-qprp-theorem-ideal-cipher}, we get
    \begin{align*}
        \textsf{Adv}_{\cmac}^{\forgery\textsf{-PQ-IC}}(q_C,q_Q) \leq \frac{(q_Q\ell+1)^2}{2^m}+\frac{5(\ell^2+1)q^2}{2^n}.
    \end{align*}
\end{itemize}

\medskip \noindent\textbf{\gcm~\cite{mcgrew2004security}.} 

\begin{itemize}
    \item Classical security~\cite{mcgrew2004security,iwata2012breaking,niwa2015gcm}: We begin by noting that the oracle in the \forgery experiment for \gcm also includes a decryption oracle, denoted as $\gcm \pm$.  For all probabilistic adversaries $\A$, it holds:
   \begin{align*}
       \textsf{Adv}^{\prf}_{\gcm}(q,t) &\leq \textsf{Adv}^{\prp}_{E}(q,t) + \frac{(\sigma+q+1)^2}{2^{n+1}}+ \frac{\sigma+q}{2^{n-1}}\\
       \textsf{Adv}^{\pm\forgery}_{\gcm}(q,t) &\leq \textsf{Adv}^{\prp}_{E}(q,t) + \frac{(\sigma+q+q'+1)^2}{2^{n+1}}+ \frac{\sigma+q+q'}{2^{n-1}} + \frac{q'(\ell+1)}{2^s}.
   \end{align*}
   where $q,q'$ are the number of queries made to the encryption and decryption oracles, respectively; $t$ is the time complexity and $s$ is the authentication tag size. Additionally $\ell$ is the maximum number of blocks in the input-data for any individual query, and $\sigma$ is the total number of blocks of input-data across all queries.
   
   From \cite{rogaway2002authenticated}, an AEAD-scheme $\Pi$ is said to be “secure” if $\textsf{Adv}^{\prf}_{\Pi}$ and $\textsf{Adv}^{\forgery}_{\Pi}$ are “small” for any “reasonable” adversary $\A$. \gcm is a widely used secure AEAD-scheme.
    \item Post-quantum security: By applying \expref{Theorem}{thm:general-qprp-theorem}, we get
    \begin{align*}
        \textsf{Adv}_{\gcm}^{\prf\textsf{-PQ}}(q,t) &\leq \textsf{Adv}_E^{\qprp}(q,t) + \frac{(\sigma+q+1)^2}{2^{n+1}}+ \frac{\sigma+q}{2^{n-1}}\\
        \textsf{Adv}^{\pm\forgery\textsf{-PQ}}_{\gcm}(q,t) &\leq \textsf{Adv}^{\qprp}_{E}(q,t)+ \frac{(\sigma+q+q'+1)^2}{2^{n+1}}+ \frac{\sigma+q+q'}{2^{n-1}} + \frac{q'(\ell+1)}{2^s}.
    \end{align*}
    \item Post-quantum security in QICM: By applying \expref{Theorem}{thm:general-qprp-theorem-ideal-cipher}, we get
    \begin{align*}
        \textsf{Adv}_{\gcm}^{\prf\textsf{-PQ-IC}}(q_C,q_Q)  &\leq \frac{q_Q^2\ell^2}{2^m}  + \frac{(\sigma+q_C+1)^2}{2^{n+1}}+ \frac{\sigma+q_C}{2^{n-1}}\\
        \textsf{Adv}^{\pm\forgery\textsf{-PQ-IC}}_{\gcm}(q_C,q_Q)  &\leq \frac{q_Q^2\ell^2}{2^m}  + \frac{(\sigma+q_C+q_C'+1)^2}{2^{n+1}}+ \frac{\sigma+q_C+q_C'}{2^{n-1}} + \frac{q_C'(\ell+1)}{2^s}.
    \end{align*}
\end{itemize}

\medskip \noindent\textbf{\textsf{GCM-SST}}~\cite{campagna2023galois,inoue2024generic}. When analyzing security of \textsf{GCM-SST}, the experiment considers Nonce-Misuse Resilience for both privacy and authenticity, denoted as \textsf{nml-PRF} and \textsf{nml-Forgery}, respectively. The \textsf{nml} security notions give stronger notions of security, where \gcm fails to meet the requirements, but \textsf{GCM-SST} satisfies them. 
\begin{itemize}
    \item Classical security~\cite{inoue2024generic}: For all probabilistic adversaries $\A$, it holds that:
   \begin{align*}
       \textsf{Adv}^{\textsf{nml-PRF}}_{\gcm}(q,t) &\leq \textsf{Adv}^{\prp}_{E}(q,t) + \frac{(\sigma+3q)^2}{2^{n+1}}\\
       \textsf{Adv}^{\textsf{nml-Forgery}}_{\gcm}(q,t) &\leq \textsf{Adv}^{\prp}_{E}(q,t) + \frac{(\sigma+3(q+q'))^2}{2^{n+1}}+ \frac{q'\ell}{2^{n}} + \frac{q'}{2^s}.
   \end{align*}
   where $q,q'$ are the number of queries made to the encryption and decryption oracles, respectively; $t$ is the time complexity and $s$ is the authentication tag size. Additionally $\ell$ is the maximum number of blocks in the input-data for any individual query, and $\sigma$ is the total number of blocks of input-data across all queries.
      
    \item Post-quantum security: By applying \expref{Theorem}{thm:general-qprp-theorem}, we get
    \begin{align*}
        \textsf{Adv}_{\gcm}^{\textsf{nml-PRF-PQ}}(q,t) &\leq \textsf{Adv}_E^{\qprp}(q,t)  + \frac{(\sigma+3q)^2}{2^{n+1}}\\
        \textsf{Adv}^{\textsf{nml-Forgery-PQ}}_{\gcm}(q,t) &\leq \textsf{Adv}^{\qprp}_{E}(q,t) + \frac{(\sigma+3(q+q'))^2}{2^{n+1}}+ \frac{q'\ell}{2^{n}} + \frac{q'}{2^s}.
    \end{align*}
    \item Post-quantum security in QICM: By applying \expref{Theorem}{thm:general-qprp-theorem-ideal-cipher}, we get
    \begin{align*}
        \textsf{Adv}_{\gcm}^{\textsf{nml-PRF-PQ-IC}}(q_C,q_Q)  &\leq \frac{q_Q^2\ell^2}{2^m} + \frac{(\sigma+3q_C)^2}{2^{n+1}}\\
        \textsf{Adv}^{\textsf{nml-Forgery-PQ-IC}}_{\gcm}(q_C,q_Q)  &\leq \frac{q_Q^2\ell^2}{2^m} + + \frac{(\sigma+3(q_C+q_C'))^2}{2^{n+1}}+ \frac{q_C'\ell}{2^{n}} + \frac{q_C'}{2^s}.
    \end{align*}
\end{itemize}

\subsection{Security of \lrw and \xext using \expref{Theorem}{thm:general-qprp-theorem}}

The result from \expref{Theorem}{thm:general-qprp-theorem} extends to tweakable block cipher constructions. As a direct result, the \qsprp security of \lrw also follows from its classical security bound (\expref{Theorem}{thm:lrw-security}).
\begin{corollary} \label{thm:Q1-secure-LRW-2-AES}
    Let \lrw be as defined in \expref{Definition}{def:lrw}, constructed from a block cipher $E$ and an $\varepsilon$-XOR universal hash function family $h$. Then the post-quantum security of \lrw is given by:
    \begin{align*}
        \textsf{Adv}_{\lrw}^{\qsprp}(q,t) & \leq \textsf{Adv}^{\qsprp}_E(q,t) + q^2\varepsilon.
    \end{align*}
\end{corollary}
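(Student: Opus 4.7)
The plan is to apply \expref{Theorem}{thm:general-qprp-theorem} directly, taking $\textsf{Con} = \lrw$ and the experiment $\textsf{Exp}$ to be the $\sprp$ distinguishing game. First I would verify that the classical security proof of \lrw fits the reduction template required by the general theorem: namely, the $\sprp$ advantage of $\lrw[E_{\_}]$ decomposes as the $\sprp$ advantage of the underlying $E$ plus an information-theoretic term $\delta(q)$ that holds when $E$ is replaced by a truly random permutation $P$. This is exactly the content of \expref{Theorem}{thm:lrw-security}, with $\delta(q) = q^2 \varepsilon$ arising from the $\varepsilon$-XOR-universality of $h$ (the classical bound remains valid when $E$ is an ideal block cipher, whose $\sprp$ advantage is zero).

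Next I would account for the query counts. Each evaluation of $\lrw_{k,k'}(\tau, x) = E_k(x \oplus h_{k'}(\tau)) \oplus h_{k'}(\tau)$ makes exactly one forward call to $E_k$, and the corresponding inverse makes one call to $E_k^{-1}$. Hence the number of block-cipher queries made by the construction per \lrw-query is one, so $q' = q$ in the notation of \expref{Theorem}{thm:general-qprp-theorem}. Consequently, the reduction algorithm $\B$ appearing in the proof of that theorem translates each of the adversary's $q$ classical \lrw-queries into exactly one classical $E_k$-query (plus a local computation of $h_{k'}$ using a locally sampled hash key $k' \leftarrow \{0,1\}^{\kappa}$), preserving the query bound.

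Invoking \expref{Theorem}{thm:general-qprp-theorem} then immediately yields $\textsf{Adv}_{\lrw}^{\qsprp}(q,t) \leq \textsf{Adv}_E^{\qsprp}(q',t) + \delta(q) = \textsf{Adv}_E^{\qsprp}(q,t) + q^2 \varepsilon$, as claimed. The only subtlety worth double-checking is that the reduction $\B$ of \expref{Theorem}{thm:general-qprp-theorem} is indeed a purely classical oracle machine when simulating the \lrw game for a quantum adversary $\A$: this holds because $\A$'s access to $\lrw_{k,k'}$ and its inverse is classical by the definition of \qsprp-security, and $\B$ can answer every such query via a single classical query to its own $E_k$ or $E_k^{-1}$ oracle. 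Beyond this bookkeeping, no substantive obstacle arises --- the corollary is a direct instantiation of the general lifting theorem, since all the work has already been done in proving \expref{Theorem}{thm:general-qprp-theorem} and \expref{Theorem}{thm:lrw-security}.
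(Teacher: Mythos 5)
Your proposal is correct and follows exactly the route the paper intends: the corollary is stated as a direct instantiation of \expref{Theorem}{thm:general-qprp-theorem} with the classical decomposition from \expref{Theorem}{thm:lrw-security}, giving $\delta(q)=q^2\varepsilon$ and $q'=q$. Your additional bookkeeping (one $E_k$-call per \lrw-query, local sampling of the hash key $k'$, and the observation that the reduction $\B$ only needs classical oracle access) is exactly the verification the paper leaves implicit.
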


We analyze the \qsprp security of \lrw. The \lrw construction is built upon such $E$, and consequently, the adversary $\A$ is granted quantum oracle access to $E(\cdot,\cdot)$.
\begin{corollary} \label{cor:Q1-secure-LRW-2-ideal}
    Let \lrw be as defined in \expref{Definition}{def:lrw}. Assuming $h$ is XOR-universal, the post-quantum security of \lrw in QICM is given by:
    \begin{align*}
        \textsf{Adv}_{\lrw}^{\qsprp\textsf{-IC}}(q_C,q_Q) &= \left| \Pr_{\substack{k \leftarrow \{0,1\}^m; k' \leftarrow\{0,1\}^{\kappa} \\  E \leftarrow  \mathcal{E}(m,n)}} \left[ \A^{\pm \lrw_{k,k'}[E],\ket{\pm E}} = 1\right]  - \Pr_{\substack{\widetilde{\Pi} \leftarrow  \mathcal{E}(T,n); \\ E \leftarrow  \mathcal{E}(m,n)}} \left[ \A^{\pm\widetilde{\Pi},\ket{\pm E}} = 1\right] \right| \\
        & \leq \frac{q_Q^2}{2^m} + \frac{q_C^2}{2^n}.
    \end{align*}
\end{corollary}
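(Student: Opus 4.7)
The plan is to obtain the bound by a direct application of \expref{Theorem}{thm:general-qprp-theorem-ideal-cipher} to the \lrw construction, identifying the two parameters on the right-hand side separately.

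First, I would identify the classical information-theoretic term $\delta(q_C)$. The \lrw construction, when the underlying block cipher $E_k$ is replaced by a uniformly random permutation $P \leftarrow \mathcal{P}_n$, coincides with the setting analyzed in \expref{Theorem}{thm:lrw-security}: its advantage against an unbounded classical (or quantum with purely classical oracle access) distinguisher making $q_C$ queries is bounded by $q_C^2\, \varepsilon$. Since the hash family $h$ is assumed XOR-universal, we have $\varepsilon = 2^{-n}$, yielding $\delta(q_C) = q_C^2 / 2^n$. This bound depends only on the statistics of $h_{k'}$ and the truly random permutation $P$, and neither depends on nor is affected by the presence of a separate ideal cipher oracle $\ket{E}$ to which the adversary may have quantum access.

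Second, I would apply \expref{Theorem}{thm:general-qprp-theorem-ideal-cipher} with $\textsf{Con} = \lrw$ and $\textsf{Exp}$ the \sprp experiment. The theorem reduces the post-quantum security of $\lrw[E_{\_}]$ in the QICM to the \qsprp security of the ideal cipher $E$ (via the simulator $\mathcal{B}$ built inside the proof of that theorem) plus the info-theoretic term computed in the previous step. By \expref{Corollary}{thm:pq-security-ideal-cipher}, the \qsprp-IC advantage against the ideal cipher $E$ is at most $q_Q^2 / 2^m$, where $q_Q$ counts only the quantum queries to $\ket{E}$ (the number of classical queries to $E_k$ through the construction does not enter this term). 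Combining the two contributions gives
\begin{equation*}
\textsf{Adv}_{\lrw}^{\qsprp\textsf{-IC}}(q_C, q_Q) \;\leq\; \textsf{Adv}_E^{\qsprp\textsf{-IC}}(q_C', q_Q) + \delta(q_C) \;\leq\; \frac{q_Q^2}{2^m} + \frac{q_C^2}{2^n},
\end{equation*}
as claimed.

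The main (minor) obstacle is a bookkeeping check rather than a technical difficulty: one must verify that the classical info-theoretic bound of \expref{Theorem}{thm:lrw-security} applies as stated even when the distinguisher is quantum but has only classical access to the \lrw oracle. Since the classical analysis is a statistical argument about transcripts produced by $P$ and $h_{k'}$, it is insensitive to the local computational power of the adversary and to any independent side oracle such as $\ket{E}$; hence it transfers to our setting without loss, completing the proof.
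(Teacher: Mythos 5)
Your proposal is correct and follows exactly the route the paper intends: instantiate \expref{Theorem}{thm:general-qprp-theorem-ideal-cipher} with $\textsf{Con}=\lrw$, take $\delta(q_C)=q_C^2\varepsilon=q_C^2/2^n$ from the information-theoretic part of \expref{Theorem}{thm:lrw-security} (with $\varepsilon=2^{-n}$ for an XOR-universal $h$), and bound the ideal-cipher term by $q_Q^2/2^m$ via \expref{Corollary}{thm:pq-security-ideal-cipher}. Your closing bookkeeping check (that the classical statistical bound survives local quantum computation and the independent $\ket{E}$ oracle) is precisely the content of the general theorem's proof, so nothing is missing.
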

Similarly to the classical case, the \qsprp\ security of \xext\ reduces to that of \lrw.
\begin{corollary} \label{thm:Q1-secure-XEX-2-AES}
    Let \xext be as defined in \expref{Definition}{def:xex2}. Then the post-quantum security of \xext in QICM is given by:
    \begin{align*}
        \textsf{Adv}_{\xext}^{\qsprp-\textsf{IC}}(q_C,q_Q) & \leq \frac{2q_Q^2}{2^m} + \frac{q_C^2}{2^n-1}.
    \end{align*}

\end{corollary}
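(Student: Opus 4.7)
The plan is to mirror the classical proof of \expref{Theorem}{thm:xext-security} in the post-quantum/QICM setting, combining a version of \expref{Proposition}{prop:xext2idhashxext} with the post-quantum LRW bound. First, I would lift \expref{Proposition}{prop:xext2idhashxext} to the QICM, yielding
\begin{equation*}
    \textsf{Adv}^{\qsprp\textsf{-IC}}_{\xext}(q_C,q_Q) \;\leq\; \textsf{Adv}^{\qsprp\textsf{-IC}}_{E}(q_C,q_Q) \;+\; \textsf{Adv}^{\qsprp\textsf{-IC}}_{\idhashxext}(q_C,q_Q).
\end{equation*}
This reduction is the standard game-hop replacing the classical calls to $E_{k'}$ inside $\xext$ by an independently sampled ideal permutation $\pi$; the adversary's quantum access to $\ket{\pm E}$ is untouched, so the distinguishing loss is precisely the post-quantum SPRP-IC advantage of $E$ under classical construction queries (here, the $q_C$-many calls made by the challenger in simulating $\Delta_{i,j,k'}$).

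Second, I would bound the $E$ term using \expref{Corollary}{thm:pq-security-ideal-cipher}, obtaining $\textsf{Adv}^{\qsprp\textsf{-IC}}_{E}(q_C,q_Q) \leq q_Q^2/2^m$. Third, I would bound the \idhashxext term by invoking the post-quantum LRW bound. Concretely, since \idhashxext is exactly the \lrw construction with hash family $h^{\idhashxext}_\pi(i,j) = \alpha^j \times_* \pi(i)$, and \expref{Lemma}{lem:idhash-is-eps-XOR} establishes that this family is $\tfrac{1}{2^n-1}$-XOR universal, the same argument underlying \expref{Corollary}{cor:Q1-secure-LRW-2-ideal} (which applies \expref{Theorem}{thm:general-qprp-theorem-ideal-cipher} together with the classical information-theoretic bound $q^2\varepsilon$ from \expref{Theorem}{thm:lrw-security}) yields
\begin{equation*}
    \textsf{Adv}^{\qsprp\textsf{-IC}}_{\idhashxext}(q_C,q_Q) \;\leq\; \frac{q_Q^2}{2^m} \;+\; \frac{q_C^2}{2^n-1}.
\end{equation*}
Adding the two contributions gives the stated bound $2q_Q^2/2^m + q_C^2/(2^n-1)$.

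The only subtle point, and the step I would scrutinize most carefully, is verifying that \expref{Proposition}{prop:xext2idhashxext} indeed carries over to the QICM as claimed in its remark. Here one must ensure that replacing $E_{k'}$ by $\pi$ in the offline mask computation is legitimate even when the adversary has full quantum access to $\ket{\pm E}$: the reduction uses $k'$ only inside the classical mask oracle and otherwise forwards the quantum oracle unchanged, so the distinguishing advantage between the two worlds is exactly that of a quantum adversary distinguishing $E_{k'}$ from a random permutation while also holding the quantum $\ket{\pm E}$ oracle---precisely $\textsf{Adv}^{\qsprp\textsf{-IC}}_{E}$. Once this point is settled, the rest of the proof is a straightforward substitution of the $\varepsilon = 1/(2^n-1)$ parameter into the already-established LRW machinery, so no new technical ingredients are required.
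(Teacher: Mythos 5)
Your proposal is correct and follows essentially the same route as the paper: the paper's proof is exactly the combination of \cref{prop:xext2idhashxext} (whose remark already asserts validity in the post-quantum and quantum-access models), \cref{lem:idhash-is-eps-XOR}, and the general-theorem-based LRW bound, with the ideal-cipher term $q_Q^2/2^m$ substituted for each of the two $E$-advantages. The subtlety you flag about lifting \cref{prop:xext2idhashxext} to the QICM is resolved in the paper by the same observation you make, namely that $k'$ is used only in the classical mask computation while the quantum oracle $\ket{\pm E}$ is forwarded unchanged.
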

\begin{proof}
Immediate consequence of \cref{prop:xext2idhashxext,lem:idhash-is-eps-XOR,thm:Q1-secure-LRW-2-AES}.
\qed
\end{proof}

\subsection{Comparison of \lrw Security Bound} 

Recall that the bound of \lrw proved in \expref{Theorem}{thm:Q1-secure-LRW-1} is:
$$
 \frac{6 q_C^2}{2^n} + \frac{4 \cdot \left( q_C \sqrt{q_Q}+ q_Q \sqrt{q_C}\right)}{2^{(m+n)/2}}.
$$ 
The comparison between the bounds from \expref{Corollary}{cor:Q1-secure-LRW-2-ideal} and \expref{Theorem}{thm:Q1-secure-LRW-1} is summarized earlier in \expref{Table}{tab:lrw-comp}. A detailed discussion follows below, omitting multiplicative constants. Note that in applying the general theorem, we do not need to assume that $h$ is uniform.
\begin{itemize}
   \item When $m\gg n$: The two bounds converge to $\frac{q_C^2}{2^n}$, and it aligns with the classical optimal bound (and classical matching attack) in \expref{Theorem}{thm:lrw-security}.
   \item When $q_C \gg q_Q$: The bounds also align at $\frac{q_C^2}{2^n}$, and it aligns with the classical optimal bound (and classical matching attack) in \expref{Theorem}{thm:lrw-security}.
   \item When $q_C \ll q_Q$: This scenario is more practical since online queries ($q_C$) are typically more scarce resources. Under these conditions, the bound in \expref{Theorem}{thm:Q1-secure-LRW-1} simplifies to approximately $\frac{q_Q \sqrt{q_C}}{2^{(m+n)/2}}$, whereas the bound in \expref{Corollary}{cor:Q1-secure-LRW-2-ideal} becomes $\frac{q_Q^2}{2^m}$. The bound from \expref{Theorem}{thm:Q1-secure-LRW-1} is more favorable in this case. 
   
   Notably, this bound aligns with two quantum complete key recovery attacks: the offline-Simon attack with the trade-off $q_Q^2 q_C = 2^{(m+n)/2}$, and the Grover+Kuwakado--Morii attack with $q_Q = 2^{m/2 + n/3}$ and $q_C = 2^{n/3}$.
   \item When $q_C \approx q_Q \approx q$: The bound in \expref{Theorem}{thm:Q1-secure-LRW-1} simplifies to approximately $\frac{q^2}{2^n} + \frac{q^{3/2}}{2^{(m+n)/2}}$, where the second term corresponds to the offline-Simon attack described in \expref{Section}{sec:lrw-Q1-attack}, assuming $m \leq 2n$. Furthermore, when $m \approx n$, both bounds reduce to $\frac{q^2}{2^n}$, aligning with the classical collision attack.
\end{itemize}

A summary is provided in \expref{Table}{tab:lrw-comp}. From the case analysis, the bound in \expref{Theorem}{thm:Q1-secure-LRW-1} is typically tighter, with matching attacks. Nonetheless, the analysis in \expref{Theorem}{thm:Q1-secure-LRW-2-AES} and \expref{Corollary}{cor:Q1-secure-LRW-2-ideal} provides a simpler and more general method for deriving a post-quantum security bound.

\begin{table}
    \centering
    \bgroup
\def\arraystretch{1.8}
\begin{tabular}{ |p{4cm}||p{3cm}||p{3cm}||p{3cm}|}
 \hline
 Scenarios & Classical & \textsf{Q1} from GT & \textsf{Q1} from HT\\
 \hline
 $m \gg n$  & $\frac{q_{C}^2}{2^n}$ &$\frac{q_C^2}{2^n}$ & $\frac{q_C^2}{2^n}$\\
 $q_C \gg q_Q$ & $\frac{q_C^2}{2^n}$ &  $\frac{q_C^2}{2^n}$ & $\frac{q_C^2}{2^n}$\\
 $\underline{q_C \ll q_Q}$ & $\frac{q_Q}{2^{m}}$ & $\frac{q_Q^2}{2^m}$ & $\underline{\frac{q_Q \sqrt{q_C}}{2^{(m+n)/2}}}$\\
 $\underline{q_C \approx q_Q = q}$ & $\frac{q^2}{2^n}+\frac{q}{2^{m}} $ & $ \frac{q^2}{2^n} + \frac{q^2}{2^m}$  & $\underline{\frac{q^2}{2^n} + \frac{q^{3/2}}{2^{(m+n)/2}}}$ \\
 \hline
\end{tabular}
\egroup
\vspace{2mm}
    \caption{Comparison of the classical bound (\expref{Theorem}{thm:lrw-security}), the post-quantum security bounds of \lrw using the General Theorem (GT, \expref{Corollary}{cor:Q1-secure-LRW-2-ideal}) and the Hybrid Technique (HT, \expref{Theorem}{thm:Q1-secure-LRW-1}). In the classical case, $q_Q$ denotes the number of classical queries made to $E$, denoted for straightforward comparison.}\label{tab:lrw-comp}
\end{table}


\section*{Acknowledgments}
Gorjan Alagic and Kaiyan Shi were supported in part by NSF award CNS-21547.
Chen Bai acknowledges support from the Presidential Postdoctoral Fellowship at Virginia Tech. 
Christian Majenz
acknowledges support by the Independent Research Fund Denmark via a DFF Sapere Aude grant (IM-3PQC,
grant ID 10.46540/2064-00034B).

\bibliographystyle{alpha}

\bibliography{abbrev3,crypto,ArXiv/reference}

\newcommand{\etalchar}[1]{$^{#1}$}
\begin{thebibliography}{BHNP{\etalchar{+}}19}

\bibitem[ABK{\etalchar{+}}24]{EC:ABKMS24}
Gorjan Alagic, Chen Bai, Jonathan Katz, Christian Majenz, and Patrick Struck.
\newblock Post-quantum security of tweakable {Even}-{Mansour}, and applications.
\newblock In Marc Joye and Gregor Leander, editors, {\em EUROCRYPT~2024, Part~I}, volume 14651 of {\em {LNCS}}, pages 310--338. Springer, Cham, May 2024.

\bibitem[ABKM22]{EC:ABKM22}
Gorjan Alagic, Chen Bai, Jonathan Katz, and Christian Majenz.
\newblock Post-quantum security of the {Even}-{Mansour} cipher.
\newblock In Orr Dunkelman and Stefan Dziembowski, editors, {\em EUROCRYPT~2022, Part~III}, volume 13277 of {\em {LNCS}}, pages 458--487. Springer, Cham, May~/~June 2022.

\bibitem[ADK{\etalchar{+}}14]{albrecht2014block}
Martin~R Albrecht, Benedikt Driessen, Elif~Bilge Kavun, Gregor Leander, Christof Paar, and Tolga Yal{\c{c}}{\i}n.
\newblock Block ciphers--focus on the linear layer (feat. pride).
\newblock In {\em Annual Cryptology Conference}, pages 57--76. Springer, 2014.

\bibitem[BBBV97]{bennett1997strengths}
Charles~H Bennett, Ethan Bernstein, Gilles Brassard, and Umesh Vazirani.
\newblock Strengths and weaknesses of quantum computing.
\newblock {\em SIAM journal on Computing}, 26(5):1510--1523, 1997.

\bibitem[BBC{\etalchar{+}}25]{basak2025post}
Jyotirmoy Basak, Ritam Bhaumik, Amit~Kumar Chauhan, Ravindra Jejurikar, Ashwin Jha, Anandarup Roy, Andr{\'e} Schrottenloher, and Suprita Talnikar.
\newblock Post-quantum security of key-alternating feistel ciphers.
\newblock {\em Cryptology ePrint Archive}, 2025.

\bibitem[BCG{\etalchar{+}}12]{borghoff2012prince}
Julia Borghoff, Anne Canteaut, Tim G{\"u}neysu, Elif~Bilge Kavun, Miroslav Knezevic, Lars~R Knudsen, Gregor Leander, Ventzislav Nikov, Christof Paar, Christian Rechberger, et~al.
\newblock Prince--a low-latency block cipher for pervasive computing applications.
\newblock In {\em International conference on the theory and application of cryptology and information security}, pages 208--225. Springer, 2012.

\bibitem[BEM24]{bai2024quantum}
Chen Bai, Mehdi Esmaili, and Atul Mantri.
\newblock Quantum security analysis of the key-alternating ciphers.
\newblock {\em arXiv preprint arXiv:2412.05026}, 2024.

\bibitem[BHN{\etalchar{+}}19]{AC:BHNSS19}
Xavier Bonnetain, Akinori Hosoyamada, Mar{\'i}a {Naya-Plasencia}, Yu~Sasaki, and Andr{\'e} Schrottenloher.
\newblock Quantum attacks without superposition queries: The offline {Simon}'s algorithm.
\newblock In Steven~D. Galbraith and Shiho Moriai, editors, {\em ASIACRYPT~2019, Part~I}, volume 11921 of {\em {LNCS}}, pages 552--583. Springer, Cham, December 2019.

\bibitem[BHNP{\etalchar{+}}19]{bonnetain2019quantum}
Xavier Bonnetain, Akinori Hosoyamada, Mar{\'\i}a Naya-Plasencia, Yu~Sasaki, and Andr{\'e} Schrottenloher.
\newblock Quantum attacks without superposition queries: the offline simon’s algorithm.
\newblock In {\em International Conference on the Theory and Application of Cryptology and Information Security}, pages 552--583. Springer, 2019.

\bibitem[BHT97]{brassard1997quantum}
Gilles Brassard, Peter Hoyer, and Alain Tapp.
\newblock Quantum algorithm for the collision problem.
\newblock {\em arXiv preprint quant-ph/9705002}, 1997.

\bibitem[BSS22]{EC:BonSchSib22}
Xavier Bonnetain, Andr{\'e} Schrottenloher, and Ferdinand Sibleyras.
\newblock Beyond quadratic speedups in quantum attacks on symmetric schemes.
\newblock In Orr Dunkelman and Stefan Dziembowski, editors, {\em EUROCRYPT~2022, Part~III}, volume 13277 of {\em {LNCS}}, pages 315--344. Springer, Cham, May~/~June 2022.

\bibitem[CHL{\etalchar{+}}25]{cojocaru2025quantum}
Alexandru Cojocaru, Minki Hhan, Qipeng Liu, Takashi Yamakawa, and Aaram Yun.
\newblock Quantum lifting for invertible permutations and ideal ciphers.
\newblock In {\em Annual International Cryptology Conference}, pages 481--512. Springer, 2025.

\bibitem[CMM23]{campagna2023galois}
Matthew Campagna, Alexander Maximov, and John~Preu{\ss} Mattsson.
\newblock Galois counter mode with secure short tags (gcm-sst).
\newblock 2023.

\bibitem[CNPS17]{chailloux2017efficient}
Andr{\'e} Chailloux, Mar{\'\i}a Naya-Plasencia, and Andr{\'e} Schrottenloher.
\newblock An efficient quantum collision search algorithm and implications on symmetric cryptography.
\newblock In {\em International Conference on the Theory and Application of Cryptology and Information Security}, pages 211--240. Springer, 2017.

\bibitem[CSR{\etalchar{+}}08]{clunie2008public}
David Clunie, Rich Shroeppel, Phillip Rogaway, Vijay Bharadwaj, and Neils Ferguson.
\newblock Public comments on the xts-aes mode.
\newblock {\em Collected email comments released by NIST, available from their web page}, 2008.

\bibitem[DBN{\etalchar{+}}01]{dworkin2001advanced}
Morris~J Dworkin, Elaine Barker, James~R Nechvatal, James Foti, Lawrence~E Bassham, E~Roback, James~F Dray~Jr, et~al.
\newblock Advanced encryption standard (aes).
\newblock 2001.

\bibitem[Din15]{dinur2015cryptanalytic}
Itai Dinur.
\newblock Cryptanalytic time-memory-data tradeoffs for fx-constructions with applications to prince and pride.
\newblock In {\em Annual International Conference on the Theory and Applications of Cryptographic Techniques}, pages 231--253. Springer, 2015.

\bibitem[Dwo01]{dworkin2001recommendation}
Morris Dworkin.
\newblock Recommendation for block cipher modes of operation.
\newblock {\em NIST special publication}, 800:38B, 2001.

\bibitem[Dwo07]{dworkin2007sp}
Morris~J Dworkin.
\newblock {\em Sp 800-38d. recommendation for block cipher modes of operation: Galois/counter mode (gcm) and gmac}.
\newblock National Institute of Standards \& Technology, 2007.

\bibitem[Dwo10a]{dworkin2010recommendation}
Morris~J Dworkin.
\newblock Recommendation for block cipher modes of operation: The xts-aes mode for confidentiality on storage devices.
\newblock 2010.

\bibitem[{Dwo}10b]{NIST.SP.800-38E}
{Dworkin, Morris}.
\newblock {Recommendation for Block Cipher Modes of Operation: The XTS-AES Mode for Confidentiality on Storage Devices}.
\newblock Technical Report SP 800-38E, {National Institute of Standards and Technology (NIST)}, Gaithersburg, MD, February 2010.

\bibitem[EMST78]{ehrsam1978message}
William~F Ehrsam, Carl~HW Meyer, John~L Smith, and Walter~L Tuchman.
\newblock Message verification and transmission error detection by block chaining, February~14 1978.
\newblock US Patent 4,074,066.

\bibitem[GHHM21]{grilo2021tight}
Alex~B Grilo, Kathrin H{\"o}velmanns, Andreas H{\"u}lsing, and Christian Majenz.
\newblock Tight adaptive reprogramming in the qrom.
\newblock In {\em International Conference on the Theory and Application of Cryptology and Information Security}, pages 637--667. Springer, 2021.

\bibitem[GLL17]{gueron2017aes}
Shay Gueron, Adam Langley, and Yehuda Lindell.
\newblock Aes-gcm-siv: specification and analysis.
\newblock {\em Cryptology ePrint Archive}, 2017.

\bibitem[GT12]{EC:GazTes12}
Peter Ga{\v{z}}i and Stefano Tessaro.
\newblock Efficient and optimally secure key-length extension for block ciphers via randomized cascading.
\newblock In David Pointcheval and Thomas Johansson, editors, {\em EUROCRYPT~2012}, volume 7237 of {\em {LNCS}}, pages 63--80. Springer, Berlin, Heidelberg, April 2012.

\bibitem[Hos25]{hosoyamada2025post}
Akinori Hosoyamada.
\newblock Post-quantum security of keyed sponge-based constructions through a modular approach.
\newblock {\em Cryptology ePrint Archive}, 2025.

\bibitem[HS18]{hosoyamada2018cryptanalysis}
Akinori Hosoyamada and Yu~Sasaki.
\newblock Cryptanalysis against symmetric-key schemes with online classical queries and offline quantum computations.
\newblock In {\em Cryptographers’ Track at the RSA Conference}, pages 198--218. Springer, 2018.

\bibitem[HY18]{AC:HosYas18}
Akinori Hosoyamada and Kan Yasuda.
\newblock Building quantum-one-way functions from block ciphers: {Davies}-{Meyer} and {Merkle}-{Damg{\r a}rd} constructions.
\newblock In Thomas Peyrin and Steven Galbraith, editors, {\em ASIACRYPT~2018, Part~I}, volume 11272 of {\em {LNCS}}, pages 275--304. Springer, Cham, December 2018.

\bibitem[HY24]{hhan2024pseudorandom}
Minki Hhan and Shogo Yamada.
\newblock Pseudorandom function-like states from common haar unitary.
\newblock {\em arXiv preprint arXiv:2411.03201}, 2024.

\bibitem[IJMM24]{inoue2024generic}
Akiko Inoue, Ashwin Jha, Bart Mennink, and Kazuhiko Minematsu.
\newblock Generic security of gcm-sst.
\newblock {\em Cryptology ePrint Archive}, 2024.

\bibitem[IK03]{iwata2003omac}
Tetsu Iwata and Kaoru Kurosawa.
\newblock Omac: One-key cbc mac.
\newblock In {\em Fast Software Encryption: 10th International Workshop, FSE 2003, Lund, Sweden, February 24-26, 2003. Revised Papers 10}, pages 129--153. Springer, 2003.

\bibitem[IM19]{isobe2019plaintext}
Takanori Isobe and Kazuhiko Minematsu.
\newblock Plaintext recovery attacks against xts beyond collisions.
\newblock In {\em International Conference on Selected Areas in Cryptography}, pages 103--123. Springer, 2019.

\bibitem[IOM12]{iwata2012breaking}
Tetsu Iwata, Keisuke Ohashi, and Kazuhiko Minematsu.
\newblock Breaking and repairing gcm security proofs.
\newblock In {\em Advances in Cryptology--CRYPTO 2012: 32nd Annual Cryptology Conference, Santa Barbara, CA, USA, August 19-23, 2012. Proceedings}, pages 31--49. Springer, 2012.

\bibitem[JN20]{jha2020tight}
Ashwin Jha and Mridul Nandi.
\newblock Tight security of cascaded lrw2.
\newblock {\em Journal of Cryptology}, 33(3):1272--1317, 2020.

\bibitem[JST21]{jaeger2021quantum}
Joseph Jaeger, Fang Song, and Stefano Tessaro.
\newblock Quantum key-length extension.
\newblock In {\em Theory of Cryptography Conference}, pages 209--239. Springer, 2021.

\bibitem[KL07]{katz2007introduction}
Jonathan Katz and Yehuda Lindell.
\newblock {\em Introduction to modern cryptography: principles and protocols}.
\newblock Chapman and hall/CRC, 2007.

\bibitem[KLLN16]{C:KLLN16}
Marc Kaplan, Ga{\"e}tan Leurent, Anthony Leverrier, and Mar{\'i}a {Naya-Plasencia}.
\newblock Breaking symmetric cryptosystems using quantum period finding.
\newblock In Matthew Robshaw and Jonathan Katz, editors, {\em CRYPTO~2016, Part~II}, volume 9815 of {\em {LNCS}}, pages 207--237. Springer, Berlin, Heidelberg, August 2016.

\bibitem[KM12]{kuwakado2012security}
Hidenori Kuwakado and Masakatu Morii.
\newblock Security on the quantum-type even-mansour cipher.
\newblock In {\em 2012 international symposium on information theory and its applications}, pages 312--316. IEEE, 2012.

\bibitem[KR96]{C:KilRog96}
Joe Kilian and Phillip Rogaway.
\newblock How to protect {DES} against exhaustive key search.
\newblock In Neal Koblitz, editor, {\em CRYPTO'96}, volume 1109 of {\em {LNCS}}, pages 252--267. Springer, Berlin, Heidelberg, August 1996.

\bibitem[KR01]{JC:KilRog01}
Joe Kilian and Phillip Rogaway.
\newblock How to protect {DES} against exhaustive key search (an analysis of {DESX}).
\newblock {\em Journal of Cryptology}, 14(1):17--35, January 2001.

\bibitem[LM08]{liskov2008comments}
Moses Liskov and Kazuhiko Minematsu.
\newblock Comments on xts-aes.
\newblock {\em Comments to NIST, available from their web page}, 2008.

\bibitem[LRW02]{liskov2002tweakable}
Moses Liskov, Ronald~L Rivest, and David Wagner.
\newblock Tweakable block ciphers.
\newblock In {\em Advances in Cryptology—CRYPTO 2002: 22nd Annual International Cryptology Conference Santa Barbara, California, USA, August 18--22, 2002 Proceedings 22}, pages 31--46. Springer, 2002.

\bibitem[Min06]{minematsu2006improved}
Kazuhiko Minematsu.
\newblock Improved security analysis of xex and lrw modes.
\newblock In {\em International Workshop on Selected Areas in Cryptography}, pages 96--113. Springer, 2006.

\bibitem[MV04]{mcgrew2004security}
David~A McGrew and John Viega.
\newblock The security and performance of the galois/counter mode of operation (full version).
\newblock {\em Cryptology ePrint Archive}, 2004.

\bibitem[NOMI15]{niwa2015gcm}
Yuichi Niwa, Keisuke Ohashi, Kazuhiko Minematsu, and Tetsu Iwata.
\newblock Gcm security bounds reconsidered.
\newblock In {\em Fast Software Encryption: 22nd International Workshop, FSE 2015, Istanbul, Turkey, March 8-11, 2015, Revised Selected Papers 22}, pages 385--407. Springer, 2015.

\bibitem[P1606]{P1619D4}
Ieee draft standard for authenticated encryption with length expansion for storage device.
\newblock {\em IEEE Unapproved Draft Std P1619.1/D4, Oct 2007}, 2006.

\bibitem[P1619]{P1619F}
Ieee standard for cryptographic protection of data on block-oriented storage devices.
\newblock {\em IEEE Std 1619-2018 (Revision of IEEE Std 1619-2007)}, pages 1--41, 2019.

\bibitem[PR00]{petrank2000cbc}
Erez Petrank and Charles Rackoff.
\newblock Cbc mac for real-time data sources.
\newblock {\em Journal of Cryptology}, 13:315--338, 2000.

\bibitem[Rog02]{rogaway2002authenticated}
Phillip Rogaway.
\newblock Authenticated-encryption with associated-data.
\newblock In {\em Proceedings of the 9th ACM Conference on Computer and Communications Security}, pages 98--107, 2002.

\bibitem[Rog04]{AC:Rogaway04}
Phillip Rogaway.
\newblock Efficient instantiations of tweakable blockciphers and refinements to modes {OCB} and {PMAC}.
\newblock In Pil~Joong Lee, editor, {\em ASIACRYPT~2004}, volume 3329 of {\em {LNCS}}, pages 16--31. Springer, Berlin, Heidelberg, December 2004.

\bibitem[Rog13]{rogaway2013evaluation}
Phillip Rogaway.
\newblock Evaluation of some blockcipher modes of operation. 2011.
\newblock {\em Cryptography Research and Evaluation Committees (CRYPTREC) for the Government of Japan}, 2013.

\bibitem[Sho94]{shor1994algorithms}
Peter~W Shor.
\newblock Algorithms for quantum computation: discrete logarithms and factoring.
\newblock In {\em Proceedings 35th annual symposium on foundations of computer science}, pages 124--134. Ieee, 1994.

\bibitem[SS19]{sato2019so}
Shingo Sato and Junji Shikata.
\newblock So-cca secure pke in the quantum random oracle model or the quantum ideal cipher model.
\newblock In {\em IMA International Conference on Cryptography and Coding}, pages 317--341. Springer, 2019.

\bibitem[Vau00]{vaudenay2000decorrelation}
Serge Vaudenay.
\newblock Decorrelation over infinite domains: the encrypted cbc-mac case.
\newblock In {\em International Workshop on Selected Areas in Cryptography}, pages 189--201. Springer, 2000.

\bibitem[Zha19]{zhandry2019record}
Mark Zhandry.
\newblock How to record quantum queries, and applications to quantum indifferentiability.
\newblock In {\em Annual International Cryptology Conference}, pages 239--268. Springer, 2019.

\end{thebibliography}

 \appendix
 \newpage
 
\section{Proofs of the new resampling lemma}\label{app:lemma-proofs}
\subsection{Proof of \expref{Lemma}{lem:resampling-ic}} \label{app:arl-ic}

\begin{lemma}[Restatement of \expref{Lemma}{lem:resampling-ic}]
Let $\algo D=(\D_0,\D_1)$ be a quantum distinguisher in the following experiment:
 \begin{description} 
     \item[Phase 1:] Choose uniform $E \in \algo E(m, n)$ and give $\D$ quantum access to~$E$ and~$E^{-1}$.  Then $\algo D_0$ chooses and outputs a distribution $D$ over $\bool^{m+2n}$.
     \item[Phase 2:] Choose $k_0 \in \bool^m$ and $s_0,s_1 \in \bool^n$ according to $D$.
     Let $E^{(0)}=E$  and define $E^{(1)}:\bool^m\times \bool^n \rightarrow \bool^n$ by
        $$
        E_{k^*}^{(1)}(x)=
        \begin{cases}
        E_{k^*}(x) &\text{if } k^* \neq k_0\\
        E_{k^*} \circ \swap{s_0}{s_1}(x)  &\text{if } k^*=k_0 \,.
        \end{cases}
        $$		
	A uniform bit~$b \in \bool$ is chosen, and $\D$ is given $k_0,s_0, s_1$, and quantum access to~$E^{(b)}$. Then $\D$ outputs a guess~$b'$. 
 \end{description}
For a probability distribution~$D$ on $\{0,1\}^{m+2n}$, we define 
$$
\epsilon=\max_{(k_0^*, s_0^*, s_1^*)\in\{0,1\}^{m+2n}}D(k_0,s_0,s_1)
$$
\noindent Then for any $\D$ making at most $q$ queries to $E$ in phase~1,
\begin{align*}
\left|\Pr[\mbox{$\D$ outputs 1} \mid b=1] - \Pr[\mbox{$\D$ outputs 1} \mid b=0]\right| \leq 4\sqrt{2^n\cdot q\cdot \epsilon}\,.
\end{align*}
\end{lemma}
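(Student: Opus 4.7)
The plan is to reduce the distinguishing advantage to a trace distance bound on $\D$'s post-Phase-1 state. Since in Phase~2 the distinguisher is handed $(k_0,s_0,s_1)$ explicitly and only continues local computation plus oracle queries, any strategy using further queries in the two worlds can be simulated by $\D$ itself (it knows $E^{(0)}$ vs.\ $E^{(1)}$ differ only in the swap at $(k_0,s_0),(k_0,s_1)$ and has the information to carry it out). Thus by the data-processing inequality, the advantage is bounded by the expected trace distance (over $E$ and $(k_0,s_0,s_1)\sim D$) between $\D_0$'s final state when it interacts with $\ket{\pm E}$ and the state it would have obtained had it interacted with $\ket{\pm E^{(1)}}$ in Phase~1 as well.

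Next, I would bound this trace distance via a BBBV/hybrid argument over the $q$ Phase-1 queries. Write $\ket{\psi_i^E}$ for the state of $\D_0$ after $i$ queries to $E$, and consider the hybrid where the first $i$ queries are answered with $E$ and the remaining $q-i$ with $E^{(1)}$. Each step introduces an error whose norm is bounded by the query magnitude on the "differing locations'' of $E$ and $E^{(1)}$, namely $\{(k_0,s_0),(k_0,s_1)\}$ in the forward register and the corresponding pair $\{(k_0,E_{k_0}(s_0)),(k_0,E_{k_0}(s_1))\}$ in the inverse register. Summing the per-query errors via Cauchy–Schwarz gives a bound of the form $2\sqrt{q\cdot \sum_i w_i}$ where $w_i$ is the total squared amplitude on the four differing inputs at query~$i$, with the state and $E$ fixed.

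Then I would average over $(k_0,s_0,s_1)\sim D$ and over $E$. For any fixed quantum state at query~$i$ and any fixed input $(k^*,x)$, the probability that $(k^*,x)$ lies in the forward differing set equals
\begin{equation*}
\Pr_D[(k_0,s_0)=(k^*,x)] + \Pr_D[(k_0,s_1)=(k^*,x)] \;\le\; 2\cdot 2^n\cdot \epsilon,
\end{equation*}
since fixing one of $s_0,s_1$ leaves $2^n$ choices for the other and each triple has probability at most $\epsilon$. The same bound holds for the inverse register because $E_{k_0}(\cdot)$ is a permutation, so conditioning on $E$ the map $(k_0,s_b)\mapsto (k_0,E_{k_0}(s_b))$ is a bijection on tuples. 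Using $\sum_{(k^*,x)}|\alpha_{k^*,x}|^2 \le 1$ per query, the expected value of $w_i$ is at most $4\cdot 2^n\cdot \epsilon$, and summing over the $q$ queries yields the claimed $4\sqrt{2^n\cdot q\cdot \epsilon}$.

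The main obstacle is the second step: making the BBBV-type bound work cleanly against an oracle that is a random permutation per key with two-way access, rather than a random function. I would handle this by mirroring the argument of~\cite{EC:ABKM22,hosoyamada2025post}, which writes the post-Phase-1 state as a purification over the choice of $E$ and uses that $E$ and $E^{(1)}$ are related by a \emph{global} unitary (the swap of two coordinates of $E_{k_0}$) that acts as the identity away from the four differing inputs; the per-query perturbation then reduces to the projector-overlap bound used in the random-function case. The distribution-dependent averaging is a bookkeeping step on top of this, and the factor $2^n$ arises precisely from marginalizing one of the two swap coordinates against the min-entropy bound $\epsilon$.
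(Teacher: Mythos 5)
Your high-level reduction and the distribution-dependent counting are on target: the per-location bound obtained by marginalizing one of $s_0,s_1$ (giving $2\cdot 2^n\epsilon$) and the permutation-bijection argument for the inverse register are exactly the computations the paper performs. However, there is a genuine gap in the central quantitative step. A BBBV-style hybrid over the $q$ Phase-1 queries yields $\sum_i 2\sqrt{w_i}\le 2\sqrt{q\sum_i w_i}$, and with $\sum_i \mathbb{E}[w_i]\le 4q\cdot 2^n\epsilon$ this evaluates to $4q\sqrt{2^n\epsilon}$, not the claimed $4\sqrt{2^n\cdot q\cdot\epsilon}$ --- your arithmetic silently drops a factor of $\sqrt{q}$. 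The $\sqrt{q}$ (rather than $q$) dependence is the entire point of a resampling lemma and cannot be obtained by comparing two oracles query-by-query. The paper never compares two oracle runs: it keeps the ideal cipher purified in registers $F$, decomposes the \emph{single} Phase-1 state as $\ket{\psi_{\mathrm{good}}}+\ket{\psi_{\mathrm{bad}}}$ via projectors inserted at each query, shows that $\ket{\psi_{\mathrm{good}}}$ is exactly invariant under $\S_{F_{k_0,s_0},F_{k_0,s_1}}$, and bounds $\|\ket{\psi_{\mathrm{bad}}}\|_2^2\le 2\sum_i\epsilon_i$ by a gentle-measurement argument; it is this norm-\emph{squared} accumulation that delivers $\sqrt{q\cdot 2^n\epsilon}$. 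Your final paragraph gestures at that purification argument, but frames it as a technicality for handling permutations rather than as the replacement for the hybrid that your stated bound actually requires.

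A second, smaller issue: the reduction to ``the state it would have obtained had it interacted with $\ket{\pm E^{(1)}}$ in Phase 1 as well'' is circular for adaptive distinguishers, since $E^{(1)}$ is only defined once $(k_0,s_0,s_1)$ is drawn from $D$, and $D$ is output by $\D_0$ at the \emph{end} of Phase 1; the counterfactual Phase-1 run against $E^{(1)}$ is therefore not well-defined. The paper avoids this by purifying both the choice of $D$ (register $Z$, measured after the queries) and the sampling of the tuple (register $C$), and using that the swap on $F$ commutes with the projector $P_D$, so the comparison is between applying or not applying the swap to the oracle registers of one and the same state. Both issues are fixable by following the paper's route, but as written the proposal does not establish the stated bound.
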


\begin{proof}
The proof of this lemma generalizes Lemma~3 from \cite{hosoyamada2025post}, which in turn builds on techniques from \cite{EC:ABKM22}. Thus, part of the proof is borrowed from these two works.

Let $F=F_{0^m}F_{0^{m-1}1},\ldots, F_{1^m}$ be the internal register of a superposition oracle for an ideal cipher, where each $F_k=F_{k,0^n},\ldots, F_{k,1^n}$ is a database register for a random permutation.  Each $F_k$ is initialized in the initial state $\ket{\phi_0}$ for a random permutation $\pi$, namely, 
\begin{align*}
    \ket{\phi_0}_{F_k}=\left(2^n !\right)^{-1/2}\sum_{\pi\in\permset{n}}\ket{\pi}_{F_k}.
\end{align*}

Moreover, define $\ket{\Phi_0}_F=\ket{\phi_0}^{\otimes 2^m}$. 

Then, quantum queries to the ideal cipher can be evaluated via the unitary operators $O$ and $O^{\mathrm{inv}}$:
\begin{align*}
O_{KXYF}\;\ket{k}_K\ket x_X\ket y_Y\ket {\pi}_{F_k} &=\ket{k}_K\ket{x}_X\ket{y\oplus \pi(x)}_Y\ket {\pi}_{F_k}\\  
\Oinv_{KXYF} \ket{k}_K\ket x_X \ket y_Y \ket {\pi}_{F_k} &= \ket{k}_K\ket {x \xor \left(\oplus_{x': \pi(x')=y} \, x'\right)}_X \ket{y}_Y \ket{\pi}_{F_k}.
\end{align*}
Note that $O$ and $\Oinv$ act only on the register $F_k$, and act as the identity on all other $F$-registers.
By analogy to the proof of \cite{EC:ABKM22,hosoyamada2025post}, define the projectors
\begin{align*}
\left(P_{k_0,s_0,s_1}\right)_{KX} =
    \begin{cases}
        \mathds 1& s_0=s_1\\
        \mathds 1-\proj{k_0}\otimes \left(\proj{s_0}+\proj{s_1}\right)_X&s_0\neq s_1
    \end{cases}
\end{align*}
and 
\begin{align*}
    &\left(P^{\mathrm{inv}}_{k_0,s_0,s_1}\right)_{KYF}\nonumber\\
	=&\begin{cases}
    	\mathds 1& s_0=s_1\\
            \proj{k_0}_K\otimes \sum_{y\in\{0,1\}^n}\proj{y}_Y\otimes\left(\mathds 1-\proj y\right)^{\otimes 2}_{F_{k_0,s_0}F_{k_0,s_1}}&s_0\neq s_1.
	\end{cases}
\end{align*}
Moreover, let $\bar{P}_{k_0,s_0,s_1}=\mathds 1- P_{k_0,s_0,s_1}$ and $\bar{P}_{k_0,s_0,s_1}^{\mathrm{inv}}=\mathds 1- P^{\mathrm{inv}}_{k_0,s_0,s_1}$. Define the swap operator $$\S_{AB}\ket{x}_A\ket{y}_B=\ket{y}_A\ket{x}_B$$.

Without loss of generality, we assume (as in~\cite{hosoyamada2025post}) that $\mathcal{D}_0$ makes exactly $q$ quantum queries and does not perform intermediate measurements. The distribution $D$ is chosen by 
$\mathcal{D}_0$ as follows.

\begin{enumerate}
  \item At the beginning, $\D_0$ sets a specific part of the offline register $E$, which we
  refer to as register $Z$, to be $\ket{0}_Z$. Register $Z$ is untouched until $\D_0$
  has made all its $q$ queries.
  \item After making $q$ quantum queries, $\D_0$ applies a unitary to all its internal registers, including $Z$.
  \item Then, it measures register $Z$ to obtain the distribution $D$.
\end{enumerate}

We denote the projector $\ket{D}\!\bra{D}_Z$ as $P_D$. Let $\ket{\psi_0}= \ket{\psi'_0}_{KXYE}\ket{\phi_0}_F$ be the initial state of $\D_0$.
Let $\ket{\psi}$ be the quantum state right before the measurement to choose the
distribution $D$. Then,
\begin{equation}\label{eqn:psi}
  \ket{\psi}
  = U_q O_q \cdots U_1 O_1 \ket{\psi_0},
\end{equation}
holds, where each $O_i$ (which is either $O_{KXYF}$ or $O^{\mathrm{inv}}_{KXYF}$)
is the unitary operator corresponding to the $i^\text{th}$ quantum query, and $U_i$ is a
unitary operator acting on registers $XYE$ that corresponds to some offline computation by
$\D_0$. In addition, for arbitrary $k_0\in \bool^m$ and $s_0,s_1 \in \{0,1\}^n$, let
\begin{align*}
  \ket{\psi_{\mathrm{good}}(k_0,s_0,s_1)} &:= z \cdot
  U_q O_q P_{k_0,s_0,s_1}^q \cdots U_1 O_1 P_{k_0,s_0,s_1}^1
  \ket{\psi_0}\\
  \ket{\psi_{\mathrm{bad}}(k_0,s_0,s_1)} &:= \ket{\psi} - \ket{\psi_{\mathrm{good}}(k_0,s_0,s_1)},
\end{align*}
where
\begin{align*}
P_{k_0,s_0,s_1}^i :=
\begin{cases}
  (P_{k_0,s_0,s_1})_X & \text{if $O_i = O_{KXYF}$}\\
  (P^{\mathrm{inv}}_{k_0,s_0,s_1})_{KYF} & \text{if $O_i = O^{\mathrm{inv}}_{KXYF}$},
\end{cases}
\end{align*}
and $z$ is a complex number such that $|z| = 1$ and
$\langle \psi \mid \psi_{\mathrm{good}}(k_0,s_0,s_1) \rangle \in \mathbb{R}_{\geq 0}$.
As in \cite{EC:ABKM22,hosoyamada2025post}, the vector $\ket{\psi_{\mathrm{good}}(k_0,s_0,s_1)}$ is invariant under the action of $\S_{F_{k_0,s_0},F_{k_0,s_1}}$, that is,
\begin{equation} \label{eqn:swapgood}
\S_{F_{k_0,s_0},F_{k_0,s_1}} \ket{\psi_{\mathrm{good}}(k_0,s_0,s_1)} 
= \ket{\psi_{\mathrm{good}}(k_0,s_0,s_1)}.
\end{equation}
Moreover, let
\begin{align*}
\epsilon_i(k_0,s_0,s_1) := \left\| \bar{P}^{i}_{k_0,s_0,s_1} U_{i-1} O_{i-1} \cdots U_1 O_1 
  \ket{\psi_0} \right\|_2^2,
\end{align*}
where $\bar{P}^{i}_{k_0,s_0,s_1} := \mathds 1 - P^{i}_{k_0,s_0,s_1}$.

Recall that, at the end of $\D_0$, the register $Z$ (which is a part of
register $E$ for offline computation) is measured to choose a distribution $D$.
The (pure) state just before the measurement is $\ket{\psi}$ of \expref{Equation}{eqn:psi}.
By the measurement, $\ket{\psi}$ changes to the mixed state
\begin{align*}
\sum_{D \in \mathcal{S}} P_D \ket{\psi}\bra{\psi} P_D.
\end{align*}
After that, $k_0$, $s_0$ and $s_1$ are sampled, and the state changes to
\begin{align*}
\rho_0 := \sum_{\substack{D \in \mathcal{S} \\  (k_0,s_0,s_1) \sim D}}
  D(k_0,s_0,s_1) \;
  P_D \ket{\psi}\bra{\psi}_{KXYEF} P_D
  \otimes \ket{k_0,s_0,s_1}\bra{k_0,s_0,s_1}_C ,
\end{align*}
where $C$ is an additional register.

If $b=1$ and the values of the permutation are swapped on $s_0$ and $s_1$, the
state further changes to
\begin{align*}
\rho_1 := \sum_{\substack{D \in \mathcal{S} \\  (k_0,s_0,s_1) \sim D}}
  D(k_0,s_0,s_1) \;
  \S_{F_{k_0,s_0},F_{k_0,s_1}} P_D &\ket{\psi}\bra{\psi} P_D \S_{F_{k_0,s_0},F_{k_0,s_1}} \\
  &\otimes \ket{k_0,s_0,s_1}\bra{k_0,s_0,s_1}.
\end{align*}
From the standard argument for distinguishing advantages,
\begin{equation*}\label{eq:disting2TD}
\bigl|\Pr[b'=1 \mid b=1] - \Pr[b'=1 \mid b=0]\bigr|
  \leq \mathsf{TD}(\rho_0,\rho_1),
\end{equation*}
where $\mathsf{TD}$ is the trace distance.

We observe that the state $\rho_0$ can also be produced from $\ket{\psi}$ as follows.  

\begin{enumerate}
  \item Without measuring register $Z$ (on which the information of $D$ is written),
  create the superposition of $k_0$, $s_0$ and $s_1$ of the form
  \begin{align*}
    \sum_{(k_0,s_0,s_1)\sim D} \sqrt{D(k_0,s_0,s_1)} \;\ket{D}_Z \ket{k_0,s_0,s_1}_C.
  \end{align*}
  For each $D$, obtaining the (pure) state
  \begin{align*}
    \ket{\xi_0} := \sum_{\substack{D \in \mathcal{S} \\  (k_0,s_0,s_1) \sim D}} \sqrt{D(k_0,s_0,s_1)} \;
      P_D \ket{\psi}_{KXYEF} \otimes \ket{k_0,s_0,s_1}_C .
  \end{align*}
  
  \item Measure registers $Z$ and $C$ to obtain the classical information of $D,k_0,s_0,s_1$.
\end{enumerate}

Similarly, the state $\rho_1$ can be produced by measuring the registers $Z$ and $C$ of
the pure state
\begin{align*}
\ket{\xi_1} := \sum_{\substack{D \in \mathcal{S} \\  (k_0,s_0,s_1) \sim D}} \sqrt{D(k_0,s_0,s_1)} \;
  \S_{F_{k_0,s_0},F_{k_0,s_1}} P_D \ket{\psi}_{KXYEF} \otimes \ket{k_0,s_0,s_1}_C .
\end{align*}

Since measurements decrease the trace distance between two states, and the trace
distance between the two pure states $\ket{\xi_0}\!\bra{\xi_0}$ and
$\ket{\xi_1}\!\bra{\xi_1}$ is upper bounded by the norm of the difference
$(\ket{\xi_0}-\ket{\xi_1})$, it holds that
\begin{equation*}\label{eq:purify}
    \mathsf{TD}(\rho_0,\rho_1) \;\leq\;
\mathsf{TD}(\ket{\xi_0}\!\bra{\xi_0}, \ket{\xi_1}\!\bra{\xi_1})
\;\leq\; \|\ket{\xi_0}-\ket{\xi_1}\|_2 .
\end{equation*}

In addition, we have
\begin{align} \label{eqn:TD}
\|\ket{\xi_0}-\ket{\xi_1}\|_2^2
&\overset{(i)}{=}\;
   \sum_{\substack{D \in \mathcal{S} \\ (k_0,s_0,s_1) \sim D}}
   D(k_0,s_0,s_1)\,
   \left\| P_D \bigl(1-\mathsf{Swap}_{F_{k_0,s_0},F_{k_0,s_1}}\bigr) \ket{\psi} \right\|_2^2 \nonumber\\[0.8ex]\nonumber
&\overset{(ii)}{=}\;
   \sum_{\substack{D \in \mathcal{S} \\ (k_0,s_0,s_1) \sim D}}
   D(k_0,s_0,s_1)\,
   \Bigl\| P_D \Bigl(
       \ket{\psi_{\mathrm{bad}}(k_0,s_0,s_1)}
\\[-0.5ex]\nonumber &\hspace{10em}
       -\, \mathsf{Swap}_{F_{k_0,s_0},F_{k_0,s_1}}
         \ket{\psi_{\mathrm{bad}}(k_0,s_0,s_1)}
     \Bigr)\Bigr\|_2^2 \\[0.8ex]\nonumber
&\overset{(iii)}{\le}\;
   4 \sum_{\substack{D \in \mathcal{S} \\ (k_0,s_0,s_1)\in\{0,1\}^{m+2n}}}
   D(k_0,s_0,s_1)\,
   \left\| P_D \ket{\psi_{\mathrm{bad}}(k_0,s_0,s_1)} \right\|_2^2 \\[0.8ex]\nonumber
&\overset{(iv)}{\le}\;
   4 \sum_{\substack{D \in \mathcal{S} \\ (k_0,s_0,s_1)\in\{0,1\}^{m+2n}}}
   \varepsilon\,
   \left\| P_D \ket{\psi_{\mathrm{bad}}(k_0,s_0,s_1)} \right\|_2^2 \\[0.8ex]
   &\overset{(v)}{=}\;
   4 \varepsilon\cdot 2^{m+2n}\, \mathbb{E}_{(k_0,s_0,s_1)\leftarrow\{0,1\}^{m+2n}}
  \left[ \|\ket{\psi_{\mathrm{bad}}(k_0,s_0,s_1)}\|_2^2 \right].
\end{align}

where we used the fact that
$\bigl[P_D,\S_{F_{k_0,s_0},F_{k_0,s_1}}\bigr]=0$ \footnote{$P_D$ acts only on (a part of) register $E$ while 
$\mathsf{Swap}_{F_{k_0,s_0},F_{k_0,s_1}}$ acts only on register $F$.} for (i) and (iii),
\expref{Equation}{eqn:swapgood} for (ii), and the fact that $\sum_DP_D=1$ for (v).

Next, we bound $\mathbb{E}_{(k_0,s_0,s_1)\leftarrow\{0,1\}^{m+2n}}
   \left[ \left\|\ket{\psi_{\mathrm{bad}}(k_0,s_0,s_1)}\right\|_2^2 \right]$. First, using the gentle measurement lemma as in \cite[Section 4.2]{EC:ABKM22}, we get
\begin{equation} \label{eqn:GML}
\left\|\ket{\psi_{\mathrm{bad}}(k_0,s_0,s_1)}\right\|_2^2
  \;\leq\; 2 \sum_{i=1}^q \epsilon_i(k_0,s_0,s_1).
\end{equation}
   
For any normalized state $\ket \psi_{KXE}$, let 
\begin{align*}
	\ket \psi_{KXE}=\sum_{\substack{k\in\{0,1\}^m\\x\in\{0,1\}^n}}\ket k_K\ket x_X\otimes\ket{\psi_{kx}}_E
\end{align*}
be its expansion in the computational basis on $X$. By the definition of $\epsilon$, following \cite[Section 4.2]{EC:ABKM22}, we obtain
\begin{align*} 
	&\mathbb E_{(k_0,s_0,s_1)\leftarrow\{0,1\}^{m+2n}}\left[\|\left(P_{k_0,s_0,s_1}\right)_{KX}\ket{\psi}_{KXE}\|_2^2\right]\nonumber\\
	&=\sum_{\substack{k\in\{0,1\}^m\\x\in\{0,1\}^n}}\|\ket{\psi_{kx}}\|_2^2\mathbb E_{(k_0,s_0,s_1)\leftarrow\{0,1\}^{m+2n}}\left[\|\left(P_{k_0,s_0,s_1}\right)_{KX}\ket k_K\ket x_X\|_2^2\right]\nonumber\\
	&=\sum_{\substack{k\in\{0,1\}^m\\x\in\{0,1\}^n}}\|\ket{\psi_{kx}}\|_2^2\Pr_{(k_0,s_0,s_1)\leftarrow\{0,1\}^{m+2n}}\left[(k,x)\in \{(k_0,s_0),(k_0,s_1)\}\right]\nonumber\\
	&\le 2\cdot 2^{-(n+m)}.
\end{align*}

Similarly, again following \cite[Section 4.2]{EC:ABKM22}
\begin{align*}
	\Exp_{(k_0,s_0, s_1)\sim D}\left[\left\|\left(\bar P^{\mathrm{inv}}_{k_0,s_0,s_1}\right)_{KYF}\ket\psi_{KYEF}\right\|_2^2\right]	\le2\cdot  2^{-(n+m)}.
\end{align*}

Then we have 
\begin{equation} \label{eqn:E}
\mathbb{E}_{(k_0,s_0,s_1) \sim D}
  \bigl[\epsilon_i(k_0,s_0,s_1)\bigr]
  \;\leq\; 2\cdot 2^{-(n+m)}.
\end{equation}
Combining \cref{eqn:TD,eqn:GML,eqn:E}, we obtain the desired bound.
\qed
\end{proof}

 \section{More Details on Post-Quantum Security of \fx}

\subsection{Proof of \expref{Proposition}{prop:E-T_jk-y}} \label{app:E-T_jk-y}

\begin{proposition} [Restatement]
    For any $E \in \mathcal{E}(m,n)$, $K = (k_0, k_1, k_2) \in \{0,1\}^{m+2n}$, $j \in [1,q_C]$, transcript $T_j = \{(x_1, y_1), \ldots, (x_j, y_j)\}$ without repetition, and any $i \in \{1, \ldots, j\}$, it holds that 
    \begin{align*}
        E^{T_j,K}_{k_0}(x_i \oplus k_1) \oplus k_2 = y_i.
    \end{align*} 
\end{proposition}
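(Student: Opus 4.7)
The plan is to proceed by induction on $j$, using the recursive definition of $E^{T_j,K}_{k_0}$ given just above \expref{Proposition}{prop:E-T_jk-y}. Along the way I maintain the auxiliary invariant that $E^{T_j,K}_{k_0}$ is a permutation of $\{0,1\}^n$; this is immediate from the recursion, since $E_{k_0}$ is a permutation, every $\swap{a}{b}$ is itself a permutation, and so $\swap{\cdot}{\cdot} \circ E^{T_{j-1},K}_{k_0}$ is again a permutation.

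For the base case $j=1$, direct substitution into
\begin{align*}
E^{T_1,K}_{k_0}(x) \;=\; \swap{E(x_1 \oplus k_1)}{y_1 \oplus k_2} \circ E(x)
\end{align*}
at $x = x_1 \oplus k_1$ produces $y_1 \oplus k_2$, and XOR-ing with $k_2$ yields $y_1$, as required. For the inductive step I assume the claim (and the invariant) holds for $T_{j-1}$ and split into two sub-cases based on whether $i = j$ or $i < j$. The case $i = j$ is handled by the outermost swap: setting $a = E^{T_{j-1},K}_{k_0}(x_j \oplus k_1)$ and $b = y_j \oplus k_2$, the recursion gives $E^{T_j,K}_{k_0}(x_j \oplus k_1) = \swap{a}{b}(a) = b = y_j \oplus k_2$.

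The case $i < j$ is the interesting one. By the inductive hypothesis, $E^{T_{j-1},K}_{k_0}(x_i \oplus k_1) = y_i \oplus k_2$, so the goal reduces to showing that the outermost swap $\swap{E^{T_{j-1},K}_{k_0}(x_j \oplus k_1)}{y_j \oplus k_2}$ fixes the point $y_i \oplus k_2$. This requires two inequalities: (a) $y_i \oplus k_2 \neq E^{T_{j-1},K}_{k_0}(x_j \oplus k_1)$, which via the inductive hypothesis is equivalent to $E^{T_{j-1},K}_{k_0}(x_i \oplus k_1) \neq E^{T_{j-1},K}_{k_0}(x_j \oplus k_1)$ and follows from the injectivity of $E^{T_{j-1},K}_{k_0}$ together with $x_i \neq x_j$; and (b) $y_i \neq y_j$. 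Granted both, the swap acts as the identity on $y_i \oplus k_2$ and the induction closes.

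The main (mild) obstacle is justifying (b), since the proposition's hypothesis only rules out repeated inputs $x_i$. However, in every invocation of this proposition the transcript is produced by querying a permutation oracle (either $R$ in the hybrid prefix or $\fx_K[\cdot]$-style oracles in the real segment), so the $y_i$'s are automatically distinct. In the write-up I would either add a brief line at the start of the proof noting this implicit assumption or strengthen the hypothesis to say that $T_j$ is the transcript of a permutation, after which (a) and (b) combine to give the claim.
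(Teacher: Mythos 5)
Your proof is correct, but it takes a genuinely different (and arguably cleaner) inductive route than the paper's. The paper fixes the transcript length $j$ and runs a \emph{strong downward induction on the query index $i$} (from $i=j$ down to $i=1$): it must unroll the entire composition of swaps, introduce the partial compositions $F_i$, and establish two non-membership ``Properties'' ($y_{r-1}\notin\{y_r,\ldots,y_j\}$ and $y_{r-1}\oplus k_2\notin\{E^{T_{r-1},K}_{k_0}(x_r\oplus k_1),\ldots\}$) via an iterated proof-by-contradiction that walks through the indices $r, r+1,\ldots,j$ one at a time. Your induction on the transcript length $j$ localizes all the work to the single outermost swap added in passing from $T_{j-1}$ to $T_j$, so the two required inequalities collapse immediately to (a) injectivity of $E^{T_{j-1},K}_{k_0}$ plus $x_i\neq x_j$, and (b) $y_i\neq y_j$ --- no unrolling or chained contradictions needed. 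Your observation about (b) is also exactly right and applies equally to the paper: the stated hypothesis ``without repetition'' only guarantees distinct $x_i$'s, yet distinctness of the $y_i$'s is genuinely necessary (a permutation cannot send two distinct inputs to the same output), and the paper's own proof of its Property~1 quietly imports it by appealing to the fact that the transcript is produced by querying the permutation $R$. Your suggestion to surface this as an explicit hypothesis or an opening remark is an improvement over the paper's presentation.
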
 
\begin{proof}
We prove by strong induction on $i$. We start by the base case $i=j$. 
\begin{align*}
    E^{T_j,K}_{k_0}(x_j \oplus k_1) = \swap{E^{T_{j-1},K}_{k_0}( x_{j}\oplus k_1)}{y_{j}\oplus k_2} \circ E^{T_{j-1},K}_{k_0}(x_j \oplus k_1) = y_j \oplus k_2.
\end{align*}
Assume for all $i \in \{r, \ldots, j\}$ that $E^{T_j,K}_{k_0}(x_i \oplus k_1) \oplus k_2 = y_i$. Expanding it, we obtain
\begin{align*}
    &E^{T_j,K}_{k_0}(x_{i} \oplus k_1)\\
    &= \swap{E^{T_{j-1},K}_{k_0}( x_{j}\oplus k_1)}{y_{j}\oplus k_2} \circ \ldots \circ \swap{E^{T_{i-1},K}_{k_0}( x_{i}\oplus k_1)}{y_{i}\oplus k_2} \circ E^{T_{i-1},K}_{k_0}(x_{i} \oplus k_1)\\
    &= F_i \circ E^{T_{i-1},K}_{k_0}(x_{i} \oplus k_1) = y_i \oplus k_2,
\end{align*}
where we define
\begin{align*}
    F_i = \swap{E^{T_{j-1},K}_{k_0}( x_{j}\oplus k_1)}{y_{j}\oplus k_2} \circ \ldots \circ \swap{E^{T_{i-1},K}_{k_0}( x_{i}\oplus k_1)}{y_{i}\oplus k_2}.
\end{align*}
By rearranging, we get for all $i \in \{r, \ldots, j\}$,
\begin{align*}
    E^{T_{i-1},K}_{k_0}(x_{i} \oplus k_1) = F_i^{-1} (y_i \oplus k_2).
\end{align*}
We now work on case $r-1$, 
\begin{align*}
    &E^{T_j,K}_{k_0}(x_{r-1} \oplus k_1)\\
    &= \swap{E^{T_{j-1},K}_{k_0}( x_{j}\oplus k_1)}{y_{j}\oplus k_2} \circ \ldots \circ \swap{E^{T_{r-2},K}_{k_0}( x_{r-1}\oplus k_1)}{y_{r-1}\oplus k_2} \circ E^{T_{r-2},K}_{k_0}(x_{r-1} \oplus k_1)\\
    &= \swap{E^{T_{j-1},K}_{k_0}( x_{j}\oplus k_1)}{y_{j}\oplus k_2} \circ \ldots \circ \swap{E^{T_{r-1},K}_{k_0}( x_{r}\oplus k_1)}{y_{r}\oplus k_2} (y_{r-1} \oplus k_2)\\
    &= F_r(y_{r-1} \oplus k_2).
\end{align*}
The last equality holds if the following two properties hold.

\medskip \noindent \textbf{Property $1$: $y_{r-1} \not \in \{y_{r}, \ldots, y_{j} \}$.} Since we don't allow repeated queries, then $x_{r-1} \not \in \{x_{r}, \ldots, x_{j} \}$. Since $R$ is a permutation, $R(x_{r-1}) \not \in \{R(x_{r}), \ldots, R(x_{j})\}$ which gives the desired property.

\medskip \noindent \textbf{Property $2$: $y_{r-1} \oplus k_2 \not \in \left\{E^{T_{r-1},K}_{k_0}( x_{r}\oplus k_1), \ldots, E^{T_{j-1},K}_{k_0}( x_{j}\oplus k_1) \right\}$.} We prove step by step. First by contradiction, suppose $y_{r-1} \oplus k_2 = E^{T_{r-1},K}_{k_0}( x_{r}\oplus k_1)$.  By the inductive assumption, we get
\begin{align*}
    y_{r-1} \oplus k_2 &= F_r^{-1}( y_{r}\oplus k_2)\\
    F_r(y_{r-1} \oplus k_2) &=  y_{r}\oplus k_2\\
    E^{T_j,K}_{k_0}(x_{r-1} \oplus k_1) & = E^{T_j,K}_{k_0}(x_{r} \oplus k_1).
\end{align*}
Since $E^{T_j,K}_{k_0}$ is a permutation, this contradicts the fact that $x_{r-1} \neq x_{r}$. Therefore we have $y_{r-1} \oplus k_2 \neq E^{T_{r-1},K}_{k_0}( x_{r}\oplus k_1)$. 
Now, we can write
\begin{align*}
    &E^{T_j,K}_{k_0}(x_{r-1} \oplus k_1)\\
    &= \swap{E^{T_{j-1},K}_{k_0}( x_{j}\oplus k_1)}{y_{j}\oplus k_2} \circ \ldots \circ \swap{E^{T_{r},K}_{k_0}( x_{r+1}\oplus k_1)}{y_{r+1}\oplus k_2} (y_{r-1} \oplus k_2)\\
    &= F_{r+1}(y_{r-1} \oplus k_2).
\end{align*}
By a similar prove-by-contradiction argument, using the inductive assumption on $r+1$, we obtain
\begin{align*}
    y_{r-1} \oplus k_2 \neq E^{T_{r},K}_{k_0}( x_{r+1}\oplus k_1).
\end{align*}
Repeating this proof by contradiction step by step for $r+2, r+3, \dots, j$, we eventually reach
\begin{align*}
    y_{r-1} \oplus k_2 \neq E^{T_{j-1},K}_{k_0}(x_j \oplus k_1),
\end{align*}
which establishes Property 2.

By these two properties, we can get
\begin{align*}
    E^{T_j,K}_{k_0}(x_{r-1} \oplus k_1)= y_{r-1} \oplus k_2.
\end{align*}
Then by strong induction, the proposition holds for all $i \in \{1, \ldots, j\}$.
\qed
\end{proof}

\subsection{Handling Inverse Queries} \label{sec:inv-queries}

In this section, we extend our analysis to the setting where the distinguisher may issue classical queries in the inverse direction. Conceptually, the overall hybrid structure remains unchanged, i.e., the sequence of hybrids $(\Hyb_0, \dots, \Hyb_{q_C})$ is identical in both the forward and inverse query settings. This is because each hybrid depends only on the transcript $T_j = \{(x_1, y_1), \dots, (x_j, y_j)\}$, which records the pairs of classical queries made so far. The transcript itself is invariant under the query direction: whether the adversary queried $\fx_K(x_i) = y_i$ or $\fx_K^{-1}(y_i) = x_i$, the resulting set of pairs $(x_i, y_i)$ is the same. 

Consequently, the hybrids $\Hyb_j$ remain identical in both settings. The only difference arises in the definition of the intermediate hybrids, where the classical queries are handled in different directions. In particular, the query direction affects how the reprogramming is applied within each step, and thus the intermediate transitions between $\Hyb_j$ and $\Hyb_{j+1}$ must be redefined accordingly.

We now show that these inverse queries can be handled in a fully symmetric manner. Specifically, for each pair of adjacent hybrids $\Hyb_j$ and $\Hyb_{j+1}$, we consider the $(j+1)^{\text{st}}$ classical query as an inverse query, which is answered by the inverse oracle. To formalize this, we first introduce the notation for inverse queries. Using the fact  $$E_{k_0}^{-1}\circ \swap{a}{b}=\swap{E^{-1}_{k_0}(a)}{E^{-1}_{k_0}(b)}\circ E_{k_0}^{-1},$$ 
 we obtain  
\begin{align*}
    (E_{k_0}^{T_1,k})^{-1}&= (\swap{E_{k_0}(x_1\oplus k_1)}{y_1\oplus k_2}\circ E_{k_0})^{-1} \\
    &=E_{k_0}^{-1}\circ \swap{E_{k_0}(x_1\oplus k_1)}{y_1\oplus k_2} \\
    &=\swap{x_1\oplus k_1}{E_{k_0}^{-1}(y_1\oplus k_2)}\circ E^{-1}_{k_0}.
\end{align*}

Extending to $j$~swaps, we get
\begin{align*}
    (E_{k_0}^{T_j,k})^{-1}(y)
    &=E_{k_0}^{-1}\circ \swap{E_{k_0}(x_1\oplus k_1)}{y_1\oplus k_2}\circ \cdots \circ \swap{E^{T_{j-1},k}_{k_0}(x_j\oplus k_1)}{y_j\oplus k_2} (y)\\
    &=(E_{k_0}^{T_{j-1},k})^{-1} \circ \swap{E^{T_{j-1},k}_{k_0}(x_j\oplus k_1)}{y_j\oplus k_2} (y) \\
    &= \swap{x_j\oplus k_1}{(E^{T_{j-1},k}_{k_0})^{-1}(y_j\oplus k_2)}\circ (E_{k_0}^{T_{j-1},k})^{-1}(y)\\
    &=\prod_{i=j}^1 \swap{x_i\oplus k_1}{(E^{T_{i-1},k}_{k_0})^{-1}(y_i\oplus k_2)}\circ E^{-1}_{k_0}(y).
\end{align*}

Moreover, define

\begin{align*}
            (E_{k^*}^{(1)})^{-1}(y)=
    	\begin{cases}
    		E^{-1}_{k^*}(x) &\text{if } k^*\neq k_0\\
    		\left(E^{-1}_{k_0} \circ \swap{s_0}{s_1} \right)(y) &\text{if } k^*=k_0 \text{.}
    	\end{cases}
    \end{align*}

Analogous to the forward case, we first define the intermediate hybrids $\Hyb^{1,\text{inv}}_j$ and $\Hyb^{2,\text{inv}}_j$ ($\Hyb^j_3$ is the same in both cases). We then show that the distinguishing advantage between $\Hyb_j$ and $\Hyb_{j+1}$ remains the same in the inverse case. Specifically, for all $0 \le j < q_C$, we establish the following bounds:
    \begin{align*}
       \text{\expref{Lemma}{lma:fx-Hj-Hj1-inv}:}  &\quad | \Pr[\A(\Hyb_j)=1] - \Pr[\A(\Hyb_j^{1,\text{inv}})=1]| \leq 4\sqrt{\frac{q_Q}{2^m(2^n-j)}} \\
        \text{\expref{Lemma}{lma:fx-Hj1-Hj2-inv}:} & \quad \Hyb_j^{1,\text{inv}} = \Hyb_j^{2,\text{inv}}\\
        \text{\expref{Lemma}{lma:fx-Hj2-Hj3-inv}:} & \quad \Hyb_j^{2,\text{inv}} = \Hyb_j^{3} \\
        \text{\expref{Lemma}{lma:fx-Hj3-Hj+1}:} & \quad | \Pr[\A(\Hyb_j^{3})=1] - \Pr[\A(\Hyb_{j+1})=1]| \leq 2 \cdot q_{Q,j+1} \sqrt{\frac{2(j+1)}{2^{m+n}}},
    \end{align*}
     where $q_{Q,j+1}$ is the expected number of queries $\A$ makes to $P$ in the $(j+1)^{\text{st}}$ stage, i.e., the stage between the $(j+1)^{\text{st}}$ and $(j+2)^{\text{nd}}$ classical queries. By \expref{Equation}{eqn:bound}, we obtain the same bound as in the forward case. Hence, inverse queries can be handled in a fully symmetric manner, yielding the same distinguishing advantage.

\medskip \noindent \textbf{Experiment $\Hyb_j^{1,\text{inv}}$.} \vspace{-5pt}
    \begin{enumerate}
        \item Run $\A$, answering its classical queries using $R$ and its quantum queries using $E$, until $\A$ makes its $(j+1)^{\text{st}}$ classical query $y_{j+1}$, which we assume for concreteness to be in the inverse direction. Let $T_j$ be the list of the classical queries so far.
        \item Define set $S= \bool^n \backslash \{y_1\oplus k_2,\cdots,y_j\oplus k_2\}$. Choose uniform $s \in S$, and define $(E^{(1)})^{-1}$ as
        \begin{align*}
            (E_{k^*}^{(1)})^{-1}(y)=
    	\begin{cases}
    		E^{-1}_{k^*}(y) &\text{if } k^*\neq k_0\\
    		\left(E^{-1}_{k_0} \circ \swap{y_{j+1}\oplus k_2}{s} \right)(y) &\text{if } k^*=k_0 \text{.}
    	\end{cases}
        \end{align*}
    Then 
     \begin{align*}
            (E_{k^*}^{(1)})(x)=
    	\begin{cases}
    		E_{k^*}(x) &\text{if } k^*\neq k_0\\
    		\left(E_{k_0} \circ \swap{E^{-1}_{k_0}(y_{j+1}\oplus k_2)}{E^{-1}_{k_0}(s)} \right)(x) &\text{if } k^*=k_0 \text{.}
    	\end{cases}
        \end{align*}

        Continue running $\A$, answering its remaining classical queries (including the $(j+1)^{\text{st}}$) using $\fx_K[(E^{(1)})^{T_j,K}]$, and its quantum queries using $(E^{(1)})^{T_j,K}$.
    \end{enumerate}

    \medskip\noindent \textbf{Experiment $\Hyb_j^{2,\text{inv}}$.} \vspace{-5pt}
    \begin{enumerate}
        \item Run $\A$, answering its classical queries using $R$ and its quantum queries using $E$, until $\A$ makes its $(j+1)^{\text{st}}$ classical query $y_{j+1}$. Let $T_j$ be the list of classical queries so far.
        \item Define $E^{(1)}$ as in $\Hyb_j^{1,\text{inv}}$ and answer the  $(j+1)^{\text{st}}$ classical query using $\fx_K[(E^{(1)})^{T_j,K}]$. Denote the response as $x_{j+1}$, i.e., 
        \begin{align*}
            x_{j+1} = (\fx_K[(E^{(1)})^j])^{-1} (y_{j+1}).
        \end{align*}
        Then the challenger constructs $E^{j+1}$ adaptively as in \expref{Equation}{eqn:adaptswap}.
        Continue running $\A$, answering its remaining classical queries using $\fx_K[E^{T_{j+1},K}]$, and its quantum queries using $E^{T_{j+1},K}$.
    \end{enumerate}

Next, we prove \expref{Lemma}{lma:fx-Hj-Hj1-inv}, \expref{Lemma}{lma:fx-Hj1-Hj2-inv} and \expref{Lemma}{lma:fx-Hj2-Hj3-inv}. The proofs are proceeded in an entirely analogous manner to those of \expref{Lemma}{lma:fx-Hj-Hj1}, \expref{Lemma}{lma:fx-Hj1-Hj2} and \expref{Lemma}{lma:fx-Hj2-Hj3}.

\begin{lemma}\label{lma:fx-Hj-Hj1-inv}
For $j=0, \ldots, \formerqC$, 
\begin{align*}
    \left|\Pr[\A(\Hyb_j) = 1] - \Pr[\A(\Hyb^{1,\text{inv}}_j)=1]\right| \leq 4\sqrt{\frac{q_Q}{2^m(2^n-j)}}\,.
\end{align*}
\end{lemma}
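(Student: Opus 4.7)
My plan is to mirror the forward-case proof of Lemma~\ref{lma:fx-Hj-Hj1}, constructing a distinguisher $\D$ for the resampling experiment of Lemma~\ref{lem:resampling-ic}. The main subtlety is that the $(j+1)^{\text{st}}$ classical query is now an inverse query, so the modification defining $\Hyb_j^{1,\text{inv}}$ acts on the output side of $E_{k_0}$ (namely $E^{(1)}_{k_0} = \swap{y_{j+1}\oplus k_2}{s} \circ E_{k_0}$), whereas Lemma~\ref{lem:resampling-ic} is phrased in terms of input-side swaps. I would handle this by applying the resampling lemma to the inverted ideal cipher $E^{-1}$: since $E$ is a uniform ideal cipher iff $E^{-1}$ is, and two-way quantum access is symmetric under this operation, an input-side swap on $E^{-1}_{k_0}$ is exactly an output-side swap on $E_{k_0}$.

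With that setup, I would construct $\D$ as follows. In Phase~1, $\D$ samples $R \leftarrow \mathcal{P}(n)$ and simulates $\A$ using its $E$/$E^{-1}$ oracle (for quantum queries) and $R$ (for classical queries), halting after $\A$ submits its $(j+1)^{\text{st}}$ inverse query $y_{j+1}$. Let $T_j = \{(x_1,y_1),\ldots,(x_j,y_j)\}$ be the transcript so far. Then $\D$ outputs the sampler that draws $a_0 \in \{0,1\}^m$ and $z_0 \in \{0,1\}^n$ uniformly and samples $z_1$ uniformly from $\{0,1\}^n \setminus \{z_0 \oplus y_1 \oplus y_{j+1},\ldots, z_0 \oplus y_j \oplus y_{j+1}\}$. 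In Phase~2, upon receiving $(k_0, s_0, s_1) \sim D$, $\D$ sets $k_2 := s_0 \oplus y_{j+1}$, samples $k_1 \in \{0,1\}^n$ uniformly, and continues running $\A$ with classical queries answered via $\fx_K[(E^{(b)})^{T_j,K}]$ and quantum queries via $(E^{(b)})^{T_j,K}$, outputting $\A$'s guess.

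The main verification is that $\D$'s simulation is identical to $\Hyb_j$ when $b=0$ and to $\Hyb_j^{1,\text{inv}}$ when $b=1$. The $b=0$ case is immediate since $E^{(0)} = E$ and the derived $K$ is uniform. For $b=1$, I would check that (i) $k_2 = s_0 \oplus y_{j+1}$ is uniform because $s_0$ is uniform and independent of $y_{j+1}$; (ii) the output-side swap at $(s_0, s_1)$ matches the hybrid's swap at $(y_{j+1} \oplus k_2, s)$ under the identification $s = s_1$; and (iii) after substituting $k_2 = s_0 \oplus y_{j+1}$, the exclusion $z_1 \neq z_0 \oplus y_i \oplus y_{j+1}$ becomes exactly $s_1 \neq y_i \oplus k_2$, so $s_1$ is uniform on $\{0,1\}^n \setminus \{y_1 \oplus k_2, \ldots, y_j \oplus k_2\}$ as required. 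Finally, I would compute $\varepsilon = 1/(2^{m+n}(2^n - j))$ and invoke Lemma~\ref{lem:resampling-ic} to obtain the stated bound $4\sqrt{q_Q/(2^m(2^n - j))}$. I expect the hardest part to be the careful matching of distributions in step (iii), which is the inverse-direction analog of the identity $z_0 \oplus x_i \oplus x_{j+1} = x_i \oplus k_1$ used implicitly in the forward case.
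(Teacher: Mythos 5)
Your proposal is correct and follows essentially the same reduction as the paper's proof: the same sampler for $D$ (uniform $a_0,z_0$ with $z_1$ drawn from $\bool^n\setminus\{z_0\oplus y_i\oplus y_{j+1}\}_{i\le j}$), the same key derivation $k_2 = s_0\oplus y_{j+1}$ with uniform $k_1$, and the same $\epsilon = 1/(2^{m+n}(2^n-j))$ plugged into \cref{lem:resampling-ic}. Your explicit justification that the output-side swap on $E_{k_0}$ is handled by viewing it as an input-side swap on the (equally uniform, symmetrically accessible) inverse cipher $E^{-1}$ is a point the paper leaves implicit, and it is the correct way to close that gap.
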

\begin{proof}
In this lemma, we bound the distinguishability of $\Hyb_j$ and $\Hyb_j^{1,\text{inv}}$.
\begin{alignat*}{5}
\Hyb_j: \;\; &\ket{E}, R, \ket{E}, \cdots, R, \ket{E}, 
~&& \fx_K[~~~E^j~~], \ket{E^j}~~~~~~\;, 
\fx_K \left[E^j\right], \ket{E^j}, \cdots\\
\Hyb_j^{1,\text{inv}}: \;\; &\ket{E}, R, \ket{E}, \cdots, R, \ket{E},  ~&&\fx_K[(E^{(1)})^j], \ket{(E^{(1)})^j} \;, 
\fx_K[(E^{(1)})^j], (E^{(1)})^j, \cdots.
\end{alignat*}
Let $\A$ be a distinguisher between $\Hyb_j$ and $\Hyb_j^{1,\text{inv}}$. 
We construct from $\A$ a distinguisher $\D$ for the resampling experiment of \expref{Lemma}{lem:resampling-ic}.
$\D$ does:	
	\begin{description}
		\item[Phase 1:]
		$\D$ is given quantum access to an ideal cipher~$E$.
		It samples a uniform $R\from \algo P_n$ and then runs $\A$, answering its quantum queries with $E$ and its classical queries with $R$ (in the appropriate directions), until $\A$ submits its \mbox{$(j+1)^{\text{st}}$} classical query~$y_{j+1}$.
		At that point, $\D$ has a list $T_j=\{(x_1, y_1), \cdots, (x_j, y_j)\}$ of the queries/answers $\A$ has made to its classical oracle thus far. Next, $\D$ constructs a distribution $D$ on $\bool^{m+2n}$ and it's sampling algorithm $\Pi$. To sample a tuple $(a_0,z_0,z_1)\leftarrow D$, $\Pi$ does the following:
  \begin{description}
      \item[1]:  Sample uniform $a_0\in \bool^m$ and $z_0 \in \bool^n$.
      \item[2]: Construct $S=\bool^n \backslash \{z_0\oplus y_1\oplus y_{j+1},\cdots, z_0 \oplus y_j \oplus y_{j+1}\}=\bool^n \backslash S^{\bot}.$
      \item[3]: Sample uniform $z_1\in S$, and output $(a_0,z_0,z_1)$.
  \end{description}
    \item[Phase 2:]  
    $\D$ is given $(k_0,s_0,s_1)\leftarrow D$ 
    and quantum oracle access to a cipher~$E^{(b)}$.
    Then $\D$  
    sets $k_2=y_{j+1}\oplus s_0$, uniform $k_1 \in \bool^n$ and $k=(k_0,k_1,k_2)$. 
    It then continues running $\A$, answering its remaining classical queries (including the $(j+1)^{\text{st}}$) using $\fx_K[(E^{(b)})^{T_j,k}]$, and its remaining quantum queries using $(E^{(b)})^{T_j,k}$.	
    $\D$ outputs whatever $\A$ does. 
    \end{description}

 Note that $s_1 \notin S^{\bot}$ where $ S^{\bot}=\{y_1\oplus k_2, \cdots y_j \oplus k_2\}$, by algorithm $\Pi$. In phase~1, distinguisher $\D$ perfectly simulates experiments $\Hyb_j$ and $\Hyb_j^{1,\text{inv}}$ for $\A$ until the point where $\A$ makes its $(j+1)^{\text{st}}$ classical query.
If $b=0$, $\D$ gets access to $E^{(0)} = E$ in phase~2.
Since $\D$ answers all quantum queries using $(E^{(0)})^{T_j,k}$ and all classical queries using $\fx_K[(E^{(0)})^{T_j,k}]$, we see that $\D$ perfectly simulates $\Hyb_j$ for~$\A$ in that case.
If, on the other hand, $b=1$ in phase~2, then $\D$ gets access to $E^{(1)}$, where

        \begin{align*}
            (E_{k^*}^{(1)})^{-1}(y)=
    	\begin{cases}
    		E^{-1}_{k^*}(x) &\text{if } k^*\neq k_0\\
    		\left(E^{-1}_{k_0} \circ \swap{s_0}{s_1} \right)(y) &\text{if } k^*=k_0 \text{.}
    	\end{cases}
        \end{align*}

Since $k_2 \coloneqq s_0 \oplus y_{j+1}$, it holds that 

\begin{align*}
            (E_{k^*}^{(1)})^{-1}(y)=
    	\begin{cases}
    		E^{-1}_{k^*}(x) &\text{if } k^*\neq k_0\\
    		\left(E^{-1}_{k_0} \circ \swap{k_2\oplus y_{j+1}}{s_1} \right)(y) &\text{if } k^*=k_0 \text{.}
    	\end{cases}
    \end{align*}
Moreover, the fact that $s_0$ (and hence $s_0 \oplus y_{j+1}$), $k_0$ and $k_1$ are uniform implies that $\D$ perfectly simulates $\Hyb_j^{1,\text{inv}}$ for~$\A$. Applying \expref{Lemma}{lem:resampling-ic} thus gives
\begin{align*}
    \left|\Pr[\A(\Hyb_j) = 1] - \Pr[\A(\Hyb_j^{1,\text{inv}})=1]\right|
    \leq 4\sqrt{q_Q\cdot\varepsilon\cdot 2^n},
\end{align*}
where $\epsilon$ is the same as in \expref{Lemma}{lma:fx-Hj-Hj1}. Therefore, we have
\begin{align*}
    \left|\Pr[\A(\Hyb_j) = 1] - \Pr[\A(\Hyb_j^{1,\text{inv}})=1]\right|
    \leq 4\sqrt{\frac{q_Q}{2^m(2^n-j)}}\,.
\end{align*}
\qed
\end{proof}
\begin{lemma}\label{lma:fx-Hj1-Hj2-inv}
For $j=0, \ldots, \formerqC$, 
\begin{align*}
    \Hyb_j^{1,\text{inv}}=\Hyb_j^{2,\text{inv}}\,.
\end{align*}
\end{lemma}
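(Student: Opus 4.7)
The plan is to establish the pointwise equality of permutations $(E_{k_0}^{(1)})^{T_j,K} = E_{k_0}^{T_{j+1},K}$. Since both hybrids answer the $(j+1)^{\text{st}}$ inverse query using the same operator $\fx_K[(E_{k_0}^{(1)})^{T_j,K}]$ and differ only in which cascade is used for subsequent classical and quantum queries, this equality immediately implies $\Hyb_j^{1,\text{inv}}=\Hyb_j^{2,\text{inv}}$. Unlike the forward case, I would not try to prove an ``inverse analog'' of \expref{Proposition}{prop:E(1)=E-fx} at transcript points directly, since the output-side swap in $E^{(1)}_{k_0}$ can interact with arbitrary values $E_{k_0}(x_i\oplus k_1)$; instead, I would prove a global identity relating the two cascades.

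The key step is the inductive identity stating that the two cascades differ throughout by a single output-side swap:
\begin{equation*}
    (E_{k_0}^{(1)})^{T_r,K} = \swap{s}{y_{j+1}\oplus k_2} \circ E_{k_0}^{T_r,K} \qquad \text{for all } r \in \{0,\ldots,j\}.
\end{equation*}
The base case $r=0$ is immediate from $E^{(1)}_{k_0} = \swap{y_{j+1}\oplus k_2}{s} \circ E_{k_0}$. For the inductive step, I would unfold the recursive definitions of both cascades, substitute the inductive hypothesis into the argument $(E_{k_0}^{(1)})^{T_{r-1},K}(x_r\oplus k_1)$, and simplify using the conjugation identity $A \circ \swap{c}{d} \circ A^{-1} = \swap{A(c)}{A(d)}$ with $A = \swap{s}{y_{j+1}\oplus k_2}$. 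The step goes through precisely because $A$ fixes $y_r\oplus k_2$, which follows from $y_r \neq y_{j+1}$ (non-repeating classical queries) and $s \notin \{y_1\oplus k_2,\ldots,y_j\oplus k_2\}$ by the sampling in $\Hyb_j^{1,\text{inv}}$. This verification of commutation is the only place the sampling constraint on $s$ enters, and is the main technical obstacle.

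Applying this identity at $r = j$, inverting to obtain $((E^{(1)}_{k_0})^{T_j,K})^{-1} = (E_{k_0}^{T_j,K})^{-1} \circ \swap{s}{y_{j+1}\oplus k_2}$, and evaluating at $y_{j+1}\oplus k_2$ yields
\begin{equation*}
    x_{j+1}\oplus k_1 = ((E_{k_0}^{(1)})^{T_j,K})^{-1}(y_{j+1}\oplus k_2) = (E_{k_0}^{T_j,K})^{-1}(s),
\end{equation*}
and hence $E_{k_0}^{T_j,K}(x_{j+1}\oplus k_1) = s$. Substituting this into the adaptive definition of $E^{T_{j+1},K}_{k_0}$ gives
\begin{equation*}
    E_{k_0}^{T_{j+1},K} = \swap{s}{y_{j+1}\oplus k_2} \circ E_{k_0}^{T_j,K} = (E_{k_0}^{(1)})^{T_j,K},
\end{equation*}
completing the proof. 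The overall structure is fully parallel to \expref{Lemma}{lma:fx-Hj1-Hj2}, now exploiting output-side avoidance of transcript outputs rather than input-side avoidance of transcript inputs.
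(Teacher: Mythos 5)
Your proof is correct, but it takes a genuinely different route from the paper's. The paper mirrors its forward-direction argument: it writes $\bigl(E_{k_0}^{T_j,K}\bigr)^{-1}$ as a cascade of swaps $\swap{x_i\oplus k_1}{(E^{T_{i-1},K}_{k_0})^{-1}(y_i\oplus k_2)}$ composed with $E_{k_0}^{-1}$, invokes a pointwise statement (the inverse analogue of \expref{Proposition}{prop:E(1)=E-fx}, namely $((E^{(1)}_{k_0})^{T_{i-1},K})^{-1}(y_i\oplus k_2)=(E^{T_{i-1},K}_{k_0})^{-1}(y_i\oplus k_2)$, evaluated only at the transcript outputs) to replace the modified cipher by the unmodified one inside the swap labels, computes $x_{j+1}=(E^{T_j,K}_{k_0})^{-1}(s)\oplus k_1$, and then separately verifies $((E^{(1)})^j)^{-1}=(E^{j+1})^{-1}$ on all inputs. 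You instead prove the single global conjugation identity $(E^{(1)}_{k_0})^{T_r,K}=\swap{s}{y_{j+1}\oplus k_2}\circ E^{T_r,K}_{k_0}$ by induction on $r$, pushing the output-side swap through each transcript swap using the fact that $\swap{s}{y_{j+1}\oplus k_2}$ fixes every $y_r\oplus k_2$; this one identity then yields both $s=E^{T_j,K}_{k_0}(x_{j+1}\oplus k_1)$ and the oracle equality $(E^{(1)}_{k_0})^{T_j,K}=E^{T_{j+1},K}_{k_0}$ in a single step. Both arguments rest on exactly the same two facts (non-repetition of the $y_i$ and the sampling $s\notin\{y_1\oplus k_2,\dots,y_j\oplus k_2\}$), so neither is more general, but your global identity is arguably cleaner and avoids the slightly awkward citation of the forward-case proposition for an inverse-case claim; the paper's route has the virtue of being structurally identical to \expref{Lemma}{lma:fx-Hj1-Hj2}, which makes the claimed symmetry between the two directions more transparent.
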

\begin{proof}

We bound the distinguishability of $\Hyb_j^{1,\text{inv}}$ and $\Hyb_j^{2,\text{inv}}$.
\begin{alignat*}{5}
    \Hyb_j^{1,\text{inv}}: \;\; &\ket{E}, R, \ket{E}, \cdots, R, \ket{E},  ~&&\fx_K[(E^{(1)})^j], \ket{(E^{(1)})^j} \;, 
    \fx_K[(E^{(1)})^j], (E^{(1)})^j, \cdots \\
    \Hyb_j^{2,\text{inv}}: \;\; &\ket{E}, R, \ket{E}, \cdots, R, \ket{E},
    ~&&\fx_K[(E^{(1)})^j], \ket{E^{j+1}}~~~, 
    \fx_K[E^{j+1}], \ket{E^{j+1}}, \cdots.
\end{alignat*}

To prove $\Hyb_j^{1,\text{inv}}$ and $\Hyb_j^{2,\text{inv}}$ are indistinguishable, it suffices to show that the quantum oracles $(E^{(1)})^j$ and $E^{j+1}$ are identical. 

In both $\Hyb_j^{1,\text{inv}}$ and $\Hyb_j^{2,\text{inv}}$, the response to the $(j+1)^{\text{st}}$ classical query is:
\begin{align*}
    x_{j+1} &\stackrel{\rm def}{=} (\fx_K[(E^{(1)})^{T_j,k}])^{-1}(y_{j+1}) \\ 
    &= ((E^{(1)}_{k_0})^{T_j,k})^{-1}(y_{j+1} \oplus k_2) \oplus k_1\\
    &= \prod_{i=j}^1 \swap{x_i\oplus k_1}{((E_{k_0}^{(1)})^{T_{i-1},k})^{-1}(y_i\oplus k_2)}\circ (E^{(1)}_{k_0})^{-1}(y_{j+1}\oplus k_2)\oplus k_1\\
    &= \prod_{i=j}^1 \swap{x_i\oplus k_1}{(E^{T_{i-1},k}_{k_0})^{-1}(y_i\oplus k_2)}\circ (E^{-1}_{k_0}\circ \swap{y_{j+1}\oplus k_2}{s}(y_{j+1} \oplus k_2)) \oplus k_1\\
    &= \prod_{i=j}^1 \swap{x_i\oplus k_1}{(E^{T_{i-1},k}_{k_0})^{-1}(y_i\oplus k_2)}\circ E^{-1}_{k_0}(s) \oplus k_1\\
    &= (E_{k_0}^{T_{j},k})^{-1}(s) \oplus k_1,
\end{align*}
where the fourth equality comes from \expref{Proposition}{prop:E(1)=E-fx}.
By rearranging $s =  E_{k_0}^{T_{j},k}(x_{j+1} \oplus k_1),$ it follows that for any $y \in \{0,1\}^{n}$,
\begin{align*}
    ((E^{(1)})^j)^{-1}(y)&=\prod_{i=j}^1 \swap{x_i\oplus k_1}{((E^{(1)}_{k_0})^{j-1})^{-1}(y_i\oplus k_2)}\circ E^{-1}_{k_0}\circ \swap{y_{j+1}\oplus k_2}{s}(y)\\
    &=\prod_{i=j}^1 \swap{x_i\oplus k_1}{(E^{j-1}_{k_0})^{-1}(y_i\oplus k_2)}\circ E^{-1}_{k_0}\circ \swap{y_{j+1}\oplus k_2}{s}(y)\\
    &= (E^j)^{-1}\circ \swap{y_{j+1}\oplus k_2}{E_{k_0}^{j}(x_{j+1} \oplus k_1)}(y) \\
    &= \swap{x_{j+1}\oplus k_1}{(E^j)^{-1}(y_{j+1}\oplus k_2)}\circ (E^{j})^{-1}(y) \\
    &= (E^{j+1})^{-1}(y)
\end{align*}
where the second equality comes from \expref{Proposition}{prop:E(1)=E-fx}. This concludes the proof that $(E^{(1)})^j \equiv E^{j+1}$. Therefore, hybrids $\Hyb_j^{1,\text{inv}}$ and $\Hyb_j^{2,\text{inv}}$ are perfectly indistinguishable:
\begin{align*}
    \Hyb_j^{1,\text{inv}} = \Hyb_j^{2,\text{inv}}\,.
\end{align*}
\qed
\end{proof}
\begin{lemma}\label{lma:fx-Hj2-Hj3-inv}
For $j=0, \ldots, \formerqC$, 
$\Hyb_j^{2,\text{inv}} = \Hyb_j^{3}.$
\end{lemma}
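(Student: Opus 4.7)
The plan is to mirror the proof of \expref{Lemma}{lma:fx-Hj2-Hj3} (the forward-query case), exchanging the roles of inputs and outputs throughout. As in that case, the two hybrids $\Hyb_j^{2,\text{inv}}$ and $\Hyb_j^{3}$ coincide on all steps except the response to the $(j+1)^{\text{st}}$ classical query, so it suffices to show that the conditional distribution of that response is identical, and that the construction of $E^{j+1}$ thereafter proceeds identically in both hybrids.

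First, I would recall from the proof of \expref{Lemma}{lma:fx-Hj1-Hj2-inv} that in $\Hyb_j^{2,\text{inv}}$ the reply to the inverse query $y_{j+1}$ is
\begin{equation*}
    x_{j+1} \;=\; (E_{k_0}^{T_j,K})^{-1}(s) \,\oplus\, k_1,
\end{equation*}
where $s$ is sampled uniformly from $\{0,1\}^n \setminus \{y_1 \oplus k_2, \ldots, y_j \oplus k_2\}$. In $\Hyb_j^{3}$, by contrast, the reply is $x_{j+1} = R^{-1}(y_{j+1})$.

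The key step is to verify both distributions are uniform over the same set, namely $\{0,1\}^n \setminus \{x_1,\ldots,x_j\}$. By \expref{Proposition}{prop:E-T_jk-y}, applied to each previous query, we have $(E_{k_0}^{T_j,K})^{-1}(y_i \oplus k_2) = x_i \oplus k_1$ for all $i \in [1,j]$. Since $E_{k_0}^{T_j,K}$ is a permutation, the restriction of $(E_{k_0}^{T_j,K})^{-1}$ to the set $\{0,1\}^n \setminus \{y_1 \oplus k_2, \ldots, y_j \oplus k_2\}$ is a bijection onto $\{0,1\}^n \setminus \{x_1 \oplus k_1, \ldots, x_j \oplus k_1\}$. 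Because $s$ is uniform on its domain, $(E_{k_0}^{T_j,K})^{-1}(s)$ is uniform on that image, and XORing with the uniform $k_1$ yields an $x_{j+1}$ uniform on $\{0,1\}^n \setminus \{x_1,\ldots,x_j\}$. In $\Hyb_j^{3}$, since $R$ is a random permutation and classical queries are never repeated, $R^{-1}(y_{j+1})$ is likewise uniform on $\{0,1\}^n \setminus \{x_1,\ldots,x_j\}$ conditioned on the previous transcript. The two conditional distributions thus coincide.

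Finally, once $x_{j+1}$ is fixed (and identically distributed in the two worlds), the list $T_{j+1}$ is the same, and the subsequent quantum oracle $E^{T_{j+1},K}$ is constructed adaptively from $E$ and $T_{j+1}$ in exactly the same manner in both $\Hyb_j^{2,\text{inv}}$ and $\Hyb_j^{3}$, as per \expref{Equation}{eqn:adaptswap}. Hence the joint distributions of all views are identical, giving $\Hyb_j^{2,\text{inv}} = \Hyb_j^{3}$. I expect no real obstacle here: the only point requiring care is the bijective correspondence between the excluded sets on the input and output sides, which is immediate from \expref{Proposition}{prop:E-T_jk-y} and the permutation property of $E_{k_0}^{T_j,K}$.
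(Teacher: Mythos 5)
Your proof is correct and follows exactly the route the paper takes: the paper's own argument for this lemma is a two-line appeal to ``the same strategy as in Lemma~\ref{lma:fx-Hj2-Hj3}'' with inputs and outputs exchanged, and your write-up simply fills in the details of that mirrored argument (the bijection between the excluded sets via \expref{Proposition}{prop:E-T_jk-y}, uniformity of $x_{j+1}$ over $\{0,1\}^n\setminus\{x_1,\ldots,x_j\}$ in both hybrids, and the identical subsequent construction of $E^{j+1}$). No gaps.
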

\begin{proof}
\begin{alignat*}{5}
    \Hyb_j^{2,\text{inv}}: \;\; &\ket{E}, R, \ket{E}, \cdots, R, \ket{E},
    ~&&\fx_K[(E^{(1)})^j], \ket{E^{j+1}}~~~, 
    \fx_K[E^{j+1}], \ket{E^{j+1}}, \cdots\\
    \Hyb_j^{3}: \;\; &\ket{E}, R, \ket{E}, \cdots, R, \ket{E}, ~&&~~~~~~~R~~~~~~~,\ket{E^{j+1}}~~~,
    \fx_K[E^{j+1}], \ket{E^{j+1}}, \cdots.
\end{alignat*}
We observe that $\Hyb_j^{2,\text{inv}}$ and $\Hyb_j^{3}$ differ only in the response to the $(j+1)^{\text{st}}$ classical query. In $\Hyb_j^{2,\text{inv}}$, this query is answered using
\begin{align*}
    x_{j+1} \;=\; (\fx_K\!\big[(E^{(1)})^{T_j,K}\big])^{-1}(y_{j+1}) 
\;=\; (E_{k_0}^{T_j,K})^{-1}(s) \oplus k_1,
\end{align*}
as shown in \expref{Lemma}{lma:fx-Hj1-Hj2}. In contrast, in $\Hyb_j^{3}$ the same query is answered with 
\begin{align*}
    x_{j+1} \;=\; R^{-1}(y_{j+1}).
\end{align*}
Using \expref{Proposition}{prop:E-T_jk-y} and following the same strategy as in \expref{Lemma}{lma:fx-Hj2-Hj3}, we conclude that the two hybrids induce identical distributions on $x_{j+1}$. Hence, $\Hyb_j^{2,\text{inv}} = \Hyb_j^{3}.$

\qed
\end{proof}

\section{Detailed Attacks on \lrw and \xext} \label{app:attacks}
We provide a detailed description of the attacks on standardized \lrw and \xext, as outlined in \expref{Section}{sec:attacks}.

\subsection{Classical Distinguishing Attacks}

\subsubsection{Birthday Collision Attack}
As mentioned in \expref{Section}{sec:related_work}, $q = O(2^{n/2})$ queries are required to distinguish \lrw~(\expref{Definition}{def:lrw}) from an ideal tweakable block cipher. The attack is outlined below, given access to a classical oracle:
\begin{enumerate}
    \item Randomly choose $\tau \in \{0,1\}^n$.
    \item Query $\widetilde{O}$ with $(m, \tau)$ for $2^{n/2}$ distinct values of $m$, obtaining pairs $(m,c)$, where $c$ is the response, and store them. 
    \item Randomly choose $\tau' \in \{0,1\}^n$.
    \item Query $\widetilde{O}$ with $(m',\tau')$ for $2^{n/2}$ distinct values of $m'$, obtaining pairs $(m',c')$, where $c'$ is the response, and store them. 
    \item Find a pair $(c,c')$ such that $m \oplus c = m' \oplus c'$.
    \item Based on the construction of $h$, use the identified pair $(c,c')$ (or multiple such pairs) to distinguish between an \textsf{ideal tweakable cipher} and \lrw.
\end{enumerate}

\medskip\noindent \textbf{Example.} Consider the \lrw construction as established in~\cite{P1619D4}, where $h_{k'}(\tau) = k' \times_* \tau$. For this construction, the attack proceeds as follows: 
\begin{enumerate}
    \item Perform steps 1–5 twice to obtain two pairs $(c_1,c_1')$ and $(c_2,c_2')$ corresponding to two pairs of tweaks $(\tau_1,\tau_1')$ and $(\tau_2,\tau_2')$.
    \item Compute $k_1' = (c_1\oplus c_1')\times_*(\tau_1 \oplus \tau_1')^{-1}$ and similarly $k_2'= (c_2\oplus c_2')\times_*(\tau_2 \oplus \tau_2')^{-1}$.  
    \item If $k_1'=k_2'$, output \lrw. Otherwise output \textsf{ideal tweakable cipher}.
\end{enumerate}

\medskip\noindent \textbf{Analysis of the Algorithm.} If $\widetilde{O}$ is \lrw, i.e. $\widetilde{O} = \lrw_{k,k'}$, we define $H_{k'}(m, \tau) := m \oplus h_{k'}(\tau)$. Since $h$ is a universal hash function, for all $m,\tau$ with a uniformly selected $k'$, $H_{k'}(m, \tau)$ is uniformly distributed. Now, recall that a birthday attack can be applied to $H_{k'}$. Specifically, with $2^{n/2}$ randomly selected inputs, there exists (with overwhelming probability) a pair $(m, \tau)$ and $(m',\tau')$ such that 
\begin{align*}
    H_{k'}(m, \tau) = H_{k'}(m', \tau') \quad \Rightarrow \quad m \oplus h_{k'}(\tau) = m' \oplus h_{k'}(\tau').
\end{align*}
This implies:
\begin{align*}
    c \oplus c' =  m \oplus m' = h_{k'}(\tau) \oplus h_{k'}(\tau').
\end{align*}
Substituting $h_{k'}(\tau) = k' \times_* \tau$, we get:
\begin{align*}
    c \oplus c' = k' \times_* \tau \oplus k' \times_* \tau'.
\end{align*}
Using this, we compute the hash key:
\begin{align*}
    k' = (c \oplus c') \times_* \tau \oplus \tau')^{-1}.
\end{align*} The key $k'$ will always be the same for all such collisions.

If $\widetilde{O}$ is an \textsf{ideal tweakable cipher}, then similarly, by the birthday attack, among $2^{n/2}$ randomly selected inputs, there will exist a collision where:
\begin{align*}
    c=\widetilde{P}(m, \tau) = \widetilde{P}(m',\tau')=c'
\end{align*}
i.e. a collision on the independent random permutations $f_0(\tau) = f_1(\tau')$. However, these collisions do not reveal any structure of $k'$; instead, the two pairs of ciphertexts and tweaks provide a random guess of $k'$. Thus, with overwhelming probability, $k_1' \neq k_2'$.

In conclusion, this classical attack requires $O(2^{n/2})$ queries. This attack bound has been mentioned in several related works~\cite{liskov2002tweakable}.  

\medskip\noindent \textbf{Remark.} If $\widetilde{O}$ is \xext, i.e. $h_{k'}(\tau) = h_{k'}(i,j) = \alpha^j \otimes E_{k'}(2^i)$, we also follow step 1-5 twice for two pairs of tweaks $(i,j_1,i,j'_1)$ and $(i,j_2,i,j'_2)$, obtaining the corresponding ciphertext pairs $(c_1,c'_1)$ and $(c_2,c'_2)$. Using these, we compute $(c_1\oplus c'_1)\times_* (\alpha^{j_1} \oplus \alpha^{j'_1})^{-1}$ and $(c_2\oplus c'_2)\times_*(\alpha^{j_2} \oplus \alpha^{j'_2})^{-1}$ and check whether they match. It they do, this value is the candidate for $E_{k'}(2^i)$. Similar arguements work for \xext.


\subsubsection{Even-Mansour Attack}

Here, we use the fact that \lrw with one tweak is an \emr to construct the distinguishing attack:  
\begin{enumerate}
    \item Randomly choose two tweaks $t,\tau'\in \{0,1\}^n$, and define two oracles 
    \begin{align*}
        P(\cdot) := \widetilde{O}(\cdot, \tau), \quad R(\cdot) := \widetilde{O}(\cdot, \tau').
    \end{align*}
    \item Define $f(x) = P(x) \oplus R(x)$. Query $f(\cdot)$ $2^{n/2}$ times and find a collision, i.e., $f(x)=f(x')$.
    \item Compute $\widetilde{k} = x \oplus x'$. 
    \item Randomly select $x^* \in \{0,1\}^n$. Compute $R(x^* \oplus \widetilde{k}) \oplus \widetilde{k}$ and $P(x^*)$.
    \item If the two results are equal, output \lrw; otherwise, output \textsf{ideal tweakable cipher}.
\end{enumerate}

\medskip\noindent \textbf{Analysis of the Algorithm.} If $\widetilde{O}$ is \lrw, i.e. $\widetilde{O} = \lrw_{k,k'}$, then 
    \begin{align*}
        P(x) &= \widetilde{O}(x,\tau) = E_k(x \oplus h_{k'}(\tau)) \oplus h_{k'}(\tau)\\
        R(x) &= \widetilde{O}(x,\widetilde{\tau})= E_k(x \oplus h_{k'}(\widetilde{\tau})) \oplus h_{k'}(\widetilde{\tau}).
    \end{align*}
We can rewrite $P(x)$ as follows:
\begin{align*}
    P(x) &= E_k(x \oplus \underline{h_{k'}(\tau) \oplus h_{k'}(\widetilde{\tau})} \oplus h_{k'}(\widetilde{\tau}))\oplus \underline{h_{k'}(\tau) \oplus h_{k'}(\widetilde{\tau})} \oplus h_{k'}(\widetilde{\tau})\\
    &= E_k(x \oplus \widetilde{k} \oplus h_{k'}(\widetilde{\tau}))\oplus \widetilde{k} \oplus h_{k'}(\widetilde{\tau})\\
    &= R(x \oplus \widetilde{k}) \oplus \widetilde{k},
\end{align*}
This shows that $P$ is an \emr  of $R$. Next, consider 
\begin{align*}
    f(x) &= P(x) \oplus R(x)\\
    &= R(x \oplus \widetilde{k}) \oplus \widetilde{k} \oplus R(x)\\
    &=  R(x \oplus \widetilde{k}) \oplus \widetilde{k} \oplus R(x\oplus \widetilde{k} 
 \oplus \widetilde{k})\\
    &=R(x \oplus \widetilde{k}) \oplus P(x \oplus \widetilde{k})\\
    &= f(x \oplus \widetilde{k}).
\end{align*}
This periodicity allows us to use $2^{n/2}$ queries to find a collision on $f(x) = f(x')$. The \emr key is then computed as $\widetilde{k} = x \oplus x'$. For any $x^* \in \{0,1\}^n$, the \emr relation between $P$ and $R$ would hold.

If $\widetilde{O}$ is an \textsf{ideal tweakable cipher}, then $P(x)$ and $R(x)$ are independent random permutations of input $x$. In this case, $f(x) = P(x) \oplus R(x)$ is the XOR of random permutations. While collisions can still be found using $2^{n/2}$ queries, the computed $\widetilde{k}$ is meaningless. For a random $x^*$, $P(x^*)$ and $R(x^* \oplus \widetilde{k}) \oplus \widetilde{k}$ are independently and uniformly distributed and so will not be equal for overwhelming probability.

We note that this attack works similarly for \xext.

\subsection{Classical Complete Key-Search Attack}
\begin{enumerate}
    \item Randomly select $\tau \in \{0,1\}^n$.
    \item  Use the classical distinguishing attack described above to recover $k'$ (assuming an appropriate hash function construction). 
    \item Construct an oracle $E_k(\cdot)$ such that, for an input $x$, it replies $$E_k(x) = \lrw_{k,k'}(x \oplus h_{k'}(\tau),\tau) \oplus h_{k'}(\tau).$$ 
    \item Perform an exhaustive search to get key $k$.
\end{enumerate}
 
\medskip\noindent \textbf{Analysis of the Algorithm.} If the tweakable block cipher under investigation is standardized \lrw, then in step 2, $k'$ can be computed directly. We then perform an exhaustive search to recover $k$. Specifically, we first query \lrw using a constant number of inputs, such as $(x \oplus h_{k'}(\tau),\tau)$. Then, for each candidate $k$, we query $E_k(\cdot)$ with the same $x$ values and verify whether the equation from step 3 holds. This process requires a total of $O(2^{n/2})$ online queries and $O(2^m)$ offline queries for complete key recovery. 

If the tweakable block cipher under investigation is $\xext$, then in step 2, using the ciphertext pair $(c,c')$ with corresponding tweaks $(i,j),(i,j')$, we can compute $E_{k'}(2^i) = (c\oplus c')\times_*(\alpha^{j} \oplus \alpha^{j'})^{-1}$. Next, we query $E(\cdot,2^i)$ $2^m$ times with different candidate values for $k'$. With high probability $(1-\frac{1}{2^n})$, we identify the correct tweak key $k'^*$ by verifying that $E(k'^*,2^i)= (c\oplus c')\times_*(\alpha^{j} \oplus \alpha^{j'})^{-1}$. Once $k'$ is obtained, we follow a similar process as for \lrw to recover $k$.

Thus, the total number of queries required for complete key recovery remains $O(2^{n/2})$ online queries (plus  $O(2^m)$ offline queries).

\subsection{Quantum Complete Key Recovery Attack (shorter cipher key)} \label{sec:lrw-Q1-attack}

The best-known attack for quantum complete key search is the Offline-Simon attack~\cite{bonnetain2019quantum}. Let $u$ be an integer such that $0 \leq u \leq n$. Define the function $F: \{0,1\}^{m+(n-u)} \times \{0,1\}^u \rightarrow \{0,1\}^n$ as follows:
\begin{align*}
    F(i\|j,x)=E_i(x\|j) \ (i \in \{0,1\}^{m}, j \in \{0,1\}^{n-u}).
\end{align*}
Additionally, define the function $g: \{0,1\}^u \rightarrow \{0,1\}^n$ as 
\begin{align*}
    g(x) = \lrw_{k,k'}(x\|0^{n-u},\tau).
\end{align*} 

Note that $F(k\|h^{(2)}_{k'}(\tau),x)\oplus g(x)$ has the period $h_{k'}^{(1)}(\tau)$ since $$F(k\|h^{(2)}_{k'}(\tau),x)\oplus g(x) = E_k(x \|h^{(2)}_{k'}(\tau))\oplus E_k\left((x\oplus h^{(1)}_{k'}(\tau))\|h_{k'}^{(2)}(\tau)\right)\oplus h_{k'(\tau)}.$$ The procedure for recovering $k$ and $k'$ is as follows:  
\begin{enumerate}
    \item Randomly choose $\tau \in \{0,1\}^n.$
    \item Run \textsf{Alg-ExpQ1} on the defined $F$ and $g$ to recover $k$ and $h_{k'}^{(2)}(\tau)$.
    \item Use \textsf{SimQ1} on $f_{k\|h_{k'}^{(2)}(\tau)}$ to recover $h_{k'}^{(1)}(\tau)$.
    \item (Verification)  Check if $E(k,k')(0^n,\tau) \oplus E_{k}(h_{k'}(\tau))=h_{k'}(\tau)$. If the check fails, repeat steps 2–4.
    \item Based on construction of hash function $h$, compute $k'$. 
\end{enumerate}

\medskip\noindent \textbf{Analysis of the Algorithm.} 
Similar to the classical case, if the tweakable block cipher under investigation is standardized \lrw, $k'$ can be easily computed from $h_{k'}(\tau)$ using $k' =  h_{k'}(\tau) \times_{*} t^{-1}$. So $\widetilde{O}(2^{(n+m)/3})$ queries (with $m \leq 2n$) are required for a complete key search, same as the query complexity of the attack on the \fx construction~\cite{bonnetain2019quantum}.

If the tweakable block cipher under investigation is $\xext$, then $h_{k'}(\tau):= h_{k'}(i,j) = E_{k'}(2^i)\times \alpha^j$ is recovered from step $2$ and $3$. As in the classical case, we can query $E(\cdot,2^i)$ $2^m$ times with different $k'$ to identify the true key that satisfies the matching condition. Thus, $D=O(2^u)$ classical queries are made to $\lrw_{k,k'}(\cdot,\cdot)$ and $T=O(n^32^{(m+n-u)/2}+2^{m/2})=O(n^32^{(m+n-u)/2})$ offline quantum queries are performed. When $D \leq 2n$, the balance is achieved at at $T=D=\widetilde{O}(2^{(n+m)/3})$.

\subsection{Quantum Complete Key Recovery Attack (longer cipher key)} \label{sec:Grover+Kuwakado-Morii}

Another possible quantum attack involves combining Grover's algorithm with the Kuwakado-Morii attack~\cite{kuwakado2012security} by treating the Kuwakado-Morii method as a predicate and using Grover's search to find the cipher key $k$. Detailedly,
\begin{enumerate}
    \item Fix a key $k \in \{0,1\}^m$.
    \item Select $\tau \in \{0,1\}^n$ at random.
    \item Choose $x_i$ ($i=1,\ldots,s$) from $\{0,1\}^n$ at random, ensuring $x_i \neq x_j$ for $i \neq j$ and $\overline{x_i} \notin \{x_i \mid i=1,2,\ldots,s\}$ for any $i$.
    \item For $i = 1,2,\ldots,s$, compute $$d_i = \lrw_{k,k'}(\tau,x_i) \oplus \lrw_{k,k'}(\tau,\overline{x_i})$$ by querying $\lrw_{k,k'}(\cdot, \cdot)$ classically. Then define a set $$D = \{d_i| i = 1,2, \ldots, s\}.$$ Assume all elements in $D$ are distinct for simplicity.
    \item Create a table $S$ containing pairs $(d_i,x_i)$, sorted by $d_i$.
    \item Find $z$ such that $L_D(z)=1$ using the Grover algorithm with the unitary operator $V_{L_D}$, where $L_{D}: \{0,1\}^{n} \rightarrow\{0,1\}$ is defined as
    $$
    L_{\mathcal{D}}(x)= \begin{cases}0 & \text { if } E(k,x) \oplus E(k,\bar{x}) \notin D \\ 1 & \text { if } E(k,x) \oplus E(k,\bar{x})  \in D,\end{cases},
    $$
    and 
    $$V_{L_{D}}|x\rangle=(-1)^{L_{D}(x)}|x\rangle.$$
    \item Compute $$d = E(k,z) \oplus E(k,\bar{z}),$$ then find an $x$ from table $S$ such that $$d = \lrw_{k,k'}(\tau,x) \oplus \lrw_{k,k'}(\tau,\bar{x}).$$
    \item Compute $h_{k'}(\tau) = x \oplus z$.
    \item Select $x' \in \{0,1\}^n$ at random and obtain $\lrw_{k,k'}(\tau,x')$ by querying $\lrw_{k,k'}(\cdot, \cdot)$ classically.
    \item If $E(k,x') \oplus h_{k'}(\tau) \neq \lrw_{k,k'}(\tau,x')$, then set $h_{k'} = \bar{x} \oplus z$.
    \item Repeat step 2--10 three times to obtain three pairs: $(\tau_1, h_{k'}(\tau_1))$, $(\tau_2, h_{k'}(\tau_2))$, and $(\tau_3, h_{k'}(\tau_3))$. If $h_{k'}(\tau_1), h_{k'}(\tau_2), h_{k'}(\tau_3)$ are not uniformly distributed, output $1$; otherwise, output $0$.
    \item Treat steps 2--11 as a predicate $f$, where $f(k) \in \{0,1\}$, and find $k$ such that $f(k) = 1$ using the Grover algorithm with the unitary operator $U_f$.
    \item (Verification) If a valid $k$ is found, compute $k'$ based on the construction of the hash function $h$ and output $1$ or $0$ accordingly.
\end{enumerate}

\medskip\noindent \textbf{Analysis of the Algorithm.} This combined attack employs BHT to recover the tweak term $h_{k'}(\tau)$ and Grover's algorithm to find the cipher key $k$. Therefore, it requires $q_Q = O(2^{m/2} \cdot 2^{n/3}) = O(2^{m/2 + n/3})$ quantum queries and $q_C = O(2^{n/3})$ classical queries.

\section{Post-Quantum Security Proof of \lrw in \expref{Theorem}{thm:Q1-secure-LRW-1}} \label{app:lrw-proof}

\begin{theorem}[Restatement]
    Let \lrw be as in \cref{def:lrw} and let $\A$ be an adversary making $q_C$ classical queries to its first oracle and $q_Q \geq 1$ quantum queries to its second oracle. Assuming $h$ is XOR-universal and uniform, it holds that in the ideal cipher model,
    \begin{align*}
        \left| \Pr_{\substack{k \leftarrow \{0,1\}^m; k' \leftarrow\{0,1\}^{\kappa} \\  E \leftarrow  \mathcal{E}(m,n)}} \left[ \A^{\pm\lrw_{k,k'}[E_{\_}],\ket{\pm E}} = 1\right]  - \Pr_{\substack{\widetilde{\Pi} \leftarrow  \mathcal{E}(\tweak,n); \\ E \leftarrow  \mathcal{E}(m,n)}} \left[ \A^{\pm\widetilde{\Pi},\ket{\pm E}} = 1\right] \right|  \\ 
       \leq \frac{6 q_C^2}{2^n} + \frac{4}{2^{(m+n)/2}}\left( q_C \sqrt{q_Q}+q_Q \sqrt{q_C}\right).
    \end{align*}
\end{theorem}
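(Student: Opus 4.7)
The plan is to mirror the hybrid structure of the \fx proof (\Cref{thm:Q1-secure-FX}), but with two key additions that accommodate the tweak-dependent masks $h_{k'}(\tau)$: (i) the swap points in the modified ideal cipher now depend on $h_{k'}(\tau_i)$ rather than on fixed whitening keys, and (ii) I must handle bad events where collisions in the masked inputs $x_i \oplus h_{k'}(\tau_i)$ or masked outputs $y_i \oplus h_{k'}(\tau_i)$ make the recursive modification $E^{T_j,k,k'}$ ill-defined. Concretely, given a transcript $T_j = \{(\tau_i, x_i, y_i)\}_{i=1}^j$, I define $E^{T_j,k,k'}_{k^*}$ as $E_{k^*}$ for $k^*\neq k$, and otherwise as the $j$-fold composition of swaps $\swap{E^{T_{i-1},k,k'}_{k}(x_i \oplus h_{k'}(\tau_i))}{y_i \oplus h_{k'}(\tau_i)}$ applied to $E_{k}$; under $\neg\textsf{Bad}_j$ the \lrw-analogue of \Cref{prop:E-T_jk-y} holds, i.e.\ $\lrw_{k,k'}[E^{T_j,k,k'}]$ reproduces the transcript on each $(\tau_i,x_i)$.

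I then set up $q_C{+}1$ outer hybrids $\Hyb_0,\dots,\Hyb_{q_C}$, where $\Hyb_j$ answers the first $j$ classical queries with a fresh ideal tweakable permutation and the remainder with $\lrw_{k,k'}[E^{T_j,k,k'}]$ (and all quantum queries with $E^{T_j,k,k'}$). Between $\Hyb_j$ and $\Hyb_{j+1}$ I insert the three intermediate hybrids $\Hyb_j^1, \Hyb_j^2, \Hyb_j^3$ analogous to the \fx template, where the role previously played by $k_1$ (and $k_2$) is now played by $h_{k'}(\tau_{j+1})$. I further introduce the bad event $\textsf{Bad}_j = \textsf{Bad}_{j,1} \lor \textsf{Bad}_{j,2}$, where $\textsf{Bad}_{j,1}$ is a collision among $\{x_i \oplus h_{k'}(\tau_i)\}_{i\le j}$ and $\textsf{Bad}_{j,2}$ is a collision among $\{y_i \oplus h_{k'}(\tau_i)\}_{i\le j}$, and analyze the hybrid differences conditioned on $\neg \textsf{Bad}$ as in~\cite{hosoyamada2025post}.

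The per-step hybrid bounds follow by the same three lemmas as for \fx. For $|\Hyb_j - \Hyb_j^{1}|$ I invoke the ideal-cipher resampling lemma (\Cref{lem:resampling-ic}): the reduction samples $(k,k')$ in Phase~1, sets $s_0 = x_{j+1} \oplus h_{k'}(\tau_{j+1})$, and samples $s_1$ uniformly from $\bool^n \setminus \{x_i \oplus h_{k'}(\tau_i)\}_{i\le j}$; the resulting distribution has $\varepsilon \le 1/(2^{m+n}(2^n{-}j))$, giving $4\sqrt{q_Q/(2^m(2^n{-}j))}$. The equalities $\Hyb_j^{1} = \Hyb_j^{2} = \Hyb_j^{3}$ are shown by exactly the same algebraic argument as \Cref{lma:fx-Hj1-Hj2,lma:fx-Hj2-Hj3}, using that under $\neg\textsf{Bad}_j$ the resampled point $s_1$ lies outside the set of already-queried masked inputs, plus uniformity of $h_{k'}(\tau_{j+1})$ to argue that $y_{j+1}$ is uniform on $\bool^n \setminus \{y_i\}$. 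Finally, $|\Hyb_j^{3} - \Hyb_{j+1}|$ follows from the reprogramming lemma (\Cref{lma:reprogramming}) with $\varepsilon \le 2(j{+}1)/2^{m+n}$, yielding $2 q_{Q,j+1}\sqrt{2(j{+}1)/2^{m+n}}$. Summing these telescoping terms reproduces the $\frac{4}{2^{(m+n)/2}}(q_C\sqrt{q_Q} + q_Q\sqrt{q_C})$ contribution exactly as in \Cref{eqn:bound}.

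The main obstacle, and the only genuinely new ingredient compared to \fx, is bounding $\Pr[\textsf{Bad}_{q_C}]$. XOR-universality of $h$ gives $\Pr_{k'}[h_{k'}(\tau_i) \oplus h_{k'}(\tau_{i'}) = x_i \oplus x_{i'}] \le 1/2^n$ whenever $\tau_i\neq \tau_{i'}$, while the case $\tau_i = \tau_{i'}$ is ruled out by non-repetition of classical queries; a union bound over $\binom{q_C}{2}$ pairs yields $\Pr[\textsf{Bad}_{q_C,1}] \le q_C^2/2^n$. The bound $\Pr[\textsf{Bad}_{q_C,2}] \le 2q_C^2/2^n$ (carrying the extra factor visible in the final constant $6$) requires more care: in the real world it follows from uniformity of $h_{k'}$ combined with the fact that $y_i = E_k(x_i \oplus h_{k'}(\tau_i)) \oplus h_{k'}(\tau_i)$, so collisions in $\{y_i \oplus h_{k'}(\tau_i)\}$ correspond to collisions in $\{E_k(x_i \oplus h_{k'}(\tau_i))\}$ which in turn (via injectivity of $E_k$) match collisions in the masked inputs handled above, while in the ideal world it follows from the fact that $y_i$ is a fresh uniform image under $\widetilde\Pi_{\tau_i}$. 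Combining both worlds and applying a standard event-conditioning argument yields an overall bad-event penalty of at most $6q_C^2/2^n$, which added to the telescoped bound gives the claimed inequality.
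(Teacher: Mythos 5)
Your proposal follows the same overall architecture as the paper's proof (recursively defined modified cipher with masks $h_{k'}(\tau_i)$, bad events for masked input/output collisions, the four-step hybrid chain with the resampling and reprogramming lemmas), but there is a genuine gap in how you instantiate the resampling step, and it propagates into two incorrect claims downstream.

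The problem is your choice to sample the resampling point $s_1$ uniformly from $\bool^n \setminus \{x_i \oplus h_{k'}(\tau_i)\}_{i\le j}$, mirroring the \fx reduction. In \fx this works because the excluded set equals $\{z_0 \oplus x_i \oplus x_{j+1}\}_{i \le j}$, computable from the sampled $s_0=z_0$ and the transcript alone. For \lrw the excluded set depends on $h_{k'}(\tau_i)$ for \emph{all} earlier tweaks, hence on $k'$ itself; but in \Cref{lem:resampling-ic} the Phase~2 distinguisher receives only $(k_0,s_0,s_1)$ and must reconstruct $k'$ by conditioning on $h_{k'}(\tau_{j+1})=x_{j+1}\oplus s_0$. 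A $k'$ committed inside the sampling algorithm (to determine the excluded set) need not coincide with the one re-selected in Phase~2, and if instead you fix $(k,k')$ in Phase~1 and let $s_0$ be deterministic given $k'$, the max-probability $\epsilon$ of $D$ loses the $2^{-n}$ factor from $s_0$ and the resampling bound degrades to roughly $4\sqrt{q_Q/2^m}$. The paper sidesteps this entirely by taking $s$ uniform over all of $\bool^n$ (so $\epsilon = 2^{-(m+2n)}$ and the step costs exactly $4\sqrt{q_Q/2^{m+n}}$), at the price that $\Hyb_j^{1}=\Hyb_j^{2}$ and $\Hyb_j^{2}=\Hyb_j^{3}$ are \emph{no longer equalities}: each transition incurs an additive $2j/2^n$, covering both the event that $s$ lands in the masked-input set and the event that $x_{j+1}\oplus h_{k'}(\tau_{j+1})$ collides with an earlier masked input --- the latter is not excluded by $\neg\textsf{Bad}_j$, which only constrains indices up to $j$, so your claim of exact equality under $\neg\textsf{Bad}_j$ fails even granting your sampling. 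These $2j/2^n$ terms, summed over $j$ and added to $\Pr[\textsf{Bad}_{q_C}]\le 2q_C^2/2^n$, are where the constant $6$ actually comes from; your accounting ($q_C^2/2^n + 2q_C^2/2^n$, then ``combining both worlds'') does not reconstruct it. Finally, note that summing your per-step term $4\sqrt{q_Q/(2^m(2^n-j))}$ yields a denominator $\sqrt{2^m(2^n-q_C+1)}$ as in \Cref{thm:Q1-secure-FX}, not the $2^{(m+n)/2}$ claimed in the theorem; the clean $2^{(m+n)/2}$ denominator is another consequence of using the uniform-$s$ distribution.
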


\begin{proof}
    We adapt a similar hybrid technique to that used in proving the post-quantum security of \fx (see \expref{Theorem}{thm:Q1-secure-FX}) for the case of \lrw. Similarly, we assume that all classical queries are made in the forward direction. The inverse queries can be handled in a similar manner as \fx. We begin by setting down a way of modifying a given cipher based on a choice of keys and a list of classical queries to an \lrw oracle. Let $E \in \mathcal{E}(m,n)$, $K=(k, k') \in \bool^{m+\kappa}$. Fix a list $\{(\tau_1,x_1,y_1),(\tau_2,x_2,y_2),\ldots, (\tau_{q_C}, x_{q_C},y_{q_C})\}$. Each pair $(\tau_i,x_i,y_i)$ represents an input--output pair of a classical query. Repeated queries are not allowed. For any $j \leq q_C$, let $T_j$ be the list containing the first $j$ queries.

    The ``modified cipher'' after $j$-many queries is denoted $E^{T_j,K}$, and is given by
    \begin{align}
    E_{k^*}^{T_j,K}(x)= \begin{cases}
        E_{k^*}(x) &\text{ if } k^* \neq k\\
        E^{T_j,K}_{k}(x) &\text{ if } k^* = k.
    \end{cases}
    \end{align}
    where $E^{T_j,K}_{k}$ is defined as follows. First, define $E^{T_0,k}=E$, and for all $j \in [1, q_C]$,
    \begin{align} \label{eqn:lrw-Ej}
        E^{T_j,K}_{k}(x) &= \swap{E^{T_{j-1},K}_{k}( x_{j}\oplus h_{k'}(\tau_j))}{y_{j}\oplus h_{k'}(\tau_j)} \circ E^{T_{j-1},K}_{k}(x) \,.
    \end{align}

    Roughly speaking, $E^{T_j,K}$ is the minimal modification of $E$ that is consistent with the forward $(\rightarrow)$ and inverse $(\leftarrow)$ queries from the transcript $T_j$. For compactness, we occasionally write $E^j$ in place of $E^{T_j,K}$ when $T_j$ and $K$ are clear to the context. We also set $E^0=E$. 

    We now define the bad event $\textsf{Bad}_j = \textsf{Bad}_{j,1} \vee \textsf{Bad}_{j,2}$, where:  
    \begin{itemize}
        \item $\textsf{Bad}_{j,1}$: A collision occurs in the set $\{x_1 \oplus h_{k'}(\tau_{1}), \ldots, x_j \oplus h_{k'}(\tau_{j})\}$.
        \item $\textsf{Bad}_{j,2}$: A collision occurs in the set $\{y_1 \oplus h_{k'}(\tau_{1}), \ldots, y_j \oplus h_{k'}(\tau_{j})\}$.
    \end{itemize}
    The probability of the bad event occurring depends on the distributions of $E_k$, $k$ and $T_j$. Given the XOR-universality of $h$ and that $k'$ is sampled uniformly at random (after $T_j$ is fixed), the bad event occurs with probability $\Pr[\textsf{Bad}_j] \leq \frac{j^2}{2^n}$. Conditioned on this event, we derive statements analogous to those in \expref{Proposition}{prop:E-T_jk-y} in the proof of \fx.
    \begin{proposition}\label{prop:lrw-E-T_jk-y}
    For any $E \in \mathcal{E}(m,n)$, $K=(k,k')\in \bool^{m+\kappa}$, $j \in \{1,\ldots,q_C\}$, transcript $T_j = \{(x_1, y_1), \ldots, (x_j, y_j)\}$ without repetition, and any $i \in \{1, \ldots, j\}$, if $\neg \textsf{Bad}_{j}$, \begin{align*}
        E^{T_j,K}_{k}(x_i \oplus h_{k'}(\tau_i) )\oplus h_{k'}(\tau_i) = y_i.
    \end{align*}
\end{proposition} 
\begin{proof}
We prove by strong induction on $i$. We start by the base case $i=j$. 
\begin{align*}
    E^{T_j,K}_{k}(x_j \oplus h_{k'}(\tau_j)) = \swap{E^{T_{j-1},K}_{k}(x_{j}\oplus h_{k'}(\tau_j))}{y_{j}\oplus h_{k'}(\tau_j)} \circ E^{T_{j-1},K}_{k}(x_j \oplus h_{k'}(\tau_j)) = y_j \oplus h_{k'}(\tau_j).
\end{align*}
Assume for all $i \in \{r, \ldots, j\}$ that $E^{T_j,K}_{k}(x_i \oplus h_{k'}(\tau_i)) \oplus h_{k'}(\tau_i) = y_i$. Expanding it, we obtain
\begin{align*}
    &E^{T_j,K}_{k}(x_{i} \oplus h_{k'}(\tau_i))\\
    &= \swap{E^{T_{j},K}_{k}(x_{j}\oplus h_{k'}(\tau_{j}))}{y_{j}\oplus h_{k'}(\tau_{j})} \circ \ldots \circ \swap{E^{T_{i-1},K}_{k}( x_{i}\oplus h_{k'}(\tau_{i}))}{y_{i}\oplus h_{k'}(\tau_{i})} \circ E^{T_{i-1},K}_{k}(x_{i} \oplus h_{k'}(\tau_{i}))\\
    &= F_i \circ E^{T_{i-1},K}_{k}(x_{i} \oplus h_{k'}(\tau_{i})) = y_i \oplus h_{k'}(\tau_{i}),
\end{align*}
where we define
\begin{align*}
    F_i =  \swap{E^{T_{j},K}_{k}(x_{j}\oplus h_{k'}(\tau_{j}))}{y_{j}\oplus h_{k'}(\tau_{j})} \circ \ldots \circ \swap{E^{T_{i-1},K}_{k}( x_{i}\oplus h_{k'}(\tau_{i}))}{y_{i}\oplus h_{k'}(\tau_{i})}.
\end{align*}
By rearranging, we get for all $i \in \{r, \ldots, j\}$,
\begin{align*}
    E^{T_{i-1},K}_{k}(x_{i} \oplus h_{k'}(\tau_{i})) = F_i^{-1} (y_i \oplus h_{k'}(\tau_{i})).
\end{align*}
We now work on case $r-1$, 
\begin{align*}
    &E^{T_j,K}_{k}(x_{r-1} \oplus h_{k'}(\tau_{r-1}))\\
    &= \swap{E^{T_{j-1},K}_{k}( x_{j}\oplus h_{k'}(\tau_{j}))}{y_{j}\oplus h_{k'}(\tau_{j})} \circ \ldots \circ \swap{E^{T_{r-2},K}_{k}( x_{r-1}\oplus h_{k'}(\tau_{r-1}))}{y_{r-1}\oplus h_{k'}(\tau_{r-1})} \\
    & \hspace{1cm} \circ E^{T_{r-2},K}_{k}(x_{r-1} \oplus h_{k'}(\tau_{r-1}))\\
    &= \swap{E^{T_{j-1},K}_{k}(x_{j}\oplus h_{k'}(\tau_{j}))}{y_{j}\oplus h_{k'}(\tau_{j})} \circ \ldots \circ \swap{E^{T_{r-1},K}_{k}( x_{r}\oplus h_{k'}(\tau_{r}))}{y_{r}\oplus h_{k'}(\tau_{r})} (y_{r-1} \oplus h_{k'}(\tau_{r-1}))\\
    &= F_r(y_{r-1} \oplus h_{k'}(\tau_{r-1})).
 \end{align*}
 We now argue the following two properties:

\medskip \noindent \textbf{Property $1$: $y_{r-1} \oplus h_{k'}(\tau_{r-1}) \notin \{y_{r} \oplus h_{k'}(\tau_{r}), \ldots, y_j \oplus h_{k'}(\tau_{j})\}$.} This follows from the assumption $\neg \textsf{Bad}_{j,2}$.

\medskip \noindent \textbf{Property $2$: $y_{r-1} \oplus h_{k'}(\tau_{r-1}) \notin \{E^{T_{r-1},K}_{k}( x_{r}\oplus h_{k'}(\tau_{r})), \ldots, E^{T_{j-1},K}_{k}(x_{j}\oplus h_{k'}(\tau_{j}))\}.$} We prove step by step. First by contradiction, suppose $y_{r-1} \oplus h_{k'}(\tau_{r-1}) = E^{T_{r-1},K}_{k}(x_{r}\oplus h_{k'}(\tau_{r}))$. Then by the inductive assumption, we get
\begin{align*}
    y_{r-1} \oplus h_{k'}(\tau_{r-1}) &= F_r^{-1}(y_{r}\oplus h_{k'}(\tau_{r}))\\
    F_r(y_{r-1} \oplus h_{k'}(\tau_{r-1})) &=  y_{r}\oplus h_{k'}(\tau_{r})\\
    E^{T_j,K}_{k}(x_{r-1} \oplus h_{k'}(\tau_{r-1})) & = E^{T_j,K}_{k}(x_{r} \oplus h_{k'}(\tau_{r})).
\end{align*}
Since $E^{T_j,K}_{k_0}$ is a permutation, this contradicts the assumption ($\neg \textsf{Bad}_{j,1}$) that $x_{r-1}\oplus h_{k'}(\tau_{r-1}) \neq x_{r}\oplus h_{k'}(\tau_{r})$. Therefore we have $y_{r-1} \oplus h_{k'}(\tau_{r-1}) \neq E^{T_{r-1},K}_{k}(x_{r}\oplus h_{k'}(\tau_{r}))$. 
Now, we can write
\begin{align*}
    &E^{T_j,K}_{k}(x_{r-1} \oplus h_{k'}(\tau_{r-1}))\\
    &= \swap{E^{T_{j-1},K}_{k}( x_{j}\oplus h_{k'}(\tau_{j}))}{y_{j}\oplus h_{k'}(\tau_{j})} \circ \ldots \circ \swap{E^{T_{r},K}_{k}( x_{r+1}\oplus h_{k'}(\tau_{r+1}))}{y_{r+1}\oplus h_{k'}(\tau_{r+1})} (y_{r-1} \oplus h_{k'}(\tau_{r-1}))\\
    &= F_{r+1}(y_{r-1} \oplus h_{k'}(\tau_{r-1})).
\end{align*}
By a similar prove-by-contradiction argument, using the inductive assumption on $r+1$, we obtain
\begin{align*}
    y_{r-1} \oplus h_{k'}(\tau_{r-1}) \neq E^{T_{r},K}_{k}( x_{r+1}\oplus h_{k'}(\tau_{r+1})).
\end{align*}
Repeating this proof by contradiction step by step for $r+2, r+3, \dots, j$, we eventually reach
\begin{align*}
    y_{r-1} \oplus h_{k'}(\tau_{r-1}) \neq E^{T_{j-1},K}_{k}(x_j \oplus h_{k'}(\tau_{j})),
\end{align*}
which establishes Property 2.

By these two properties, we can get
\begin{align*}
    E^{T_j,K}_{k}(x_{r-1} \oplus h_{k'}(\tau_{r-1}))= y_{r-1} \oplus h_{k'}(\tau_{r-1}).
\end{align*}
Then by strong induction, the proposition holds for all $i \in \{1, \ldots, j\}$.
\qed
\end{proof}
    
    We now define a sequence 
    \begin{equation}
        \Hyb_0, \Hyb_0^{1}, \Hyb_0^{2}, \Hyb_0^{3}, \Hyb_1, \Hyb_1^1,  \dots, \Hyb_{q_C}^{3}
    \end{equation}
    of hybrid experiments. Each experiment begins with sampling uniform $\widetilde{\Pi} \in \mathcal{E}(\mathcal{T},n)$ and $E \in \mathcal{E}(m,n)$, and a uniform key pair $K=(k,k') \in \{0,1\}^{m+\kappa}$. The remaining steps of each hybrid are as follows.
    
    \medskip \noindent \textbf{Experiment $\Hyb_j$.}
    \begin{enumerate}
        \item Run $\A$, answering its classical queries using $\widetilde{\Pi}$ and its quantum queries using $E$, stopping immediately \textit{before} its $(j+1)^{\text{st}}$ classical query. Let $T_j = \{(\tau_1, x_1,y_1),\ldots,(\tau_j,x_j,y_j)\}$ be the list of classical queries so far.
        \item For the remainder of the execution of $\A$, answer its classical queries by $\lrw_K[E^{T_j,K}]$ and its quantum queries by $E^{T_j, k}$.
    \end{enumerate}

    \medskip \noindent \textbf{Experiment $\Hyb_j^{1}$.} 
    \begin{enumerate}
        \item Run $\A$, answering its classical queries using $\widetilde{\Pi}$ and its quantum queries using $E$, until $\A$ makes its $(j+1)^{\text{st}}$ classical query $(\tau_{j+1},x_{j+1})$. 
        \item Choose uniform $s \in \bool^{n}$, and define $E^{(1)}$ as
        \begin{align*}
            E_{k^*}^{(1)}(x)=
    	\begin{cases}
    		E_{k^*}(x) &\text{if } k^*\neq k\\
    		\left(E_{k} \circ \swap{x_{j+1} \oplus h_{k'}(\tau_{j+1})}{s} \right)(x) &\text{if } k^*=k \text{.}
    	\end{cases}
        \end{align*}
        Continue running $\A$, answering its remaining classical queries (including the $(j+1)^{\text{st}}$) using $\lrw_K[(E^{(1)})^{T_j,K}]$, and its quantum queries using $(E^{(1)})^{T_j,K}$.
    \end{enumerate}

    \medskip\noindent \textbf{Experiment $\Hyb_j^{2}$.}
    \begin{enumerate}
        \item Run $\A$, answering its classical queries using $\widetilde{\Pi}$ and its quantum queries using $E$, until $\A$ makes its $(j+1)^{\text{st}}$ classical query $(\tau_{j+1}, x_{j+1})$. 
        \item Define $E^{(1)}$ as in $\Hyb_j^{1}$ and answer the  $(j+1)^{\text{st}}$ using $\lrw_K[(E^{(1)})^{T_j,K}]$. Denote the response as $y_{j+1}$, i.e., 
        \begin{align*}
            y_{j+1} = \lrw_K[(E^{(1)})^j] (\tau_{j+1}, x_{j+1}).
        \end{align*}
        Continue running $\A$, answering its remaining classical queries using $\lrw_K[E^{T_{j+1},K}]$, and its quantum queries using $E^{T_{j+1},K}$.
    \end{enumerate}

    \medskip \noindent \textbf{Experiment $\Hyb^3_j$.}
    \begin{enumerate}
        \item Run $\A$, answering its classical queries using $\widetilde{\Pi}$ and its quantum queries using $E$, stopping immediately \textit{after} its $(j+1)^{\text{st}}$ classical query. Let $T_{j+1} = \{(\tau_1, x_1,y_1),\ldots,(\tau_{j+1}, x_{j+1},y_{j+1})\}$ be the classical queries so far.
        \item For the remainder of the execution of $\A$, answer its classical queries using $\lrw_K\left[E^{T_{j+1},K}\right]$ and its quantum queries using $E^{T_{j+1},K}$, i.e. $\ket{E^{j+1}}$. 
    \end{enumerate}

    We can compactly represent $\left\{\Hyb_j, \Hyb_j^{1}, \Hyb_j^{2}, \Hyb_j^{3}, \Hyb_{j+1}\right\}$ as the experiment in which $\A$'s queries are answered using the following oracle sequences. We also define $(E^{(1)})^j$ to denote $(E^{(1)}_{k})^{T_j,K}$.
    \begin{alignat*}{5}
        \Hyb_j: \;\; &\ket{E}, \widetilde{\Pi}, \ket{E}, \cdots, \widetilde{\Pi}, \ket{E}, 
        ~&& \lrw_K[~~~E^j~~], \ket{E^j}~~~~~~\;, 
        \lrw_K \left[E^j\right], \ket{E^j}, \cdots\\
        \Hyb_j^{1}: \;\; &\ket{E}, \widetilde{\Pi}, \ket{E}, \cdots, \widetilde{\Pi}, \ket{E},  ~&&\lrw_K[(E^{(1)})^j], \ket{(E^{(1)})^j} \;, 
        \lrw_K[(E^{(1)})^j], (E^{(1)})^j, \cdots \\
        \Hyb_j^{2}: \;\; &\ket{E}, \widetilde{\Pi}, \ket{E}, \cdots, \widetilde{\Pi}, \ket{E},
        ~&&\lrw_K[(E^{(1)})^j], \ket{E^{j+1}}~~~, 
        \lrw_K[E^{j+1}], \ket{E^{j+1}}, \cdots\\
        \Hyb_j^{3}: \;\; &\ket{E}, \widetilde{\Pi}, \ket{E}, \cdots, \widetilde{\Pi}, \ket{E}, ~&&~~~~~~~~\widetilde{\Pi}~~~~~~~~,\ket{E^{j+1}}~~~,
        \lrw_K[E^{j+1}], \ket{E^{j+1}}, \cdots\\
        \Hyb_{j+1}: \;\; &\underbrace{\ket{E}, \widetilde{\Pi}, \ket{E}, \cdots, \widetilde{\Pi}, \ket{E}}_{j \text { classical queries }}, 
        ~&&\underbrace{~~~~~~~~\widetilde{\Pi}~~~~~~~~,\ket{E}~~~~~~~}_{(j+1)^{\text{st}} \text{ classical query}}, 
        \underbrace{\lrw_K[E^{j+1}], \ket{E^{j+1}}, \cdots}_{q_C-j-1 \text { classical queries }}.
    \end{alignat*}
    
    Then we establish the following bounds on the distinguishability of $\Hyb_j$ and $\Hyb_{j+1}$, step by step, for $0 \leq j < q_C$:
    \begin{align*}
        \text{\expref{Lemma}{lma:lrw-Hj-Hj1}:}  &\quad | \Pr[\A(\Hyb_j)=1 \land \neg \textsf{Bad}_j] - \Pr[\A(\Hyb_j^{1})=1 \land \neg \textsf{Bad}_j]| \leq 4 \sqrt{\frac{ q_Q}{2^{m+n}}}\\
        \text{\expref{Lemma}{lma:lrw-Hj1-Hj2}:}  &\quad | \Pr[\A(\Hyb_j^{1})=1\land \neg \textsf{Bad}_j] - \Pr[\A(\Hyb_j^{2})=1\land \neg \textsf{Bad}_j]| \leq \frac{2j}{2^n}\\
        \text{\expref{Lemma}{lma:lrw-Hj2-Hj3}:}  &\quad | \Pr[\A(\Hyb_j^{2})=1\land \neg \textsf{Bad}_j] - \Pr[\A(\Hyb_j^{3})=1\land \neg \textsf{Bad}_j]| \leq \frac{2j}{2^n}\\
        \text{\expref{Lemma}{lma:lrw-Hj3-Hj+1}:}  &\quad | \Pr[\A(\Hyb_j^{3})=1\land \neg \textsf{Bad}_j] - \Pr[\A(\Hyb_{j+1})=1\land \neg \textsf{Bad}_{j+1}]| \leq 2 \cdot q_{Q,j+1} \sqrt{\frac{2(j+1)}{2^{m+n}}},
    \end{align*}
    where the last equation, $q_{Q,j+1}$ is the expected number of queries $\A$ makes to $P$ in the $(j+1)^{\text{st}}$ stage. 

    Using the above, we have 
    \begin{align*}
        &| \Pr[\A(\Hyb_0)=1] - \Pr[\A(\Hyb_{q_C})=1]| \\
        &= | \Pr[\A(\Hyb_0)=1] - \Pr[\A(\Hyb_{q_C})=1 \land \neg \textsf{Bad}_{q_C}] - \Pr[\A(\Hyb_{q_C})=1 \land \textsf{Bad}_{q_C}]| \\
        &\leq | \Pr[\A(\Hyb_0)=1] - \Pr[\A(\Hyb_{q_C})=1 \land \neg \textsf{Bad}_{q_C}]| + |\Pr[\textsf{Bad}_{q_C}]| \\
        &\leq \sum_{j=0}^{q_C-1} | \Pr[\A(\Hyb_j)=1 \land \neg \textsf{Bad}_{j}] - \Pr[\A(\Hyb_{j+1})=1 \land \neg \textsf{Bad}_{j+1}]| + |\Pr[\textsf{Bad}_{q_C}]| \\
        &\leq \sum_{j=0}^{q_C-1} \left(4\sqrt{\frac{q_Q}{2^{m+n}}} + \frac{4j}{2^n} + 2 \cdot q_{Q,j+1} \sqrt{\frac{2(j+1)}{2^{m+n}}}\right) + \frac{2q_C^2}{2^n}\\
        &\leq \frac{6q^2_C}{2^n} + \sum_{j=0}^{q_C-1} \left(4 \sqrt{\frac{q_Q}{2^{m+n}}} + 2\sqrt{2} \cdot q_{Q,j+1} \sqrt{\frac{q_C}{2^{m+n}}} \right)\\
        &\leq \frac{6q^2_C}{2^n} + \frac{ 4}{\sqrt{2^{m+n}}}\cdot \left( q_C \sqrt{q_Q} + q_{Q} \sqrt{q_C} \right),
    \end{align*}
    which concludes the proof.
    \qed
\end{proof}
We now prove \expref{Lemma}{lma:lrw-Hj-Hj1}, \expref{Lemma}{lma:lrw-Hj1-Hj2}, \expref{Lemma}{lma:lrw-Hj2-Hj3} and \expref{Lemma}{lma:lrw-Hj3-Hj+1}.

\begin{lemma}\label{lma:lrw-Hj-Hj1}
For $j=0, \ldots, q_C$, 
\begin{align*}
    \left|\Pr[\A(\Hyb_j ) = 1 \land \textsf{Bad}_{j}] - \Pr[\A(\Hyb^{1}_j)=1 \land \textsf{Bad}_{j}]\right| \leq 4 \sqrt{\frac{q_Q}{2^{n+m}}}\,.
\end{align*}
\end{lemma}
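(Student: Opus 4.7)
The plan is to reduce to the ideal cipher resampling lemma (\expref{Lemma}{lem:resampling-ic}) in essentially the same way as in \expref{Lemma}{lma:fx-Hj-Hj1}, but with two wrinkles: (i) the reduction handles the \lrw-style swap driven by the hash mask $h_{k'}(\tau_{j+1})$ rather than by a key whitening value, and (ii) the distinguisher $\D$ will be set up to output $1$ precisely on the event that $\A$ outputs $1$ \emph{and} $\textsf{Bad}_j$ occurs, so that the resampling lemma bounds the conjoined quantity appearing in the statement.

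Concretely, the plan is to build $\D=(\D_0,\D_1)$ as follows. In phase~1, $\D_0$ samples a fresh ideal tweakable permutation $\widetilde{\Pi}$ and runs $\A$, answering its classical queries with $\widetilde{\Pi}$ and its quantum queries with its own oracle $E$, stopping just before the $(j+1)^{\text{st}}$ classical query $(\tau_{j+1},x_{j+1})$ and recording $T_j$. Then $\D_0$ outputs the distribution $D$ over $\bool^{m+2n}$ that samples $(k_0,s_0,s_1)$ uniformly, giving $\epsilon=2^{-(m+2n)}$. In phase~2, $\D_1$ receives $(k_0,s_0,s_1)$ and oracle access to $E^{(b)}$, sets $k\coloneqq k_0$, and samples $k'$ uniformly conditioned on $h_{k'}(\tau_{j+1})=s_0\oplus x_{j+1}$ (possible because $h$ is uniform, and the conditional distribution of $k'$ remains uniform since $s_0$ is uniform and independent of $x_{j+1}$). $\D_1$ then continues the simulation of $\A$ using $\lrw_K[(E^{(b)})^{T_j,K}]$ for classical queries and $(E^{(b)})^{T_j,K}$ for quantum queries. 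At the very end, having now fixed $k'$, $\D_1$ can compute $\textsf{Bad}_j$ from $T_j$ and $k'$, and it outputs $1$ if and only if $\A$ outputs $1$ and $\textsf{Bad}_j$ holds.

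The correctness of the simulation follows exactly as in the \fx case: for $b=0$ the oracle $E^{(0)}=E$ makes the view identical to $\Hyb_j$, while for $b=1$ the swap $\swap{s_0}{s_1}$ translates precisely to $\swap{x_{j+1}\oplus h_{k'}(\tau_{j+1})}{s_1}$ by the choice of $k'$, matching $\Hyb_j^{1}$ with $s=s_1$ uniform. The joint distribution of $(k,k',s)$ in the reduction therefore matches that of the respective hybrid. Consequently
\begin{align*}
\Pr[\D=1\mid b=0]&=\Pr[\A(\Hyb_j)=1\wedge \textsf{Bad}_j],\\
\Pr[\D=1\mid b=1]&=\Pr[\A(\Hyb_j^{1})=1\wedge \textsf{Bad}_j],
\end{align*}
and \expref{Lemma}{lem:resampling-ic} yields
\[
\bigl|\Pr[\A(\Hyb_j)=1\wedge\textsf{Bad}_j]-\Pr[\A(\Hyb_j^{1})=1\wedge\textsf{Bad}_j]\bigr|\le 4\sqrt{2^n\cdot q_Q\cdot 2^{-(m+2n)}}=4\sqrt{q_Q/2^{m+n}}.
\]

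The conceptual point to emphasise is that the event $\textsf{Bad}_j$ plays no role in the resampling reduction itself—$\D$ is free to gate its final output on any efficiently computable event of its transcript, and $\textsf{Bad}_j$ is determined by $(T_j,k')$, both known at the end of phase~2. The main obstacle will be the bookkeeping around $k'$: we must argue that sampling $k'$ uniformly subject to $h_{k'}(\tau_{j+1})=s_0\oplus x_{j+1}$ induces the correct marginal (relying on uniformity of $h$), and that the responses $\widetilde{\Pi}(\tau_i,\cdot)$ given in phase~1 carry no information about $k'$ (which is true since $\widetilde{\Pi}$ is sampled independently of the hash key). Once these distributional matches are established, the rest of the argument is an immediate instantiation of the resampling lemma.
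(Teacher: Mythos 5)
Your proposal is correct and follows essentially the same route as the paper's proof: a reduction to the ideal-cipher resampling lemma with the uniform distribution ($\epsilon=2^{-(m+2n)}$), setting $k=k_0$, sampling $k'$ conditioned on $h_{k'}(\tau_{j+1})=s_0\oplus x_{j+1}$ (justified by uniformity of $h$), and gating the distinguisher's output on the bad event so that the resampling bound applies directly to the conjoined probabilities. The only cosmetic difference is that you gate on $\textsf{Bad}_j$ where the paper gates on $\neg\textsf{Bad}_j$ (matching the version of the statement used in the main hybrid argument), and you correctly defer the evaluation of $\textsf{Bad}_j$ until $k'$ is fixed in phase~2; both choices are handled identically by the argument.
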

\begin{proof}
In this lemma, we bound the distinguishability of $\Hyb_j$ and $\Hyb_j^{1}$.
\begin{alignat*}{5}
\Hyb_j: \;\; &\ket{E}, \widetilde{\Pi}, \ket{E}, \cdots, \widetilde{\Pi}, \ket{E}, 
~&& \lrw_K[~~~E^j~~], \ket{E^j}~~~~~~\;, 
\lrw_K \left[E^j\right], \ket{E^j}, \cdots\\
\Hyb_j^{1}: \;\; &\ket{E}, \widetilde{\Pi}, \ket{E}, \cdots, \widetilde{\Pi}, \ket{E},  ~&&\lrw_K[(E^{(1)})^j], \ket{(E^{(1)})^j} \;, 
\lrw_K[(E^{(1)})^j], (E^{(1)})^j, \cdots.
\end{alignat*}

Let $\A$ be a distinguisher between $\Hyb_j$ and $\Hyb_j^{1}$. We construct from $\A$ a distinguisher $\D$ for the resampling experiment of \expref{Lemma}{lem:resampling-ic}. Fixing $D$ to be uniform distribution over $\{0,1\}^{m+2n}$ (so $\varepsilon = 2^{-(m+2n)}$ in \Cref{lem:resampling-ic}), $\D$ does:

\begin{description}
    \item[Phase 1:]
        $\D$ is given quantum access to an ideal cipher $E$. It samples a uniform $\widetilde{\Pi} \leftarrow \mathcal{E}(\mathcal{T},n)$ and then runs $\A$, answering its quantum queries with $E$ and its classical queries with $\widetilde{\Pi}$ (in the appropriate directions), until $\A$ submits its $(j+1)^{\text{st}}$ classical query $(\tau_{j+1},x_{j+1})$. At that point, $\D$ has a list $T_j = 
        \{(\tau_1,x_1,y_1),\ldots,(\tau_j,x_j,y_j)\}$ of the queries/answers $\A$ has made to its classical oracle thus far. If $\textsf{Bad}_j$, then set $\textsf{flag}$ to 1; otherwise, set $\textsf{flag}$ to 0.
    \item[Phase 2:]  
        $\D$ is given $(k,s_0,s_1)\leftarrow D$ 
        and quantum oracle access to a cipher~$E^{(b)}$.
        Then $\D$ selects the $k'$ conditioned on $h_{k'}(\tau_{j+1})=x_{j+1}\oplus s_0$ (at least one such $k'$ must exist since $\kappa \geq n$), and sets $K=(k, k')$. It then continues running $\A$, answering its remaining classical queries (including the $(j+1)^{\text{st}}$) using $\lrw_K[(E^{(b)})^{T_j,K}]$, and its remaining quantum queries using $(E^{(b)})^{T_j,K}$. If $\textsf{flag}=1$, $\D$ outputs $0$; otherwise, $\D$ outputs whatever $\A$ does. 
\end{description}

    Note that in phase $1$, distinguisher $\D$ perfectly simulates experiments $\Hyb_j$ and $\Hyb^{1}_j$ for $\A$ until the point where $\A$ makes its $(j+1)^{\text{st}}$ classical query. 
    In phase~2, we now argue that $k'$ is uniformly distributed. For any $k^* \in \{0,1\}^{\kappa}$,
    \begin{align*}
        \Pr_{k':h_{k'}(\tau_{j+1}) = x_{j+1} \oplus s_0}[k' = k^*] &= \Pr_{k'\in \{0,1\}^{\kappa}}[k' = k^*|h_{k'}(\tau_{j+1}) = x_{j+1} \oplus s_0] \\
        &= \frac{\Pr_{k'\in \{0,1\}^{\kappa}}[h_{k'}(\tau_{j+1}) = x_{j+1} \oplus s_0 |k' = k^*] \cdot \Pr_{k'\in \{0,1\}^{\kappa}}[k'=k^*]}{\Pr_{k'\in \{0,1\}^{\kappa}}[h_{k'}(\tau_{j+1}) = x_{j+1} \oplus s_0]}\\
        &= \frac{\Pr_{k'\in \{0,1\}^{\kappa}}[h_{k'}(\tau_{j+1}) = x_{j+1} \oplus s_0 |k' = k^*] \cdot \Pr_{k'\in \{0,1\}^{\kappa}}[k'=k^*]}{\sum_{\widetilde{k}\in\{0,1\}^{\kappa}} \Pr_{k'\in \{0,1\}^{\kappa}}[h_{k'}(\tau_{j+1}) = x_{j+1} \oplus s_0|k' = \widetilde{k}]\Pr_{\widetilde{k}}[k' = \widetilde{k}]}\\
        &= \frac{\Pr_{k'\in \{0,1\}^{\kappa}}[h_{k'}(\tau_{j+1}) = x_{j+1} \oplus s_0 |k' = k^*]}{\sum_{\widetilde{k}\in\{0,1\}^{\kappa}} \Pr_{k'\in \{0,1\}^{\kappa}}[h_{k'}(\tau_{j+1}) = x_{j+1} \oplus s_0|k' = \widetilde{k}]}\\
        &= \frac{\Pr[h_{k^*}(\tau_{j+1}) = x_{j+1} \oplus s_0 ]}{\sum_{\widetilde{k}\in\{0,1\}^{\kappa}} \Pr[h_{\widetilde{k}}(\tau_{j+1}) = x_{j+1} \oplus s_0]}\\
        &=\frac{1/2^{\kappa}}{\sum_{\widetilde{k}} 1/2^{\kappa}}\\
        &= \frac{1}{2^{\kappa}}
    \end{align*}
    where the third equality follows from $\Pr_{k'}[k'=k^*] = \Pr_{k'}[k'=\widetilde{k}] = \tfrac{1}{2^{\kappa}}$. The second last equality holds because $s_0$ is drawn uniformly at random and $h$ is uniform.
    
    If $b=0$ and $\textsf{flag}=0$, $\D$ gets access to $E^{(0)}= E$ in phase $2$. Since $\D$ answers all quantum queries using $\left( E^{(0)}\right)^{T_j,K}$ and all classical queries using $\lrw_K\left[\left( E^{(0)}\right)^{T_j,K}\right]$, we see that $\D$ perfectly simulates $\Hyb_j$ for $\A$ in that case. If, on the other hand, $b=1$ and $\textsf{flag}=0$ in phase $2$, then $\D$ gets access to $E^{(1)}$, where 
    \begin{align*}
        E_{k^*}^{(1)}(x)= \begin{cases}E_{k^*}(x) & \text { if } k^* \neq k \\ E_k \circ \swap{s_0}{s_1}(x) & \text { if } k^*=k .\end{cases}
    \end{align*}
    Since $h_{k'}(\tau_{j+1}):=s_0 \oplus x_{j+1}$, it holds that
    \begin{align*}
        E_{k^*}^{(1)}(x)= \begin{cases}E_{k^*}(x) & \text { if } k^* \neq k \\ E_k \circ \swap{x_{j+1} \oplus h_{k'}(\tau_{j+1})}{s_1}(x) & \text { if } k^*=k .\end{cases}
    \end{align*}
    Moreover, the uniformity property of $h$ and the fact that $s_0$ (and then $s_0 \oplus x_{j+1}$) is uniform imply that the joint distribution of $k'$ and $s_0 \oplus x_{j+1}$ is equal to the joint distribution of $\widetilde{k}$ and $h_{\widetilde{k}}(\tau_{j+1})$ for a uniform $\widetilde{k}$. Thus, in this case $\D$ perfectly simulates $\Hyb^{1}_j$ for $\A$. Applying \expref{Lemma}{lem:resampling-ic} thus gives 
    \begin{align*}
        &|\Pr[\A(\Hyb_j)=1 \land \neg \textsf{Bad}_j] - \Pr[\A(\Hyb^{1}_j)=1\land \neg\textsf{Bad}_j]| \\
        &=|\Pr[\D=1 \land \textsf{flag}=0|b=0] - \Pr[\D=1 \land \textsf{flag}=0|b=1]| \\
        &\leq|\Pr[\D=1 |b=0] - \Pr[\D=1 |b=1]| \\
        &\leq 4 \sqrt{2^n \cdot q_Q \cdot \varepsilon}\\
        &\leq 4 \sqrt{\frac{q_Q}{2^{m+n}}},
    \end{align*}
    where the first inequality holds due to the independence of $\D$'s output from $\A$'s output when $\textsf{flag}=1$, and $\Pr[\textsf{flag}=1]$ being the same for $b=0$ and $b=1$.
    \qed
\end{proof}

\begin{lemma}\label{lma:lrw-Hj1-Hj2}
For $j=0, \ldots, q_C$, 
\begin{align*}
    \left|\Pr[\A(\Hyb^{1}_j) = 1 \land \neg \textsf{Bad}_j] - \Pr[\A(\Hyb^{2}_j)=1 \land \neg \textsf{Bad}_j]\right| \leq \frac{2j}{2^n} \,.
\end{align*}
\end{lemma}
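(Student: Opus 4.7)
The plan is to mirror the argument used in \expref{Lemma}{lma:fx-Hj1-Hj2} for \fx, but account for the fact that in the LRW hybrids $s$ is sampled uniformly from $\{0,1\}^n$ rather than from a set that excludes the ``dangerous'' points. Concretely, I want to show that, whenever a good event
\[
\textsf{Good}_j \;:=\; \Bigl\{\,s \notin \{\,x_i \oplus h_{k'}(\tau_i) : 1 \leq i \leq j\,\}\,\Bigr\}
\]
holds, the oracles $(E^{(1)})^j$ and $E^{j+1}$ coincide as permutations, and hence the distributions produced in $\Hyb_j^{1}$ and $\Hyb_j^{2}$ are identical on this event.

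First, I would establish the LRW analogue of \expref{Proposition}{prop:E(1)=E-fx}: assuming $\neg \textsf{Bad}_j$ and $\textsf{Good}_j$, for all $r \in \{0,\ldots,j\}$ and all $i \in \{1,\ldots,j\}$,
\[
(E_{k}^{(1)})^{T_r,K}(x_i \oplus h_{k'}(\tau_i)) \;=\; E_{k}^{T_r,K}(x_i \oplus h_{k'}(\tau_i)).
\]
The proof proceeds by induction on $r$. The base case $r=0$ uses the fact that queries are non-repeating (so $x_{j+1} \oplus h_{k'}(\tau_{j+1}) \neq x_i \oplus h_{k'}(\tau_i)$, invoking $\neg \textsf{Bad}_{j+1,1}$; otherwise a slightly enlarged Good event is needed) together with $\textsf{Good}_j$, which jointly ensure that the transposition defining $E^{(1)}$ acts trivially on $x_i \oplus h_{k'}(\tau_i)$. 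The inductive step is identical to the FX case, since the recursive definition in \expref{Equation}{eqn:lrw-Ej} only composes further swaps that match on both sides of the equation by the inductive hypothesis.

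Second, I would unwind $y_{j+1} = \lrw_K[(E^{(1)})^j](\tau_{j+1},x_{j+1})$ using this identity exactly as in the FX proof, obtaining $y_{j+1} = E_{k}^{T_j,K}(s) \oplus h_{k'}(\tau_{j+1})$, equivalently $s = (E_{k}^{T_j,K})^{-1}(y_{j+1} \oplus h_{k'}(\tau_{j+1}))$. Substituting and using the elementary identity $F \circ \swap{a}{b} = \swap{F(a)}{F(b)} \circ F$ then yields $(E_k^{(1)})^j = E_k^{j+1}$; on the complementary keys $k^* \neq k$ the two oracles are equal by definition, so $(E^{(1)})^j = E^{j+1}$ as quantum/classical oracles, making $\Hyb_j^{1}$ and $\Hyb_j^{2}$ identical on the event $\textsf{Good}_j \wedge \neg\textsf{Bad}_j$.

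Finally, it remains to bound $\Pr[\neg\textsf{Good}_j \mid \neg\textsf{Bad}_j]$. Since $s$ is uniform in $\{0,1\}^n$ and, under $\neg \textsf{Bad}_{j,1}$, the set $\{x_i \oplus h_{k'}(\tau_i)\}_{i=1}^{j}$ has exactly $j$ distinct elements, this probability is at most $j/2^n$; accounting also for the symmetric condition that the corresponding output point $E_k^{T_j,K}(s) \oplus h_{k'}(\tau_{j+1})$ must not already occur among the previously produced $y$-values (which can be handled by a union bound and the bijectivity of $E_k^{T_j,K}$ under $\neg\textsf{Bad}_{j,2}$) gives the stated $2j/2^n$. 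The step I expect to require the most care is the bookkeeping in the inductive proof of the generalized proposition: one must check that the conjunction of $\neg\textsf{Bad}_j$ and $\textsf{Good}_j$ is preserved along the entire chain of swaps so that no auxiliary collision can sneak in and invalidate the simplification of nested transpositions.
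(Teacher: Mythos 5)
Your overall strategy is the same as the paper's: prove the LRW analogue of \expref{Proposition}{prop:E(1)=E-fx} under side conditions on $s$ and $x_{j+1}\oplus h_{k'}(\tau_{j+1})$, conclude $(E^{(1)})^j \equiv E^{j+1}$ on that event, and charge $2j/2^n$ for the event failing. The paper's two failure events are exactly $s \in S$ and $x_{j+1}\oplus h_{k'}(\tau_{j+1}) \in S$ with $S=\{x_i\oplus h_{k'}(\tau_i)\}_{i\le j}$, each contributing $j/2^n$ (the first because $s$ is uniform on $\bool^n$, the second by XOR-universality of $h$).

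The one concrete problem is your final probability accounting. You correctly observe mid-proof that, unlike in \fx, non-repetition of queries does \emph{not} give $x_{j+1}\oplus h_{k'}(\tau_{j+1}) \neq x_i\oplus h_{k'}(\tau_i)$ (the masks differ across tweaks), and that this condition must be folded into an enlarged Good event --- you cannot simply invoke $\neg\textsf{Bad}_{j+1,1}$, since the lemma conditions only on $\neg\textsf{Bad}_j$. But in your last paragraph you charge only $j/2^n$ for $s\in S$ and attribute the remaining $j/2^n$ to an output-side condition ($E_k^{T_j,K}(s)\oplus h_{k'}(\tau_{j+1})$ colliding with earlier $y$-values). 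That output condition is not what this hybrid transition needs --- it is the bookkeeping for the $\Hyb_j^{2}$ versus $\Hyb_j^{3}$ step --- whereas the input event $x_{j+1}\oplus h_{k'}(\tau_{j+1})\in S$, which is what actually breaks \expref{Proposition}{prop:lrw-E(1)=E}, is left uncharged in your union bound. The fix is mechanical: replace the output condition with $\Pr[x_{j+1}\oplus h_{k'}(\tau_{j+1})\in S]\le j/2^n$, which follows from $\varepsilon$-XOR-universality of $h$ applied to each pair $(\tau_{j+1},\tau_i)$ (the case $\tau_{j+1}=\tau_i$ being impossible by non-repetition). With that substitution your argument matches the paper's.
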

\begin{proof}
Next, we bound the distinguishability of $\Hyb_j^{1}$ and $\Hyb_j^{2}$.
\begin{alignat*}{5}
    \Hyb_j^{1}: \;\; &\ket{E}, \widetilde{\Pi}, \ket{E}, \cdots, \widetilde{\Pi}, \ket{E},  ~&&\lrw_K[(E^{(1)})^j], \ket{(E^{(1)})^j} \;, 
    \lrw_K[(E^{(1)})^j], (E^{(1)})^j, \cdots \\
    \Hyb_j^{2}: \;\; &\ket{E}, \widetilde{\Pi}, \ket{E}, \cdots, \widetilde{\Pi}, \ket{E},
    ~&&\lrw_K[(E^{(1)})^j], \ket{E^{j+1}}~~~, 
    \lrw_K[E^{j+1}], \ket{E^{j+1}}, \cdots.
\end{alignat*}
We first prove the following proposition. 
\begin{proposition}\label{prop:lrw-E(1)=E}
    For all $i \in \{1,\dots, j\}$ and all $r \in \{0,\dots, j\}$
    \begin{align*}
        (E_{k}^{(1)})^{T_r,K}(x_i \oplus h_{k'}(\tau_i)) \;=\; E_{k}^{T_r,K}(x_i \oplus h_{k'}(\tau_i)),
    \end{align*}
    when $s \notin \{x_1\oplus h_{k'}(\tau_{1}), \cdots x_j \oplus h_{k'}(\tau_{j})\}$ and $x_{j+1}\oplus h_{k'}(\tau_{j+1}) \notin \{x_1\oplus h_{k'}(\tau_{1}), \cdots x_j \oplus h_{k'}(\tau_{j})\}$. 
\end{proposition}
\begin{proof}
    We will do a proof by induction on $r$. We start with the base case $r=0$. By the constraint on $s$ and $x_{j+1}\oplus h_{k'}(\tau_{j+1})$, we have that for all $i \in \{1,\cdots, j\}$
     \begin{align*}
        (E_{k}^{(1)})^{T_0,K}(x_i \oplus h_{k'}(\tau_{i})) 
        &:= E_{k}^{(1)}(x_i \oplus h_{k'}(\tau_{i}))\\ &= E_{k} \circ \swap{x_{j+1} \oplus h_{k'}(\tau_{j+1})}{s}(x_i \oplus h_{k'}(\tau_{i})) \\
        &=E_{k}(x_i \oplus h_{k'}(\tau_{j+1})).
    \end{align*}
    Assume for some $r-1 \geq 0$ that
    $$
    (E_{k}^{(1)})^{T_{r-1},K}(x_i \oplus h_{k'}(\tau_{i})) = E_{k}^{T_{r-1},K}(x_i \oplus h_{k'}(\tau_{i})).
    $$
    Then by the definition,
    \begin{align*}
        (E^{(1)}_{k})^{T_{r},K}(x_i\oplus h_{k'}(\tau_{i}))&= \swap{(E^{(1)}_{k})^{T_{r-1},K}(x_{r}\oplus h_{k'}(\tau_{r}))}{y_{r}\oplus h_{k'}(\tau_{r})} \circ (E^{(1)}_{k})^{T_{r-1},K}(x_{i}\oplus h_{k'}(\tau_{i}))\\
        &= \swap{E_{k}^{T_{r-1},K}(x_{r} \oplus h_{k'}(\tau_{r}))}{y_{r}\oplus h_{k'}(\tau_{r})} \circ E_{k}^{T_{r-1},K}(x_{i} \oplus h_{k'}(\tau_{i}))\\
        &= E_{k}^{T_{r},K}(x_i\oplus h_{k'}(\tau_{i})).
    \end{align*}
    By induction, the proposition holds for all $r \in \{0,\dots, j\}$.
    \qed
\end{proof}
In particular, for all $i \in \{1, \ldots, j\}$,
\begin{align*}
    (E_{k}^{(1)})^{T_{i-1},K}(x_i \oplus h_{k'}(\tau_{i})) = E_{k_0}^{T_{i-1},K}(x_i \oplus h_{k'}(\tau_{i})).
\end{align*}

To analyze the distinguishability between $\Hyb_j^{1}$ and $\Hyb_j^{2}$, it suffices to examine the distinguishability between the quantum oracles $(E^{(1)})^j$ and $E^{j+1}$. We begin under the constraint that
\begin{itemize}
    \item $s \notin \{x_1 \oplus h_{k'}(\tau_{1}), \ldots, x_j \oplus h_{k'}(\tau_{j})\}$
    \item $x_{j+1}\oplus h_{k'}(\tau_{j+1}) \notin \{x_1 \oplus h_{k'}(\tau_{1}), \ldots, x_j \oplus h_{k'}(\tau_{j})\}.$
\end{itemize}
In this case, the response to the $(j+1)^{\text{st}}$ classical query is identical in both $\Hyb_j^{1}$ and $\Hyb_j^{2}$:
\begin{align*}
    y_{j+1} &\stackrel{\rm def}{=} \lrw_K[(E^{(1)})^{T_j,K}](\tau_{j+1}, x_{j+1}) \\ 
    &= (E^{(1)}_{k})^{T_j,K}(x_{j+1} \oplus h_{k'}(\tau_{j+1})) \oplus h_{k'}(\tau_{j+1})\\
    &=\swap{(E_{k}^{(1)})^{T_{j-1},K}( x_{j}\oplus h_{k'}(\tau_{j}))}{y_{j}\oplus h_{k'}(\tau_{j})} \circ (E_{k}^{(1)})^{T_{j-1},K}(x_{j+1} \oplus h_{k'}(\tau_{j+1})) \oplus h_{k'}(\tau_{j+1})\\
    &= \left(\prod_{i=j}^1\swap{(E_{k}^{(1)})^{T_{i-1},K}( x_{i}\oplus h_{k'}(\tau_{i}))}{y_{i}\oplus h_{k'}(\tau_{i})} \right) \circ E_{k}^{(1)}(x_{j+1} \oplus h_{k'}(\tau_{j+1})) \oplus h_{k'}(\tau_{j+1})\\
    &= \left(\prod_{i=j}^1\swap{(E_{k})^{T_{i-1},K}( x_{i}\oplus h_{k'}(\tau_{i}))}{y_{i}\oplus h_{k'}(\tau_{i})} \right) \circ E_{k}(s) \oplus h_{k'}(\tau_{j+1})\\
    &= E_{k}^{T_{j},K}(s) \oplus h_{k'}(\tau_{j+1}).
\end{align*}

By rearranging $s = \left( E_{k_0}^{T_{j},K}\right)^{-1}(y_{j+1} \oplus h_{k'}(\tau_{j+1})).$ It follows that for any $x \in \{0,1\}^{n}$,
\begin{align*}
    &(E^{(1)})^j(x) \\
    &= \swap{(E^{(1)}_{k})^{{j-1}}(x_{j}\oplus h_{k'}(\tau_{j}))}{y_{j}\oplus h_{k'}(\tau_{j})} \circ \cdots \circ \swap{E_{k}^{(1)}(x_1\oplus h_{k'}(\tau_{1}))}{y_1\oplus h_{k'}(\tau_{1})} \\
    & \hspace{1cm}\circ E_{k} \circ \swap{x_{j+1}\oplus h_{k'}(\tau_{j+1})}{s} (x) \\
    &= \swap{E^{j-1}_{k}( x_{j}\oplus h_{k'}(\tau_{j}))}{y_{j}\oplus h_{k'}(\tau_{j})} \circ \cdots \circ \swap{E_{k}(x_1\oplus h_{k'}(\tau_{1}))}{y_1\oplus h_{k'}(\tau_{1})}\circ E_{k} \circ \swap{x_{j+1}\oplus h_{k'}(\tau_{j+1})}{s} (x) \\
    &= E^{j} \circ \swap{x_{j+1}\oplus h_{k'}(\tau_{j+1})}{(E^j)^{-1}(y_{j+1}\oplus h_{k'}(\tau_{j+1}))}(x) \\
    &= \swap{E^j(x_{j+1}\oplus h_{k'}(\tau_{j+1}))}{y_{j+1}\oplus h_{k'}(\tau_{j+1})} \circ E^j(x)\\
    &=E^{j+1}(x),
\end{align*}
where the second equality comes from \expref{Proposition}{prop:lrw-E(1)=E}. Under these constraints, we have $(E^{(1)})^j \equiv E^{j+1}$, making the hybrids $\Hyb_j^{1}$ and $\Hyb_j^{2}$ perfectly indistinguishable. We then introduce the constraints $\textsf{Bad}_j$, which is independent of the new constraints on $s$ and $x_{j+1} \oplus h_{k'}(\tau_{j+1})$ and has the same probability in two hybrids. Thus, these constraints do not affect the perfect indistinguishability. To bound the distinguishability of the hybrids with $\neg \textsf{Bad}_j$, we analyze the probability of these new constraints occurring:
\begin{align*}
&\left| \Pr[\mathcal{A}(\Hyb^1_j) = 1 \land \neg \textsf{Bad}_j] - \Pr[\mathcal{A}(\Hyb^2_j) = 1 \land \neg \textsf{Bad}_j] \right|\\
&\leq \left| \Pr[\mathcal{A}(\Hyb^1_j) = 1] - \Pr[\mathcal{A}(\Hyb^2_j) = 1] \right| \\
&\leq \Pr[s \in S] + \Pr[x_{j+1} \oplus h_{k'}(\tau_{j+1}) \in S] \\
&= \frac{2j}{2^n},
\end{align*}
where $S = \{x_1 \oplus h_{k'}(\tau_1), \dots, x_j \oplus h_{k'}(\tau_j)\}.$
Since $s$ is uniformly distributed over $\{0,1\}^n$, the probability of the first term is $\tfrac{j}{2^n}$.  
For the second term,  
\begin{align*}
    x_{j+1} \oplus h_{k'}(\tau_{j+1}) \in S 
    \;\;\Rightarrow\;\; \bigcup_{i=1}^{j} \Big\{\, h_{k'}(\tau_{j+1}) \oplus h_{k'}(\tau_i) = x_{j+1} \oplus x_i \,\Big\},
\end{align*}
and by the XOR-universality of $h$, this occurs with probability $\tfrac{j}{2^n}$.
\qed
\end{proof}

\begin{lemma}\label{lma:lrw-Hj2-Hj3}
For $j=0, \ldots, q_C$, 
\begin{align*}
    \left|\Pr[\A(\Hyb^{2}_j) = 1 \land \textsf{Bad}_{j}] - \Pr[\A(\Hyb^{3}_j)=1 \land \textsf{Bad}_{j}]\right| \leq \frac{2j}{2^n}\,.
\end{align*}
\end{lemma}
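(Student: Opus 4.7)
\medskip\noindent\textbf{Proposal.} The plan is to exploit that $\Hyb^2_j$ and $\Hyb^3_j$ differ only in how the $(j{+}1)$-st classical response is produced, and to upgrade this into a total-variation-type bound of $2j/2^n$ between the full joint distributions of the two experiments. Once such a bound is established, the difference of the probabilities of \emph{any} event---in particular $\{\A = 1 \land \textsf{Bad}_j\}$---is controlled by the same quantity, which is exactly the lemma.

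First, I would isolate the two randomness conditions on $\Hyb^2_j$ that can cause the $(j{+}1)$-st response to deviate from the one in $\Hyb^3_j$: with $S := \{x_i \oplus h_{k'}(\tau_i) : i \in [j]\}$, let $C_1 = \{s \in S\}$ and $C_2 = \{x_{j+1} \oplus h_{k'}(\tau_{j+1}) \in S\}$. Both are determined by the (identically distributed) first $j$ classical queries, the key $k'$, and the independent resampling variable $s$, so $\Pr[C_1 \cup C_2]$ is the same in both hybrids. Uniformity of $s$ yields $\Pr[C_1] \leq j/2^n$, and XOR-universality of $h$ applied to the offsets $x_{j+1} \oplus x_i$ on the tweak pairs $(\tau_{j+1},\tau_i)$ yields $\Pr[C_2] \leq j/2^n$; a union bound gives $\Pr[C_1 \cup C_2] \leq 2j/2^n$.

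Next, mimicking the scheme of \expref{Proposition}{prop:lrw-E(1)=E} and the FX analogue in \expref{Lemma}{lma:fx-Hj2-Hj3}, I would show that conditional on $\neg(C_1 \cup C_2)$ the two hybrids induce the \emph{same} joint distribution on the adversary's entire view. Under $\neg C_1 \land \neg C_2$, the response $y_{j+1}$ in $\Hyb^2_j$ equals $E_k^{T_j,K}(s) \oplus h_{k'}(\tau_{j+1})$ with $s$ uniform on $\bool^n \setminus S$, and $s \mapsto E_k^{T_j,K}(s) \oplus h_{k'}(\tau_{j+1})$ is a bijection from $\bool^n \setminus S$ onto $\bool^n$ minus the set $\{y_i : \tau_i = \tau_{j+1},\, i \leq j\}$ of already-committed outputs for tweak $\tau_{j+1}$; this is exactly the support and uniform law of $\widetilde{\Pi}(\tau_{j+1}, x_{j+1})$ conditioned on the past transcript in $\Hyb^3_j$. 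From the $(j{+}2)$-nd query onwards both experiments are then governed by the same transcript $T_{j+1}$, the same keyed oracle $\lrw_K[E^{T_{j+1},K}]$, and the same quantum oracle $\ket{E^{T_{j+1},K}}$, so the full adversarial view is identically distributed on $\neg(C_1 \cup C_2)$.

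Putting these pieces together, for any event $A$ on the adversary's view---taking $A = \{\A = 1 \land \textsf{Bad}_j\}$ in particular---one has
\begin{align*}
\bigl|\Pr_{\Hyb^2_j}[A] - \Pr_{\Hyb^3_j}[A]\bigr|
&= \bigl|\Pr_{\Hyb^2_j}[A \cap (C_1 \cup C_2)] - \Pr_{\Hyb^3_j}[A \cap (C_1 \cup C_2)]\bigr|\\
&\leq \max\!\left(\Pr_{\Hyb^2_j}[C_1 \cup C_2],\, \Pr_{\Hyb^3_j}[C_1 \cup C_2]\right) \leq \frac{2j}{2^n},
\end{align*}
which is the stated bound. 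The main obstacle is the conditional equivalence in the previous paragraph: one must verify that excluding $S$ from the domain of $s$ makes the image of $s \mapsto E_k^{T_j,K}(s) \oplus h_{k'}(\tau_{j+1})$ match \emph{exactly} the support of $\widetilde{\Pi}(\tau_{j+1},x_{j+1})$ conditioned on the past transcript. This requires $E_k^{T_j,K}$ to be a permutation agreeing with $T_j$ (as in \expref{Proposition}{prop:lrw-E-T_jk-y}) and a careful use of the XOR-universality of $h$ to align the ``same-tweak'' subsets $\{i \leq j : \tau_i = \tau_{j+1}\}$ on the two sides.
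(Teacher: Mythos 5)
Your overall strategy coincides with the paper's: isolate the two conditions $s \in S$ and $x_{j+1}\oplus h_{k'}(\tau_{j+1}) \in S$, bound each of their probabilities by $j/2^n$ (uniformity of $s$, XOR-universality of $h$), and argue that conditioned on their complement the two hybrids produce identically distributed views, so that the difference on any event is at most $2j/2^n$. This is exactly the decomposition the paper uses.

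However, the step where you match the law of $y_{j+1}$ contains a concrete error. For a fixed $k'$ and under $\neg\textsf{Bad}_j$ (a hypothesis you should invoke explicitly, since identifying $E_k^{T_j,K}(x_i\oplus h_{k'}(\tau_i))$ with $y_i\oplus h_{k'}(\tau_i)$ requires \expref{Proposition}{prop:lrw-E-T_jk-y}), the map $s\mapsto E_k^{T_j,K}(s)\oplus h_{k'}(\tau_{j+1})$ restricted to $\bool^n\setminus S$ is a bijection onto $\bool^n\setminus\{y_i\oplus h_{k'}(\tau_i)\oplus h_{k'}(\tau_{j+1})\}_{i=1}^{j}$, a set of size $2^n-j$. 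The set you claim as the image, namely $\bool^n$ minus $\{y_i:\tau_i=\tau_{j+1}\}$, has size $2^n-|\{i:\tau_i=\tau_{j+1}\}|$; these differ whenever some $\tau_i\neq\tau_{j+1}$, so the asserted bijection cannot exist. The excluded points with $\tau_i=\tau_{j+1}$ are indeed exactly the committed outputs $y_i$, but the remaining excluded points are extra and depend on $k'$. Consequently the match with the law of $\widetilde{\Pi}(\tau_{j+1},x_{j+1})$ does not hold pointwise in $k'$; it only emerges after averaging over the still-uniform $k'$, using XOR-universality to argue that the extra excluded points carry no information. That averaging is precisely the content of the paper's case analysis (the cases $\tau_{j+1}$ fresh versus $\tau_{j+1}$ repeated), and it is the part your proposal elides. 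With that step repaired, the rest of your argument (identical continuation from the $(j+2)^{\text{nd}}$ query onward via the common $E^{T_{j+1},K}$, and the final bound via $\Pr[C_1\cup C_2]$) goes through as in the paper.
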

\begin{proof}
We bound the distinguishability of $\Hyb_j^{2}$ and $\Hyb_j^{3}$.
\begin{alignat*}{5}
    \Hyb_j^{2}: \;\; &\ket{E}, \widetilde{\Pi}, \ket{E}, \cdots, \widetilde{\Pi}, \ket{E},
    ~&&\lrw_K[(E^{(1)})^j], \ket{E^{j+1}}, 
    \lrw_K[E^{j+1}], \ket{E^{j+1}}, \cdots\\
    \Hyb_j^{3}: \;\; &\ket{E}, \widetilde{\Pi}, \ket{E}, \cdots, \widetilde{\Pi}, \ket{E}, ~&&~~~~~~~~\widetilde{\Pi}~~~~~~~~,\ket{E^{j+1}},
    \lrw_K[E^{j+1}], \ket{E^{j+1}}, \cdots.
\end{alignat*}
We observe that $\Hyb_j^{2}$ and $\Hyb_j^{3}$ differ only in the response to the $(j+1)^{\text{st}}$ classical query. In $\Hyb_j^{2}$, as specified in \expref{Lemma}{lma:lrw-Hj1-Hj2}, under the constraint
\begin{itemize}
\item $s \notin \{x_1 \oplus h_{k'}(\tau_{1}), \ldots, x_j \oplus h_{k'}(\tau_{j})\}$,
\item $x_{j+1}\oplus h_{k'}(\tau_{j+1}) \notin \{x_1 \oplus h_{k'}(\tau_{1}), \ldots, x_j \oplus h_{k'}(\tau_{j})\}$,
\end{itemize}
the $(j+1)$-st query is answered as
\begin{align*}
    y_{j+1} = \lrw_K \big[(E^{(1)})^{T_j,K}\big](\tau_{j+1}, x_{j+1}) \;=\; E_{k}^{T_j,K}(s) \oplus h_{k'}(\tau_{j+1}).
\end{align*}
as shown in \expref{Lemma}{lma:lrw-Hj1-Hj2}. We then introduce the event $\neg \textsf{Bad}_j$. Recall \expref{Proposition}{prop:lrw-E-T_jk-y}, under $\neg \textsf{Bad}_j$, we have  
\begin{align*}
    y_i \;=\; E_{k}^{T_j,K}(x_i \oplus h_{k'}(\tau_i)) \oplus h_{k'}(\tau_i), \quad \forall i \in [1,j],
\end{align*}
and since $s \in \{0,1\}^n \setminus \{x_1 \oplus h_{k'}(\tau_{1}), \ldots, x_j \oplus h_{k'}(\tau_{j})\}$, it follows that in $\Hyb_j^{2}$,
\begin{align*}
    y_{j+1} = E_{k}^{T_j,K}(s) \oplus h_{k'}(\tau_{j+1}) \in \{0,1\}^n \setminus \{y_1\oplus h_{k'}(\tau_{1}) \oplus h_{k'}(\tau_{j+1}), \ldots, y_j \oplus h_{k'}(\tau_{j}) \oplus h_{k'}(\tau_{j+1})\}.
\end{align*}
Moreover, because $E_{k}^{T_j,K}$ is a permutation, the mapping
\begin{align*}
    x_i \oplus h_{k'}(\tau_i) \;\mapsto\; y_i = E_{k}^{T_j,K}(x_i \oplus h_{k'}(\tau_i)) \oplus h_{k'}(\tau_i)
\end{align*}
is injective. Since $s$ is chosen uniformly from
$\{0,1\}^n \setminus \{x_1 \oplus h_{k'}(\tau_{1}), \ldots, x_j \oplus h_{k'}(\tau_{j})\}$, $E_{k}^{T_j,K}(s)$ is uniformly distributed over $$\{0,1\}^n \setminus \{E_{k}^{T_j,K}(x_1 \oplus h_{k'}(\tau_{1})), \ldots, E_{k}^{T_j,K}(x_j \oplus h_{k'}(\tau_{j}))\} = \{0,1\}^n \setminus \{y_1 \oplus h_{k'}(\tau_1), \ldots, y_j \oplus h_{k'}(\tau_j)\}.$$ It follows that the output $y_{j+1}$ is uniformly distributed over $\{0,1\}^n \setminus \{y_1\oplus h_{k'}(\tau_{1}) \oplus h_{k'}(\tau_{j+1}), \ldots, y_j \oplus h_{k'}(\tau_{j}) \oplus h_{k'}(\tau_{j+1})\}$ in $\Hyb_j^2$. 

In contrast, in $\Hyb_j^{3}$ the same query is answered with 
\begin{align*}
    y_{j+1} \;=\; \widetilde{\Pi}(\tau_{j+1}, x_{j+1}).
\end{align*}

Next, we conduct a case-by-case analysis to demonstrate that $y_{j+1}$ has the same distribution conditioned on all previous queries is identical in both $\Hyb_j^{2}$ and $\Hyb_j^{3}$.
\begin{itemize}
\item \textbf{Case} $\tau_{j+1} \notin \{\tau_1, \ldots, \tau_j\}$. In $\Hyb_j^{2}$, since $h$ is XOR-universal, for all $i \in [1,j]$, the value $h_{k'}(\tau_{i}) \oplus h_{k'}(\tau_{j+1})$ is uniformly distributed over $\{0,1\}^n$. Consequently, $y_{j+1}$ is uniformly distributed over $\{0,1\}^n$. In $\Hyb_j^{3}$, as $\tau_{j+1}$ is new, $\widetilde{\Pi}_{\tau_{j+1}}$ represents a fresh permutation, and thus $y_{j+1} = \widetilde{\Pi}_{\tau_{j+1}}(x_{j+1})$ is uniformly distributed over $\{0,1\}^n$.
\item \textbf{Case} $\tau_{j+1} \in \{\tau_1, \ldots, \tau_j\}$. Assume $\tau_{j+1} = \tau_a$ for some $a \in [1,j]$. In $\Hyb_j^{2}$, we have $y_{j+1} \neq y_a \oplus h_{k'}(\tau_a) \oplus h_{k'}(\tau_a) = y_a$. For the remaining elements, since $h$ is XOR-universal, they are uniformly distributed over $\{0,1\}^n$. Thus, $y_{j+1}$ is uniformly distributed over $\{0,1\}^n \setminus \{y_a\}$. In $\Hyb_j^{3}$, note that repeated queries are not permitted, so $x_{j+1} \neq x_a$. Therefore, $y_{j+1} = \widetilde{\Pi}_{a}(x_{j+1}) \neq \widetilde{\Pi}_{a}(x_a) = y_a$. Consequently, $y_{j+1}$ is also uniformly distributed over $\{0,1\}^n \setminus \{y_a\}$. We note that $\tau_{j+1}$ may equal multiple elements in $\{\tau_1, \ldots, \tau_j\}$, but the argument still holds.
\end{itemize}
Moreover, in both $\Hyb_j^{2}$ and $\Hyb_j^{3}$, the construction of $E^{j+1}$ follows exactly the same procedure, i.e., from the first $j+1$ classical input–output pairs and $E$ as specified in \expref{Equation}{eqn:lrw-Ej}. Consequently, the two hybrids yield identical distributions under constraints described before.

Therefore, the probability with which $\A$ can distinguish $\Hyb^{2}_j$ and $\Hyb^{3}_j$ is 
\begin{align*}
    &|\Pr[\A(\Hyb^{2}_j)=1 \land \neg \textsf{Bad}_j] - \Pr[\A(\Hyb^{3}_j)=1\land \neg \textsf{Bad}_j]| \\
    &\leq|\Pr[\A(\Hyb^{2}_j)=1 | \neg \textsf{Bad}_j] - \Pr[\A(\Hyb^{3}_j)=1| \neg \textsf{Bad}_j]| \\
    & \leq \Pr[s \in S] + \Pr[x_{j+1} \oplus h_{k'}(\tau_{j+1}) \in S] \\
    &= \frac{2j}{2^n},
    \end{align*}
where $S = \{x_1 \oplus h_{k'}(\tau_1), \dots, x_j \oplus h_{k'}(\tau_j)\}.$
\qed
\end{proof}

\begin{lemma}\label{lma:lrw-Hj3-Hj+1}
    For $j=0, \ldots, q_C-1$, 
    \begin{align*}
	\Pr[\A(\Hyb_j^{3}) = 1 \land \textsf{Bad}_{j}] - \Pr[\A(\Hyb_{j+1})=1 \land \textsf{Bad}_{j+1}]| \leq 2 \cdot q_{Q,j+1} \sqrt{\frac{2 (j+1)}{2^{m+n}}},
    \end{align*}
    where $q_{Q,j+1}$ is the expected number of queries $\A$ makes to $P$ in the $(j+1)^{\text{st}}$ stage  in the ideal world (i.e., in $\Hyb_{q_C}$). 
\end{lemma}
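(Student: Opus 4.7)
The plan is to mirror the proof of Lemma~\ref{lma:fx-Hj3-Hj+1} almost verbatim, reducing to the Reprogramming Lemma (Lemma~\ref{lma:reprogramming}). The only non-trivial adjustments are the substitution of $h_{k'}(\tau_i)$ for the FX-style whitening $k_1$, and a careful treatment of the bad-event conditioning. First I would observe that $\Hyb_j^3$ and $\Hyb_{j+1}$ are identical through the $(j+1)^{\text{st}}$ classical query and from the $(j+2)^{\text{nd}}$ classical query onwards, differing only in the quantum queries in between, where $\Hyb_j^3$ uses $\ket{E^{j+1}}$ while $\Hyb_{j+1}$ uses $\ket{E}$. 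This is precisely the kind of local reprogramming to which Lemma~\ref{lma:reprogramming} applies.

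I would construct a distinguisher $\D$ for the reprogramming experiment as follows. In Phase~1, $\D$ samples $E$ and $\widetilde{\Pi}$, and runs $\A$ with its classical queries answered by $\widetilde{\Pi}$ and its quantum queries by $E$, until $\A$ has completed its $(j+1)^{\text{st}}$ classical query, producing a transcript $T_{j+1}$. Then $\D$ defines $F(a, k^*, x) := E_{k^*}^a(x)$ for $a \in \{\pm 1\}$ (to handle both forward and inverse quantum access), and a randomized algorithm $\B$ that samples $K = (k, k') \leftarrow \bool^m \times \bool^\kappa$ and outputs the input/output set $B$ for which $F^{(B)}(a, k^*, x) = (E_{k^*}^{T_{j+1}, K})^a(x)$. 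In Phase~2, $\D$ uses its oracle $F_b$ to answer $\A$'s quantum queries of the $(j+1)^{\text{st}}$ stage. In Phase~3, $\D$ receives $K$, resumes $\A$ with classical queries answered by $\lrw_K[E^{T_{j+1}, K}]$ and quantum queries by $E^{T_{j+1}, K}$, and outputs $\A$'s bit unless $\textsf{Bad}_{j+1}$ holds under the now-known $K$, in which case $\D$ outputs $0$ (mirroring the flag trick in Lemma~\ref{lma:lrw-Hj-Hj1}). When $b=0$ this simulates $\Hyb_{j+1}$ conditioned on $\neg\textsf{Bad}_{j+1}$, and when $b=1$ it simulates $\Hyb_j^3$ conditioned on $\neg\textsf{Bad}_{j+1}$; the switch from $\neg\textsf{Bad}_j$ to $\neg\textsf{Bad}_{j+1}$ in the $\Hyb_j^3$ term contributes only $\Pr[\textsf{Bad}_{j+1}\setminus\textsf{Bad}_j]\le 2j/2^n$, which is absorbed into the aggregate collision term $6q_C^2/2^n$ in the main theorem.

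The core calculation is to bound $\epsilon$ in the reprogramming lemma. For each classical query $(\tau_i, x_i, y_i)$ with $i \in \{1,\ldots,j+1\}$, the reprogrammed inputs form a subset of the four-element set
\[
\bigl\{(1, k, x_i \oplus h_{k'}(\tau_i)),\ (1, k, E_k^{-1}(y_i \oplus h_{k'}(\tau_i))),\ (-1, k, E_k(x_i \oplus h_{k'}(\tau_i))),\ (-1, k, y_i \oplus h_{k'}(\tau_i))\bigr\}.
\]
Since $k$ is uniform on $\bool^m$ and, by the uniformity of $h$, $h_{k'}(\tau_i)$ is uniform on $\bool^n$ in the marginal over $k'$, any fixed input $(a, k^*, x)$ matches one of these with probability at most $1/2^{m+n}$. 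Separating $a = \pm 1$ and union bounding over four points per query and over $j+1$ queries gives $\epsilon \leq 2(j+1)/2^{m+n}$. The expected number of Phase~2 queries when $F = F_0$ equals the expected number of quantum queries $\A$ makes in its $(j+1)^{\text{st}}$ stage in $\Hyb_{j+1}$ (identical to $\Hyb_{q_C}$ through that stage), namely $q_{Q,j+1}$. Lemma~\ref{lma:reprogramming} then delivers the claimed bound $2 q_{Q,j+1}\sqrt{2(j+1)/2^{m+n}}$. The main obstacle is purely bookkeeping: verifying that the classical transcript distribution (and hence $\B$'s sampling of $K$) is independent of $b$, and that the bad-event conditioning mismatch between $\Hyb_j^3$ and $\Hyb_{j+1}$ costs only a term of order $j/2^n$ that is already absorbed in the final bound; no new technical ideas beyond those of the FX proof and Lemma~\ref{lma:lrw-Hj-Hj1} are required.
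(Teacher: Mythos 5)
Your proposal is correct and follows essentially the same route as the paper's proof: the same reduction to the Reprogramming Lemma with $F(a,K^*,x)=E^a_{k^*}(x)$, the same four-point reprogrammed set per classical query, the same bound $\epsilon \leq 2(j+1)/2^{m+n}$ via the uniformity of $h_{k'}(\tau_i)$, and the same identification of the expected Phase-2 query count with $q_{Q,j+1}$. The one place you diverge is the bad-event bookkeeping: the paper sets its flag for $\textsf{Bad}_{j+1}$ already in Phase~1 and then asserts $\Pr[\A(\Hyb_j^{3})=1 \land \neg\textsf{Bad}_{j}] = \Pr[\A(\Hyb_j^{3})=1 \land \neg\textsf{Bad}_{j+1}]$ as an equality, whereas you (more defensibly, since $\textsf{Bad}_{j+1}$ depends on $k'$, which is only sampled by $\B$ in Phase~2) defer the check until $K$ is known and pay an explicit $\Pr[\textsf{Bad}_{j+1}\setminus\textsf{Bad}_{j}]\le 2j/2^n$ --- this yields a marginally weaker per-step bound than the lemma's literal statement, but, as you correctly note, the extra term is absorbed into the $O(q_C^2/2^n)$ collision term of \expref{Theorem}{thm:Q1-secure-LRW-1}.
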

\begin{proof}
    Let $\A$ be a distinguisher between $\Hyb^{3}_j$ and $\Hyb_{j+1}$. We construct from $\A$ a distinguisher $\D$ for the experiment from \expref{Lemma}{lma:reprogramming}: 
    
    \medskip \noindent \textbf{Phase 1:} $\D$ samples uniform $\widetilde{\Pi} \in \mathcal{E}(\mathcal{T},n)$ and $E \in \mathcal{E}(m,n)$. It then runs $\A$, answering its quantum queries using $E$ and its classical queries using $\widetilde{\Pi}$, until after it responds to $\A$'s $(j+1)^{\text{st}}$ classical query. If $\textsf{Bad}_{j+1}$, set $\textsf{flag}=1$; otherwise, set $\textsf{flag}=0$. Let $T_{j+1} = \{(\tau_1,x_1,y_1),\ldots,(\tau_{j+1},x_{j+1},y_{j+1})\}$ be the list of input/output pairs $\A$ received from its classical oracle thus far. $\D$ defines $F(a,K^*,x):= E^a_{k^*}(x)$ for $a \in \{1,-1\}$. It also define the following randomized algorithm $\B$: sample $K \leftarrow \{0,1\}^m \times \{0,1\}^{\kappa}$ and then compute the blinded set $B$ to be the input/output pairs that are reprogrammed so that $F^{(B)}(a,K^*,x)=\left(E^{T_{j+1},K}_{k^*} \right)^a(x)$ for all $a,K^*,x,$ i.e.,  $$B = \left\{((a,K^*,x),\left(E^{T_{j+1},K}_{k^*} \right)^a(x))\right\}.$$
     \medskip \noindent \textbf{Phase 2:} $\B$ is run to generate $B$ and $\D$ is given quantum access to an oracle $F_b$. $\D$ resumes running $\A$, answering its quantum queries using $F_b$. Phase $2$ ends when $\A$ makes its next (i.e., ($(j+2)^{\text{nd}}$) classical query.
     
     \medskip \noindent \textbf{Phase 3:} $\D$ is given the randomness used by $\B$ to generate $K$. It resumes running $\A$, answering its classical queries using $\lrw_K\left[E^{T_{j+1},K}\right]$ and its quantum queries using $E^{T_{j+1},K}$. Finally, if $\textsf{flag}=1$, $\D$ outputs $0$; otherwise, it outputs whatever $\A$ outputs.

     It is immediate that if $b=0$ and $\textsf{flag}=0$ (i.e., $\D$'s oracle in phase $2$ is $F_0=F$), then $\A$'s output is identically distributed to its output in $\Hyb_{j+1}$, whereas if $b=1$ (i.e., $\D$'s oracle in phase $2$ is $F_1=F^{(B)}$) and  and $\textsf{flag}=0$, then $\A$'s output is identically distributed to its output in $\Hyb^{3}_j$. It follows that $|\Pr[\A(\Hyb^{3}_j)=1] - \Pr[\A(\Hyb_{j+1})=1]|$ is equal to the distinguishing advantage of $\D$ in the reprogramming experiment of \expref{Lemma}{lma:reprogramming}. To bound this quantity, we bound the parameter $\varepsilon$ and the expected number of queries made by $\D$ in phase $2$ (when $F=F_0$.)

     The value of $\varepsilon$ can be bounded using the definition of $E^{T_{j+1},K}$ and the fact that $F^{(B)}(a,K^*,x) = \left(E^{T_{j+1},K}_{k^*} \right)^a(x)$. Fixing $E$ and $T_{j+1}$, the probability that any particular input $(a,K^*,x)$ is reprogrammed is at most the probability (over $K$) that it is in the set 
     \begin{align*}
        \left\{\begin{array}{c}
        \left(1, k, x_i \oplus h_{k'}(\tau_i)\right),\left(1, k, E_k^{-1}\left(y_i \oplus h_{k'}(\tau_i)\right)\right), \\
        \left(-1, k, E_k\left(x_i \oplus h_{k'}(\tau_i )\right)\right),\left(-1, k, y_i \oplus h_{k'}(\tau_i )\right)
        \end{array}\right\}_{i=1}^{j+1}.
     \end{align*}
     Since $h_{k'}(\tau_i)$ is uniform, taking a union bound gives $\varepsilon \leq 2(j+1)/2^{m+n}$.

     The expected number of queries made by $\D$ in phase $2$ when $F=F_0$ is equal to the expected number of queries made by $\A$ in its $(j+1)^{\text{st}}$ stage in $\Hyb_{j+1}$. Since $\Hyb_{j+1}$ and $\Hyb_{q_C}$ are identical until after the $(j+1)^{\text{st}}$ is complete, this is precisely $q_{Q,j+1}$. 

     Applying \expref{Lemma}{lma:reprogramming} thus gives
     \begin{align*}
        &|\Pr[\A(\Hyb^{3}_j)=1 \land \neg \textsf{Bad}_{j}] - \Pr[\A(\Hyb_{j+1})=1\land \neg\textsf{Bad}_{j+1}]| \\
        &=|\Pr[\A(\Hyb^{3}_j)=1 \land \neg \textsf{Bad}_{j+1}] - \Pr[\A(\Hyb_{j+1})=1\land \neg\textsf{Bad}_{j+1}]| \\
        &=|\Pr[\D=1 \land \textsf{flag}=0|b=0] - \Pr[\D=1 \land \textsf{flag}=0|b=1]| \\
        &\leq|\Pr[\D=1 |b=0] - \Pr[\D=1 |b=1]| \\
        &\leq 2 \cdot q_{Q,j+1} \sqrt{\frac{2 (j+1)}{2^{m+n}}},
    \end{align*}
    where the first equality arises because the $(j+1)^{\text{st}}$ query is answered identically by $\widetilde{\Pi}$ in both worlds. The first inequality holds due to the independence of $\D$'s output from $\A$'s output when $\textsf{flag}=1$, and $\Pr[\textsf{flag}=1]$ being the same for $b=0$ and $b=1$.
    \qed
\end{proof}

 \section{Quantum Oracle Constructions in the Proof of \expref{Theorem}{thm:security-ideal-cipher}}\label{sec:oracle-for-E-prime}

We describe how to implement the functions $E'$ and $E'^{-1}$ using a quantum oracle $f$, given the full descriptions of $P(\cdot)$ and $E(\cdot, \cdot)$. The definitions are:
\begin{align*}
E'(k, \tau) = 
\begin{cases} 
P(\tau) & \text{if } f(k) = 1, \\ 
E(k, \tau) & \text{if } f(k) \neq 1,
\end{cases}
\quad
E'^{-1}(k, \tau) = 
\begin{cases} 
P^{-1}(\tau) & \text{if } f(k) = 1, \\ 
E^{-1}(k, \tau) & \text{if } f(k) \neq 1.
\end{cases}
\end{align*}

Using quantum oracle access to $\Or_f$, we can construct $\Or_{E'}$ and $\Or_{E'^{-1}}$. Below is the construction of $\Or_{E'}$:
\begin{align*}
    & \ket{k}\ket{\tau}\ket{y}\\
    \xrightarrow[]{\text{add \textsf{aux} registers}}&\ket{k}_1\ket{\tau}_2\ket{y}_3\ket{0}_4\ket{0}_5\ket{0}_6\ket{0}_7\\
    \xrightarrow[]{\Or_{E,1,2,7} \Or_{P,1,6} X_5   \Or_{f,1,5}   \Or_{f,1,4}}&\ket{k}\ket{\tau}\ket{y}\ket{f(k)}\ket{\overline{f(k)}}\ket{P(x)}\ket{E(k,\tau)}\\
    \xrightarrow[]{\text{CCNOT}_{6,4,3}} &\ket{k}\ket{\tau}\ket{y \oplus (P(\tau)\cdot f(k)) }\ket{f(k)}\ket{\overline{f(k)}}\ket{P(x)}\ket{E(k,\tau)}\\
    \xrightarrow[]{\text{CCNOT}_{7,5,3}} &\ket{k}\ket{\tau}\ket{y \oplus (P(\tau)\cdot f(k)) \oplus (E(k,\tau)\cdot \overline{f(k)})}\ket{f(k)}\ket{\overline{f(k)}}\ket{P(x)}\ket{E(k,\tau)}\\
    \xrightarrow[]{\Or_{E,1,2,7} \Or_{P,1,6} X_5   \Or_{f,1,5}   \Or_{f,1,4}} &\ket{k}\ket{\tau}\ket{y \oplus (P(\tau)\cdot f(k)) \oplus (E(k,\tau)\cdot \overline{f(k)})}\ket{0}\ket{0}\ket{0}\ket{0}\\
    \xrightarrow[]{\text{drop \textsf{aux}}}&\ket{k}\ket{\tau}\ket{y\oplus (P(\tau)\cdot f(k))\oplus (E(k,\tau)\cdot \overline{f(k)})}
\end{align*}

The construction of $\Or_{E'^{-1}}$ follows the same process, substituting $P^{-1}$ and $E^{-1}$ in place of $P$ and $E$, respectively:
\begin{align*}  
\ket{k}\ket{\tau}\ket{y} \longrightarrow \ket{y \oplus (P^{-1}(\tau) \cdot f(k)) \oplus (E^{-1}(k,\tau) \cdot \overline{f(k)})}.
\end{align*}

\section{Post-Quantum Lower Bounds for Modes}\label{app:modes-table}

A summary of applications of the general theorem for block cipher mode post-quantum security is given in \expref{Table}{tab:general-thm-app}.

\begin{table}
    \centering
    \bgroup
\def\arraystretch{1.8}
\begin{tabular}{ |p{4cm}||p{8cm}|  }
 \hline
 Block Cipher Modes& Post-Quantum Security Bound\\
 \hline
 \cbc   & $\frac{q_Q^2\ell^2}{2^m}+\frac{q_C^2\ell^2}{2^n}$\\
 \ecbc  & $\frac{2q_Q^2\ell^2}{2^m}+\frac{4q_C^2\ell^2}{2^n}$\\
 \cmac  & $\frac{(q_Q\ell+1)^2}{2^m}+\frac{5(\ell^2+1)q^2}{2^n}$ \\
 \gcm   & $\frac{q_Q^2\ell^2}{2^m}  + \frac{(\sigma+q_C+q_C'+1)^2}{2^{n+1}}+ \frac{\sigma+q_C+q_C'}{2^{n-1}} + \frac{q_C'(\ell+1)}{2^s}$ \\
 \textsf{GCM-SST}&   $\frac{q_Q^2\ell^2}{2^m} +\frac{(\sigma+3(q_C+q_C'))^2}{2^{n+1}}+ \frac{q_C'\ell}{2^{n}} + \frac{q_C'}{2^s}$  \\
 \hline
 \textsf{\lrw} &   $\frac{q_Q^2}{2^m}+\frac{q_C^2}{2^n}$  \\
 $\xext$ &  $\frac{q_Q^2}{2^m}+\frac{3q_C^2}{2^n}$  \\
 \hline
\end{tabular}
\egroup
\vspace{2mm}
    \caption{Post-quantum security bound of block cipher modes in the QICM. $\ell$ is the block length for every query. Detailed explanations of the corresponding parameters can be found in \expref{Section}{sec:general-thm-app}.} \label{tab:general-thm-app}
\end{table}

\end{document}